\title{Taming Infinity one Chunk at a Time: Concisely Represented Strategies in One-Counter MDPs} 
\titlerunning{Concisely-Represented Strategies in OC-MDPs}
\author{Michal Ajdar\'{o}w}{Masaryk University, Czech Republic}{}{}{}
\author{{James C.~A.} Main}{F.R.S.-FNRS and UMONS -- Université de Mons, Belgium}{}{}{Research Fellow of the Fonds de la Recherche Scientifique -- FNRS and member of the TRAIL institute.}
\author{Petr Novotn\'{y}}{Masaryk University, Czech Republic}{}{}{}
\author{Mickael Randour}{F.R.S.-FNRS and UMONS -- Université de Mons, Belgium}{}{}{Research Associate of the Fonds de la Recherche Scientifique -- FNRS and member of the TRAIL institute.}
\authorrunning{M.~Ajdar\'{o}w, {J.~C.~A.} Main, P.~Novotn\'{y}, M.~Randour} 
\keywords{one-counter Markov decision processes, randomised strategies, termination, reachability} 
\tikzset{
  >=stealth,
  initial text=,
}
\mathchardef\mhyphen="2D
\newcommand{\init}{\mathsf{init}}
 \newcommand{\bigo}{\mathcal{O}}
\newcommand{\integerInterval}[1]{\llbracket{}#1\rrbracket{}}
\newcommand{\ooInt}[2]{\left]#1, #2\right[}
\newcommand{\ccInt}[2]{\left[#1, #2\right]}
\newcommand{\ocInt}[2]{\left]#1, #2\right]}
\newcommand{\coInt}[2]{\left[#1, #2\right[}
\newcommand{\ptime}{\textsf{P}}
\newcommand{\pspace}{\textsf{PSPACE}}
\newcommand{\npspace}{\textsf{NPSPACE}}
\newcommand{\np}{\textsf{NP}}
\newcommand{\coNP}{\textsf{co-NP}}
\newcommand{\exptime}{\textsf{EXPTIME}}
\newcommand{\posSLP}{\textsf{PosSLP}}
\newcommand{\etr}{\textsf{ETR}}
\newcommand{\coetr}{\textsf{co-ETR}}
\newcommand{\sqsx}{x} \newcommand{\sqsxVect}{\bar{x}} \newcommand{\sqsm}{m} \newcommand{\sqsy}{y} \newcommand{\sqsi}{i} \newcommand{\sqsn}{n}
\newcommand{\IR}{\mathbb{R}}
\newcommand{\IN}{\mathbb{N}}
\newcommand{\INpos}{\IN_{>0}}
\newcommand{\IQ}{\mathbb{Q}}
\newcommand{\IC}{\mathbb{C}}
\newcommand{\INbar}{\bar{\IN}}
\newcommand{\INposBar}{\bar{\IN}_{>0}}
\newcommand{\subsets}[1]{2^{#1}}
\newcommand{\proba}{\mathbb{P}}
\newcommand{\dist}[1]{\mathcal{D}(#1)}
\newcommand{\cyl}[1]{\mathsf{Cyl}(#1)}
\newcommand{\cylVerb}[2]{\mathsf{Cyl}_{#1}(#2)}
\newcommand{\supp}[1]{\mathsf{supp}(#1)}
\newcommand{\graph}{G}
\newcommand{\vertexSet}{V}
\newcommand{\vertex}{v}
\newcommand{\edgeSet}{E}
\newcommand{\probaMC}[1]{\proba_{#1}} \newcommand{\probaMCverb}[2]{\proba_{#1, #2}} \newcommand{\probaG}[2]{\probaMC{#1}^{#2}} \newcommand{\probaGverb}[3]{\probaMCverb{#1}{#2}^{#3}} 
\newcommand{\mdp}{\mathcal{M}}
\newcommand{\mdpStateSpace}{S}
\newcommand{\mdpState}{s}
\newcommand{\mdpActionSpace}{A}
\newcommand{\mdpAction}{a}
\newcommand{\mdpTrans}{\delta}
\newcommand{\mdpTuple}{(\mdpStateSpace, \mdpActionSpace, \mdpTrans)}
\newcommand{\playSet}[1]{\mathsf{Plays}(#1)}
\newcommand{\play}{\pi}
\newcommand{\playPrefix}[2]{#1_{|#2}}
\newcommand{\histSet}[1]{\mathsf{Hist}(#1)}
\newcommand{\histPart}{\mathcal{H}}
\newcommand{\hist}{h}
\newcommand{\histConcat}[2]{#1\cdot{}#2}
\newcommand{\last}[1]{\mathsf{last}(#1)}
\newcommand{\first}[1]{\mathsf{first}(#1)}
\newcommand{\indexPosition}{\ell}
\newcommand{\subindexPosition}{\iota}
\newcommand{\indexLast}{n}
\newcommand{\weight}{w}
\newcommand{\weightVal}{u}
\newcommand{\ocmdp}{\mathcal{Q}}
\newcommand{\ocmdpFin}[2]{\mdp^{\leq #2}(#1)}
\newcommand{\ocStateSpace}{Q}
\newcommand{\ocState}{q}
\newcommand{\ocStateB}{p}
\newcommand{\ocStateC}{t}
\newcommand{\ocStateD}{t'}
\newcommand{\ocCount}{k}
\newcommand{\ocCountB}{k'}
\newcommand{\ocConfig}{s}
\newcommand{\ocConfigPair}{(\ocState, \ocCount)}
\newcommand{\ocActionSpace}{A}
\newcommand{\ocAction}{a}
\newcommand{\ocActionB}{b}
\newcommand{\ocActionC}{c}
\newcommand{\ocTrans}{\delta}
\newcommand{\ocTransFin}{\delta^{\leq \counterUB}}
\newcommand{\ocTransInf}{\delta^{\leq \infty}}
\newcommand{\ocTuple}{(\ocStateSpace, \ocActionSpace, \ocTrans, \weight)}
\newcommand{\ocTupleWeightless}{(\ocStateSpace, \ocActionSpace, \ocTrans)}
\newcommand{\counterUB}{r} 
\newcommand{\period}{\rho}
\newcommand{\mchain}{\mathcal{C}}
\newcommand{\mchainTuple}{(\mdpStateSpace, \mdpTrans)}
\newcommand{\ocChain}{\mathcal{P}}
\newcommand{\ocChainTuple}{(\ocStateSpace, \ocTrans)}
\newcommand{\ocChainFin}[2]{\mchain^{\leq #2}(#1)}
\newcommand{\intPart}{\mathcal{I}}
\newcommand{\intPartB}{\mathcal{J}}
\newcommand{\intPartC}{\mathcal{K}}
\newcommand{\interval}{I}
\newcommand{\intervalB}{J}
\newcommand{\intNum}{d}
\newcommand{\intSize}{n}
\newcommand{\intIndex}{j}
\newcommand{\intBound}{b}
\newcommand{\intLB}{\intBound^-}
\newcommand{\intUB}{\intBound^+}
\newcommand{\powerIndex}{\alpha}
\newcommand{\powerMax}{\beta}
\newcommand{\compressChainStrat}[1]{\mchain^{#1}_{\intPart}}
\newcommand{\compressChain}{\compressChainStrat{\strat}}
\newcommand{\compressChainVerbose}{\mchain^{\strat}_{\intPart}(\ocmdp)}
\newcommand{\compressChainB}{\mchain^{\strat}_{\intPart}({\ocmdp'})}
\newcommand{\compressChainStateSpace}{\mdpStateSpace_{\intPart}}
\newcommand{\compressChainStateSpaceJ}{\mdpStateSpace_{\interval}}
\newcommand{\compressChainStateSpaceStar}{\mdpStateSpace_{\intPart}^{\bot}}
\newcommand{\compressChainTransTemplate}[2]{\mdpTrans^{#1}_{#2}}
\newcommand{\compressChainTrans}{\compressChainTransTemplate{\strat}{\intPart}}
\newcommand{\cisChainStrat}[1]{\ocChain^{#1}_{\intPartB}}
\newcommand{\cisChain}{\cisChainStrat{\strat}}
\newcommand{\cisChainStateSpace}{P_{\intPartB}}
\newcommand{\cisChainStateSpaceStar}{P_{\intPartB}^{\top}}
\newcommand{\cisChainTransTemplate}[2]{\ocTrans_{#1}^{#2}}
\newcommand{\cisChainTrans}{\cisChainTransTemplate{\intPartB}{\strat}}
\newcommand{\compressCis}{\mchain_{\intPartC}({\cisChain})}
\newcommand{\compressCisStateSpace}{\mdpStateSpace_{\intPartC}({\cisChainStateSpace})}
\newcommand{\compressCisTrans}{\ocTrans_{\intPartC}[{\cisChain}]}
\newcommand{\cisConfig}{\bar{\ocConfig}}
\newcommand{\cisConfigB}{\bar{\ocConfig}'}
\newcommand{\succHist}[2]{\histPart_{\mathsf{succ}}(#1, #2)}
\newcommand{\absHist}[1]{\histPart_{\mathsf{abs}}(#1)}
\newcommand{\termProbaVar}[2]{\langle#1\searrow{}#2\rangle}
\newcommand{\upProba}[5]{[(#1,#2)\nearrow{}(#3,#4)]_{#5}}
\newcommand{\downProba}[5]{[(#1,#2)\searrow{}(#3,#4)]_{#5}}
\newcommand{\upProbaVar}[5]{\langle (#1,#2)\nearrow{}(#3,#4)\rangle_{#5}}
\newcommand{\downProbaVar}[5]{\langle (#1,#2)\searrow{}(#3,#4)\rangle_{#5}}
\newcommand{\upHistSet}[5]{\histPart_{#5}((#1,#2)\nearrow{}(#3,#4))}
\newcommand{\downHistSet}[5]{\histPart_{#5}((#1,#2)\searrow{}(#3,#4))}
\newcommand{\upPart}[1]{U_{#1}}
\newcommand{\downPart}[1]{D_{#1}}
\newcommand{\mcHist}{\bar{\hist}}
\newcommand{\varTrans}{x}
\newcommand{\varTransTuple}{\mathbf{\varTrans}}
\newcommand{\solTrans}{\varTrans^\star}
\newcommand{\solTransTuple}{\mathbf{\varTrans}^{\star}}
\newcommand{\varObj}{y}
\newcommand{\varObjTuple}{\mathbf{\varObj}}
\newcommand{\solObj}{\varObj^{\star}}
\newcommand{\solObjTuple}{\mathbf{\varObj}^{\star}}
\newcommand{\varStrat}{z}
\newcommand{\varStratI}{\mathbf{\varStrat}^{\interval}}
\newcommand{\varStratIstar}{\mathbf{\varStrat}^{\interval_\intIndex^{\star}}}
\newcommand{\varStratIprime}{\mathbf{\varStrat}^{\interval'}}
\newcommand{\varStratTuple}{\mathbf{\varStrat}}
\newcommand{\solStrat}{\varStrat^{\star}}
\newcommand{\solStratTuple}{\mathbf{\varStrat}^{\star}}
\newcommand{\vectStratTuple}{\varStratTuple^\strat}
\newcommand{\varCis}{v}
\newcommand{\varCisTuple}{\mathbf{\varCis}}
\newcommand{\solCis}{\varCis^\star}
\newcommand{\solCisTuple}{\mathbf{\varCis}^\star}
\newcommand{\compressChainSymbolicVerbose}{\compressChainStrat{\stratB_{\varStratTuple}}}
\newcommand{\compressChainSymbolic}{\compressChainStrat{\varStratTuple}}
\newcommand{\compressChainSymSol}{\compressChainStrat{\solStratTuple}}
\newcommand{\compressChainTransSymbolicVerbose}{\compressChainTransTemplate{\stratB_{\varStratTuple}}{\intPart}}
\newcommand{\compressChainTransSymbolic}{\compressChainTransTemplate{\varStratTuple}{\intPart}}
\newcommand{\compressChainTransSymSol}{\compressChainTransTemplate{\solStratTuple}{\intPart}}
\newcommand{\cisChainSymbolic}{\cisChainStrat{\varStratTuple}}
\newcommand{\cisChainSymbolicVerbose}{\cisChainStrat{\stratB_{\varStratTuple}}}
\newcommand{\cisChainTransSymbolic}{\cisChainTransTemplate{\intPartB}{\stratB_{\varStratTuple}}}
\newcommand{\cisChainTupleSymbolic}{(\cisChainStateSpace, \cisChainTransSymbolic)}
\newcommand{\compressCisSymbolic}{\mchain_{\intPartC}({\cisChainSymbolic})}
\newcommand{\compressCisTransSymbolic}{\ocTrans_{\intPartC}[{\cisChainSymbolic}]}
\newcommand{\compressCisTupleSymbolic}{(\compressCisStateSpace, \compressCisTransSymbolic)}
\newcommand{\cisChainSymbSol}{\cisChainStrat{\solStratTuple}}
\newcommand{\compressCisSymbSol}{\mchain_{\intPartC}({\cisChainSymbSol})}
\newcommand{\cisChainTransSymbSol}{\cisChainTransTemplate{\intPartB}{\solStratTuple}}
\newcommand{\compressCisTransSymbSol}{\ocTrans_{\intPartC}[{\cisChainSymbSol}]}
\newcommand{\formulaTransBase}{\Phi_{\ocTrans}}
\newcommand{\formulaObjBase}{\Phi_{\objective}}
\newcommand{\formulaCisBase}{\Psi_{\ocTrans}}
\newcommand{\formulaStratBase}{\Phi_{\strat}}
\newcommand{\formulaTrans}{\formulaTransBase^\intPart}
\newcommand{\formulaTransI}{\formulaTransBase^\interval}
\newcommand{\formulaObj}{\formulaObjBase^\intPart}
\newcommand{\formulaCis}{\formulaCisBase^\intPartB}
\newcommand{\formulaCisI}{\formulaCisBase^\interval}
\newcommand{\formulaCisTrans}{\formulaTransBase^\intPartC}
\newcommand{\formulaCisObj}{\formulaObjBase^\intPartC}
\newcommand{\formulaStrat}{\formulaStratBase^{\intPart,\intPart'}}
\newcommand{\formulaStratCis}{\formulaStratBase^{\intPartB,\intPartB'}}
\newcommand{\suppBounded}{\mathcal{B}}
\newcommand{\formulaStratB}{\formulaStratBase^{\intPart,\intPart',\suppBounded}}
\newcommand{\formulaTransB}{\formulaTransBase^{\intPart,\intPart',\suppBounded}}
\newcommand{\formulaTransBI}{\formulaTransBase^{\interval,\intPart', \suppBounded}}
\newcommand{\formulaObjB}{\formulaObjBase^{\intPart,\intPart',\suppBounded}}
\newcommand{\chainX}{\ocmdp_{\sqsxVect}}
\newcommand{\eleError}[1]{\varepsilon_{#1}}
\newcommand{\eleB}[1]{\eleError{#1}^{(\sqsi)}}
\newcommand{\eleD}[1]{\eta^{(\sqsi)}_{#1}}
\newcommand{\seqD}{(\eleD{\counterUB})_{\counterUB\geq 2}}
\newcommand{\objective}{\Omega}
\newcommand{\reach}[1]{\mathsf{Reach}(#1)}
\newcommand{\reachVerb}[2]{\mathsf{Reach}_{#1}(#2)}
\newcommand{\target}{T}
\newcommand{\termination}{\mathsf{Term}}
\newcommand{\selectiveTermination}[1]{\termination({#1})}
\newcommand{\thresProba}{\theta}
\newcommand{\emptyword}{\varepsilon}
\newcommand{\mealy}{\mathfrak{M}}
\newcommand{\mealyStateSpace}{M}
\newcommand{\mealyState}{m}
\newcommand{\mealyStateInit}{\mealyState_\init}
\newcommand{\mealyUpdate}{\mathsf{up}}
\newcommand{\mealyNext}{\mathsf{nxt}}
\newcommand{\mealyUpdateHat}{\widehat{\mealyUpdate}}
\newcommand{\mealyTuple}{(\mealyStateSpace, \mealyStateInit, \mealyUpdate, \mealyNext)}
\newcommand{\stratGeneric}[1]{{\sigma_{#1}}}
\newcommand{\strat}{\stratGeneric{}}
\newcommand{\stratBGeneric}[1]{{\tau_{#1}}}
\newcommand{\stratB}{\stratBGeneric{}}
\newcommand\vartextvisiblespace[1][.5em]{\makebox[#1]{\kern.07em
    \vrule height.3ex
    \hrulefill
    \vrule height.3ex
    \kern.07em
  }}
\tikzset{
  >=stealth,
  left sided/.style={
    draw=none,
    append after command={
      [shorten <= -0.5\pgflinewidth]
      (\tikzlastnode.north west) edge[dashed](\tikzlastnode.south west)
    }
  },
  two sided/.style={
    draw=none,
    append after command={
      [shorten <= -0.5\pgflinewidth]
      (\tikzlastnode.north west) edge[dashed](\tikzlastnode.south west)
      (\tikzlastnode.north east) edge[dashed](\tikzlastnode.south east)
    }
  },
  right sided/.style={
    draw=none,
    append after command={
      [shorten <= -0.5\pgflinewidth]
      (\tikzlastnode.north east) edge[dashed](\tikzlastnode.south east)
    }
  }
}
\tikzstyle{stochasticc} = [fill, circle, minimum size=0.1cm, inner sep=0.05cm, outer sep=0cm]
\tikzstyle{stochastics} = [fill, rectangle, minimum size=0.1cm, inner sep=0.05cm, outer sep=0cm]
\def\squarecorner#1{
\pgf@x=\the\wd\pgfnodeparttextbox \pgfmathsetlength\pgf@xc{\pgfkeysvalueof{/pgf/inner xsep}}\advance\pgf@x by 2\pgf@xc \pgfmathsetlength\pgf@xb{\pgfkeysvalueof{/pgf/minimum width}}\ifdim\pgf@x<\pgf@xb \pgf@x=\pgf@xb \fi \pgf@y=\ht\pgfnodeparttextbox \advance\pgf@y by\dp\pgfnodeparttextbox \pgfmathsetlength\pgf@yc{\pgfkeysvalueof{/pgf/inner ysep}}\advance\pgf@y by 2\pgf@yc \pgfmathsetlength\pgf@yb{\pgfkeysvalueof{/pgf/minimum height}}\ifdim\pgf@y<\pgf@yb \pgf@y=\pgf@yb \fi \ifdim\pgf@x<\pgf@y \pgf@x=\pgf@y \else
        \pgf@y=\pgf@x \fi
\pgf@x=#1.5\pgf@x \advance\pgf@x by.5\wd\pgfnodeparttextbox \pgfmathsetlength\pgf@xa{\pgfkeysvalueof{/pgf/outer xsep}}\advance\pgf@x by#1\pgf@xa \pgf@y=#1.5\pgf@y \advance\pgf@y by-.5\dp\pgfnodeparttextbox \advance\pgf@y by.5\ht\pgfnodeparttextbox \pgfmathsetlength\pgf@ya{\pgfkeysvalueof{/pgf/outer ysep}}\advance\pgf@y by#1\pgf@ya }
    \savedanchor\northeast{\squarecorner{}}
    \savedanchor\southwest{\squarecorner{-}}
\newtheorem{assumption}[theorem]{Assumption}
\begin{document}
\maketitle

\begin{abstract}
  Markov decision processes (MDPs) are a canonical model to reason about decision making within a stochastic environment. We study a fundamental class of \textit{infinite} MDPs: one-counter MDPs (OC-MDPs). They extend finite MDPs via an associated counter taking natural values, thus inducing an infinite MDP over the set of configurations (current state and counter value). We consider two characteristic objectives: reaching a target state (state-reachability), and reaching a target state with counter value zero (selective termination). The synthesis problem for the latter is not known to be decidable and connected to major open problems in number theory. Furthermore, even seemingly simple strategies (e.g., memoryless ones) in OC-MDPs might be impossible to build in practice (due to the underlying infinite configuration space): we need finite, and preferably small, representations.

To overcome these obstacles, we introduce two natural classes of \textit{concisely represented} strategies based on a (possibly infinite) partition of counter values in intervals. For both classes, and both objectives, we study the verification problem (does a given strategy ensure a high enough probability for the objective?), and two synthesis problems (does there exist such a strategy?): one where the interval partition is fixed as input, and one where it is only parameterized. We develop a generic approach based on a compression of the induced infinite MDP that yields decidability in all cases, with all complexities within $\pspace$.
\end{abstract}

\section{Introduction}\label{section:introduction}
A \textit{Markov decision process} (MDP) models the continuous interaction of a controllable system with a stochastic environment.
MDPs are used notably in the fields of formal methods (e.g.,~\cite{BK08,gog23}) and reinforcement learning (e.g.,~\cite{SuttonB18}).
In each step of an MDP, the system selects an action that is available in the current state, and the state is updated randomly following a distribution depending only on the current state and chosen action.
This interaction goes on forever and yields a \textit{play} (i.e., an execution) of the MDP.
The resolution of non-determinism in an MDP is done via a \textit{strategy}: strategies prescribe choices depending on the current history of the play and may use randomisation.
Strategies are the formal counterpart of controllers for the modelled system~\cite{rECCS,DBLP:reference/mc/BloemCJ18}.
A strategy is \textit{pure} if it does not use randomisation.

A classical problem in MDPs is \textit{synthesis}, which asks to construct a good strategy with respect to a specification.
Specifications can be defined, e.g., via \textit{payoff functions} that quantify the quality of plays, and via \textit{objectives}, which are sets of well-behaved plays.
The goal is thus to synthesise a strategy that, in the former case, has a high expected payoff, and, in the latter case, enforces the objective with high probability.
If possible, it is desirable to construct an \textit{optimal strategy}, i.e., a strategy whose performance is not (strictly) exceeded by another.
The decision variant of the synthesis problem, which asks whether a well-performing strategy exists, is called \textit{realisability}.
We focus on variants of \textit{reachability objectives}, asking that a target set of states be visited. These canonical objectives are central in synthesis~\cite{DBLP:conf/fsttcs/BrihayeGMR23}.

\subparagraph*{Strategies and their representations.}
Traditionally, strategies are represented by \textit{Mealy machines}, i.e., finite automata with outputs (see, e.g.,~\cite{DBLP:journals/lmcs/BouyerRORV22,DBLP:journals/iandc/MainR24}).
Strategies that admit such a representation are called \textit{finite-memory strategies}.
A special subclass is that of \textit{memoryless strategies}, i.e., strategies whose decisions depend only on the current state.

In finite MDPs, pure memoryless strategies often suffice to play optimally with a single payoff or objective, e.g., for reachability~\cite{BK08} and parity objectives~\cite{DBLP:conf/soda/ChatterjeeJH04}, and mean~\cite{bierth1987expected} and discounted-sum payoffs~\cite{Sha53}.
In countable MDPs, for reachability objectives, pure memoryless strategies are also as powerful as general ones, though optimal strategies need not exist in general~\cite{Orn69,DBLP:conf/icalp/KieferMSTW20}.
In the multi-objective setting, memory and randomisation are necessary in general (e.g.,~\cite{DBLP:journals/fmsd/RandourRS17,DBLP:conf/fsttcs/BrihayeGMR23}).
Nonetheless, finite memory often suffices in finite MDPs, e.g., when considering several $\omega$-regular objectives~\cite{DBLP:journals/lmcs/EtessamiKVY08}.

The picture is more complicated in infinite-state MDPs, where even pure memoryless strategies need not admit finite representations.
Our contribution focuses on small (and particularly, finite) counter-based representations of memoryless strategies in a fundamental class of infinite-state MDPs: \emph{one-counter MDPs.}

\subparagraph*{One-counter Markov decision processes.}
\textit{One-counter MDPs} (OC-MDPs,~\cite{DBLP:conf/soda/BrazdilBEKW10}) are finite MDPs augmented with a counter that can be incremented (by one), decremented (by one) or left unchanged on each transition.\footnote{
  Considering such counter updates is not restrictive for modelling: any integer counter update can be obtained with several transitions.
  However, this impacts the complexity of decision problems.}
Such a finite OC-MDP induces an infinite MDP over a set of \textit{configurations} given by states of the underlying MDP and counter values.
In this induced MDP, any play that reaches counter value zero is interrupted; this event is called \textit{termination}.
We consider two variants of the model: \textit{unbounded OC-MDPs}, where counter values can grow arbitrarily large, and \textit{bounded OC-MDPs}, in which plays are interrupted when a fixed counter upper bound is reached.

The counter in OC-MDPs can, e.g., model resource consumption along plays~\cite{DBLP:conf/soda/BrazdilBEKW10}, or serve as an abstraction of unbounded data types and structures~\cite{DBLP:conf/cav/BrazdilKK11}. It can also model a passage of time: indeed, OC-MDPs generalise \textit{finite-horizon MDPs}, in which a bound is imposed on the number of steps (see, e.g.,~\cite{DBLP:conf/icalp/BalajiK0PS19}).
OC-MDPs can be also seen as an extension of \textit{one-counter Markov chains} with non-determinism.
One-counter Markov chains are equivalent to (discrete-time) quasi-birth-death processes~\cite{DBLP:journals/pe/EtessamiWY10}, a model studied in queuing theory.

Termination is the canonical objective in OC-MDPs~\cite{DBLP:conf/soda/BrazdilBEKW10}.
Also relevant is the more general \textit{selective termination} objective, which requires terminating in a target set of states.
In this work, we study both the selective termination objective and the \textit{state-reachability} objective, which requires visiting a target set of states regardless of the counter value.

Optimal strategies need not exist in unbounded OC-MDPs for these objectives~\cite{DBLP:journals/iandc/BrazdilBEK13}.
The general synthesis problem in unbounded OC-MDPs for selective termination is not known to be decidable, and it is at least as hard as the positivity problem for linear recurrence sequences~\cite{DBLP:journals/theoretics/PiribauerB24}, whose decidability would yield a major breakthrough in number theory~\cite{DBLP:conf/soda/OuaknineW14}.
Optimal strategies exist in bounded OC-MDPs: the induced MDP is finite and we consider reachability objectives.
However, constructing optimal strategies is already $\exptime$-hard for reachability in finite-horizon MDPs~\cite{DBLP:conf/icalp/BalajiK0PS19}.

In this work, we propose to tame the inherent complexity of analysing OC-MDPs by restricting our analysis to a class of succinctly representable (yet natural and expressive) strategies called \emph{interval strategies.}

\subparagraph*{Interval strategies.}
The monotonic structure of OC-MDPs makes them amenable to analysis through strategies of special structure. For instance, in their study of \emph{solvency games} (stateless variant of OC-MDPs with binary counter updates), Berger et al.~\cite{DBLP:conf/fsttcs/BergerKSV08} identify a natural class of \emph{rich man's strategies,} which, whenever the counter value is above some threshold, always select the same action in the same state. They argue that the question of existence of rich man's optimal strategies ``\emph{gets at a real phenomenon which \dots should be reflected in any good model of risk-averse investing.''} While~\cite{DBLP:conf/fsttcs/BergerKSV08} shows that optimal rich man's strategies do not always exist, if they do, their existence substantially simplifies the model analysis.

In this work, we make such finitely-represented strategies the primary object of study. We consider so-called \emph{interval} strategies: each such strategy is based on some (finite or infinite but finitely-representable) partitioning of \( \mathbb{N} \) into intervals, and the strategy's decision depends on the current state and on the partition containing the current counter value.

More precisely,
we focus on two classes of these strategies:
On the one hand, in bounded and unbounded OC-MDPs, we consider \textit{open-ended interval strategies} (OEISs): here, the underlying partitioning is finite, and thus contains an open-ended interval \( [n,\infty) \) on which the strategy behaves memorylessly, akin to rich man's strategies.

On the other hand, in unbounded OC-MDPs, we also consider \textit{cyclic interval strategies} (CISs): strategies for which there exists a (positive integer) period such that, for any two counter values that differ by the period, we take the same decisions.
A CIS can be represented similarly to an OEIS using a partition of the set of counter values up to the period.

We collectively refer to OEISs and CISs as \textit{interval strategies}.
While interval strategies are not sufficient to play optimally in unbounded OC-MDPs~\cite{DBLP:conf/fsttcs/BergerKSV08}, they can be used to approximate the supremum probability for the objectives we consider~\cite{DBLP:journals/iandc/BrazdilBEK13}. 

\subparagraph*{Our contributions.}
We formally introduce the class of interval strategies. We show that
OEISs in bounded OC-MDPs and CISs in unbounded OC-MDPs can be exponentially more concise than equivalent Mealy machines (Section~\ref{section:problems}), and unbounded OEISs can even represent infinite-memory strategies.

For selective termination and state reachability, we consider the interval strategy verification problem and two realisability problems for structure-constrained interval strategies.
On the one hand, the \textit{fixed-interval realisability} problem asks, given an interval partition, whether there is an interval strategy built on this partition that ensures the objective with a probability greater than a given threshold.
Intuitively, in this case, the system designer specifies the desired structure of the controller.
On the other hand, the \textit{parameterised realisability} problem (for interval strategies), asks whether there exists a well-performing strategy built on a partition of size no more than a parameter $\intNum$ such that no finite interval is larger than a second parameter $\intSize$.
We consider two variants of the realisability problem: one for checking the existence of a suitable pure strategy and another for randomised strategies.
Randomisation allows for better performance when imposing structural constraints on strategies (Example~\ref{example:pure vs random}), but pure strategies are however often preferred for synthesis~\cite{DBLP:conf/tacas/DelgrangeKQR20}.

\begin{table}
  \centering
  \bgroup
  \def\arraystretch{1.2}
  \scalebox{0.95}{
  \begin{tabular}{|c|c|c|c|c|c|c|}
    \hline
    Semantics
    & \multicolumn{2}{c|}{Bounded} & \multicolumn{4}{c|}{Unbounded}\\
    \hline
    \textit{Strategy type}
    & \multicolumn{2}{c|}{\textit{Open-ended}} & \multicolumn{2}{c|}{\textit{Open-ended}} & \multicolumn{2}{c|}{\textit{Cyclic}} \\
    \hline
    \hline
    \multirow{4}{9em}{\centering Verification}
    & \multicolumn{2}{c|}{$\ptime^\posSLP$} & \multicolumn{2}{c|}{$\coetr$} & \multicolumn{2}{c|}{$\coetr$} \\
    & \multicolumn{2}{c|}{Thm.~\ref{theorem:verification:bounded}} & \multicolumn{2}{c|}{Thm.~\ref{theorem:verification:oeis:complexity}} & \multicolumn{2}{c|}{Thm.~\ref{theorem:verification:cis:complexity}} \\
    \cdashline{2-7}
    & \multicolumn{2}{c|}{sqrt-sum-hard} & \multicolumn{4}{c|}{\multirow{2}{*}{sqrt-sum-hard~\cite{DBLP:journals/pe/EtessamiWY10}}} \\
    & \multicolumn{2}{c|}{Thm.~\ref{verification:hardness:bounded}} & \multicolumn{4}{c|}{}\\
    \hline
    \multirow{4}{9em}{\centering Realisability (both fixed-interval and parameterised)}
    & \textit{Pure} & \textit{Random} & \textit{Pure} & \textit{Random} & \textit{Pure} & \textit{Random} \\
    & $\np^\posSLP$  & $\np ^\etr$ & $\np^\etr$  & $\pspace$  & $\np^\etr$  & $\pspace$   \\
    & Thm.~\ref{theorem:realisability:bounded:pure}  & Thm.~\ref{theorem:realisability:bounded:randomised} & Thm.~\ref{theorem:realisability:oeis:pure}  & Thm.~\ref{theorem:realisability:oeis:randomised}  & Thm.~\ref{theorem:realisability:cis:pure}  & Thm.~\ref{theorem:realisability:cis:randomised}  \\
    \cdashline{2-7}
    & \multicolumn{6}{c|}{$\np$-hard (termination, Thm.~\ref{theorem:realisability:np-hardness}) and sqrt-sum-hard (\cite{DBLP:journals/pe/EtessamiWY10}, Thm.~\ref{verification:hardness:bounded})} \\
    \hline
  \end{tabular}
  }
  \egroup
  \caption{Complexity bound summary for our problems.
    All bounds are below $\pspace$.
    Square-root-sum-hardness results are derived from instances of the form $\sum\sqrt{\sqsx_\sqsi}\geq\sqsy$.
  }\label{table:complexity}
\end{table}

Our complexity results are summarised in Table~\ref{table:complexity}.
Analysing the performance of a memoryless strategy amounts to studying the Markov chain it induces on the MDP.
Our results rely on the analysis of a \textit{compressed Markov chain} derived from the (potentially infinite) Markov chain induced by an interval strategy.
We remove certain configurations and aggregate several transitions into one (Section~\ref{section:abstraction}).
This compressed Markov chain preserves the probability of selective termination and of hitting counter upper bounds (Theorem~\ref{theorem:ocmdp:probability matching}).
However, its transition probabilities may require exponential-size representations or even be irrational.
To represent these probabilities, we characterise them as the least solutions of quadratic systems of equations (Theorems~\ref{theorem:equations:termination} and~\ref{theorem:equations:transitions}), similarly to the result of~\cite{DBLP:journals/lmcs/KuceraEM06} for termination probabilities in probabilistic pushdown automata.
Compressed Markov chains are finite for OEISs and are induced by a one-counter Markov chain for CISs (Section~\ref{section:abstraction:finiteness}).

The crux of our algorithmic results is the aforementioned compression. For verification, we reduce the problem to checking the validity of a universal formula in the theory of the reals, by exploiting our characterisation of transition probabilities in compressed Markov chains. This induces a $\pspace$ upper bound. For bounded OC-MDPs, we can do better: verification can be solved in polynomial time in the unit-cost arithmetic RAM model of computation of Blum, Shub and Smale~\cite{BSS1989}, by computing transition and reachability probabilities of the compressed Markov chain. This yields a $\ptime^\posSLP$ complexity in the Turing model (see~\cite{DBLP:journals/siamcomp/AllenderBKM09}).

Both realisability variants exploit the verification approach through the theory of the reals. For fixed-interval realisability for pure strategies, we exploit non-determinism to select good strategies and then verify them with the above. In the randomised case, in essence, we build on the verification formulae and existentially quantify over the probabilities of actions under the sought strategy. Finally, for parameterised realisability, we build on our algorithms for the fixed-interval case by first non-deterministically building an appropriate partition.

We also provide complexity lower bounds.
We show that all of our considered problems are hard for the square-root-sum problem, a problem that is not known to be solvable in polynomial time but that is solvable in polynomial time in the Blum-Shub-Smale model~\cite{DBLP:journals/jc/Tiwari92}.
We also prove $\np$-hardness for our realisability problems for selective termination, already when checking the existence of good single-interval strategies.

\subparagraph*{Impact.} Our results provide a natural class of strategies for which realisability is decidable (whereas the general case remains open and known to be difficult~\cite{DBLP:journals/theoretics/PiribauerB24}), and with arguably low complexity (for synthesis). Furthermore, the class of interval strategies is of practical interest thanks to their concise representation and their inherently understandable structure (in contrast to the corresponding Mealy machine representation).

\subparagraph*{Related work.} In addition to the main references cited previously, we mention some (non-exhaustive) related work. The closest is~\cite{DBLP:conf/cav/BlahoudekB0OTT20}: interval strategies are similar to counter selector strategies, studied in \textit{consumption} MDPs. These differ from OC-MDPs in key aspects: all transitions consume resources (i.e., bear negative weights), and recharge can only be done in special reload states, where it is considered as an atomic action up to a given capacity. Consumption and counter-based (or energy) models have different behaviors (e.g.,~\cite{DBLP:conf/cav/BrazdilCKN12}). The authors of~\cite{DBLP:conf/cav/BlahoudekB0OTT20} also study incomparable objectives: almost-sure (i.e., probability one) B\"uchi objectives.

Restricting oneself to subclasses of strategies that prove to be of practical interest is a common endeavor in synthesis: e.g., strategies represented by decision trees~\cite{DBLP:conf/cav/BrazdilCCFK15,DBLP:conf/tacas/BrazdilCKT18,DBLP:conf/tacas/AshokJKWWY21,DBLP:journals/sttt/JungermannKW23}, pure strategies~\cite{DBLP:conf/stacs/Gimbert07,DBLP:conf/tacas/DelgrangeKQR20,DBLP:journals/lmcs/BouyerORV23}, finite-memory strategies~\cite{DBLP:journals/acta/ChatterjeeRR14,DBLP:conf/fsttcs/BouyerRV22,DBLP:journals/lmcs/BouyerORV23}, or strategies using limited forms of randomness~\cite{DBLP:journals/iandc/MainR24,MR25}.

\subparagraph*{Outline.}
In Section~\ref{section:preliminaries}, we introduce all prerequisite notions.
We define interval strategies and formalise our decision problems in Section~\ref{section:problems}.
The compressed Markov chain construction, which is central in our verification and realisability algorithms, is presented in Section~\ref{section:abstraction}.
Section~\ref{section:verification} focuses on the verification problem via compressed Markov chains.
We study the fixed-interval and parameterised realisability problems in Section~\ref{section:realisability}.
Finally, we present our complexity lower bounds in Section~\ref{section:hardness}.
Appendix~\ref{appendix:square root sum} presents additional details used in the correctness proof of our reduction from the square-root-sum-problem to the verification problem in bounded OC-MDPs.

\section{Preliminaries}\label{section:preliminaries}
\subparagraph*{Set and probability theory.} We write $\IN$, $\IQ$ and $\IR$ for the sets of non-negative integers, rational numbers and real numbers respectively.
We let  $\INpos$ denote the set of positive integers.
We let $\INbar = \IN\cup\{\infty\}$ and $\INposBar = \INpos\cup\{\infty\}$.
Given $n, n'\in\INbar$, we let $\integerInterval{n, n'}$ denote the set $\{\ocCount\in\IN\mid n\leq\ocCount\leq n'\}$, and if $n=0$, we shorten the notation to $\integerInterval{n'}$.

Let $A$ be a finite or countable set.
We write $\dist{A}$ for the set of distributions over $A$, i.e., of functions $\mu\colon A\to [0, 1]$ such that $\sum_{a\in A}\mu(a) = 1$.
Given a distribution $\mu\in\dist{A}$, we let $\supp{\mu} = \{a\in A\mid \mu(a) > 0\}$ denote the support of $\mu$.

\subparagraph*{Markov decision processes.}
Formally, a \textit{Markov decision process} (MDP) is a tuple $\mdp = \mdpTuple$ where $\mdpStateSpace$ is a countable set of states, $\mdpActionSpace$ is a finite set of actions and $\mdpTrans\colon \mdpStateSpace\times \mdpActionSpace\to \dist{\mdpStateSpace}$ is a (partial) probabilistic transition function.
Given $\mdpState\in\mdpStateSpace$, we write $\mdpActionSpace(\mdpState)$ for the set of actions $\mdpAction\in\mdpActionSpace$ such that $\mdpTrans(\mdpState,\mdpAction)$ is defined.
We say that an action $\mdpAction$ is enabled in $\mdpState$ if $\mdpAction\in\mdpActionSpace(\mdpState)$.
We require, without loss of generality, that there are no deadlocks in MDPs, i.e., that for all states $\mdpState$, $\mdpActionSpace(\mdpState)$ is non-empty.
We say that $\mdp$ is \textit{finite} if its state space is finite.

A \textit{Markov chain} can be seen as an MDP with a single enabled action per state.
We omit actions in Markov chains and denote them as tuples $\mchain = (\mdpStateSpace, \mdpTrans)$ with $\mdpTrans\colon\mdpStateSpace\to\dist{\mdpStateSpace}$.

Let $\mdp = \mdpTuple$.
A \textit{play} of $\mdp$ is an infinite sequence $\play=\mdpState_0\mdpAction_0\mdpState_1\mdpAction_1\ldots\in (\mdpStateSpace\mdpActionSpace)^\omega$ such that for all $\indexPosition\in\IN$, $\mdpState_{\indexPosition+1}\in \supp{\mdpTrans(\mdpState_\indexPosition, \mdpAction_\indexPosition)}$.
A \textit{history} is a finite prefix of a play ending in a state.
We let $\playSet{\mdp}$ and $\histSet{\mdp}$ respectively denote the set of plays and of histories of $\mdp$.
Given a play $\play=\mdpState_0\mdpAction_0\mdpState_1\mdpAction_1\ldots$ and $\indexLast\in\IN$, we let $\playPrefix{\play}{\indexLast} = \mdpState_0\mdpAction_0\ldots \mdpAction_{\indexLast-1}\mdpState_\indexLast$ and extend this notation to histories.
For any given history $\hist = \mdpState_0\mdpAction_0\ldots \mdpAction_{\indexLast-1}\mdpState_\indexLast$, we let $\first{\hist} = \mdpState_0$ and $\last{\hist} = \mdpState_\indexLast$.
Given two histories $\hist_1 = \mdpState_0\mdpAction_0\ldots \mdpAction_{\indexLast-1}\mdpState_\indexPosition$ and $\hist_2 = \mdpState_\indexPosition\mdpAction_\indexPosition\ldots \mdpAction_{\indexLast-1}\mdpState_\indexLast$ such that $\last{\hist_1} = \first{\hist_2}$, we let $\histConcat{\hist_1}{\hist_2} = \mdpState_0\mdpAction_0\ldots\mdpAction_{\indexLast-1}\mdpState_\indexLast$ denote their concatenation without repetition of state $\mdpState_\indexPosition$; we abusively call $\histConcat{\hist_1}{\hist_2}$ the \textit{concatenation} of $\hist_1$ and $\hist_2$.
The \textit{cylinder} of a history $\hist = \mdpState_0\mdpAction_0\ldots\mdpAction_{\indexLast-1}\mdpState_\indexLast$ is the set $\cyl{\hist} = \{\play\in\playSet{\mdp}\mid\playPrefix{\play}{\indexLast}=\hist\}$ of plays extending $\hist$.

Let $\histPart$ be a set of histories.
We let $\cyl{\histPart}=\bigcup_{\hist\in\histPart}\cyl{\hist}$.
We say that $\histPart$ is \textit{prefix-free} if there are no distinct histories $\hist,\hist'\in\histPart$ such that $\hist$ is a prefix of $\hist'$.
Whenever $\histPart$ is prefix-free, the union $\cyl{\histPart}=\bigcup_{\hist\in\histPart}\cyl{\hist}$ is disjoint.

In a Markov chain, due to the lack of actions, plays and histories are simply sequences of states coherent with transitions.
We extend the notation introduced above for plays and histories of MDPs to plays and histories of Markov chains.

\subparagraph*{Strategies.}
A strategy describes the action choices made throughout a play of an MDP.
In general, strategies can use randomisation and all previous knowledge to make decisions.
Formally, a \textit{strategy} is a function $\strat\colon\histSet{\mdp}\to\dist{\mdpActionSpace}$ such that for any history $\hist\in\histSet{\mdp}$, $\supp{\strat{}(\hist)}\subseteq \mdpActionSpace(\last{\hist})$, i.e., a strategy can only prescribe actions that are available in the last state of a history.
A play $\play=\mdpState_0\mdpAction_0\mdpState_1\ldots$ is \textit{consistent} with a strategy $\strat$ if for all $\indexPosition\in\IN$, $\strat(\playPrefix{\play}{\indexPosition})(\mdpAction_\indexPosition)>0$.
Consistency of a history with respect to a strategy is defined similarly.

A strategy is \textit{pure} if it does not use any randomisation, i.e., if it maps histories to Dirac distributions.
We view pure strategies as functions $\strat{}\colon\histSet{\mdp}\to\mdpActionSpace$.
A strategy $\strat$ is \textit{memoryless} if the distribution it suggests depends only on the last state of a history, i.e., if for all histories $\hist$ and $\hist'$, $\last{\hist} = \last{\hist'}$ implies $\strat{}(\hist) = \strat{}(\hist')$.
We view memoryless strategies and pure memoryless strategies as functions $\mdpStateSpace\to \dist{\mdpActionSpace}$ and $\mdpStateSpace\to\mdpActionSpace$ respectively.

Fix a strategy $\strat{}$ and an initial state $\mdpState_\init\in\mdpStateSpace$.
We obtain a purely stochastic process when following $\strat{}$ from $\mdpState_\init$, i.e., a countable Markov chain over the set of histories with the initial state $\mdpState_\init$.
Rather than defining this Markov chain, we instead define a distribution, denoted by $\probaGverb{\mdp}{\mdpState_\init}{\strat}$, over the $\sigma$-algebra induced by cylinder sets.
For all histories $\hist = \mdpState_0\mdpAction_0\mdpState_1\ldots\mdpState_\indexLast\in\histSet{\mdp}$, if $\mdpState_0 = \mdpState_\init$, we let
\(\probaGverb{\mdp}{\mdpState_\init}{\strat}(\cyl{\hist}) =
  \prod_{\indexPosition= 0}^{\indexLast-1}
  \strat(\playPrefix{\hist}{\indexPosition})(\mdpAction_\indexPosition)\cdot
  \mdpTrans(\mdpState_\indexPosition, \mdpAction_\indexPosition)(\mdpState_{\indexPosition+1}),\)
and otherwise, if $\mdpState_0\neq\mdpState_\init$, we let $\probaGverb{\mdp}{\mdpState_\init}{\strat}(\cyl{\hist}) = 0$.
Carath\'{e}odory's extension theorem~\cite[Thm.~A.1.3.]{Dur19} guarantees that this probability distribution extends in a unique fashion to the $\sigma$-algebra induced by cylinder sets.
For a Markov chain $\mchain$, due to the absence of non-determinism, we drop the strategy from the notation and write $\proba_{\mchain, \mdpState_\init}$.
If the relevant MDP or Markov chain is clear from the context, we omit them from the notation above.

In the following, we focus on memoryless strategies.
If $\strat$ is a memoryless strategy of $\mdp$, we consider the \textit{Markov chain induced by $\strat$ on $\mdp$} as a Markov chain over $\mdpStateSpace$.
Formally, it is defined as the Markov chain $(\mdpStateSpace, \mdpTrans')$ such that for all $\mdpState$, $\mdpState'\in\mdpStateSpace$, $\mdpTrans'(\mdpState)(\mdpState') = \sum_{\mdpAction\in\mdpActionSpace(\mdpState)}\strat(\mdpState)(\mdpAction)\cdot \mdpTrans(\mdpState, \mdpAction)(\mdpState')$.

\subparagraph*{Finite-memory strategies.}
In general, strategies can use unlimited knowledge of the past and thus may not admit a finite representation.
The classical finite representation for strategies is based on Mealy machines, i.e., finite automata with outputs.
At each step of a play, an action is chosen according to the outputs of the Mealy machine and the state of the Mealy machine is updated according to the current memory state, the current MDP state and chosen action.
A strategy is \textit{finite-memory} if it can be represented by a Mealy machine.

Let $\mdp=\mdpTuple$ be an MDP.
Formally, a \textit{Mealy machine} is a tuple $\mealy = \mealyTuple$ where $\mealyStateSpace$ is a finite set of memory states, $\mealyStateInit\in\mealyStateSpace$ is an initial memory state, $\mealyUpdate\colon\mealyStateSpace\times\mdpStateSpace\times\mdpActionSpace\to\mealyStateSpace$ is a memory update function and $\mealyNext\colon\mealyStateSpace\times\mdpStateSpace\to\dist{\mdpActionSpace}$ is a next-move function.
We formalise the finite-memory strategy $\strat^\mealy$ induced by a Mealy machine $\mealy=\mealyTuple$ as follows.
First, we define the iterated update function $\mealyUpdateHat\colon(\mdpStateSpace\mdpActionSpace)^*\to\mealyStateSpace$ by induction.
We let $\mealyUpdateHat(\emptyword) = \mealyStateInit$ (where $\emptyword$ denotes the empty word), and let, for all $u\mdpState\mdpAction\in(\mdpStateSpace\mdpActionSpace)^*$, $\mealyUpdateHat(u\mdpState\mdpAction) = \mealyUpdate(\mealyUpdateHat(u), \mdpState, \mdpAction)$.
The strategy $\strat^\mealy$ is then defined, for all histories $\hist = u\mdpState\in\histSet{\mdp}$, by $\strat^\mealy(\hist) = \mealyNext(\mealyUpdateHat(u), \mdpState)$.
Memoryless strategies are induced by one-state Mealy machines.

\begin{remark}
  There exist more general definitions of finite-memory strategies, e.g., where the considered Mealy machines have a randomised initialisation and a randomisation update function.
  However, these models are not equivalent in general (see, e.g.,~\cite{DBLP:journals/iandc/MainR24}.
  The model presented here is sufficient to compare interval strategies introduced in Section~\ref{section:problems} to strategies that rely on memory instead of observing counter values.
\end{remark}

\subparagraph*{One-counter Markov decision processes.}
One-counter MDPs (OC-MDPs) extend MDPs with a counter that can be incremented, decremented or left unchanged on each transition.
Formally, a \textit{one-counter MDP} is a tuple $\ocmdp = \ocTuple$ where $\ocTupleWeightless$ is a finite MDP and $\weight\colon \mdpStateSpace\times\mdpActionSpace\to \{-1, 0, 1\}$ is a (partial) weight function that assigns an integer weight from \(\{-1,0,1 \} \) to state-action pairs.
For all $\ocState\in\ocStateSpace$ and $\ocAction\in\ocActionSpace$, we require that $\weight(\ocState, \ocAction)$ be defined whenever $\ocAction\in\ocActionSpace(\ocState)$.
A \textit{configuration} of $\ocmdp$ is a pair $\ocConfigPair$ where $\ocState\in\ocStateSpace$ and $\ocCount\in\IN$.
In the sequel, by plays, histories, strategies etc.~of $\ocmdp$, we refer to the corresponding notion with respect to the MDP $\ocTupleWeightless$ underlying $\ocmdp$.

We remark that the weight function is not subject to randomisation, i.e., the weight of any transition is determined by the outgoing state and chosen action.
In particular, counter values can be inferred from histories and be taken in account to make decisions in $\ocmdp$.

Let $\ocmdp = \ocTuple$ be an OC-MDP.
The OC-MDP $\ocmdp$ induces an MDP over the infinite countable space of configurations.
Transitions in this induced MDP are defined using $\ocTrans$ for the probability of updating the state and $\weight$ for the deterministic change of counter value.
We interrupt any play whenever a configuration with counter value $0$ is reached.
Intuitively, such configurations can be seen as situations in which we have run out of an energy resource.
We may also impose an upper bound on the counter value, and interrupt any plays that reach this counter upper bound. We refer to OC-MDPs with a finite upper bound on counter values as \textit{bounded OC-MDPs} and OC-MDPs with no upper bounds as \textit{unbounded OC-MDP}.

We provide a unified definition for both semantics.
Let $\ocmdp = \ocTuple$ be an OC-MDP and let $\counterUB\in\INposBar$ be an upper bound on the counter value.
We define the MDP $\ocmdpFin{\ocmdp}{\counterUB} = (\ocStateSpace \times \integerInterval{\counterUB}, \ocActionSpace, \ocTransFin)$ where $\ocTransFin$ is defined, for all configurations $\ocConfig = \ocConfigPair\in\ocStateSpace\times\integerInterval{\counterUB}$, actions $\ocAction\in\ocActionSpace(\ocState)$ and states $\ocStateB\in\ocStateSpace$, by $\ocTransFin(\ocConfig, \ocAction)(\ocStateB, \ocCount + \weight(\ocState, \ocAction)) = \ocTrans(\ocState, \ocAction)(\ocStateB)$ if $\ocCount\notin\{0, \counterUB\}$, and $\ocTransFin(\ocConfig, \ocAction)(\ocConfig) = 1$ otherwise.
The state space of $\ocmdpFin{\ocmdp}{\counterUB}$ is finite if and only if $\counterUB\neq\infty$.

For technical reasons, we also introduce one-counter Markov chains.
In one-counter Markov chains, we authorise stochastic counter updates, i.e., counter updates are integrated in the transition function.
This contrasts with OC-MDPs where deterministic counter updates are used to allow strategies to observe counter updates.
We only require the unbounded semantics in this case.
Formally, a \textit{one-counter Markov chain} is a tuple $\ocChain = \ocChainTuple$ where $\ocStateSpace$ is a finite set of states and $\ocTrans\colon\ocStateSpace\to\dist{\ocStateSpace\times\{-1, 0, 1\}}$ is a probabilistic transition and counter update function.
The one-counter Markov chain $\ocChain$ induces a Markov chain $\ocChainFin{\ocChain}{\infty}=(\ocStateSpace\times\IN, \ocTransInf)$ such that for any configuration $\ocConfig = (\ocState, \ocCount)\in\ocStateSpace\times\IN$, any $\ocStateB\in\ocStateSpace$ and any $\weightVal\in\{-1, 0, 1\}$, we have $\ocTransInf(\ocConfig)((\ocStateB, \ocCount+\weightVal)) = \ocTrans(\ocState)(\ocStateB, \weightVal)$ if $\ocCount \neq 0$ and $\ocTransInf(\ocConfig)(\ocConfig)=1$ otherwise.

\subparagraph*{Objectives.}
Let $\mdp = \mdpTuple$ be an MDP.
An \textit{objective} is a measurable set of plays that represents a specification.
We focus on variants of reachability.
Given a set of target states $\target\subseteq \mdpStateSpace$, the \textit{reachability} objective for $\target$ is defined as $\reach{\target} = \{\mdpState_0\mdpAction_0\mdpState_1\mdpAction_1\ldots\in\playSet{\mdp}\mid \exists\,\indexPosition\in\IN,\, \mdpState_\indexPosition\in\target\}$.
If $\target=\{\mdpState\}$, we write $\reach{\mdpState}$ for $\reach{\{\mdpState\}}$.

Let $\ocmdp = \ocTuple$ be an OC-MDP, $\counterUB\in\INposBar$ be a counter upper bound and $\target\subseteq\ocStateSpace$.
We consider two variants of reachability on $\ocmdpFin{\ocmdp}{\counterUB}$.
The first goal we consider requires visiting $\target$ regardless of the counter value and the second requires visiting $\target$ with a counter value of zero.
The first objective is called \textit{state-reachability} and we denote it by $\reach{\target}$ instead of $\reach{\target\times\integerInterval{\counterUB}}$.
The second objective is called \textit{selective termination}, which is formalised as $\selectiveTermination{\target} = \reach{\target\times\{0\}}$.
Like above, if $\target=\{\ocState\}$, we write $\reach{\ocState}$ and $\selectiveTermination{\ocState}$ instead of $\reach{\{\ocState\}}$ and $\selectiveTermination{\{\ocState\}}$.

\subparagraph*{Reachability probabilities in Markov chains.}
Let $\mchain = \mchainTuple$ be a finite Markov chain and let $\target\subseteq\mdpStateSpace$ be a set of targets.
The probabilities $(\probaMC{\mdpState}(\reach{\target}))_{\mdpState\in\mdpStateSpace}$ are the least solution of a system of linear equations (e.g.,~\cite[Thm.~10.15]{BK08}).
We recall this system.

Let $\{\mdpStateSpace_{=0}, \mdpStateSpace_{=1}, \mdpStateSpace_{?}\}$ be a partition of $\mdpStateSpace$ such that $\mdpStateSpace_{=0} \subseteq\{\mdpState\in\mdpStateSpace\mid \probaMC{\mdpState}(\reach{\target}) = 0\}$ and $\target\subseteq\mdpStateSpace_{=1} \subseteq\{\mdpState\in\mdpStateSpace\mid \probaMC{\mdpState}(\reach{\target}) = 1\}$.
We consider the system defined by $x_\mdpState=0$ for all $\mdpState\in\mdpStateSpace_{=0}$, $x_\mdpState=1$ for all $\mdpState\in\mdpStateSpace_{=1}$ and $x_\mdpState = \sum_{\mdpState'\in\mdpStateSpace}\mdpTrans(\mdpState)(\mdpState')\cdot x_{\mdpState'}$ for all $\mdpState\in\mdpStateSpace_{?}$.
The least non-negative solution of this system is obtained by letting $x_\mdpState=\probaMC{\mdpState}(\reach{\target})$ for all $\mdpState\in\mdpStateSpace$.
Furthermore, this system has a unique solution when $\mdpStateSpace_{=0} =\{\mdpState\in\mdpStateSpace\mid \probaMC{\mdpState}(\reach{\target}) = 0\}$.
The set $\{\mdpState\in\mdpStateSpace\mid \probaMC{\mdpState}(\reach{\target}) = 0\}$ only depends on the topology of the Markov chains (i.e., which states are connected by a transition) and not the transition probabilities; a state is in this set if and only if there are no histories starting in this state and ending in $\target$.

\subparagraph*{The Blum-Shub-Smale model of computation.}
Some of our complexity bounds rely on a model of computation introduced by Blum, Shub and Smale~\cite{BSS1989}.
A Blum-Shub-Smale (BSS) machine, intuitively, is a random-access memory machine with registers storing real numbers.
Arithmetic computations on the content of registers are in constant time in this model.

The class of decision problem that can be solved in polynomial time in the BSS model coincides with the class of decision problems that can be solved, in the Turing model, in polynomial time with a $\posSLP$ oracle~\cite{DBLP:journals/siamcomp/AllenderBKM09}.
The $\posSLP$ problem asks, given a division-free straight-line program (intuitively, an arithmetic circuit), whether its output is positive.
The $\posSLP$ problem lies in the counting hierarchy and can be solved in polynomial space~\cite{DBLP:journals/siamcomp/AllenderBKM09}.

\subparagraph*{Theory of the reals.}
The theory of the reals refers to the set of sentences (i.e., fully quantified first-order logical formulae) in the signature of ordered fields that hold in $\IR$.
The problem of deciding whether a sentence is in the theory of the reals is decidable; if the number of quantifier blocks is fixed, this can be done in $\pspace$~\cite[Rmk.~13.10]{BPR2006}.
Furthermore, the problem of deciding the validity of an existential (resp.~universal) formula is $\np$-hard (resp.~$\coNP$-hard)~\cite[Rmk.~13.9]{BPR2006}.
Our complexity bounds refer to the complexity classes $\etr$ (existential theory of the reals) and $\coetr$, which contain the problems that can be reduced in polynomial time to checking the membership of an \textit{existential sentence} and \textit{universal sentence} in the theory of the reals respectively.

\section{Interval strategies and decision problems}\label{section:problems}
We now introduce interval strategies.
Definitions are given in Section~\ref{section:problems:definition}.
We compare interval strategies to finite-memory strategies of the MDP underlying the OC-MDP in Section~\ref{section:problems:representation}.
We close this section by formulating our decision problems in Section~\ref{section:problems:statements}.

We fix an OC-MDP $\ocmdp = \ocTuple$ and a counter upper bound $\counterUB\in\INposBar$ for this whole section.

\subsection{Interval strategies}\label{section:problems:definition}
Interval strategies are a subclass of memoryless strategies of $\ocmdpFin{\ocmdp}{\counterUB}$ that admit finite compact representations based on families of intervals.
Intuitively, a strategy is an interval strategy if there exists an \textit{interval partition} (i.e., a partition containing only intervals) of the set of counter values such that decisions taken in a state for two counter values in the same interval coincide.
We require that this partition has a finite representation to formulate verification and synthesis problems for interval strategies.

The set $\integerInterval{1, \counterUB-1}$ contains all counter values for which decisions are relevant.
Let $\intPart$ be an interval partition of $\integerInterval{1, \counterUB-1}$.
A memoryless strategy $\strat$ of $\ocmdpFin{\ocmdp}{\counterUB}$ is \textit{based on the partition $\intPart$} if for all $\ocState\in\ocStateSpace$, all $\interval\in\intPart$ and all $\ocCount, \ocCount'\in\interval$, we have $\strat(\ocState, \ocCount) = \strat(\ocState, \ocCount')$.
All memoryless strategies are based on the trivial partition of $\integerInterval{1, \counterUB-1}$ into singleton sets.
In practice, we are interested in strategies based on partitions with a small number of large intervals.

We define two classes of interval strategies: strategies that are based on finite partitions and, in unbounded OC-MDPs, strategies that are based on periodic partitions.
An interval partition $\intPart$ of $\INpos$ is \textit{periodic} if there exists a period $\period\in\INpos$ such that for all $\interval\in\intPart$, $\interval+\period = \{\ocCount+\period\mid\ocCount\in\interval\}\in\intPart$.
A periodic interval partition $\intPart$ with period $\period$ \textit{induces} the interval partition $\intPartB = \{\interval\in\intPart\mid\interval\subseteq\integerInterval{1, \period}\}$ of $\integerInterval{1, \period}$.
Conversely, for any $\period\in\INpos$, an interval partition $\intPartB$ of $\integerInterval{1, \period}$ \textit{generates} the periodic partition $\intPart = \{\interval+\ocCount\cdot\period\mid\interval\in\intPartB, \ocCount\in\IN\}$.

Let $\strat$ be a memoryless strategy of $\ocmdpFin{\ocmdp}{\counterUB}$.
The strategy $\strat$ is an \textit{open-ended interval strategy} (OEIS) if it is based on a finite interval partition of $\integerInterval{1, \counterUB-1}$.
The qualifier open-ended follows from there being an unbounded interval in any finite interval partition of $\INpos$ (for the unbounded case).
When $\counterUB=\infty$, $\strat$ is a \textit{cyclic interval strategy} (CIS) if there exists a period $\period\in\INpos$ such that for all $\ocState\in\ocStateSpace$ and all $\ocCount\in\INpos$, we have $\strat(\ocState, \ocCount) = \strat(\ocState, \ocCount+\period)$.
A strategy is a CIS with period $\period$ if and only if it is based on a periodic interval partition of $\INpos$ with period $\period$.\begin{remark}
  We do not consider strategies based on ultimately periodic interval partitions.
  However, our techniques can be adapted to analyse such strategies.
  The compressed Markov chain construction of Section~\ref{section:abstraction} can be defined for any memoryless strategy of $\ocmdpFin{\ocmdp}{\counterUB}$.
  By combining our approaches for the analysis of OEISs and CISs via compressed Markov chains, we can analyse such strategies.
  Furthermore, we can show that our complexity bounds for the decision problems we study extend to these strategies.\hfill$\lhd$
\end{remark}

We represent interval strategies as follows.
First, assume that $\strat$ is an OEIS and let $\intPart$ be the coarsest finite interval partition of $\integerInterval{1, \counterUB-1}$ on which $\strat$ is based.
We can represent $\strat$ by a table that lists, for each $\interval\in\intPart$, the bounds of $\interval$ and a memoryless strategy of $\ocmdp$ dictating the choices to be made when the current counter value lies in $\interval$.
Next, assume that $\counterUB=\infty$ and that $\strat$ is a CIS with period $\period$.
Let $\intPartB$ be an interval partition of $\integerInterval{1, \period}$ such that $\strat$ is based on the partition $\intPart$ generated by $\intPartB$.
We represent $\strat$ by $\period$ and an OEIS of $\ocmdpFin{\ocmdp}{\period+1}$ based on $\intPartB$ that specifies the behaviour of $\strat$ for counter values up to $\period$.

\begin{remark}
  In practice, in the bounded setting, it is not necessary to encode the counter upper bound $\counterUB$ in the representation of an OEIS; it is implicit from $\ocmdpFin{\ocmdp}{\counterUB}$.
  Nonetheless, we assume that $\counterUB$ is part of the strategy representation for the sake of convenience: this allows us to treat all bounded intervals uniformly in complexity analyses, as though all interval bounds are in the encoding of considered OEIS.
  This has no impact on our complexity results, as $\counterUB$ is part of the description of $\ocmdpFin{\ocmdp}{\counterUB}$.
  \hfill$\lhd$
\end{remark}

Interval strategies subsume \textit{counter-oblivious strategies}, i.e., memoryless strategies that make choices based only on the state and disregard the current counter value.
Counter-oblivious strategies can be viewed as memoryless strategies $\strat\colon\ocStateSpace\to\dist{\ocActionSpace}$ of $\ocmdp$.

\subsection{Substituting counters with memory}\label{section:problems:representation}
Memoryless strategies of $\ocmdpFin{\ocmdp}{\counterUB}$ can be seen as strategies of $\ocmdp$ when an initial counter value is fixed.
The idea is to use memory to keep track of the counter instead of observing it.
We can thus compare our representation of interval strategies to the classical Mealy machine representation of the corresponding strategies of $\ocmdp$.

We first formalise how to derive a strategy of $\ocmdp$ from a memoryless strategy of $\ocmdpFin{\ocmdp}{\counterUB}$ and an initial counter value.
Let $\ocCount_\init\in\integerInterval{1, \counterUB-1}$ be an initial counter value and let $\strat$ be a memoryless strategy of $\ocmdpFin{\ocmdp}{\counterUB}$.
We build a (partially-defined) strategy $\stratB_{\ocCount_\init}^{\strat}$ of $\ocmdp$ from $\strat$.
Intuitively, instead of having the counter value as an input of the strategy, we store the current counter value in memory.
For any history $\hist_\ocmdp = \ocState_1\ocAction_1\ldots\ocAction_{\indexLast-1}\ocState_\indexLast\in\histSet{\ocmdp}$, we define $\weight(\hist_\ocmdp) = \sum_{\indexPosition=0}^{\indexLast-1}\weight(\ocState_\indexPosition, \ocAction_\indexPosition)$.
The strategy $\stratB_{\ocCount_\init}^{\strat}$ is defined for any history $\hist_\ocmdp\in\histSet{\ocmdp}$ by $\stratB_{\ocCount_\init}^{\strat}(\hist_\ocmdp) = \strat((\last{\hist_\ocmdp}, \ocCount_\init+\weight(\hist_\ocmdp)))$ when $\ocCount_\init+\weight(\hist_\ocmdp)\in\integerInterval{1, \counterUB-1}$ and is left undefined otherwise.

Similarly, the state-reachability and selective termination objectives of $\ocmdpFin{\ocmdp}{\counterUB}$ can be translated into objectives of $\ocmdp$ when an initial counter value $\ocCount_\init$ is specified. 
We can show that the probability in $\ocmdpFin{\ocmdp}{\counterUB}$ of such an objective under $\strat$ from a configuration $(\ocState, \ocCount_\init)$ matches the probability of the counterpart objective in $\ocmdp$ under the strategy $\stratB^{\strat}_{\ocCount_\init}$ from $\ocState$.

If $\counterUB\in\IN$, the counterpart in $\ocmdp$ of any OEIS has finite memory: there are only finitely many counter values.
Assume that $\counterUB\in\IN$ and let $\strat$ be an OEIS.
The strategy $\stratB^{\strat}_{\ocCount_\init}$ is induced by the following Mealy machine: we let $\mealy = \mealyTuple$ where $\mealyStateSpace = \integerInterval{1, \counterUB -1}$, $\mealyStateInit = \ocCount_\init$, and for all $\ocState\in\ocStateSpace$, $\ocCount\in\mealyStateSpace$ and $\ocAction\in\ocActionSpace$, we let $\mealyUpdate(\ocCount, \ocState, \ocAction) = \ocCount + \weight(\ocState, \ocAction)$ and $\mealyNext(\ocCount, \ocState)(\ocAction) = \strat((\ocState, \ocCount))$.
Intuitively, $\mealy$ induces $\stratB^{\strat}_{\ocCount_\init}$ because it keeps track of the weight of the current history and this weight determines the choice prescribed by $\stratB^{\strat}_{\ocCount_\init}$.

This construction yields a Mealy machine whose size is exponential in the binary encoding size of $\counterUB$.
The following example illustrates that such exponential-size Mealy machines may be required, even for OEISs based on the partition $\{\{1\}, \integerInterval{2, \counterUB-1}\}$, i.e., OEISs that only have to distinguish $1$ from other counter values.
Additionally, this same example shows that the counterpart in $\ocmdp$ of an OEIS may require infinite memory in the unbounded setting.

\begin{example}\label{example:interval vs memory}
  We consider the OC-MDP $\ocmdp$ illustrated in Figure~\ref{figure:ocmdp:memory}.
  Let $\counterUB\in\INposBar$, $\counterUB \geq 3$.
  We consider the OEIS $\strat$ defined by $\strat(\ocState_1, \ocCount) = \ocAction$ for all $\ocCount\in\integerInterval{2, \counterUB-1}$ and $\strat(\ocState_1, 1) = \ocActionB$.
  The strategy $\strat$ maximises the probability of terminating in $\ocState_2$ from the configuration $(\ocState_0, 1)$.
  \begin{figure}
    \centering
        \begin{tikzpicture}
      \node[state, align=center] (q0) {$\ocState_0$};
      \node[stochasticc, right = of q0] (q0a) {};
      \node[state, right = of q0a] (q1) {$\ocState_1$};
      \node[state, right = of q1] (q2) {$\ocState_2$};

      \path[-] (q0) edge node[above] {$\ocAction\mid 1$} (q0a);
      \path[->] (q0a) edge[bend left] node[below] {$\frac{1}{2}$} (q0);
      \path[->] (q0a) edge node[above] {$\frac{1}{2}$} (q1);
      \path[->] (q1) edge node[above] {$\ocActionB\mid -1$} (q2);
      \path[->] (q1) edge[loop above] node[above] {$\ocAction\mid -1$} (q1);
      \path[->] (q2) edge[loop above] node[above] {$\ocAction\mid 0$} (q2);
    \end{tikzpicture}
    \caption{An OC-MDP.
      Edge splits following actions indicate probabilistic transitions.
      We indicate the weight of a state-action pair next to each action.
    }\label{figure:ocmdp:memory}
  \end{figure}
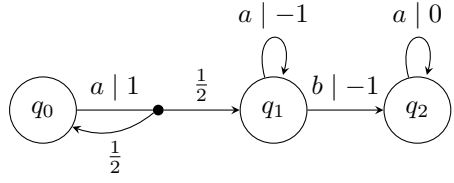

  We let $\stratB^{\strat}_1$ denote the counterpart of $\strat$ in $\ocmdp$ for the initial counter value $1$.
  We show that for all $2\leq\ocCount<\counterUB$, a Mealy machine inducing $\stratB^{\strat}_1$ must have at least $\ocCount$ states.
  In particular, if $\counterUB$ is finite, it means that any such Mealy machine must have $\counterUB-1$ states, and if $\counterUB$ is infinite, it means that there is no (finite) Mealy machine inducing $\stratB^\strat_1$.
  
  Let $2\leq \ocCount < \counterUB$.
  We proceed by contradiction.
  Assume that there exists a Mealy machine $\mealy = \mealyTuple$ inducing $\stratB^\strat_1$ such that $|\mealyStateSpace|\leq\ocCount-1$.
  For all $\indexPosition\in\integerInterval{\ocCount-1}$, we let $\mealyState_\indexPosition = \mealyUpdateHat((\ocState_0\ocAction)^{\ocCount-1}(\ocState_1\ocAction)^\indexPosition)$ and let $\hist_\indexPosition = (\ocState_0\ocAction)^{\ocCount-1}(\ocState_1\ocAction)^{\indexPosition}\ocState_1$.
  For all $\indexPosition\in\integerInterval{\ocCount-1}$, $\hist_\indexPosition$ is a history of $\ocmdp$ in the domain of $\stratB^\strat_1$ (because $\ocCount < \counterUB$ and the initial counter value is $1$) that is consistent with $\stratB^\strat_1$.
  Since $\mealy$ induces $\stratB^{\strat}_1$, it follows that for all $\indexPosition\in\integerInterval{\ocCount-2}$, we have $\stratB^{\strat}_1(\hist_\indexPosition) = \mealyNext(\mealyState_\indexPosition, \ocState_1) = \ocAction$ and that $\stratB^{\strat}_1(\hist_{\ocCount-1}) = \mealyNext(\mealyState_{\ocCount-1}, \ocState_1) = \ocActionB$.
  However, $\mealyState_{\ocCount-1}$ must occur more than once in the sequence $(\mealyState_\indexPosition)_{\indexPosition\in\integerInterval{\ocCount-1}}$ because this sequence is obtained by repeating the update rule $\mealyUpdate(\cdot, \ocState_1, \ocAction)$ from $\mealyState_0$ and there are no more than $\ocCount-1$ memory states in $\mealy$.
  This is a contradiction.
  \hfill$\lhd$
\end{example}

We now assume that $\counterUB=\infty$.
The counterpart in $\ocmdp$ of any CIS is a finite-memory strategy: it suffices to keep track of the remainder of division of the counter value by a period.
Let $\strat$ be a CIS, $\period$ be a period of $\strat$ and $\ocCount_\init$ be an initial counter value.
Formally, the strategy $\stratB^{\strat}_{\ocCount_\init}$ is induced by the Mealy machine $\mealy = \mealyTuple$ defined as follows.
We let $\mealyStateSpace = \integerInterval{\period -1}$ and $\mealyStateInit = \ocCount_\init\bmod\period$.
Updates are defined, for all $\ocState\in\ocStateSpace$, $\ocCount\in\mealyStateSpace$ and $\ocAction\in\ocActionSpace$, by $\mealyUpdate(\ocCount, \ocState, \ocAction) = (\ocCount + \weight(\ocState, \ocAction))\bmod\period$.
The next-move function is defined differently following whether the memory state is zero or not.
We let, for all $\ocState\in\ocStateSpace$, $\ocCount\in\mealyStateSpace$ and $\ocAction\in\ocActionSpace$, $\mealyNext(\ocCount, \ocState)(\ocAction) = \strat((\ocState, \ocCount))$ if $\ocCount\neq 0$ and $\mealyNext(0, \ocState)(\ocAction) = \strat((\ocState, \period))$.
Intuitively, $\mealy$ induces $\stratB^\strat_{\ocCount_\init}$ because  the remainder of the current counter value for its division by the period is sufficient to mimic $\strat$.

By adapting Example~\ref{example:interval vs memory}, we can show that a CIS representation may be exponentially more succinct than any Mealy machine for its counterpart in $\ocmdp$.

\subsection{Decision problems}\label{section:problems:statements}

We formulate the decision problems studied in following sections.
The common inputs of these problems are an OC-MDP $\ocmdp$ with rational transition probabilities, a counter bound $\counterUB\in\INposBar$ (encoded in binary if it is finite), a target $\target\subseteq\ocStateSpace$, an objective $\objective\in\{\reach{\target}, \selectiveTermination{\target}\}$, an initial configuration $\ocConfig_\init = (\ocState_\init, \ocCount_\init)$ and a threshold $\thresProba\in\ccInt{0}{1}\cap\IQ$ against which we compare the probability of $\objective$.
Problems that are related to CISs assume that $\counterUB=\infty$ as all strategies in the bounded case are OEISs.
We lighten the probability notation below by omitting $\ocmdpFin{\ocmdp}{\counterUB}$ from it.

First, we are concerned with the verification of interval strategies.
The \textit{interval strategy verification problem} asks to decide, given an interval strategy $\strat$, whether $\probaG{\ocConfig_\init}{\strat}(\objective)\geq\thresProba$.
We assume that the encoding of the input interval strategy matches the description of Section~\ref{section:problems:definition}.

The other two problems relate to interval strategy synthesis.
The corresponding decision problem is called \textit{realisability}.
We provide algorithms checking the existence of well-performing structurally-constrained interval strategies.
We formulate two variants of this problem.

For the first variant, we fix an interval partition $\intPart$ of $\integerInterval{1, \counterUB-1}$ beforehand and ask to check if there is a good strategy based on $\intPart$.
Formally, the \textit{fixed-interval OEIS realisability problem} asks whether, given a finite interval partition $\intPart$ of $\integerInterval{1, \counterUB-1}$, there exists an OEIS $\strat$ based on $\intPart$ such that $\probaG{\ocConfig_\init}{\strat}(\objective)\geq\thresProba$.
The variant for CISs is defined similarly: the \textit{fixed-interval CIS realisability problem} asks whether, given a period $\period\in\INpos$ and an interval partition $\intPartB$ of $\integerInterval{1, \period}$, there exists a CIS $\strat$ based on the partition generated by $\intPartB$ such that $\probaG{\ocConfig_\init}{\strat}(\objective)\geq\thresProba$.

For the second variant, we parameterise the number of intervals in the partition and the size of bounded intervals.
The  \textit{parameterised OEISs realisability problem} asks, given a bound $\intNum\in\INpos$ on the number of intervals and a bound $\intSize\in\INpos$ on the size of bounded intervals, whether there exists an OEIS $\strat$ such that $\probaG{\ocConfig_\init}{\strat}(\objective)\geq\thresProba$ and $\strat$ is based on an interval partition $\intPart$ of $\integerInterval{1, \counterUB-1}$ with $|\intPart|\leq\intNum$ and, for all bounded $\interval\in\intPart$, $|\interval|\in\integerInterval{1, \intSize}$ (in particular, the parameter $\intSize$ does not constrain the required infinite interval in the unbounded setting $\counterUB=\infty$).
The  \textit{parameterised CIS realisability problem} asks, given a bound  $\intNum\in\INpos$ on the number of intervals and a bound $\intSize\in\INpos$ on the size of intervals, whether there exists a CIS $\strat$ such that $\probaG{\ocConfig_\init}{\strat}(\objective)\geq\thresProba$ and $\strat$ is based on an interval partition $\intPart$ of $\INpos$ with period $\period$ such that $|\interval|\leq \intSize$ for all $\interval\in\intPart$ and $\intPart$ induces a partition of $\integerInterval{1, \period}$ with at most $\intNum$ intervals.
In both cases, we assume that the number $\intNum$ is given in unary.
This ensures that witness strategies, when they exist, are based on interval partitions that have a representation of size polynomial in the size of the inputs.

\begin{remark}\label{remark:oeis realisability:trivial parameters}
  In bounded OC-MDPs, instances of the parameterised OEIS realisability problem such that no partitions are compatible with the input parameters $\intNum$ and $\intSize$ are trivially negative.
  If $\counterUB\in\IN$, then there are no interval partitions of $\integerInterval{1, \counterUB-1}$ compatible with $\intNum$ and $\intSize$ whenever $\counterUB-1>\intNum\cdot\intSize$.
  This issue does not arise for OEISs in unbounded OC-MDPs or for CISs, as counter-oblivious strategies are always possible witnesses no matter the  parameters.
  \hfill$\lhd$
\end{remark}

For both realisability problems, we consider two variants, depending on whether we want the answer with respect to the set of \textit{pure} or \textit{randomised} interval strategies.
For many objectives in MDPs (e.g., reachability, parity objectives~\cite{BK08,DBLP:journals/lmcs/BouyerORV23}), the maximum probability of satisfying the objective is the same for pure and randomised strategies.
As highlighted by the following example, when we restrict the structure of the sought interval strategy, there may exist randomised strategies that perform better than all pure ones.
Intuitively, randomisation can somewhat alleviate the loss in flexibility caused by structural restrictions~\cite{DBLP:journals/acta/ChatterjeeRR14}.

\begin{example}\label{example:pure vs random}
  The fixed-interval and parameterised realisability problems subsume the realisability problem for counter-oblivious strategies.
  The goal of this example is to provide an OC-MDP in which there exists a randomised counter-oblivious strategy that performs better than any pure counter-oblivious strategy from a given configuration.

  We consider the OC-MDP $\ocmdp$ depicted in Figure~\ref{figure:example:pure vs random} in which all weights are $-1$, the objective $\reach{\ocStateC_\top}=\selectiveTermination{\ocStateC_\top}$, a counter bound $\counterUB\geq 3$ and the initial configuration $(\ocState, 2)$.
  To maximise the probability of reaching $\ocStateC_\top$ from $(\ocState, 2)$ with an unrestricted strategy, we must select action $\ocAction$ in $(\ocState, 2)$ and then $\ocActionB$ in $(\ocState, 1)$.
  Intuitively, $\ocAction$ is used when two steps remain as it allows another attempt and we switch to $\ocActionB$ when one step is left because it is the action that maximises the probability of reaching $\ocStateC_\top$ in a single step.

  \begin{figure}
    \centering
    \begin{tikzpicture}
      \node[state, align=center] (q) {$\ocState$};
      \node[right = of q] (mid) {};
      \node[stochasticc, node distance=4mm, below = of mid] (qa) {};
      \node[stochasticc, node distance=4mm, above = of mid] (qb) {};
      \node[state, right = of qb] (bot) {$\ocStateC_\bot$};
      \node[state, right = of qa] (top) {$\ocStateC_\top$};

      \path[-] (q) edge node[above] {$\ocAction$} (qa);
      \path[-] (q) edge node[above] {$\ocActionB$} (qb);
      \path[->] (qa) edge[bend left] node[below] {$\frac{1}{2}$} (q);
      \path[->] (qa) edge node[below] {$\frac{1}{2}$} (top);
      \path[->] (qb) edge node[above] {$\frac{1}{4}$} (bot);
      \path[->] (qb) edge node[below] {$\frac{3}{4}$} (top);
      \path[->] (top) edge[loop right] node[right] {$\ocAction$} (top);
      \path[->] (bot) edge[loop right] node[right] {$\ocAction$} (bot);
    \end{tikzpicture}
    \caption{An OC-MDP where all weights are $-1$ and are omitted from the figure.
    }\label{figure:example:pure vs random}
  \end{figure}
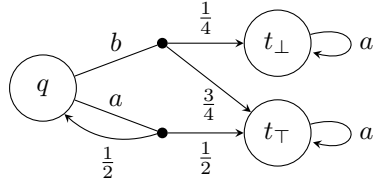

  We now limit our attention to counter-oblivious strategies.
  For pure strategies, no matter whether action $\ocAction$ or $\ocActionB$ is chosen in $\ocState$, $\ocStateC_\top$ is reached with probability $\frac{3}{4}$ from $(\ocState, 2)$.
  However, when playing both actions uniformly at random in $\ocState$, the resulting reachability probability from $(\ocState, 2)$ is $\frac{25}{32} > \frac{3}{4}$.
  This shows that randomised counter-oblivious strategies can achieve better reachability (resp.~selective termination) probabilities than pure strategies.
  \hfill$\lhd$
\end{example}

\section{Compressing induced Markov chains}\label{section:abstraction}
This section introduces the construction underlying our algorithms for the decision problems presented in Section~\ref{section:problems}.
To analyse the (possibly infinite) Markov chains induced by memoryless strategies over the space of configurations of an OC-MDP, we build a \textit{compressed Markov chain}, i.e., a Markov chain with fewer configurations and adjusted transitions.
A compressed Markov chain is defined with respect to a memoryless strategy and an interval partition on which the strategy is based.
This construction is generic, in the sense that it can formally be defined not only for interval strategies, but for all memoryless strategies.

We introduce the compression construction in Section~\ref{section:abstraction:definition}.
Section~\ref{section:compression:refinement} explains how to efficiently enforce a technical property on interval partitions such that compressed Markov chains are well-defined.
In Section~\ref{section:abstraction:validation}, we prove that termination probabilities and the probability of hitting a counter upper bound are preserved following compression.
For algorithmic reasons, we show that the transition probabilities of the compressed Markov chain can be represented as solutions of systems of quadratic equations in Section~\ref{section:abstraction:transitions}.
Although compressed Markov chains are finite for OEISs, they are not for CISs; we close the section by proving that compressed Markov chains for CISs are induced by one-counter Markov chains.

For the whole section, we fix an OC-MDP $\ocmdp = \ocTuple$, a bound $\counterUB\in\INposBar$ on counter values and a memoryless strategy $\strat$ of $\ocmdpFin{\ocmdp}{\counterUB}$ based on an interval partition $\intPart$ of $\integerInterval{1,\counterUB-1}$.

\subsection{Defining compressed Markov chains}\label{section:abstraction:definition}
We introduce the \textit{compressed Markov chain} $\compressChainVerbose$ derived from the Markov chain induced on $\ocmdpFin{\ocmdp}{\counterUB}$ by $\strat$ and the partition $\intPart$.
We write $\compressChain$ instead of $\compressChainVerbose$ whenever $\ocmdp$ is clear from the context. Intuitively, we keep some configurations in the state space of $\compressChain$ and, to balance this, transitions of $\compressChain$ aggregate several histories of the induced Markov chain.
We also apply compression for bounded intervals: interval bounds are represented in binary and thus the size of an interval is exponential in its encoding size.
We open with an example.

\begin{example}\label{example:compression:main}
  We consider the OC-MDP depicted in Figure~\ref{figure:example:compression:main:ocmdp} and counter upper bound $\counterUB=+\infty$.
  Let $\strat$ denote the randomised OEIS based on $\intPart = \{\integerInterval{1, 7}, \integerInterval{8, \infty}\}$ such that $\strat(\ocState, 1)(\ocAction) = \strat(\ocStateB, 1)(\ocAction) =\strat(\ocState, 8)(\ocActionC) = 1$ and $\strat(\ocStateB, 8)(\ocAction) = \strat(\ocStateB, 8)(\ocActionB) = \frac{1}{2}$.
  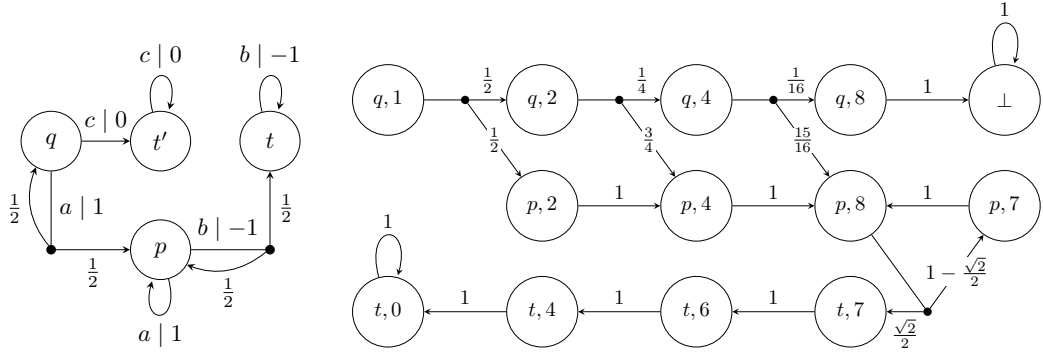
\begin{figure}
    \begin{subfigure}[t]{0.31\textwidth}
      \centering
      \scalebox{0.9}{
        \begin{tikzpicture}[every state/.style={on grid},node distance=1.6cm]
          \node[state, align=center] (q) {$\ocState$};
          \node[on grid,stochasticc, below = of q] (qmid) {};
          \node[state, right = of qmid] (p) {$\ocStateB$};
          \node[on grid,stochasticc, right = of p] (pmid) {};
          \node[state, above = of pmid] (t) {$\ocStateC$};
          \node[state, right = of q] (bot) {$\ocStateD$};
          
          \path[->] (q) edge node[above] {$\ocActionC\mid 0$} (bot);
          \path[-] (q) edge node[right] {$\ocAction\mid 1$} (qmid);
          \path[->] (qmid) edge[bend left] node[left] {$\frac{1}{2}$} (q);
          \path[->] (qmid) edge node[below] {$\frac{1}{2}$} (p);
          \path[-] (p) edge node[above] {$\ocActionB\mid -1$} (pmid);
          \path[->] (pmid) edge node[right] {$\frac{1}{2}$} (t);
          \path[->] (pmid) edge[bend left] node[below] {$\frac{1}{2}$} (p);
          \path[->] (p) edge[loop below] node[below] {$\ocAction\mid 1$} (p);
          \path[->] (t) edge[loop above] node[above] {$\ocActionB\mid -1$} (t);
          \path[->] (bot) edge[loop above] node[above] {$\ocActionC\mid 0$} (bot);
        \end{tikzpicture}
      }
      \caption{An OC-MDP. Weights are written next to actions.}\label{figure:example:compression:main:ocmdp}
    \end{subfigure}\hfill
    \begin{subfigure}[t]{0.67\textwidth}
      \centering
      \scalebox{0.78}{
        \begin{tikzpicture}[every state/.style = {on grid,minimum size = 12mm},node distance=1.3cm]
\node[state] (q1) {$\ocState, 1$};
          \node[on grid,stochasticc, right= of q1] (qmid1) {};
          \node[state, right = of qmid1] (q2) {$\ocState, 2$};
          \node[on grid,stochasticc, right= of q2] (qmid2) {};
          \node[state, right = of qmid2] (q4) {$\ocState, 4$};
          \node[on grid,stochasticc, right= of q4] (qmid4) {};
          \node[state, right = of qmid4] (q8) {$\ocState, 8$};
          \node[on grid,white,stochasticc, right= of q8] (qmid8) {};
          \node[state, right = of qmid8] (bot) {$\bot$};
\node[state, below = 1.8cm of q2] (p2) {$\ocStateB, 2$};
          \node[on grid,white,stochasticc, right= of p2] (pmid2) {};
          \node[state, right = of pmid2] (p4) {$\ocStateB, 4$};
          \node[on grid,white,stochasticc, right= of p4] (pmid4) {};
          \node[state, right = of pmid4] (p8) {$\ocStateB, 8$};
\node[state, below = 1.8cm of p8] (t7) {$\ocStateC, 7$};
          \node[on grid,white,stochasticc, left= of t7] (tmid7) {};
          \node[on grid,white,stochasticc, right= of p8] (p7anchor) {};
          \node[state, right = of p7anchor] (p7) {$\ocStateB, 7$};
          \node[on grid,stochasticc, right= of t7] (pmid8) {};
          \node[state, left = of tmid7] (t6) {$\ocStateC, 6$};
          \node[on grid,white,stochasticc, left= of t6] (tmid6) {};
          \node[state, left = of tmid6] (t4) {$\ocStateC, 4$};
          \node[on grid, white,stochasticc, left= of t4] (tmid4) {};
          \node[state, left = of tmid4] (t0) {$\ocStateC, 0$};

          \path[-] (q1) edge (qmid1);
          \path[->] (qmid1) edge node[above] {$\frac{1}{2}$} (q2);
          \path[->] (qmid1) edge node[inner sep=0, fill=white] {$\frac{1}{2}$} (p2);
          
          \path[-] (q2) edge (qmid2);
          \path[->] (qmid2) edge node[above] {$\frac{1}{4}$} (q4);
          \path[->] (qmid2) edge node[inner sep=0, fill=white] {$\frac{3}{4}$} (p4);

          \path[-] (q4) edge (qmid4);
          \path[->] (qmid4) edge node[above]  {$\frac{1}{16}$} (q8);
          \path[->] (qmid4) edge node[inner sep=0, fill=white]  {$\frac{15}{16}$} (p8);

          \path[->] (q8) edge node[above] {$1$} (bot);

          \path[->] (p2) edge node[above] {$1$} (p4);
          \path[->] (p4) edge node[above] {$1$} (p8);
          \path[->] (p7) edge node[above] {$1$} (p8);

          \path[-] (p8) edge (pmid8);
          \path[->] (pmid8) edge node[inner sep=0,fill=white] {$1-\frac{\sqrt{2}}{2}$} (p7);
          \path[->] (pmid8) edge node[below] {$\frac{\sqrt{2}}{2}$} (t7);

          \path[->] (t7) edge node[above] {$1$} (t6);
          \path[->] (t6) edge node[above] {$1$} (t4);
          \path[->] (t4) edge node[above] {$1$} (t0);

          \path[->] (t0) edge[loop above] node[above] {$1$} (t0);
          \path[->] (bot) edge[loop above] node[above] {$1$} (bot);
        \end{tikzpicture}
      }
      \caption{Fragment of the compressed Markov chain of Ex.~\ref{example:compression:main} reachable from $(\ocState, 1)$.
        Configuration parentheses are omitted to lighten the figure.
      }\label{figure:example:compression:main:chain}
    \end{subfigure}
    \caption{An illustration of an OC-MDP and its compression for a specific strategy.}
  \end{figure}

  We define the compressed Markov chain $\compressChain$ depicted in Figure~\ref{figure:example:compression:main:chain} by processing each interval individually.
  First, we consider the bounded interval $\interval = \integerInterval{1, 7}$.
  When we enter $\interval$ from its minimum or maximum, we only consider counter jumps by powers of $2$, starting with $1=2^0$.
  If a counter value in $\interval$ is reached by jumping by $2^\powerMax$, we consider counter updates of $2^{\powerMax+1}$ from it; Figure~\ref{figure:mc:counter update scheme} illustrates this counter update rule.
  This explains the counter progressions from $(\ocState, 1)$ to $(\ocState, 8)$ and from $(\ocStateC, 7)$ to $(\ocStateC, 0)$.
  The state space of $\compressChain$ with respect to $\interval$ contains the configurations whose counter values can be reached via this scheme.
  Transitions aggregate several histories of $\ocmdpFin{\ocmdp}{\infty}$, e.g., the probability from $\ocConfig=(\ocState, 2)$ to $\ocConfig' =(\ocStateB, 4)$ is the probability under $\strat$ of all histories of $\ocmdpFin{\ocmdp}{\infty}$ from $\ocConfig$ to $\ocConfig'$ along which counter values elsewhere than in $\ocConfig'$ remain between $\min\interval=1$ and $3$ (i.e., the counter value before the next step).
  The encoding of transition probabilities may be exponential in the number of retained configurations; this is highlighted by the progression of probability denominators between $(\ocState, 1)$ and $(\ocState, 8)$.

  \begin{figure}
  \centering
  \begin{tikzpicture}
    \node[dotted, state, align=center] (q0) {$0$};
    \node[state, right = of q0] (q1) {$1$};
    \node[state, right = of q1] (q2) {$2$};
    \node[state, right = of q2] (q4) {$4$};
    \node[state, right = of q4] (q6) {$6$};
    \node[state, right = of q6] (q7) {$7$};
    \node[dotted, state, right = of q7] (q8) {$8$};

    \path[->] (q1) edge (q0);
    \path[->] (q1) edge (q2);
    \path[->] (q2) edge (q4);
    \path[->, bend left] (q2) edge (q0);
    \path[->, bend left] (q4) edge (q0);
    \path[->] (q7) edge (q8);
    \path[->] (q7) edge (q6);
    \path[->] (q6) edge (q4);
    \path[->,bend right] (q6) edge (q8);
    \path[->, bend right] (q4) edge (q8);
  \end{tikzpicture}
  \caption{An illustration of the counter update scheme in a compressed Markov chain for the interval $\integerInterval{1, 7}$.}\label{figure:mc:counter update scheme}
\end{figure}

  For the unbounded interval $\interval = \integerInterval{8, \infty}$, we only consider configurations with counter value $\min\interval=8$ and consider transitions to configurations with counter value $\min\interval-1=7$.
  In this case, for instance, the transition probability from $(\ocStateB, 8)$ to $(\ocStateB, 7)$, corresponds the probability under $\strat$ in $\ocmdpFin{\ocmdp}{\infty}$ of hitting counter value $7$ for the first time in $\ocStateB$ from $(\ocStateB, 8)$.
  This example illustrates that this probability can be irrational.
  Here, the probability of moving from $(\ocStateB, 8)$ to $(\ocStateB, 7)$ is a solution of the quadratic equation $\varTrans = \frac{1}{4}+\frac{1}{2}\varTrans^2$ (see~\cite{DBLP:journals/lmcs/KuceraEM06}): $\frac{1}{4}$ is the probability of directly moving from $(\ocStateB, 8)$ to $(\ocStateB, 7)$ and $\frac{1}{2}\varTrans^2$ is the probability of moving from $(\ocStateB, 8)$ to $(\ocStateB, 7)$ by first going through the intermediate configuration $(\ocStateB, 9)$.

  Finally, we comment on the absorbing state $\bot$.
  The rules making up transitions of $\compressChain$ outlined above require a change in counter value.
  We redirect the probability of never seeing such a change to $\bot$.
  In this example, $\strat$ does not allow a counter decrease from $(\ocState, 8)$.
  \hfill$\lhd$
\end{example}

Example~\ref{example:compression:main} outlines the main ideas to construct $\compressChain$.
To ensure that compressed Markov chains are well-defined, we impose the following assumption on $\intPart$ which guarantees that, in general, bounded intervals of $\intPart$ can only be entered by one of their bounds.
\begin{assumption}\label{assumption:interval size}
  For all bounded $\interval\in\intPart$, $\log_2(|\interval|+1)\in\IN$, i.e., $|\interval| = 2^{\powerMax_{\interval}} - 1$ for some $\powerMax_\interval\in\IN$.
\end{assumption}
Assumption~\ref{assumption:interval size} is not prohibitive: we prove in Section~\ref{section:compression:refinement} that, given a bounded interval, we can partition it into sub-intervals satisfying the required size constraint in polynomial time.
We assume that Assumption~\ref{assumption:interval size} is satisfied for $\intPart$ for the remainder of the section.

We now formalise $\compressChain = (\compressChainStateSpace, \compressChainTrans)$.
We start by defining its state space $\compressChainStateSpace$ which does not depend on $\strat$.
We first formalise the configurations that are retained for each interval.

Let $\interval\in\intPart$.
First, let us assume that $\interval$ is unbounded and let $\intLB_\interval$ denote its minimum.
We set $\compressChainStateSpaceJ = \ocStateSpace\times\{\intLB_\interval\}$, i.e., we only retain the configurations with minimal counter value in $\interval$.

Next, let us assume that $\interval$ is bounded and of the form $\integerInterval{\intLB_\interval, \intUB_\interval}$.
Let $\powerMax_\interval=\log_2(|\interval|+1)$ (this is an integer by Assumption~\ref{assumption:interval size}).
We retain the counter values that can be reached via a generalisation of the scheme depicted in Figure~\ref{figure:mc:counter update scheme}.
The set of retained configurations for $\interval$ is given by
\[ \compressChainStateSpaceJ = \ocStateSpace \times \left(
  \{\intLB_\interval + 2^{\powerIndex} - 1 \mid
  \powerIndex\in\integerInterval{\powerMax_\interval-1}\}\cup
  \{\intUB_\interval - (2^{\powerIndex} - 1) \mid
  \powerIndex\in\integerInterval{\powerMax_\interval-1}\}
\right).\]

Finally, we consider absorbing configurations and the new state $\bot$.
We let $\compressChainStateSpaceStar = \{\bot\}\cup (\ocStateSpace\times\{0, \counterUB\})$ if $\counterUB\in\IN$ and $\compressChainStateSpaceStar = \{\bot\}\cup (\ocStateSpace\times\{0\})$ otherwise.
We define $\compressChainStateSpace = \compressChainStateSpaceStar\cup\bigcup_{\interval\in\intPart}\compressChainStateSpaceJ$.

We now define the transition function $\compressChainTrans$.
For all $\ocConfig\in\compressChainStateSpaceStar$, we let $\compressChainTrans(\ocConfig)(\ocConfig)=1$.
For configurations whose counter value lies in one of the intervals $\interval\in\intPart$, we provide a unified definition based on a notion of \textit{successor counter values}, generalising the ideas of Example~\ref{example:compression:main} and Figure~\ref{figure:mc:counter update scheme}.

Let $\interval\in\intPart$.
If $\interval$ is unbounded, we define the successor of $\intLB_\interval=\min\interval$ to be $\intLB_\interval-1$.
We now assume that $\interval = \integerInterval{\intLB_\interval, \intUB_\interval}$ is bounded and let $\powerMax_\interval=\log_2(|\interval|+1)$ and $\powerIndex\in\integerInterval{\powerMax_\interval-1}$.
The successors of $\intLB_\interval + 2^{\powerIndex}-1$ are $\intLB_\interval-1$ and $\intLB_\interval + 2^{\powerIndex+1}-1$.
Symmetrically, for $\intUB_\interval - (2^{\powerIndex}-1)$, its successors are $\intUB_\interval+1$ and $\intUB_\interval - (2^{\powerIndex+1}-1)$.
Both cases entail a counter change by $2^\powerIndex$.
Assumption~\ref{assumption:interval size} ensures that all successor counter values appear in $\compressChainStateSpace$.

Let $\ocConfig = (\ocState, \ocCount)\in\compressChainStateSpace\setminus\compressChainStateSpaceStar$ and $\ocConfig' = (\ocState', \ocCount')\in\compressChainStateSpace\setminus\{\bot\}$.
If $\ocCount'$ is not a successor of $\ocCount$, we set $\compressChainTrans(\ocConfig)(\ocConfig') = 0$.
Assume now that $\ocCount'$ is a successor of $\ocCount$.
We let $\succHist{\ocConfig}{\ocConfig'}\subseteq\histSet{\ocmdpFin{\ocmdp}{\counterUB}}$ be the set of histories $\hist$ such that $\first{\hist} = \ocConfig$, $\last{\hist}=\ocConfig'$ and for all configurations $\ocConfig''$ along $\hist$ besides $\ocConfig'$, the counter value of $\ocConfig''$ is not a successor of $\ocCount$; outside of $\ocConfig'$ along $\hist$, the counter remains, in the bounded case, strictly between the two successors of $\ocCount$, and, in the unbounded case, strictly above the successor $\ocCount-1$ of $\ocCount$.
We set $\compressChainTrans(\ocConfig)(\ocConfig') = \probaGverb{\ocmdpFin{\ocmdp}{\counterUB}}{\ocConfig}{\strat}(\cyl{\succHist{\ocConfig}{\ocConfig'}})$.
To ensure that $\compressChainTrans(\ocConfig)$ is a distribution we let $\compressChainTrans(\ocConfig)(\bot) = 1 - \sum_{\ocConfig''\neq\bot}\compressChainTrans(\ocConfig)(\ocConfig'')$; this transition captures the probability of the counter never hitting a successor of $\ocCount$.

\begin{remark}\label{remark:compressing ocmcs}
  Although we have formalised compressed Markov chains for OC-MDPs, the construction can be applied to one-counter Markov chains.
  In particular, the properties outlined below transfer to the compression of a one-counter Markov chain.
  \hfill$\lhd$
\end{remark}

In the following, we differentiate histories of $\compressChain$ from histories of $\ocmdpFin{\ocmdp}{\counterUB}$ by denoting them with a bar, e.g., $\mcHist$ indicates a history of $\compressChain$.

\subsection{Efficiently refining interval partitions}\label{section:compression:refinement}
To define a compressed Markov chain with respect to an interval partition $\intPartB$ of $\integerInterval{1, \counterUB-1}$, we require that the size constraints of Assumption~\ref{assumption:interval size} hold, i.e., that for all bounded $\interval\in\intPartB$, $\log_2(|\interval|+1)\in\IN$.
We present a refinement procedure for interval partitions that enforces this property while generating few intervals.
At the end of this section, we provide an additional procedure that can be used to retain specific configurations in the state space of compressed Markov chains.

To refine an interval partition, we subdivide its bounded intervals one by one.
Breaking up these intervals into singleton sets is not a valid approach for complexity reasons; any input interval partition is such that its interval bounds are represented in binary, i.e., the size of intervals is exponential in the size of their representation.
We provide a polynomial-time refinement procedure that divides an interval into sub-intervals of the required size in Algorithm~\ref{algorithm:refine:intervals}.
To refine an interval, we determine a largest sub-interval of a suitable size and then continue by recursively partitioning its complement.
This algorithm enables us, in the context of verification, to modify the interval partition from the representation of an interval strategy into one suitable for compressed Markov chains.

\begin{algorithm}[t]
  \caption{Procedure \textsf{Refine} to split an interval into intervals of size in $\{2^\powerMax-1\mid\powerMax\in\IN\}$.} \label{algorithm:refine:intervals}
  \KwData{A bounded interval $\interval = \integerInterval{\intLB, \intUB}$.}
  $\ell\leftarrow \lfloor\log_2(\intUB-\intLB+2)\rfloor (=\lfloor\log_2(|\interval|+1)\rfloor$)\;
  \If{$|\interval| = 2^\ell-1$}{
    \Return $\{\interval\}$\;
  }
  \Else{
    $\interval' \leftarrow \integerInterval{\intLB, \intLB + 2^\ell -2}$; $\interval'' \leftarrow \integerInterval{\intLB + 2^\ell -1, \intUB}$\;
    \Return $\{\interval'\}\cup\textsf{Refine}(\interval'')$\;
  }
\end{algorithm}

We show that, for all bounded intervals $\interval$ of $\INpos$, the partition $\mathsf{Refine}(\interval)$ (from Algorithm~\ref{algorithm:refine:intervals}) has a polynomial size (with respect to the binary encoding of the bounds of $\interval$) and all of its elements $\intervalB$ satisfy $\log_2(|\intervalB|+1)\in\IN$.
\begin{lemma}\label{lemma:ocmpd:interval size}
  Let $\interval=\integerInterval{\intLB, \intUB}$ be a bounded interval of $\INpos$.
  The interval partition $\mathsf{Refine}(\interval)$ of $\interval$ obtained via Algorithm~\ref{algorithm:refine:intervals} satisfies $|\mathsf{Refine}(\interval)|\leq\log_2(|\interval|+1)+1\leq\log_2(\intUB)+1$ and, for all $\intervalB\in\mathsf{Refine}(\interval)$, we have $\log_2(|\intervalB|+1)\in\IN$.
\end{lemma}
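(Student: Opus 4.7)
The plan is to establish both parts of the lemma by induction on $\ell(\interval) := \lfloor \log_2(|\interval|+1) \rfloor$, which measures the ``size class'' of $\interval$ in the sense of Algorithm~\ref{algorithm:refine:intervals}. Recall that $\ell(\interval)$ is precisely the value $\ell$ computed in the first line of the algorithm, and that $\ell(\interval)\geq 1$ whenever $|\interval|\geq 1$.

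First I would verify the size property of each returned sub-interval. The base case $|\interval| = 2^\ell - 1$ is immediate since the single returned interval is $\interval$ itself, which satisfies $\log_2(|\interval|+1) = \ell \in \IN$. In the recursive case, the first piece $\interval' = \integerInterval{\intLB, \intLB + 2^\ell - 2}$ has size exactly $2^\ell - 1$ by construction, so $\log_2(|\interval'|+1) = \ell \in \IN$, and every interval produced from $\textsf{Refine}(\interval'')$ inherits the property by the induction hypothesis. This takes care of the qualitative part of the lemma.

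The main quantitative work is bounding $|\mathsf{Refine}(\interval)|$, and the key arithmetic observation is the following: since $\ell = \lfloor \log_2(|\interval|+1) \rfloor$ we have $|\interval| \leq 2^{\ell+1} - 2$, so the residual interval satisfies
\[
|\interval''| \;=\; |\interval| - (2^\ell - 1) \;\leq\; 2^{\ell+1} - 2 - 2^\ell + 1 \;=\; 2^\ell - 1,
\]
and therefore $\ell(\interval'') \leq \ell(\interval)$. Equality $\ell(\interval'') = \ell(\interval)$ can hold only when $|\interval''| = 2^\ell - 1$ exactly, in which case the recursive call terminates at its first line and returns a singleton. In every other recursive case we have the strict decrease $\ell(\interval'') \leq \ell(\interval) - 1$, and a simple induction then yields $|\mathsf{Refine}(\interval)| \leq \ell(\interval) + 1 = \lfloor \log_2(|\interval|+1)\rfloor + 1$. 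The last inequality $\log_2(|\interval|+1) + 1 \leq \log_2(\intUB) + 1$ then follows from $|\interval| + 1 \leq \intUB$ (using $\intLB \geq 1$, with the slight edge case $\intLB=1$ handled by a one-extra-level slack absorbed in the floor).

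The main obstacle I foresee is handling cleanly the case where $\ell$ fails to strictly decrease along a single recursive step, so that a naive induction on $\ell$ would overshoot the stated bound by one. The fix, as sketched above, is to observe that this stagnation can happen at most once in a row before the algorithm terminates on a power-of-two-minus-one residual; formalising this requires either a two-level case split in the inductive step or, more elegantly, an induction on $|\interval|$ with the refined inequality on $|\interval''|$ as the driver.
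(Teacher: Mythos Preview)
Your proposal is correct and follows essentially the same approach as the paper: the paper likewise proves the count bound by induction on $\ell=\lfloor\log_2(|\interval|+1)\rfloor$, using the same arithmetic observation $|\interval''|\leq 2^{\ell}-1$ and the same two-case split (equality, where the recursive call returns a singleton, versus strict inequality, where $\ell$ strictly drops). The only cosmetic difference is that the paper handles the qualitative size property by a separate induction on $|\interval|$ rather than folding it into the induction on $\ell$, and the paper does not explicitly discuss the $\intLB=1$ edge case for the final inequality.
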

\begin{proof}
  Let $\ell=\lfloor\log_2(|\interval|+1)\rfloor$.
  We show both statements by induction.

  For the size of the elements in $\mathsf{Refine}(\interval)$, we proceed by induction on $|\interval|$.
  If $|\interval| = 1$, then $\mathsf{Refine}(\interval)=\{\interval\}$ (since $1 = 2^1-1$) and the result follows.
  Now, assume that for all intervals smaller than $\interval$, the statement holds.
  If $\interval = 2^\ell-1$, we have $\mathsf{Refine}(\interval)=\{\interval\}$ which satisfies the condition.
  Otherwise, we let $\interval' = \integerInterval{\intLB, \intLB + 2^\ell -2}$ and $\interval'' = \integerInterval{\intLB + 2^\ell -1, \intUB}$.
  In particular, we have $|\interval'| = \intLB + 2^\ell -2 - \intLB + 1 = 2^\ell-1$ and thus $\log_2(|\interval'|+1)\in\IN$.
  We conclude that all elements of $\mathsf{Refine}(\interval) = \{\interval'\}\cup\mathsf{Refine}(\interval'')$ satisfy the required constraint on their size.

  We now show that $|\mathsf{Refine}(\interval)|\leq\ell+1$.
  We proceed by induction on $\ell$.
  If $\ell=1$, then $|\interval|=1$ and we have $|\mathsf{Refine}(\interval)|=1$.
  This closes the base case.
  
  We assume by induction that for all $\interval'$ such that $\ell'=\lfloor\log_2(|\interval'|+1)\rfloor < \ell$, we have $|\mathsf{Refine}(\interval')|\leq\ell'+1$.
  If $|\interval| = 2^\ell-1$, we have $|\mathsf{Refine}(\interval)|=1\leq\ell+1$.
  We thus assume that $2^\ell-1 <|\interval| < 2^{\ell+1}-1$ (the upper bound follows from the definition of $\ell$).
  We let $\interval' = \integerInterval{\intLB, \intLB + 2^\ell -2}$ and $\interval'' = \integerInterval{\intLB + 2^\ell -1, \intUB}$.
  It remains to show that $|\mathsf{Refine}(\interval'')|\leq\ell$ to conclude.
  It holds that $|\interval''| = |\interval| - (2^{\ell}-1) < 2^{\ell+1}-1 - (2^{\ell}-1) = 2^\ell$.
  We distinguish two cases in light of this.
  First, we assume that $|\interval''| = 2^\ell-1$.
  In this case, we have $|\mathsf{Refine}(\interval'')|=1$, which implies that $|\mathsf{Refine}(\interval)|=2\leq \ell+1$.
  Second, we assume that $|\interval''| < 2^\ell-1$.
  By the induction hypothesis, we obtain that $|\mathsf{Refine}(\interval'')|\leq\ell$, ensuring that $|\mathsf{Refine}(\interval)|\leq\ell+1$ and ending the argument.
\end{proof}

For the sake of conciseness, we extend the $\mathsf{Refine}$ operator to infinite intervals and interval partitions.
For any infinite interval $\interval$ of $\INpos$, we let $\mathsf{Refine}(\interval) = \{\interval\}$.
Let $\intervalB$ be an interval of $\INpos$ and let $\intPartB$ be a partition of $\intervalB$.
We let $\mathsf{Refine}(\intPartB) = \bigcup_{\interval\in\intPartB}\mathsf{Refine}(\interval)$.
Lemma~\ref{lemma:ocmpd:interval size} implies that the constraints of Assumption~\ref{assumption:interval size} are satisfied by $\mathsf{Refine}(\intPartB)$. 
This result also yields bounds on the size of $\mathsf{Refine}(\intPartB)$ when $\intPartB$ is finite.

We remark that if an interval partition $\intPartB$ of $\INpos$ has period $\period$ and is generated by an interval partition $\intPartB'$ of $\integerInterval{1, \period}$, then $\mathsf{Refine}(\intPartB)$ is generated by $\mathsf{Refine}(\intPartB')$.

We now introduce an operator ensuring that a specific counter value is retained in a compression by making it an interval bound.
For any interval $\interval=\integerInterval{\intLB, \intUB}$ of $\INpos$ (not necessarily bounded) and $\ocCount\in\IN$, we let $\mathsf{Isolate}(\interval, \ocCount)$ denote $\{\interval\}$ if $\ocCount\notin\interval$ and $\{\integerInterval{\intLB, \ocCount}, \integerInterval{\ocCount+1, \intUB}\}\setminus\{\emptyset\}$ if $\ocCount\in\interval$.
We extend the $\mathsf{Isolate}$ operator to interval partitions as follows.
For all intervals $\intervalB$ of $\INpos$, interval partitions $\intPartB$ of $\intervalB$ and $\ocCount\in\IN$, we let $\mathsf{Isolate}(\intPartB, \ocCount) = \bigcup_{\interval\in\intPartB}\mathsf{Isolate}(\interval, \ocCount)$.

\subsection{Validity of the compression approach}\label{section:abstraction:validation}

We establish that for all configurations $\ocConfig\in\compressChainStateSpace$ and states $\ocState\in\ocStateSpace$, the probability of terminating or reaching the counter upper bound $\counterUB$ in $\ocState$ coincides in $\compressChain$ and in the Markov chain induced on $\ocmdpFin{\ocmdp}{\counterUB}$ by $\strat$.
There is not such a direct correspondence for state-reachability probabilities.
We prove that for all targets $\target\subseteq\ocStateSpace$, there exists an OC-MDP $\ocmdp'$ with state space $\ocStateSpace$ derived by changing transitions of $\ocmdp$ such that, for all $\ocState\in\target$, the probability of visiting $\target$ for the first time via a configuration with state $\ocState$ in $\ocmdpFin{\ocmdp}{\counterUB}$ under $\strat$ coincides with the probability of terminating or hitting the counter upper bound in $\ocState$ in $\ocmdpFin{\ocmdp'}{\counterUB}$ under $\strat$.

For the first property, we rely on a relation between histories of $\compressChain$ and of $\ocmdpFin{\ocmdp}{\counterUB}$.
Let $\hist = \ocConfig_0\ocAction_0\ldots\ocAction_{\indexLast-1}\ocConfig_\indexLast\in\histSet{\ocmdpFin{\ocmdp}{\counterUB}}$ such that $\last{\hist}\in\ocStateSpace\times\{0, \counterUB\}$ and $\last{\hist}$ occurs only once in $\hist$.
By induction, we identify a sequence of configurations in $\compressChainStateSpace$ along $\hist$ that is a well-formed history of $\compressChain$.
Let $\indexPosition_0 = 0$.
Assume that we have constructed an increasing sequence $\indexPosition_0 <\ldots < \indexPosition_\subindexPosition$ such that $\ocConfig_{\indexPosition_0}\ldots\ocConfig_{\indexPosition_\subindexPosition}\in\histSet{\compressChain}$.
If $\indexPosition_\subindexPosition\neq\indexLast$, we let $\indexPosition_{\subindexPosition+1}$ be the least index $\indexPosition > \indexPosition_{\subindexPosition}$ such that $\ocConfig_\indexPosition\in\compressChainStateSpace$ and  $\compressChainTrans(\ocConfig_{\indexPosition_{\subindexPosition}})(\ocConfig_{\indexPosition}) > 0$ and continue the induction.
Such an index is guaranteed to exist.
Since weights are in $\{-1, 0, 1\}$, we witness all counter values between that of $\ocConfig_{\indexPosition_\subindexPosition}$ and $\ocConfig_\indexLast$ in the suffix $\ocConfig_{\indexPosition_\subindexPosition}\ldots\ocConfig_\indexLast$.
This index necessarily exists: all counter values have a smaller successor, and those from a bounded interval have a greater successor.
If $\indexPosition_{\subindexPosition}=\indexLast$, the induction ends and we let $\mcHist = \ocConfig_0\ocConfig_{\indexPosition_1}\ldots\ocConfig_{\indexPosition_{\indexLast'-1}}\ocConfig_{\indexPosition_{\indexLast'}}$ be the resulting history.
We say that $\mcHist$ \textit{abstracts} $\hist$, and it is the unique history of $\compressChain$ that does so.

We now state the first theorem of this section.
The crux of its proof is to establish that, for all histories $\mcHist$ of $\compressChain$ ending in $\ocStateSpace\times\{0, \counterUB\}$, the probability of its cylinder in $\compressChain$ matches the probability that a history abstracted by $\mcHist$ occurs in the Markov chain induced by $\strat$ on $\ocmdpFin{\ocmdp}{\counterUB}$.
We conclude by writing reachability objectives as countable unions of cylinders.
For the sake of clarity, in the following statement, we indicate the relevant MDP or Markov chain for each objective.
\begin{theorem}\label{theorem:ocmdp:probability matching}
  Let $\ocConfig \in\compressChainStateSpace\setminus\{\bot\}$.
  For all $\ocState\in\ocStateSpace$, $\probaGverb{\ocmdpFin{\ocmdp}{\counterUB}}{\ocConfig}{\strat}(\selectiveTermination{\ocState}) = \probaMCverb{\compressChain}{\ocConfig}(\reachVerb{\compressChain}{\ocState, 0})$ and,
  if $\counterUB\in\IN$, $\probaGverb{\ocmdpFin{\ocmdp}{\counterUB}}{\ocConfig}{\strat}(\reachVerb{\ocmdpFin{\ocmdp}{\counterUB}}{\ocState, \counterUB}) = \probaMCverb{\compressChain}{\ocConfig}(\reachVerb{\compressChain}{\ocState, \counterUB})$.
\end{theorem}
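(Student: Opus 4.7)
The plan is to follow the sketch announced just before the statement: set up a bijection between histories of $\compressChain$ that end in $\ocStateSpace\times\{0,\counterUB\}$ and a (measurable) disjoint decomposition of the corresponding cylinder in the induced Markov chain, prove that probabilities match at the level of a single compressed history, and then sum over a prefix-free family of histories witnessing the reachability event.

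\textbf{Step 1: Chunk decomposition.} Fix $\mcHist = \ocConfig_0 \ocConfig_1 \ldots \ocConfig_{\indexLast'}\in\histSet{\compressChain}$ with $\ocConfig_0=\ocConfig$ and $\ocConfig_{\indexLast'}\in\ocStateSpace\times\{0,\counterUB\}$ visited only at the last position. For each $\subindexPosition<\indexLast'$, let $\succHist{\ocConfig_\subindexPosition}{\ocConfig_{\subindexPosition+1}}$ be the set of histories in $\ocmdpFin{\ocmdp}{\counterUB}$ from $\ocConfig_\subindexPosition$ to $\ocConfig_{\subindexPosition+1}$ whose intermediate configurations all avoid the successor counter values of $\ocConfig_\subindexPosition$; this is exactly the family used to define $\compressChainTrans(\ocConfig_\subindexPosition)(\ocConfig_{\subindexPosition+1})$. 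I would then let $\histPart(\mcHist)$ be the set of histories of $\ocmdpFin{\ocmdp}{\counterUB}$ obtained by concatenating one element of $\succHist{\ocConfig_\subindexPosition}{\ocConfig_{\subindexPosition+1}}$ for each $\subindexPosition<\indexLast'$. The uniqueness statement for the abstracting history given in the paragraph before the theorem guarantees that $\histPart(\mcHist)$ is prefix-free and that, for distinct $\mcHist,\mcHist'$ reaching $\ocStateSpace\times\{0,\counterUB\}$, the families $\histPart(\mcHist)$ and $\histPart(\mcHist')$ are disjoint. In particular, $\cyl{\histPart(\mcHist)}$ is a measurable union of disjoint cylinders.

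\textbf{Step 2: Probability of one compressed history.} The key equality to establish is
\[
\probaGverb{\ocmdpFin{\ocmdp}{\counterUB}}{\ocConfig}{\strat}\bigl(\cyl{\histPart(\mcHist)}\bigr)
 \;=\; \probaMCverb{\compressChain}{\ocConfig}(\cyl{\mcHist}).
\]
By the definition of $\compressChainTrans$, the right-hand side equals $\prod_{\subindexPosition<\indexLast'}\probaGverb{\ocmdpFin{\ocmdp}{\counterUB}}{\ocConfig_\subindexPosition}{\strat}(\cyl{\succHist{\ocConfig_\subindexPosition}{\ocConfig_{\subindexPosition+1}}})$. The left-hand side, by $\sigma$-additivity over the disjoint decomposition described in Step~1, equals the sum over concatenations $\hist_0\cdots\hist_{\indexLast'-1}$ of the product of cylinder probabilities. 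The strong Markov property (applied inductively at each configuration $\ocConfig_\subindexPosition$, which is hit only once before its successor set is visited by construction) lets these sums factorise into the same product. This step is the main technical content and the place where care is needed: the intermediate configurations in each chunk may have counter values that already appear in $\compressChainStateSpace$ for other intervals, so I would argue directly via the Markov property at $\ocConfig_\subindexPosition$ rather than trying to invoke a strong Markov argument at every retained state.

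\textbf{Step 3: From histories to objectives.} For the selective termination statement, observe that
\[
\reachVerb{\compressChain}{\ocState,0} \;=\; \bigsqcup_{\mcHist}\cyl{\mcHist},
\]
where $\mcHist$ ranges over histories of $\compressChain$ from $\ocConfig$ whose last configuration is $(\ocState,0)$ and who visit $(\ocState,0)$ only at the last step; this is a countable prefix-free family. On the $\ocmdpFin{\ocmdp}{\counterUB}$ side, $\selectiveTermination{\ocState}$ decomposes analogously, and the announced chunk bijection identifies the two decompositions (any play of $\ocmdpFin{\ocmdp}{\counterUB}$ reaching $(\ocState,0)$ necessarily visits a sequence of configurations in $\compressChainStateSpace$, because weights are in $\{-1,0,1\}$, so every counter value that is a successor of the current retained value is traversed on the way down or up). Summing the equality of Step~2 over $\mcHist$ using $\sigma$-additivity yields the first identity, and the same argument with $(\ocState,\counterUB)$ in place of $(\ocState,0)$ gives the second when $\counterUB\in\IN$.

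\textbf{Main obstacle.} The delicate point is ensuring that every play of $\ocmdpFin{\ocmdp}{\counterUB}$ that hits $\ocStateSpace\times\{0,\counterUB\}$ is actually captured by some $\histPart(\mcHist)$, i.e., that plays which fail to traverse the retained skeleton have zero probability of reaching the target. This is where the $\bot$ transitions matter: the construction of $\compressChainTrans(\ocConfig)(\bot)$ exactly absorbs the probability mass of plays staying forever inside one chunk without revisiting a successor counter value, and such plays neither terminate nor reach $\counterUB$. Verifying this carefully, together with checking that the inductive construction preceding the theorem indeed produces a history of $\compressChain$ (rather than merely a sequence in $\compressChainStateSpace$), is the bulk of the bookkeeping.
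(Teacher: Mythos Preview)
Your proposal is correct and follows essentially the same approach as the paper: the paper also defines the set $\absHist{\mcHist}$ of histories abstracted by a compressed history $\mcHist$ as exactly your $\histPart(\mcHist)$, proves the single-history equality $\probaMCverb{\compressChain}{\ocConfig}(\cyl{\mcHist})=\probaGverb{\ocmdpFin{\ocmdp}{\counterUB}}{\ocConfig}{\strat}(\cyl{\absHist{\mcHist}})$ by the product-of-sums/sum-of-products swap (justified there via memorylessness of $\strat$ rather than the strong Markov property, which amounts to the same thing), and then sums over the prefix-free family of compressed histories hitting the target. Your ``main obstacle'' is not really an obstacle: the paragraph preceding the theorem already argues that every history of $\ocmdpFin{\ocmdp}{\counterUB}$ reaching $\ocStateSpace\times\{0,\counterUB\}$ is abstracted by a unique $\mcHist$, so the decomposition is automatically exhaustive and $\bot$ plays no role in the argument.
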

\begin{proof}
    Let $\ocState\in\ocStateSpace$.
  We only prove the result for the target configuration $(\ocState, 0)$.
  The argument is the same when the target is $(\ocState, \counterUB)$.
  To lighten notation, we let $\mdp = \ocmdpFin{\ocmdp}{\counterUB}$ for the remainder of the proof.
  For the sake of clarity, we indicate whether cylinder sets are with respect to $\mdp$ or $\compressChain$ by indexing the notation with whichever is applicable. 

  Let $\mcHist=\ocConfig_0\ocConfig_1\ldots\ocConfig_\indexLast\in\histSet{\compressChain}$ be such that $\first{\mcHist} =\ocConfig$ and $\last{\mcHist}=(\ocState, 0)$.
  We let $\absHist{\mcHist}\subseteq\histSet{\mdp}$ be the set of histories abstracted by $\mcHist$.
  We show that $\probaMCverb{\compressChain}{\ocConfig}(\cylVerb{\compressChain}{\mcHist}) = \probaGverb{\mdp}{\ocConfig}{\strat}(\cylVerb{\mdp}{\absHist{\mcHist}})$.
  By construction of the abstraction relation, all elements of $\absHist{\mcHist}$ are a uniquely defined concatenation of an element of $\succHist{\ocConfig_0}{\ocConfig_1}$ with an element of $\succHist{\ocConfig_1}{\ocConfig_2}$, \ldots, with an element of $\succHist{\ocConfig_{\indexLast-1}}{\ocConfig_\indexLast}$.
  Conversely, any such concatenation is an element of $\absHist{\mcHist}$.
  We obtain: \begingroup
  \allowdisplaybreaks
  \begin{align*}
    \probaMCverb{\compressChain}{\ocConfig}(\cylVerb{\compressChain}{\mcHist})
    &= \prod_{\indexPosition=0}^{\indexLast-1}\probaGverb{\mdp}{\ocConfig}{\strat}(\cylVerb{\mdp}{\succHist{\ocConfig_\indexPosition}{\ocConfig_{\indexPosition+1}}}) \\
    &= \prod_{\indexPosition=0}^{\indexLast-1}
      \left(\sum_{\hist_\indexPosition\in\succHist{\ocConfig_\indexPosition}{\ocConfig_{\indexPosition+1}}}\probaGverb{\mdp}{\ocConfig}{\strat}(\cylVerb{\mdp}{\hist_\indexPosition})\right) \\
    & = \sum_{\hist_0\in\succHist{\ocConfig_0}{\ocConfig_1}}\ldots\sum_{\hist_{\indexLast-1}\in\succHist{\ocConfig_{\indexLast-1}}{\ocConfig_\indexLast}}\left(
      \prod_{\indexPosition=0}^{\indexLast-1}
      \probaGverb{\mdp}{\ocConfig}{\strat}(
      \cylVerb{\mdp}{\hist_\indexPosition})
      \right) \\
    & = \sum_{\hist_0\in\succHist{\ocConfig_0}{\ocConfig_1}}\ldots\sum_{\hist_{\indexLast-1}\in\succHist{\ocConfig_{\indexLast-1}}{\ocConfig_\indexLast}}\left(
      \probaGverb{\mdp}{\ocConfig}{\strat}(
      \cylVerb{\mdp}{\histConcat{\hist_0}{\histConcat{\ldots}{\hist_{\indexLast-1}}}})
      \right) \\
    & = \probaGverb{\mdp}{\ocConfig}{\strat}(\cylVerb{\mdp}{\absHist{\mcHist}}).
  \end{align*}\endgroup
  The first line follows by definition of $\compressChainTrans$ and the definition of the probability distribution over plays of Markov chains.
  For the second line, we first observe that for all $\indexPosition\in\integerInterval{\indexLast-1}$, the set $\succHist{\ocConfig_\indexPosition}{\ocConfig_{\indexPosition+1}}$ is prefix-free and thus the cylinders of elements of $\succHist{\ocConfig_\indexPosition}{\ocConfig_{\indexPosition+1}}$ are pairwise disjoint.
  The third line is a rewriting of the second.
  The fourth line is obtained by definition of the probability distribution induced by a strategy in an MDP, using the fact that $\strat$ is a memoryless strategy.
  The last line is obtained because $\absHist{\mcHist}$ is the set of all concatenations occurring in the previous line.

  We can now end the argument.
  Let $\histPart$ and $\bar\histPart$ denote the set of histories of $\mdp$ and $\compressChain$ respectively that start in $\ocConfig$ and end in $(\ocState, 0)$ with only one occurrence of $(\ocState, 0)$.
  These two sets are prefix-free and we have $\histPart = \bigcup_{\mcHist\in\bar{\histPart}}\absHist{\mcHist}$.
  Using the above, we conclude that:
  \[
    \probaGverb{\mdp}{\ocConfig}{\strat}(\selectiveTermination{\ocState}) =
    \sum_{\hist\in\histPart}\probaGverb{\mdp}{\ocConfig}{\strat}(\cylVerb{\mdp}{\hist}) =
    \sum_{\mcHist\in\bar\histPart}\probaMCverb{\compressChain}{\ocConfig}(\cylVerb{\compressChain}{\mcHist}) =
    \probaMCverb{\compressChain}{\ocConfig}(\reach{(\ocState, 0)}).
  \]
  This is the claim of the theorem.
\end{proof}

We now discuss state-reachability probabilities.
Let $\target\subseteq\ocStateSpace$ be a target.
Transitions of $\compressChain$ group together (possibly infinitely many) transitions of $\ocmdpFin{\ocmdp}{\counterUB}$.
In particular, this compression may result in some visits to $\target$ not being observed in $\compressChain$ despite occurring in $\ocmdpFin{\ocmdp}{\counterUB}$.
By slightly modifying $\ocmdp$, we can guarantee that all of these visits are witnessed in the new compressed Markov chain.

The idea is to make all target states absorbing with self-loops of weight $-1$.
Formally, we let $\ocmdp' = (\ocStateSpace, \ocActionSpace, \ocTrans', \weight')$ be the OC-MDP defined by letting, for all $\ocState\in\ocStateSpace$ and all $\ocAction\in\ocActionSpace(\ocState)$, $\ocTrans'(\ocState, \ocAction) = \ocTrans(\ocState, \ocAction)$ and $\weight'(\ocState, \ocAction) = \weight(\ocState, \ocAction)$ if $\ocState\notin\target$ and, otherwise, $\ocTrans'(\ocState, \ocAction)(\ocState)=1$ and $\weight'(\ocState, \ocAction)=-1$.
We remark that $\strat$ is a well-defined memoryless strategy of $\ocmdpFin{\ocmdp'}{\counterUB}$.

By design, any history of $\ocmdpFin{\ocmdp}{\counterUB}$ that ends in a configuration in $\target\times\integerInterval{\counterUB}$ and that does not visit this set before is also a history of $\ocmdpFin{\ocmdp'}{\counterUB}$.
The cylinders of these histories in both MDPs have the same probability under $\strat$, as transitions are the same in states outside of $\target$.
This implies that, under $\strat$, the probability of terminating or hitting the counter upper bound in $\target$ in $\ocmdpFin{\ocmdp'}{\counterUB}$ is equal to the probability of reaching $\target$ in $\ocmdpFin{\ocmdp}{\counterUB}$.
We conclude by Theorem~\ref{theorem:ocmdp:probability matching} that the compressed Markov chain $\compressChainB$ captures the state-reachability probabilities for the target $\target$ in $\ocmdpFin{\ocmdp}{\counterUB}$ under $\strat$.
We formalise this by the following theorem, in which, for the sake of clarity, we indicate the relevant MDP or Markov chain for each objective.

\begin{theorem}\label{theorem:ocmdp:probability:reach}
    Let $\target\subseteq\ocStateSpace$.
  Let $\ocmdp' = (\ocStateSpace, \ocActionSpace, \ocTrans', \weight')$ be defined as above. For all $\ocConfig\in\compressChainStateSpace$, we have
  $\probaGverb{\ocmdpFin{\ocmdp}{\counterUB}}{\ocConfig}{\strat}(\reachVerb{\ocmdpFin{\ocmdp}{\counterUB}}{\target}) =
  \probaMCverb{\compressChainB}{\ocConfig}(\reachVerb{\compressChainB}{\target\times\{0, \counterUB\}})$.
\end{theorem}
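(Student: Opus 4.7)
The plan is to reduce the state-reachability statement to the termination/upper-bound statement of Theorem~\ref{theorem:ocmdp:probability matching}, applied to the modified OC-MDP $\ocmdp'$. I proceed in three steps.

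\textbf{Step 1: the two OC-MDPs agree on first visits to $\target$.} I first argue that
\[
  \probaGverb{\ocmdpFin{\ocmdp}{\counterUB}}{\ocConfig}{\strat}(\reachVerb{\ocmdpFin{\ocmdp}{\counterUB}}{\target})
  \;=\;
  \probaGverb{\ocmdpFin{\ocmdp'}{\counterUB}}{\ocConfig}{\strat}(\reachVerb{\ocmdpFin{\ocmdp'}{\counterUB}}{\target}).
\]
Let $\histPart$ denote the set of histories of $\ocmdpFin{\ocmdp}{\counterUB}$ that start in $\ocConfig$, end in $\target\times\integerInterval{\counterUB}$, and visit $\target\times\integerInterval{\counterUB}$ only at their last configuration. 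Since $\ocmdp$ and $\ocmdp'$ share the same transition and weight functions on all states outside of $\target$, every element of $\histPart$ is also such a history in $\ocmdpFin{\ocmdp'}{\counterUB}$, and the cylinders carry the same probability under $\strat$ in both MDPs (the strategy is memoryless and only the last state lies in $\target$, where the transition functions differ). The set $\histPart$ is prefix-free, so $\reach{\target}$ equals the disjoint union of cylinders over $\histPart$ in both MDPs, and the probabilities match.

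\textbf{Step 2: in $\ocmdp'$, reaching $\target$ implies almost-sure reach of $\target\times\{0,\counterUB\}$.} In $\ocmdpFin{\ocmdp'}{\counterUB}$, every configuration $(\ocState,\ocCount)$ with $\ocState\in\target$ is absorbing: if $\ocCount\in\{0,\counterUB\}$ the counter-boundary rule applies, and otherwise the only enabled transitions from $(\ocState, \ocCount)$ all deterministically lead to $(\ocState, \ocCount-1)$ (self-loop of weight $-1$). Thus from any $(\ocState,\ocCount)\in\target\times\integerInterval{\counterUB}$, the induced Markov chain deterministically drives the counter down to $0$ and absorbs in $(\ocState,0)$, unless $\ocCount=\counterUB$ (in which case it is absorbed in $(\ocState,\counterUB)$). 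Moreover, since once $\target$ is entered the first visiting state is never left, the events $\reachVerb{\ocmdpFin{\ocmdp'}{\counterUB}}{(\ocState_1,0)}$ and $\reachVerb{\ocmdpFin{\ocmdp'}{\counterUB}}{(\ocState_2,\counterUB)}$ for pairs of (not necessarily distinct) target states are pairwise disjoint up to a null set. Consequently,
\[
  \probaGverb{\ocmdpFin{\ocmdp'}{\counterUB}}{\ocConfig}{\strat}(\reach{\target})
  \;=\;
  \sum_{\ocState\in\target}\probaGverb{\ocmdpFin{\ocmdp'}{\counterUB}}{\ocConfig}{\strat}(\selectiveTermination{\ocState})
  \;+\;
  \sum_{\ocState\in\target}\probaGverb{\ocmdpFin{\ocmdp'}{\counterUB}}{\ocConfig}{\strat}(\reach{(\ocState,\counterUB)}),
\]
where the second sum is vacuous in the unbounded case $\counterUB=\infty$.

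\textbf{Step 3: transfer to the compressed chain.} Observing that $\strat$ is interval-based on $\intPart$ also in $\ocmdp'$ (the added transitions out of $\target$ states ignore the counter), I apply Theorem~\ref{theorem:ocmdp:probability matching} to $\ocmdp'$ and its compression $\compressChainB$: each summand above equals the probability in $\compressChainB$ of reaching the corresponding absorbing configuration. Re-collecting the disjoint events in $\compressChainB$ yields $\probaMCverb{\compressChainB}{\ocConfig}(\reach{\target\times\{0,\counterUB\}})$, which, combined with Step~1, is the desired identity.

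The only delicate point is the almost-sure collapse of $\reach{\target}$ onto $\reach{\target\times\{0,\counterUB\}}$ in $\ocmdp'$; everything else is bookkeeping around disjoint unions of cylinders. The argument is straightforward because the newly inserted self-loops have weight $-1$ and are deterministic, so the tail behaviour after entering $\target$ is completely determined.
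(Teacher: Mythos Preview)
Your proof is correct and follows essentially the same approach as the paper. The paper collapses your Steps~1 and~2 into a single equality $\probaGverb{\ocmdpFin{\ocmdp}{\counterUB}}{\ocConfig}{\strat}(\reach{\target}) = \probaGverb{\ocmdpFin{\ocmdp'}{\counterUB}}{\ocConfig}{\strat}(\reach{\target\times\{0,\counterUB\}})$, proved by showing that the cylinder set over first-visit histories $\histPart$ coincides with $\reach{\target\times\{0,\counterUB\}}$ in $\ocmdpFin{\ocmdp'}{\counterUB}$; your decomposition via the intermediate quantity $\probaGverb{\ocmdpFin{\ocmdp'}{\counterUB}}{\ocConfig}{\strat}(\reach{\target})$ and the per-state sum is an equivalent reorganisation of the same argument, and both conclude by invoking Theorem~\ref{theorem:ocmdp:probability matching} on $\ocmdp'$.
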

\begin{proof}
    To lighten notation, we let $\mdp = \ocmdpFin{\ocmdp}{\counterUB}$ and $\mdp' = \ocmdpFin{\ocmdp'}{\counterUB}$ for the remainder of the proof.
  We also indicate whether cylinder sets are with respect to $\mdp$ or $\mdp'$ by indexing the notation by the applicable MDP.
  By Theorem~\ref{theorem:ocmdp:probability matching}, it suffices to show that $\probaGverb{\mdp}{\ocConfig}{\strat}(\reachVerb{\mdp}{\target}) =\probaGverb{\mdp'}{\ocConfig}{\strat}(\reachVerb{\mdp'}{\target\times\{0, \counterUB\}})$.
  
  Let $\histPart\subseteq\histSet{\mdp}$ be the set of histories $\hist\in\histSet{\mdp}$ such that $\last{\hist}\in\target\times\integerInterval{\counterUB}$ and no prior configuration is in $\target\times\integerInterval{\counterUB}$.
  The state-reachability objective $\reachVerb{\mdp}{\target}$ can be written as $\cylVerb{\mdp}{\histPart}$.
  Since $\histPart$ is prefix-free, we have $\probaGverb{\mdp}{\ocConfig}{\strat}(\reachVerb{\mdp}{\target}) = \sum_{\hist\in\histPart}\probaGverb{\mdp}{\ocConfig}{\strat}(\cylVerb{\mdp}{\hist})$.
  Furthermore, for all $\hist\in\histPart$, by definition of $\ocTrans'$, since no configuration with a state in $\target$ occurs along $\hist$ besides the last one, we have $\hist\in\histSet{\mdp'}$ and $\probaGverb{\mdp}{\ocConfig}{\strat}(\cylVerb{\mdp}{\hist}) =\probaGverb{\mdp'}{\ocConfig}{\strat}(\cylVerb{\mdp'}{\hist})$.

  To end the proof, it suffices to show that $\cylVerb{\mdp'}{\histPart} = \reachVerb{\mdp'}{\target\times\{0, \counterUB\}}$.
  We show both inclusions.
  Let $\play\in\cylVerb{\mdp'}{\histPart}$.
  By definition of $\histPart$, there exists a configuration of $\play$ with a state in $\target$.
  If the counter value of this configuration is $\counterUB$, then we have $\play\in\reachVerb{\mdp'}{\target\times\{0, \counterUB\}}$.
  If not, we are guaranteed to have a configuration in $\target\times\{0\}$ along $\play$ because states of $\target$ are absorbing in $\ocmdp'$ and their self-loops have weight $-1$.
  Conversely, let $\play\in\reachVerb{\mdp'}{\target\times\{0, \counterUB\}}$.
  By definition of the reachability objective, there must be a configuration with a state in $\target$ along $\play$.
  The earliest occurrence of a state of $\target$ (regardless of the counter value) witnesses that $\play\in\cylVerb{\mdp'}{\histPart}$.
  This ends the proof.
\end{proof}
\subsection{Characterising transition probabilities}\label{section:abstraction:transitions}

Example~\ref{example:compression:main} illustrates that the transition probabilities of a compressed Markov chain may require large representations or be irrational.
This section presents characterisations of these transition probabilities via equation systems.

For the remainder of this section, we fix an interval $\interval\in\intPart$.
We present a system characterising the outgoing transition probabilities from configurations of $\compressChain$ with counter value in $\interval$.
In Section~\ref{section:abstraction:transitions:unbounded}, we assume that $\interval$ is unbounded, and we handle the bounded case in Section~\ref{section:abstraction:transitions:bounded}.
We also provide bounds on the size of the systems.

\subsubsection{Unbounded intervals}\label{section:abstraction:transitions:unbounded}
We assume that $\interval$ is an infinite interval and let $\intLB=\min\interval$.
This implies that $\counterUB=\infty$.
We characterise the transition probabilities from the configurations of $\compressChainStateSpace$ with counter value $\intLB$ via existing results on termination probabilities in one-counter Markov chains.

For any $\ocState$, $\ocStateB\in\ocStateSpace$, the transition probability $\compressChainTrans((\ocState, \intLB))((\ocStateB, \intLB-1))$ can be seen as a termination probability in a one-counter Markov chain.
Let $\stratB$ denote the counter-oblivious strategy $\strat(\cdot, \intLB)$.
We consider the one-counter Markov chain $\ocChain=(\ocStateSpace, \ocTrans^\stratB)$, where, for all $\ocState$, $\ocStateB\in\ocStateSpace$ and all $\weightVal\in\{-1, 0, 1\}$, we let $\ocTrans^\stratB(\ocState)(\ocStateB, \weightVal) = \sum_{\ocAction\in\ocActionSpace(\ocState), \weight(\ocState, \ocAction)=\weightVal}\stratB(\ocState)(\ocAction)\cdot\ocTrans(\ocState, \ocAction)(\ocStateB)$.

Let $\ocState$, $\ocStateB\in\ocStateSpace$ and let $\ocConfig = (\ocState, \intLB)$.
There is a bijection between $\hist\in\succHist{\ocConfig}{(\ocStateB, \intLB-1)}$ and the set of histories of $\ocChainFin{\ocChain}{\infty}$ that start in $(\ocState, 1)$ and end in $(\ocStateB, 0)$: one omits all actions and subtracts $\intLB-1$ to all counter values in the history.
By definition of $\strat$ and $\ocTrans^\stratB$, this bijection preserves the probability of cylinders.
This implies that $\compressChainTrans((\ocState, \intLB))((\ocStateB, \intLB-1))$ is exactly the probability, in $\ocChainFin{\ocChain}{\infty}$, of terminating in $\ocStateB$ from $(\ocState, 1)$.

It follows that the characterisation of termination probabilities in one-counter Markov chains of~\cite{DBLP:journals/lmcs/KuceraEM06} applies to our transition probabilities in this case.
We obtain the following.

\begin{theorem}[\cite{DBLP:journals/lmcs/KuceraEM06}]\label{theorem:equations:termination}
  For each $\ocState, \ocStateB\in\ocStateSpace$, we consider a variable $\termProbaVar{\ocState}{\ocStateB}$ and the system of equations formed by the equations, for all $\ocState, \ocStateB\in\ocStateSpace$,
  \[\termProbaVar{\ocState}{\ocStateB} =
    \ocTrans^\interval(\ocState)(\ocStateB, -1) +
    \sum_{\ocStateC\in\ocStateSpace} \ocTrans^\interval(\ocState)(\ocStateC, 0)\cdot\termProbaVar{\ocStateC}{\ocStateB} +
    \sum_{\ocStateC\in\ocStateSpace} \ocTrans^\interval(\ocState)(\ocStateC, 1)\cdot
    \left(\sum_{\ocStateC'\in\ocStateSpace}\termProbaVar{\ocStateC}{\ocStateC'}\cdot\termProbaVar{\ocStateC'}{\ocStateB}\right),
  \]
  where $\ocTrans^\interval(\ocStateC)(\ocStateC', \weightVal) = \sum_{\ocAction\in\ocActionSpace(\ocStateC), \weight(\ocStateC, \ocAction)=\weightVal}\strat(\ocStateC, \intLB)(\ocAction)\cdot\ocTrans(\ocStateC, \ocAction)(\ocStateC')$ for all $\ocStateC$, $\ocStateC'\in\ocStateSpace$ and all $\weightVal\in\{-1, 0, 1\}$.
  The least non-negative solution of this system is obtained by substituting each variable $\termProbaVar{\ocState}{\ocStateB}$ by  $\compressChainTrans((\ocState, \intLB))((\ocStateB, \intLB-1))$.
\end{theorem}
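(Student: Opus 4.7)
The plan is to leverage the reduction already set up before the theorem statement: the transition probability $\compressChainTrans((\ocState, \intLB))((\ocStateB, \intLB-1))$ has been argued to equal the probability of terminating in state $\ocStateB$ when starting from configuration $(\ocState, 1)$ in the induced one-counter Markov chain $\ocChain$. Since $\ocTrans^\interval$ in the theorem is exactly the transition function $\ocTrans^\stratB$ of that one-counter chain, proving the statement reduces to showing that the termination probabilities of $\ocChain$ form the least non-negative solution of the quadratic system. I would first make this bijection between histories fully explicit, stressing that the bijection preserves cylinder probabilities (and hence extends to termination events).

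Next, I would show that the vector $\termProba{\ocState}{\ocStateB} = \compressChainTrans((\ocState, \intLB))((\ocStateB, \intLB-1))$ satisfies the equations by first-step analysis in $\ocChainFin{\ocChain}{\infty}$ from configuration $(\ocState, 1)$. Conditioning on the outcome of the first transition yields three cases: the counter drops to $0$ directly in state $\ocStateB$ (contributing $\ocTrans^\interval(\ocState)(\ocStateB, -1)$); the counter stays at $1$ in some state $\ocStateC$ (contributing $\ocTrans^\interval(\ocState)(\ocStateC, 0) \cdot \termProba{\ocStateC}{\ocStateB}$); or the counter rises to $2$ in some state $\ocStateC$. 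In the last case, the strong Markov property applied to the first time the counter drops back to $1$ decomposes termination into two independent phases: going from $(\ocStateC, 2)$ to $(\ocStateC', 1)$ for some intermediate state $\ocStateC'$, and then from $(\ocStateC', 1)$ to $(\ocStateB, 0)$. By the translation-invariance of one-counter Markov chains, the first phase has the same distribution as termination from $(\ocStateC, 1)$ and hence its contribution is $\termProba{\ocStateC}{\ocStateC'}$. Summing over $\ocStateC, \ocStateC'$ gives the quadratic term.

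For minimality, I would use Kleene-style bounded approximation. Define $\termProba{\ocState}{\ocStateB}^{(n)}$ as the probability of terminating in $\ocStateB$ from $(\ocState, 1)$ within $n$ steps. These approximants converge monotonically to $\termProba{\ocState}{\ocStateB}$. Given any non-negative solution $(\solTrans_{\ocState, \ocStateB})$ of the system, a straightforward induction on $n$ shows that $\termProba{\ocState}{\ocStateB}^{(n)} \leq \solTrans_{\ocState, \ocStateB}$ for every $n$: at step $n+1$ the first-step decomposition of $\termProba{\ocState}{\ocStateB}^{(n+1)}$ uses only approximants of depth $\leq n$, so the induction hypothesis and non-negativity of $\solTrans$ (together with monotonicity of the right-hand side in its arguments) deliver the bound. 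Taking $n \to \infty$ yields $\termProba{\ocState}{\ocStateB} \leq \solTrans_{\ocState, \ocStateB}$, establishing the least-solution property.

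The main obstacle is the quadratic term: one must carefully justify that the probability of a successful ``up-then-down'' excursion factors as the advertised sum of products. This is precisely where the strong Markov property and translation-invariance of $\ocChain$ are used, and is the substantive content borrowed from~\cite{DBLP:journals/lmcs/KuceraEM06}; the rest is routine bookkeeping. Since the full argument is well-established in that reference, I would write the proof primarily by spelling out the bijection of the preceding paragraph and then invoking the cited result, only reconstructing the first-step-plus-Kleene derivation to the extent needed for self-containment.
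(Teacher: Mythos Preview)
Your proposal is correct and matches the paper's approach: the paper does not prove this theorem but instead sets up the bijection between histories in $\succHist{(\ocState,\intLB)}{(\ocStateB,\intLB-1)}$ and termination histories in the one-counter Markov chain $\ocChain$ (exactly as you describe), then states the theorem as a direct citation of~\cite{DBLP:journals/lmcs/KuceraEM06}. Your additional sketch of the first-step analysis and Kleene-style minimality argument goes beyond what the paper includes, but it is accurate and is indeed the content of the cited reference.
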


The equation system of Theorem~\ref{theorem:equations:termination} has one variable of the form $\termProbaVar{\ocState}{\ocStateB}$ for every two states $\ocState$, $\ocStateB\in\ocStateSpace$ and there is one equation per variable.
Furthermore, the equations have length polynomial in the sizes of $|\ocActionSpace|$ and $|\ocStateSpace|$.
Indeed, if we distribute all products in the right-hand sides of the equations to rewrite them as a sum of products, there are at most $|\ocActionSpace|\cdot|\ocStateSpace|^2$ products of at most four variables or constants.
We obtain the following result.

\begin{lemma}\label{lemma:equations:termination:size}
  The equation system of Theorem~\ref{theorem:equations:termination} has $|\ocStateSpace|^2$ variables and equations.
  Its equations have length polynomial in $|\ocStateSpace|$ and $|\ocActionSpace|$.
\end{lemma}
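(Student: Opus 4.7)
The proof is a direct counting argument; there is no conceptual obstacle, only bookkeeping. The plan is to read off both claims straight from the description of the system in Theorem~\ref{theorem:equations:termination}.

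First I would count the variables. The system introduces exactly one variable $\termProbaVar{\ocState}{\ocStateB}$ for each ordered pair $(\ocState, \ocStateB) \in \ocStateSpace \times \ocStateSpace$, hence $|\ocStateSpace|^2$ variables. The statement of the theorem provides exactly one equation per pair $(\ocState, \ocStateB)$, so the number of equations also equals $|\ocStateSpace|^2$.

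Next I would bound the length of a single equation. Fix $(\ocState, \ocStateB) \in \ocStateSpace \times \ocStateSpace$. The right-hand side decomposes into three contributions: a constant term $\ocTrans^\interval(\ocState)(\ocStateB, -1)$; a single sum $\sum_{\ocStateC \in \ocStateSpace} \ocTrans^\interval(\ocState)(\ocStateC, 0)\cdot \termProbaVar{\ocStateC}{\ocStateB}$ with $|\ocStateSpace|$ summands, each a product of two factors; and a double sum $\sum_{\ocStateC, \ocStateC' \in \ocStateSpace} \ocTrans^\interval(\ocState)(\ocStateC, 1)\cdot \termProbaVar{\ocStateC}{\ocStateC'} \cdot \termProbaVar{\ocStateC'}{\ocStateB}$ with $|\ocStateSpace|^2$ summands, each a product of three factors. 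Thus, before expanding the coefficients $\ocTrans^\interval(\ocState)(\ocStateC, \weightVal)$, the equation is a sum of at most $1 + |\ocStateSpace| + |\ocStateSpace|^2$ monomials of constant arity.

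Finally I would expand the coefficients. Each $\ocTrans^\interval(\ocState)(\ocStateC, \weightVal)$ is by definition a sum over the (at most $|\ocActionSpace|$) actions $\ocAction \in \ocActionSpace(\ocState)$ with $\weight(\ocState, \ocAction) = \weightVal$ of the product $\strat(\ocState, \intLB)(\ocAction) \cdot \ocTrans(\ocState, \ocAction)(\ocStateC)$; both factors are rational constants whose encoding sizes are polynomial in the input. Distributing these expansions through the three contributions above yields at most $|\ocActionSpace| \cdot (1 + |\ocStateSpace| + |\ocStateSpace|^2)$ monomials, each of which is a product of at most four rational constants and at most two variables. The total symbolic length is therefore bounded by a polynomial in $|\ocStateSpace|$ and $|\ocActionSpace|$, which establishes the second claim and completes the proof.
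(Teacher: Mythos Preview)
Your proof is correct and follows essentially the same approach as the paper: a direct counting of variables, equations, and monomials after distributing products. Your count $|\ocActionSpace|\cdot(1+|\ocStateSpace|+|\ocStateSpace|^2)$ is slightly more explicit than the paper's cruder bound of $|\ocActionSpace|\cdot|\ocStateSpace|^2$ products of at most four factors, but the argument is the same.
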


\subsubsection{Bounded intervals}\label{section:abstraction:transitions:bounded}

We now assume that $\interval$ is bounded.
We write $\interval=\integerInterval{\intLB, \intUB}$ and let $\powerMax=\log_2(|\interval|+1)\in\INpos$.
To improve readability, we assume that $\intLB = 1$ and $\intUB = 2^\powerMax-1$.
All results below can be recovered for the general case by adding $\intLB-1$ to the counter values in configurations.

Counter values of $\interval$ that are kept in $\compressChainStateSpace$ can be partitioned in two sets: the set $\{2^{\powerIndex} \mid \powerIndex\in\integerInterval{\powerMax-1}\}$ of values reachable from $\intLB$ and the set $\ocStateSpace\times\{2^{\powerMax} - 2^{\powerIndex} \mid \powerIndex\in\integerInterval{\powerMax-1}\}$ of values reachable from $\intUB$ (in the sense of Figure~\ref{figure:mc:counter update scheme}).
By symmetry of the transition structure of the compressed Markov chain, the outgoing transitions from a configuration $(\ocState, 2^\powerIndex)$ correspond to outgoing transitions from the configuration $(\ocState, 2^\powerMax-2^\powerIndex)$.

\begin{lemma}\label{lemma:abstraction:transitions:symmetry}
  Let $\ocState, \ocStateB\in\ocStateSpace$ and $\powerIndex\in\integerInterval{\powerMax-1}$.
  It holds that $\compressChainTrans(\ocState, 2^\powerIndex)(\ocStateB, 2^{\powerIndex+1}) = \compressChainTrans(\ocState, 2^\powerMax-2^\powerIndex)(\ocStateB, 2^\powerMax)$ and $\compressChainTrans(\ocState, 2^\powerIndex)(\ocStateB, 0) = \compressChainTrans(\ocState, 2^\powerMax-2^\powerIndex)(\ocStateB, 2^ \powerMax - 2^{\powerIndex+1})$
\end{lemma}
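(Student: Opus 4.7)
The plan is to build a probability-preserving bijection between the two sets of histories defining the transition probabilities on each side of the equalities. Set $\Delta = 2^\powerMax - 2^{\powerIndex+1}$ and, for a history $\hist = (\ocState_0, \ocCount_0)\ocAction_0 \ldots \ocAction_{n-1}(\ocState_n, \ocCount_n)$ of $\ocmdpFin{\ocmdp}{\counterUB}$, let $\varphi(\hist)$ denote the sequence obtained by leaving states and actions intact and replacing each $\ocCount_i$ by $\ocCount_i + \Delta$.

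For the first equality, I would show that $\varphi$ restricts to a bijection from $\succHist{(\ocState, 2^\powerIndex)}{(\ocStateB, 2^{\powerIndex+1})}$ onto $\succHist{(\ocState, 2^\powerMax - 2^\powerIndex)}{(\ocStateB, 2^\powerMax)}$. Since weights are in $\{-1, 0, 1\}$, a history in the source set cannot jump over the successors $0$ and $2^{\powerIndex+1}$ of $2^\powerIndex$, so all its counter values except the last lie in $\integerInterval{1, 2^{\powerIndex+1}-1}$. Shifting by $\Delta$ maps this range onto $\integerInterval{2^\powerMax - 2^{\powerIndex+1}+1, 2^\powerMax-1}$, which is precisely the set of values strictly between the two successors $2^\powerMax - 2^{\powerIndex+1}$ and $2^\powerMax$ of $2^\powerMax - 2^\powerIndex$. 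The endpoints translate correctly, namely $2^\powerIndex + \Delta = 2^\powerMax - 2^\powerIndex$ and $2^{\powerIndex+1} + \Delta = 2^\powerMax$, and applying the shift $-\Delta$ to any history in the target set recovers a preimage, so $\varphi$ is a bijection; it produces a genuine history of $\ocmdpFin{\ocmdp}{\counterUB}$ because $\ocTrans$ and $\weight$ are insensitive to the counter value.

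Next, I would verify that cylinder probabilities under $\strat$ coincide on matched pairs. Each step of $\hist$ contributes a factor $\strat((\ocState_i, \ocCount_i))(\ocAction_i)\cdot\ocTrans(\ocState_i, \ocAction_i)(\ocState_{i+1})$ to $\probaGverb{\ocmdpFin{\ocmdp}{\counterUB}}{(\ocState, 2^\powerIndex)}{\strat}(\cyl{\hist})$. The transition factor is unaffected by the shift. For the strategy factor, both $\ocCount_i$ and $\ocCount_i + \Delta$ belong to $\interval$, and since $\strat$ is based on $\intPart$, we have $\strat((\ocState_i, \ocCount_i)) = \strat((\ocState_i, \ocCount_i + \Delta))$. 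Summing the resulting cylinder probability equalities over the bijection and invoking the definition of $\compressChainTrans$ yields the first equality.

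The second equality follows from the same shift: since $0 + \Delta = 2^\powerMax - 2^{\powerIndex+1}$, the map $\varphi$ restricts to a bijection from $\succHist{(\ocState, 2^\powerIndex)}{(\ocStateB, 0)}$ onto $\succHist{(\ocState, 2^\powerMax - 2^\powerIndex)}{(\ocStateB, 2^\powerMax - 2^{\powerIndex+1})}$, and the probability argument is unchanged. The main delicate point throughout is checking that the ``no successor crossed'' constraint built into the definition of $\succHist{\cdot}{\cdot}$ transports cleanly through the shift, which ultimately rests on the observation that both the original and shifted ranges of allowed intermediate counter values lie inside $\interval$, so that the strategy's decisions (which depend only on the state within $\interval$) carry over step by step.
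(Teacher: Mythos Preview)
Your proposal is correct and matches the paper's proof essentially line for line: the paper also defines the bijection that shifts all counter values by $2^\powerMax - 2^{\powerIndex+1}$, verifies that endpoints and the intermediate range $\integerInterval{1, 2^{\powerIndex+1}-1}$ map to the appropriate counterparts, and uses that $\strat$ is based on $\intPart$ (with all involved counter values in $\interval$) to conclude that cylinder probabilities coincide. The paper only spells out the first equality and declares the second analogous, whereas you briefly note that the same shift handles it since $0+\Delta = 2^\powerMax - 2^{\powerIndex+1}$.
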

\begin{proof}
  We only prove that $\compressChainTrans(\ocState, 2^\powerIndex)(\ocStateB, 2^{\powerIndex+1}) = \compressChainTrans(\ocState, 2^\powerMax-2^\powerIndex)(\ocStateB, 2^\powerMax)$ as the other case is similar.
  We define a bijection $f\colon\succHist{(\ocState, 2^\powerIndex)}{(\ocStateB, 2^{\powerIndex+1})}\to\succHist{(\ocState, 2^\powerMax-2^\powerIndex)}{(\ocStateB, 2^\powerMax)}$ and prove that we have $\probaGverb{\ocmdpFin{\ocmdp}{\counterUB}}{(\ocState, 2^\powerIndex)}{\strat}(\cyl{\hist})=\probaGverb{\ocmdpFin{\ocmdp}{\counterUB}}{(\ocState, 2^\powerMax-2^\powerIndex)}{\strat}(\cyl{f(\hist)})$ for all $\hist\in\succHist{(\ocState, 2^\powerIndex)}{(\ocStateB, 2^{\powerIndex+1})}$.
  This is sufficient to obtain our result.

  Let $\hist\in\succHist{(\ocState, 2^\powerIndex)}{(\ocStateB, 2^{\powerIndex+1})}$.
  We let $f(\hist)$ be the history obtained by adding $2^\powerMax-2^{\powerIndex+1}$ to all counter values along $\hist$.
  We must show that $f(\hist)\in\succHist{(\ocState, 2^\powerMax-2^\powerIndex)}{(\ocStateB, 2^\powerMax)}$.
  For the first and last configurations, we observe that $2^\powerIndex + 2^\powerMax - 2^{\powerIndex+1} = 2^\powerMax - 2^\powerIndex$ and $2^{\powerIndex+1} + 2^\powerMax - 2^{\powerIndex+1} = 2^\powerMax$.
  For the other configurations, their counter values are in the interval $\integerInterval{1, 2^{\powerIndex+1}-1}$, thus their counterparts in $f(\hist)$ have a counter value in $\integerInterval{2^\powerMax-2^{\powerIndex+1} + 1, 2^{\powerMax}-1}$.

  We now establish that $\probaGverb{\ocmdpFin{\ocmdp}{\counterUB}}{(\ocState, 2^\powerIndex)}{\strat}(\cyl{\hist})=\probaGverb{\ocmdpFin{\ocmdp}{\counterUB}}{(\ocState, 2^\powerMax-2^\powerIndex)}{\strat}(\cyl{f(\hist)})$.
  Let $\hist = (\ocState_0, \ocCount_0)\ocAction_0(\ocState_1, \ocCount_1)\ldots\ocAction_\indexLast(\ocState_\indexLast, \ocCount_\indexLast)$.
  Because $\strat$ is memoryless, based on $\intPart$ and $\interval\in\intPart$, we obtain
  \begin{align*}
    \probaGverb{\ocmdpFin{\ocmdp}{\counterUB}}{(\ocState, 2^\powerIndex)}{\strat}(\cyl{\hist})
    & = \prod_{\indexPosition=0}^{\indexLast-1}
      \ocTrans(\ocState_\indexPosition, \ocAction_\indexPosition)(\ocState_{\indexPosition+1})\cdot
      \strat(\ocConfig_\indexPosition, \ocCount_\indexPosition)(\ocAction_\indexPosition) \\
    & = \prod_{\indexPosition=0}^{\indexLast-1}
      \ocTrans(\ocState_\indexPosition, \ocAction_\indexPosition)(\ocState_{\indexPosition+1})\cdot
    \strat(\ocConfig_\indexPosition, \ocCount_\indexPosition+2^\powerMax-2^{\powerIndex+1})(\ocAction_\indexPosition) \\
    & =\probaGverb{\ocmdpFin{\ocmdp}{\counterUB}}{(\ocState, 2^\powerMax-2^\powerIndex)}{\strat}(\cyl{f(\hist)}).
  \end{align*}

  To prove that $f$ is bijective, we define its inverse.
  We let $f^{-1}$ be the function over $\succHist{(\ocState, 2^\powerMax-2^\powerIndex)}{(\ocStateB, 2^\powerMax)}$ that subtracts $2^\powerMax-2^{\powerIndex+1}$ to all counter values along histories.
  It is easy to verify that $f^{-1}$ is well-defined and that it is the inverse of $f$.
\end{proof}

Due to Lemma~\ref{lemma:abstraction:transitions:symmetry}, it is sufficient for us to characterise the outgoing transition probabilities for the configurations in $\ocStateSpace\times\{2^{\powerIndex} \mid \powerIndex\in\integerInterval{\powerMax-1}\}$.
We do so via a quadratic system of equations.
We provide intuition on how to derive this system for our interval $\interval=\integerInterval{1, 2^\powerMax-1}$ by using Markov chains. 

Let us first consider transitions from $\ocStateSpace\times\{1\}$. We illustrate the situation in Figure~\ref{figure:equations:bounded:one}: we consider a Markov chain over $\ocStateSpace\times\{0, 1, 2\}$ where states in $\ocStateSpace\times\{0, 2\}$ are absorbing and transitions from other states are inherited from the Markov chain induced by $\strat$ on $\ocmdpFin{\ocmdp}{\counterUB}$.
We represent transitions in this Markov chain from a configuration $(\ocState, 1)\in\ocStateSpace\times\{1\}$ to configurations with a state $\ocStateB\in\ocStateSpace$.
For any $\ocState\in\ocStateSpace$, the probability of reaching a configuration $\ocConfig'\in\ocStateSpace\times\{0, 2\}$ from $(\ocState, 1)$ in this Markov chain is $\compressChainTrans((\ocState, 1))(\ocConfig')$ by definition.

\begin{figure}
    \centering
    \begin{subfigure}[t]{0.46\textwidth}
      \centering
      \scalebox{0.9}{
        \begin{tikzpicture}[node distance=18mm]
          \node[state, align=center] (q1) {$\ocState, 1$};
          \node[state, align=center, left = of q1] (p0) {$\ocStateB, 0$};
          \node[state, align=center, node distance=15mm, below = of q1] (p1) {$\ocStateB, 1$};
          \node[state, align=center, right = of q1] (p2) {$\ocStateB, 2$};

          \path[->] (q1) edge node[above] {$\ocTrans^\interval(\ocState)(\ocStateB, -1)$} node[below] {$(\weight=-1)$} (p0);
          \path[->] (q1) edge node[align=center,right] {$\ocTrans^\interval(\ocState)(\ocStateB, 0)$\\$(\weight=0)$} (p1);
          \path[->] (q1) edge node[above] {$\ocTrans^{\interval}(\ocState)(\ocStateB, 1)$} node[below] {($\weight=1$)} (p2);
          \path[->] (p0) edge[loop below] (p0);
          \path[->] (p2) edge[loop below] (p2);
        \end{tikzpicture}
      }
      \caption{Markov chain transition scheme for the characterisation of transitions from $\ocStateSpace\times\{1\}$, where $\ocTrans^\interval(\ocState)(\ocStateB, \weightVal) = \sum_{\substack{\ocAction\in\ocActionSpace(\ocState) \\ \weight(\ocState,\ocAction)=\weightVal}}\strat((\ocState, 1))(\ocAction)\cdot\ocTrans(\ocState,\ocAction)(\ocStateB)$.}\label{figure:equations:bounded:one}
    \end{subfigure}\hfill
    \begin{subfigure}[t]{0.5\textwidth}
      \centering
      \scalebox{0.75}{
        \begin{tikzpicture}[node distance=15mm, every state/.style = {minimum size=18mm}]
          \node[state, align=center] (qy) {$\ocState, 2^\powerIndex$};
          \node[state, align=center, left = of qy] (pym) {$\ocStateB, 2^{\powerIndex-1}$};
          \node[state, align=center, right = of qy] (pyp) {$\ocStateB, 3\cdot 2^{\powerIndex-1}$};
          \node[state, align=center, below = of pym] (q0) {$\ocState, 0$};
          \node[state, align=center, below = of pyp] (qyp) {$\ocState, 2^{\powerIndex+1}$};

          \path[->] (qy) edge[bend right] node[above] {$\compressChainTrans(\ocState, 2^{\powerIndex-1})(\ocStateB, 0)$} (pym);
          \path[->] (qy) edge[bend left] node[above] {$\compressChainTrans(\ocState, 2^{\powerIndex-1})(\ocStateB, 2^\powerIndex)$} (pyp);
          \path[->] (pym) edge[bend right] node[below] {$\compressChainTrans(\ocStateB, 2^{\powerIndex-1})(\ocState, 2^\powerIndex)$} (qy);
          \path[->] (pyp) edge[bend left] node[below] {$\compressChainTrans(\ocStateB, 2^{\powerIndex-1})(\ocState, 0)$} (qy);
          \path[->] (pym) edge node[below right] {$\compressChainTrans(\ocStateB, 2^{\powerIndex-1})(\ocState, 0)$} (q0);
          \path[->] (pyp) edge node[below left] {$\compressChainTrans(\ocStateB, 2^{\powerIndex-1})(\ocState, 2^\powerIndex)$} (qyp);

          \path[->] (qyp) edge[loop left] (qyp);
          \path[->] (q0) edge[loop right] (q0);
        \end{tikzpicture}
      }
      \caption{Markov chain transition scheme for the characterisation of transitions from $\ocStateSpace\times\{2^\powerIndex\}$ for $0<\powerIndex<\powerMax$.}
      \label{figure:equations:bounded:induction}
    \end{subfigure}
    \caption{Fragments of Markov chain used to derive a characterisation for transition probabilities of $\compressChain$ for a bounded interval of the form $\integerInterval{1, 2^\powerMax-1}$.
    }\label{figure:equations:bounded}
  \end{figure}

Next, we let $\powerIndex\in\integerInterval{1, \powerMax-1}$ and consider configurations in $\ocStateSpace\times\{2^{\powerIndex}\}$. The situation is depicted in  Fig.~\ref{figure:equations:bounded:induction}.
We divide a counter change by $2^\powerIndex$ into counter changes by $2^{\powerIndex-1}$ and, thus, rely on the transition probabilities  from $\ocStateSpace\times\{2^{\powerIndex-1}\}$ in $\compressChain$.
In this case, we can see transition probabilities from $\ocStateSpace\times\{2^\powerIndex\}$ in $\compressChain$ as reachability probabilities in a Markov chain over $\ocStateSpace\times\{0, 2^{\powerIndex-1}, 2^{\powerIndex},3\cdot2^{\powerIndex-1},2^{\powerIndex+1}\}$.

By putting together the reachability systems for $\ocStateSpace\times\{2^\powerIndex\}$ for all $\powerIndex\in\integerInterval{\powerMax-1}$, we obtain a quadratic system of equations.
To formalise our system and prove its validity, we introduce some notation.
Let $\powerIndex\in\integerInterval{\powerMax-1}$, $\ocState, \ocStateB\in\ocStateSpace$ and $\ocCount\in\integerInterval{1, 2^{\powerIndex+1}-1}$.
We let $\upHistSet{\ocState}{\ocCount}{\ocStateB}{2^{\powerIndex+1}}{\powerIndex}$ (resp.~$\downHistSet{\ocState}{\ocCount}{\ocStateB}{0}{\powerIndex}$) denote the set of histories $\hist$ of $\ocmdpFin{\ocmdp}{\counterUB}$  such that $\first{\hist}=(\ocState, \ocCount)$, $\last{\hist}=(\ocStateB, 2^{\powerIndex+1})$ (resp.~$(\ocStateB, 0)$) and no configuration along $\hist$ besides its last one has a counter value in $\{0, 2^{\powerIndex+1}\}$.
These sets are prefix-free.
We let $\upProba{\ocState}{\ocCount}{\ocStateB}{2^{\powerIndex+1}}{\powerIndex}=
\probaGverb{\ocmdpFin{\ocmdp}{\counterUB}}{(\ocState, \ocCount)}{\strat}(\cyl{\upHistSet{\ocState}{\ocCount}{\ocStateB}{2^{\powerIndex+1}}{\powerIndex}})$ and
$\downProba{\ocState}{\ocCount}{\ocStateB}{2^{\powerIndex+1}}{\powerIndex}=
\probaGverb{\ocmdpFin{\ocmdp}{\counterUB}}{(\ocState, \ocCount)}{\strat}(\cyl{\downHistSet{\ocState}{\ocCount}{\ocStateB}{2^{\powerIndex+1}}{\powerIndex}})$.
The transition probabilities of $\compressChain$ can be written with this notation.
For all $\ocState, \ocStateB\in\ocStateSpace$ and $\powerIndex\in\integerInterval{\powerMax-1}$, we have $\compressChainTrans((\ocState, 2^\powerIndex))((\ocStateB, 0)) = \downProba{\ocState}{2^\powerIndex}{\ocStateB}{0}{\powerIndex}$ and $\compressChainTrans((\ocState,2^\powerIndex))((\ocStateB, 2^{\powerIndex+1})) = \upProba{\ocState}{2^\powerIndex}{\ocStateB}{2^{\powerIndex+1}}{\powerIndex}$.

The following theorem  formalises our characterisation of the transition probabilities of $\compressChain$ for configurations in $\compressChainStateSpace\cap(\ocStateSpace\times\interval)$.
The size of this system is polynomial in $|\ocStateSpace|$ and $\powerMax$.
We provide a self-contained proof that does not refer to the Markov chains described in Figure~\ref{figure:equations:bounded}.
A corollary of this proof is that the Markov chains above yield an accurate characterisation of the transition probabilities.
\begin{theorem}\label{theorem:equations:transitions}
    For each $\ocState, \ocStateB\in\ocStateSpace$, we consider variables $\upProbaVar{\ocState}{1}{\ocStateB}{2}{0}$ and $\downProbaVar{\ocState}{1}{\ocStateB}{0}{0}$, and for all $\powerIndex\in\integerInterval{1, \powerMax-1}$ and $\ocCount\in\{2^{\powerIndex-1}, 2^\powerIndex, 3\cdot 2^{\powerIndex-1}\}$, we consider variables $\upProbaVar{\ocState}{\ocCount}{\ocStateB}{2^{\powerIndex+1}}{\powerIndex}$ and $\downProbaVar{\ocState}{\ocCount}{\ocStateB}{0}{\powerIndex}$.
  For all $\ocState$, $\ocStateB\in\ocStateSpace$ and all $\weightVal\in\{-1, 0, 1\}$, let $\ocTrans^\interval(\ocState)(\ocStateB, \weightVal) = \sum_{\ocAction\in\ocActionSpace(\ocState), \weight(\ocState, \ocAction)=\weightVal}\strat(\ocState, 1)(\ocAction)\cdot\ocTrans(\ocState, \ocAction)(\ocStateB)$.

  Consider the system of equations given by, for all $\ocState, \ocStateB\in\ocStateSpace$:
  \begin{align}
    \upProbaVar{\ocState}{1}{\ocStateB}{2}{0} & =
    \ocTrans^\interval(\ocState)(\ocStateB, 1) +
    \sum_{\ocStateC\in\ocStateSpace} \ocTrans^\interval(\ocState)(\ocStateC, 0)\cdot\upProbaVar{\ocStateC}{1}{\ocStateB}{2}{0}\label{equation:transitions:up:zero} \\
    \downProbaVar{\ocState}{1}{\ocStateB}{0}{0} & =
    \ocTrans^\interval(\ocState)(\ocStateB, -1) +
    \sum_{\ocStateC\in\ocStateSpace} \ocTrans^\interval(\ocState)(\ocStateC, 0)\cdot\downProbaVar{\ocStateC}{1}{\ocStateB}{0}{0}\nonumber
  \end{align}
  and for all $\powerIndex\in\integerInterval{1, \powerMax-1}$,
  \begin{align}
    \begin{split}
      \upProbaVar{\ocState}{2^{\powerIndex-1}}{\ocStateB}{2^{\powerIndex+1}}{\powerIndex} & = 
      \sum_{\ocStateC\in\ocStateSpace}
      \upProbaVar{\ocState}{2^{\powerIndex-1}}{\ocStateC}{2^{\powerIndex}}{\powerIndex-1}
      \cdot
      \upProbaVar{\ocStateC}{2^{\powerIndex}}{\ocStateB}{2^{\powerIndex+1}}{\powerIndex},
    \end{split}\label{equation:transitions:up:first}\\
    \begin{split}
      \upProbaVar{\ocState}{2^{\powerIndex}}{\ocStateB}{2^{\powerIndex+1}}{\powerIndex} & =      
      \sum_{\ocStateC\in\ocStateSpace}\bigg(
      \upProbaVar{\ocState}{2^{\powerIndex-1}}{\ocStateC}{2^{\powerIndex}}{\powerIndex-1}
      \cdot 
      \upProbaVar{\ocStateC}{3\cdot2^{\powerIndex-1}}{\ocStateB}{2^{\powerIndex+1}}{\powerIndex} \\
      &
      +
      \downProbaVar{\ocState}{2^{\powerIndex-1}}{\ocStateC}{0}{\powerIndex-1}
      \cdot 
      \upProbaVar{\ocStateC}{2^{\powerIndex-1}}{\ocStateB}{2^{\powerIndex+1}}{\powerIndex}
      \bigg),
    \end{split}\label{equation:transitions:up:second} \\
    \begin{split}
      \upProbaVar{\ocState}{3\cdot 2^{\powerIndex-1}}{\ocStateB}{2^{\powerIndex+1}}{\powerIndex} & =
      \sum_{\ocStateC\in\ocStateSpace}\bigg(
      \downProbaVar{\ocState}{2^{\powerIndex-1}}{\ocStateC}{0}{\powerIndex-1}
      \cdot
      \upProbaVar{\ocStateC}{2^{\powerIndex}}{\ocStateB}{2^{\powerIndex+1}}{\powerIndex}
      \bigg)
      \\
      &
      + \upProbaVar{\ocState}{2^{\powerIndex-1}}{\ocStateB}{2^{\powerIndex}}{\powerIndex-1},
    \end{split}\label{equation:transitions:up:third}
  \end{align}
  \begin{align*}
    \downProbaVar{\ocState}{3\cdot 2^{\powerIndex-1}}{\ocStateB}{0}{\powerIndex} =
    & 
      \sum_{\ocStateC\in\ocStateSpace}
      \downProbaVar{\ocState}{2^{\powerIndex-1}}{\ocStateC}{0}{\powerIndex-1}
      \cdot
      \downProbaVar{\ocStateC}{2^{\powerIndex}}{\ocStateB}{0}{\powerIndex}, \\
    \downProbaVar{\ocState}{2^{\powerIndex}}{\ocStateB}{0}{\powerIndex} =
    &
      \sum_{\ocStateC\in\ocStateSpace}\bigg(
      \downProbaVar{\ocState}{2^{\powerIndex-1}}{\ocStateC}{0}{\powerIndex-1}
      \cdot 
      \downProbaVar{\ocStateC}{2^{\powerIndex-1}}{\ocStateB}{0}{\powerIndex} \\
    &
      +
      \upProbaVar{\ocState}{2^{\powerIndex-1}}{\ocStateC}{2^{\powerIndex}}{\powerIndex-1}
      \cdot 
      \downProbaVar{\ocStateC}{3\cdot 2^{\powerIndex-1}}{\ocStateB}{0}{\powerIndex}
      \bigg), \\
    \downProbaVar{\ocState}{2^{\powerIndex-1}}{\ocStateB}{0}{\powerIndex} =
    &
      \sum_{\ocStateC\in\ocStateSpace}\bigg(
      \upProbaVar{\ocState}{2^{\powerIndex-1}}{\ocStateC}{2^{\powerIndex}}{\powerIndex-1}
      \cdot
      \downProbaVar{\ocStateC}{2^{\powerIndex}}{\ocStateB}{0}{\powerIndex}
      \bigg) \\
    &
      + \downProbaVar{\ocState}{2^{\powerIndex-1}}{\ocStateB}{0}{\powerIndex-1}.
  \end{align*}
  The least non-negative solution of this system is obtained by substituting each variable $\upProbaVar{\ocState}{\ocCount}{\ocStateB}{2^{\powerIndex+1}}{\powerIndex}$ by $\upProba{\ocState}{\ocCount}{\ocStateB}{2^{\powerIndex+1}}{\powerIndex}$ and $\downProbaVar{\ocState}{\ocCount}{\ocStateB}{0}{\powerIndex}$ by $\downProba{\ocState}{\ocCount}{\ocStateB}{0}{\powerIndex}$.
\end{theorem}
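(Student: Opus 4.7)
\textbf{Proof plan for Theorem~\ref{theorem:equations:transitions}.} The plan is to show (i) that substituting the true probabilities $\upProba{\ocState}{\ocCount}{\ocStateB}{2^{\powerIndex+1}}{\powerIndex}$ and $\downProba{\ocState}{\ocCount}{\ocStateB}{0}{\powerIndex}$ for the variables yields a valid non-negative solution of the system, and (ii) that this solution is minimal among all non-negative solutions. The minimality half will follow the standard pattern used for reachability probabilities in Markov chains: we exhibit an underlying monotone operator $F$ on the vector of variables such that the system is exactly $\mathbf{x}=F(\mathbf{x})$, and show that iterating $F$ from the zero vector produces bounded-horizon analogues of the probabilities which converge pointwise (and monotonically) to the true probabilities.

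For the validity half, I will treat the level $\powerIndex=0$ and the inductive levels $\powerIndex\geq 1$ separately. At level $\powerIndex=0$, the sets $\upHistSet{\ocState}{1}{\ocStateB}{2}{0}$ and $\downHistSet{\ocState}{1}{\ocStateB}{0}{0}$ consist of histories that stay at counter value $1$ until the final transition. Conditioning on the first action taken in $(\ocState,1)$ and using that $\strat$ is memoryless and based on $\intPart$, one gets a sum over the next state $\ocStateC$ of the one-step probability $\ocTrans^\interval(\ocState)(\ocStateC,w)$, which for $w\in\{-1,1\}$ ends the history in the target and for $w=0$ leaves a sub-problem of the same form from $(\ocStateC,1)$. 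This immediately yields Equation~\eqref{equation:transitions:up:zero} and its $\downProbaVar{}{}{}{}{}$ counterpart.

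At level $\powerIndex\geq 1$, I will decompose every history in $\upHistSet{\ocState}{\ocCount}{\ocStateB}{2^{\powerIndex+1}}{\powerIndex}$ (resp.\ $\downHistSet{\cdots}{\cdots}{\cdots}{\cdots}{\powerIndex}$) by its \emph{first visit} to the set of ``midpoints'' $\{2^{\powerIndex-1},\,2^\powerIndex,\,3\cdot 2^{\powerIndex-1}\}$ that strictly separate $\ocCount$ from the boundary $\{0,2^{\powerIndex+1}\}$. Because counter weights are in $\{-1,0,1\}$, every path from $\ocCount$ to the boundary must cross such a midpoint, and the segment from $(\ocState,\ocCount)$ to the first hit lies in $\upHistSet{\cdot}{\cdot}{\cdot}{\cdot}{\powerIndex-1}$ or $\downHistSet{\cdot}{\cdot}{\cdot}{\cdot}{\powerIndex-1}$; by the strong Markov property (i.e., the memoryless nature of $\strat$ applied at the midpoint configuration) the continuation is independent and is itself of the form $\upProbaVar{}{}{}{}{\powerIndex}$ or $\downProbaVar{}{}{}{}{\powerIndex}$. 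Summing over the midpoint and the state of first arrival yields exactly the equations~\eqref{equation:transitions:up:first}--\eqref{equation:transitions:up:third} and their three $\downProbaVar{}{}{}{}{}$ analogues; the ``$+\upProbaVar{\ocState}{2^{\powerIndex-1}}{\ocStateB}{2^{\powerIndex}}{\powerIndex-1}$'' term in~\eqref{equation:transitions:up:third}, and the symmetric term in the last equation, arises from histories that reach the boundary without ever returning through the other midpoints, i.e., histories that do not cross into the ``middle'' sub-interval.

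For minimality, I will define $F\colon \IR^N_{\geq 0}\to\IR^N_{\geq 0}$ as the right-hand sides of the system and argue, by induction on $\powerIndex$, that $F$ is monotone in each coordinate (since all coefficients are non-negative and for each $\powerIndex\geq 1$, the equations at level $\powerIndex$ are linear in the level-$\powerIndex$ variables with coefficients that are products of level-$(\powerIndex{-}1)$ variables). Letting $\mathbf{x}^{(k)}=F^{(k)}(\mathbf{0})$, one shows that $\mathbf{x}^{(k)}$ bounds below the probabilities restricted to histories of length at most some function of $k$; as these probabilities converge monotonically to $\upProba{}{}{}{}{}$ and $\downProba{}{}{}{}{}$, the limit $\mathbf{x}^{(\infty)}=\sup_k \mathbf{x}^{(k)}$ coincides with the true probability vector and is a fixed point of $F$ dominated by every other non-negative fixed point. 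The main subtlety I expect is bookkeeping: verifying that the first-hit decomposition really partitions $\upHistSet{\cdot}{\cdot}{\cdot}{\cdot}{\powerIndex}$ (including the boundary case where the first midpoint hit coincides with the final configuration) so that the ``standalone'' terms $\upProbaVar{\ocState}{2^{\powerIndex-1}}{\ocStateB}{2^{\powerIndex}}{\powerIndex-1}$ and $\downProbaVar{\ocState}{2^{\powerIndex-1}}{\ocStateB}{0}{\powerIndex-1}$ in the last equations of~\eqref{equation:transitions:up:third} and its counterpart are accounted for exactly once; a careful case split on whether $\ocCount\in\{2^{\powerIndex-1},3\cdot 2^{\powerIndex-1}\}$ or $\ocCount=2^\powerIndex$ resolves this.
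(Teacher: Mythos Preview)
Your proposal is correct and follows essentially the same approach as the paper: verify that the true probabilities satisfy the equations via a first-step analysis at level $\powerIndex=0$ and a first-hit decomposition at levels $\powerIndex\geq 1$, then prove minimality by comparing bounded-length approximations against an arbitrary non-negative solution (the paper phrases this as a nested induction on $\powerIndex$ and on history length rather than as Kleene iteration of a monotone operator, but the content is identical).

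One point to make explicit that you only gesture at: when you decompose a history starting at, say, $(\ocState,2^{\powerIndex})$ by its first visit to counter value $3\cdot 2^{\powerIndex-1}$ or $2^{\powerIndex-1}$, the resulting prefix is \emph{not} literally an element of $\upHistSet{\ocState}{2^{\powerIndex-1}}{\ocStateC}{2^{\powerIndex}}{\powerIndex-1}$ or $\downHistSet{\ocState}{2^{\powerIndex-1}}{\ocStateC}{0}{\powerIndex-1}$, because those sets are defined for the specific absolute counter values $\{0,\ldots,2^{\powerIndex}\}$. The paper inserts an explicit counter-shift bijection (subtract $2^{\powerIndex-1}$ from every counter value along the prefix) and checks that cylinder probabilities are preserved, which holds precisely because $\strat$ is constant on the whole interval $\interval$. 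Your invocation of the ``strong Markov property'' covers the factorisation into prefix and suffix, but not this shift; it is the reason the level-$(\powerIndex{-}1)$ quantities with fixed endpoints $2^{\powerIndex-1}$ and $2^{\powerIndex}$ (rather than the actual endpoints $2^{\powerIndex}$ and $3\cdot 2^{\powerIndex-1}$) appear on the right-hand sides of~\eqref{equation:transitions:up:second} and~\eqref{equation:transitions:up:third}.
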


\begin{proof}
  Occurrences of $\proba$ in this proof refer to $\ocmdpFin{\ocmdp}{\counterUB}$, thus we omit it from the notation.
  We show a claim to shorten our arguments.
  Let $\ocConfig$, $\ocConfig'\in\ocStateSpace\times\integerInterval{\counterUB}$.
  Let $\histPart\subseteq\histSet{\ocmdpFin{\ocmdp}{\counterUB}}$ be a prefix-free set of histories starting in $\ocConfig$. Assume that there exist two prefix-free sets of histories $\histPart^{(1)}$ and $\histPart^{(2)}$ such that the last (resp.~first) configuration of all elements of $\histPart^{(1)}$ (resp.~$\histPart^{(2)}$) is $\ocConfig'$ and we have $\histPart = \{\histConcat{\hist_1}{\hist_2}\mid\hist_1\in\histPart^{(1)},\,\hist_2\in\histPart^{(2)}\}$.
  Then it holds that
  \begin{equation}\label{equation:transitions:splitting}
    \probaG{\ocConfig}{\strat}(\cyl{\histPart}) = \probaG{\ocConfig}{\strat}(\cyl{\histPart^{(1)}})\cdot\probaG{\ocConfig'}{\strat}(\cyl{\histPart^{(2)}}).
  \end{equation}
  Equation~\eqref{equation:transitions:splitting} can be proven as follows:
  \begin{align*}
    \probaG{\ocConfig}{\strat}(\cyl{\histPart})
    & = \sum_{\hist_1\in\histPart^{(1)}} \sum_{\hist_2\in\histPart^{(2)}}\probaG{\ocConfig}{\strat}(\cyl{\histConcat{\hist_1}{\hist_2}}) \\
    & = \sum_{\hist_1\in\histPart^{(1)}} \sum_{\hist_2\in\histPart^{(2)}}
      \probaG{\ocConfig}{\strat}(\hist_1)\cdot
      \probaG{\ocConfig'}{\strat}(\hist_2) \\
    & = \bigg(
      \sum_{\hist_1\in\histPart^{(1)}}
      \probaG{\ocConfig}{\strat}(\cyl{\hist_1})
      \bigg)\cdot
      \bigg(
      \sum_{\hist_2\in\histPart^{(2)}}
      \probaG{\ocConfig}{\strat}(\cyl{\hist_2})
      \bigg) \\
    & =\probaG{\ocConfig}{\strat}(\cyl{\histPart^{(1)}})\cdot
      \probaG{\ocConfig'}{\strat}(\cyl{\histPart^{(2)}})
  \end{align*}
  The first line follows from $\histPart$ being prefix-free.
  The second line is obtained from the definition of $\probaG{\ocConfig}{\strat}$, using the fact that $\strat$ is memoryless.
  We obtain the third line by algebraic manipulations and the last one using the fact that $\histPart^{(1)}$ and $\histPart^{(2)}$ are prefix-free.

  We now prove the theorem.
  We start by proving that the asserted solution is a solution of the system.
  We only verify Equations~\eqref{equation:transitions:up:zero}--\eqref{equation:transitions:up:third}, i.e., the equations in which the left-hand side of the equation has a variable with the arrow $\nearrow$.
  Arguments for the others are analogous.
  
  Let $\ocState, \ocStateB\in\ocStateSpace$.
  First, we consider the case $\powerIndex = 0$.
  We recall that $\upHistSet{\ocState}{1}{\ocStateB}{2}{0}$ is prefix-free.
  We partition $\upHistSet{\ocState}{1}{\ocStateB}{2}{0}$ into two sets $\histPart$ and $\histPart'$ such that $\histPart$ is the set of histories starting in $(\ocState, 1)$ whose second configuration is $(\ocStateB, 0)$ and $\histPart'= \upHistSet{\ocState}{1}{\ocStateB}{2}{0}\setminus\histPart$.
  For all histories of $\histPart'$, their second configuration has counter value $1$.
  We rewrite $\probaG{(\ocState, 1)}{\strat}(\cyl{\histPart})$ and $\probaG{(\ocState, 1)}{\strat}(\cyl{\histPart'})$ to prove the desired equality.
  On the one hand, we have
  \[\probaG{(\ocState, 1)}{\strat}(\cyl{\histPart}) =
    \sum_{\substack{\ocAction\in\ocActionSpace(\ocState) \\ \weight(\ocState,\ocAction)=1}}
    \probaG{(\ocState, 1)}{\strat}(\cyl{(\ocState, 1)\ocAction(\ocStateB, 2)}) =
    \ocTrans^\interval(\ocState, 1)(\ocStateB).
  \]
  For the other set, we partition $\histPart'$ according to the second configuration of the histories.
  We further partition the resulting sets following the first action and apply Equation~\eqref{equation:transitions:splitting} to obtain
  \begin{align*}
    \probaG{(\ocState, 1)}{\strat}(\cyl{\histPart'})
    & =
      \sum_{\ocStateC\in\ocStateSpace}
      \sum_{\substack{\ocAction\in\ocActionSpace(\ocState) \\ \weight(\ocState,\ocAction)=0}}
    \sum_{\hist\in\upHistSet{\ocStateC}{1}{\ocStateB}{2}{0}}
    \probaG{(\ocState, 1)}{\strat}(\cyl{\histConcat{(\ocState, 1)\ocAction(\ocStateC, 1)}{\hist}}) \\
    & =
      \sum_{\ocStateC\in\ocStateSpace}
      \bigg(
      \sum_{\substack{\ocAction\in\ocActionSpace(\ocState) \\ \weight(\ocState,\ocAction)=0}}
    \strat(\ocState, 1)(\ocAction)\cdot\delta(\ocState, \ocAction)(\ocStateC)\bigg)\cdot    
    \probaG{(\ocStateC, 1)}{\strat}(\cyl{\upHistSet{\ocStateC}{1}{\ocStateB}{2}{0}}) \\
    & = \sum_{\ocStateC\in\ocStateSpace}
      \ocTrans^\interval(\ocState, 0)(\ocStateC)\cdot\upProba{\ocStateC}{1}{\ocStateB}{2}{0}.
  \end{align*}
  Equation~\eqref{equation:transitions:up:zero} thus follows from the above and
  \[\upProba{\ocState}{1}{\ocStateB}{2}{0} =
    \probaG{(\ocState, 1)}{\strat}(\cyl{\histPart}) +
    \probaG{(\ocState, 1)}{\strat}(\cyl{\histPart'}) .
  \]
  This ends the case where $\powerIndex= 0$.

  Let $\powerIndex\geq 1$.
  We start by considering Equation~\eqref{equation:transitions:up:first}.
  All histories in $\upHistSet{\ocState}{2^{\powerIndex-1}}{\ocStateB}{2^{\powerIndex+1}}{\powerIndex}$ have a configuration with counter value $2^\powerIndex$.
  We let $(\upPart{\ocStateC})_{\ocStateC\in\ocStateSpace}$ be a partition of $\upHistSet{\ocState}{2^{\powerIndex-1}}{\ocStateB}{2^{\powerIndex+1}}{\powerIndex}$ based on the state of the first configuration with counter value $2^\powerIndex$ that is reached.
  For all $\ocStateC\in\ocStateSpace$ and all $\hist\in\upPart{\ocStateC}$, we let $\hist_1$ and $\hist_2$ such that $\hist = \histConcat{\hist_1}{\hist_2}$ where $\hist_1$ is the prefix of $\hist$ up to the first occurrence of $(\ocStateC, 2^{\powerIndex})$, and let, for $i\in\{1, 2\}$, $\upPart{\ocStateC}^{(i)} = \{\hist_i\mid \histConcat{\hist_1}{\hist_2}\in\upPart{\ocStateC}\}$.
  We have $\upPart{\ocStateC}^{(1)} = \upHistSet{\ocState}{2^{\powerIndex-1}}{\ocStateC}{2^\powerIndex}{\powerIndex-1}$ and $\upPart{\ocStateC}^{(2)} = \upHistSet{\ocState}{2^{\powerIndex}}{\ocStateC}{2^{\powerIndex+1}}{\powerIndex}$ by construction.
  We conclude that Equation~\eqref{equation:transitions:up:first} is satisfied by the candidate solution via the following equations (the second line uses Equation~\eqref{equation:transitions:splitting}):
  \begin{align*}
    \upProba{\ocState}{2^{\powerIndex-1}}{\ocStateB}{2^{\powerIndex+1}}{\powerIndex}
    & = \sum_{t\in\ocStateSpace}\probaG{(\ocState, 2^{\powerIndex-1})}{\strat}(\cyl{\upPart{\ocStateC}}) \\
& = \sum_{t\in\ocStateSpace}
      \probaG{(\ocState, 2^{\powerIndex-1})}{\strat}(\cyl{\upPart{\ocStateC}^{(1)}})\cdot
      \probaG{(\ocStateC, 2^{\powerIndex})}{\strat}(\cyl{\upPart{\ocStateC}^{(2)}})
    \\
& =
      \sum_{t\in\ocStateSpace}
      \upProba{\ocState}{2^{\powerIndex-1}}{\ocStateC}{2^{\powerIndex}}{\powerIndex-1}
      \cdot
      \upProba{\ocStateC}{2^{\powerIndex}}{\ocStateB}{2^{\powerIndex+1}}{\powerIndex}.
  \end{align*}
  
  We now move on to Equation~\eqref{equation:transitions:up:second}.
  We partition $\upHistSet{\ocState}{2^\powerIndex}{\ocStateB}{2^{\powerIndex+1}}{\powerIndex}$ as follows.
  Let $\ocStateC\in\ocStateSpace$.
  We let $\upPart{\ocStateC}$ (resp.~$\downPart{\ocStateC}$) denote the subset of $\upHistSet{\ocState}{2^\powerIndex}{\ocStateB}{2^{\powerIndex+1}}{\powerIndex}$ containing the histories such that the first configuration with counter value $3\cdot 2^{\powerIndex-1}$ (resp.~$2^{\powerIndex-1}$) that is visited has state $\ocStateC$ and no prior configuration has a counter value in $\{3\cdot 2^{\powerIndex-1},2^{\powerIndex-1}\}$.
  The sets $\downPart{\ocStateC}$ and $\upPart{\ocStateC}$, $\ocStateC\in\ocStateSpace$ partition, $\upHistSet{\ocState}{2^\powerIndex}{\ocStateB}{2^{\powerIndex+1}}{\powerIndex}$.
  Indeed, all of these sets are disjoint by definition and any history from $(\ocState, 2^\powerIndex)$ to $(\ocStateB, 2^{\powerIndex+1})$ must traverse a configuration with counter value $3\cdot 2^{\powerIndex-1}$.
  Similarly to above (for Equation~\eqref{equation:transitions:up:first}), for all $\ocStateC\in\ocStateSpace$ and all $\hist\in\upPart{\ocStateC}$ (resp.~$\downPart{\ocStateC}$), we let $\hist=\histConcat{\hist_1}{\hist_2}$ such that $\hist_1$ ends in the configuration witnessing that $\hist\in\upPart{\ocStateC}$ (resp.~$\downPart{\ocStateC}$).
  For $i\in\{1, 2\}$, we let $\upPart{\ocStateC}^{(i)}= \{\hist_i\mid\histConcat{\hist_1}{\hist_2}\in\upPart{\ocStateC}\}$ and $\downPart{\ocStateC}^{(i)}= \{\hist_i\mid\histConcat{\hist_1}{\hist_2}\in\downPart{\ocStateC}\}$.
  By applying Equation~\eqref{equation:transitions:splitting}, we obtain:
    \begin{equation*}
    \begin{aligned}
      \upProba{\ocState}{2^{\powerIndex}}{\ocStateB}{2^{\powerIndex+1}}{\powerIndex} =
      &
      \sum_{\ocStateC\in\ocStateSpace}
      \probaG{(\ocState, 2^\powerIndex)}{\strat}(\cyl{\upPart{\ocStateC}^{(1)}})
      \cdot
      \probaG{(\ocStateC, 3\cdot 2^{\powerIndex-1})}{\strat}(\cyl{\upPart{\ocStateC}^{(2)}})
      \\
      & +
      \sum_{\ocStateC\in\ocStateSpace}
      \probaG{(\ocState, 2^\powerIndex)}{\strat}(\cyl{\downPart{\ocStateC}^{(1)}})
      \cdot
      \probaG{(\ocStateC, 2^{\powerIndex-1})}{\strat}(\cyl{\downPart{\ocStateC}^{(2)}})
    \end{aligned}
  \end{equation*}

  We now prove that the cylinder probabilities match the terms in Equation~\eqref{equation:transitions:up:second}.
  Let $\ocStateC\in\ocStateSpace$.
  We have
  $\probaG{(\ocStateC, 3\cdot 2^{\powerIndex-1})}{\strat}(\cyl{\upPart{\ocStateC}^{(2)}}) =
  \upProba{\ocStateC}{3\cdot 2^{\powerIndex-1}}{\ocStateB}{2^{\powerIndex+1}}{\powerIndex}$ and
  $\probaG{(\ocStateC, 2^{\powerIndex-1})}{\strat}(\cyl{\downPart{\ocStateC}^{(2)}}) =
  \upProba{\ocStateC}{2^{\powerIndex-1}}{\ocStateB}{2^{\powerIndex+1}}{\powerIndex}$
  because $\upPart{\ocStateC}^{(2)}$ and $\downPart{\ocStateC}^{(2)}$ are respectively the sets $\upHistSet{\ocStateC}{3\cdot 2^{\powerIndex-1}}{\ocStateB}{2^{\powerIndex+1}}{\powerIndex}$ and
  $\upHistSet{\ocStateC}{2^{\powerIndex-1}}{\ocStateB}{2^{\powerIndex+1}}{\powerIndex}$.

  The sets $\upPart{\ocStateC}^{(1)}$ and $\downPart{\ocStateC}^{(1)}$ do not directly match relevant sets of histories as above.
  However, there are bijections from $\upPart{\ocStateC}^{(1)}$ to $\upHistSet{\ocState}{2^{\powerIndex-1}}{\ocStateC}{2^\powerIndex}{\powerIndex-1}$ and from $\downPart{\ocStateC}^{(1)}$ to $\downHistSet{\ocState}{2^{\powerIndex-1}}{\ocStateC}{0}{\powerIndex-1}$.
  Both bijections map a history to the history obtained by subtracting $2^{\powerIndex-1}$ to the counter values in all configurations along the history.
  All counter values in a history in $\upPart{\ocStateC}^{(1)}$ or $\downPart{\ocStateC}^{(1)}$ and its image lie in the interval $\interval$.
  Therefore, for all $\hist_1\in\upPart{\ocStateC}^{(1)}\cup\downPart{\ocStateC}^{(1)}$ with $\ocConfig$ as its first configuration, if its image by the relevant bijection is $\hist'_1$ and $\hist'_1$ starts in $\ocConfig'$, then $\proba^{\strat}_{\ocConfig}(\cyl{\hist_1}) = \proba^{\strat}_{\ocConfig'}(\cyl{\hist'_1})$.
  We conclude that
  $\probaG{(\ocState, 2^{\powerIndex})}{\strat}(\cyl{\upPart{\ocStateC}^{(1)}}) =
  \upProba{\ocState}{2^{\powerIndex-1}}{\ocStateC}{2^{\powerIndex}}{\powerIndex-1}$ and
  $\probaG{(\ocState, 2^\powerIndex)}{\strat}(\cyl{\downPart{\ocStateC}^{(1)}}) =
  \downProba{\ocState}{2^{\powerIndex-1}}{\ocStateC}{0}{\powerIndex-1}$ (this argument is detailed further in the proof of Lemma~\ref{lemma:abstraction:transitions:symmetry}).
  We have shown that the asserted solution satisfies Equation~\eqref{equation:transitions:up:second}.
  
  We now move on to Equation~\eqref{equation:transitions:up:third}.
  We follow the same scheme as above, i.e.,  we partition $\upHistSet{\ocState}{3\cdot 2^{\powerIndex-1}}{\ocStateB}{2^{\powerIndex+1}}{\powerIndex}$.
  First, we let $\upPart{\ocStateB}$ be the subset with all histories that never hit counter value $2^\powerIndex$.
  For any $\ocStateC\in\ocStateSpace$, we let $\downPart{\ocStateC}$ be the subset of $\upHistSet{\ocState}{3\cdot 2^{\powerIndex-1}}{\ocStateB}{2^{\powerIndex+1}}{\powerIndex}\setminus\upPart{\ocStateB}$ consisting of histories that reach counter value $2^\powerIndex$ for the first time in a configuration with state $\ocStateC$.
  The sets $\downPart{\ocStateC}$, $\ocStateC\in\ocStateSpace$, and $\upPart{\ocStateB}$ partition $\upHistSet{\ocState}{3\cdot 2^{\powerIndex-1}}{\ocStateB}{2^{\powerIndex+1}}{\powerIndex}$.
  As above, for any $\ocStateC\in\ocStateSpace$ and $\hist\in\downPart{\ocStateC}$, we write $\hist=\histConcat{\hist_1}{\hist_2}$ such that $\hist_1$ is the prefix of $\hist$ up to the first occurrence of $(\ocStateC, 2^\powerIndex)$.
  For $i\in\{1, 2\}$, we let $\downPart{\ocStateC}^{(i)}= \{\hist_i\mid\histConcat{\hist_1}{\hist_2}\in\downPart{\ocStateC}\}$.
  Like before, we obtain, from Equation~\eqref{equation:transitions:splitting},
  \begin{equation*}
    \begin{aligned}
      \upProba{\ocState}{3\cdot 2^{\powerIndex-1}}{\ocStateB}{2^{\powerIndex+1}}{\powerIndex} =
      &
      \sum_{\ocStateC\in\ocStateSpace}
      \probaG{(\ocState, 3\cdot 2^{\powerIndex-1})}{\strat}(\cyl{\downPart{\ocStateC}^{(1)}})
      \cdot
      \probaG{(\ocStateC,  2^{\powerIndex})}{\strat}(\cyl{\downPart{\ocStateC}^{(2)}})
      \\
      & +
      \probaG{(\ocState, 3\cdot 2^\powerIndex)}{\strat}(\cyl{\upPart{\ocStateB}}).
    \end{aligned}
  \end{equation*}
  Let $\ocStateC\in\ocStateSpace$.
  It follows from $\downPart{\ocStateC}^{(2)} = \upHistSet{\ocStateC}{2^\powerIndex}{\ocStateB}{2^{\powerIndex+1}}{\powerIndex}$ that
  $\probaG{(\ocStateC,  2^{\powerIndex})}{\strat}(\cyl{\downPart{\ocStateC}^{(2)}}) =
  \upProba{\ocStateC}{2^\powerIndex}{\ocStateB}{2^{\powerIndex+1}}{\powerIndex}$.
  We can adapt the bijection-based argument used for Equation~\eqref{equation:transitions:up:second} to conclude that
  $\probaG{(\ocState, 3\cdot 2^{\powerIndex-1})}{\strat}(\cyl{\downPart{\ocStateC}^{(1)}})=
  \downProba{\ocState}{2^{\powerIndex-1}}{\ocStateC}{0}{\powerIndex-1}$ and
  $\proba_{(\ocState, 3\cdot 2^{\powerIndex-1})}^{\strat}(\cyl{\upPart{\ocStateB}})=
  \upProba{\ocState}{2^{\powerIndex-1}}{\ocStateB}{2^{\powerIndex}}{\powerIndex-1}$.
  This shows that Equation~\eqref{equation:transitions:up:third} is verified by the asserted solution, and ends the argument that all equations hold.

  It remains to show that the asserted solution is the least non-negative solution of the system.
  Once again, we only consider the case of variables with ascending arrows as the other case can be handled similarly.
  We fix an arbitrary non-negative solution of the system.
  We denote its component corresponding to a variable $\varTrans$ by $\varTrans^\star$.

  All probabilities in the asserted solution can be written as the probability of a cylinder of a set of histories.
  In particular, these probabilities can be approximated by only considering the histories with at most $\indexLast$ actions (for $\indexLast\in\IN$).
  It suffices therefore to show that each approximation is lesser or equal to the fixed arbitrary solution to end the proof.
  
  For all $\indexLast\in\IN$, $\ocState, \ocStateB\in\ocStateSpace$, $\powerIndex\in\integerInterval{\powerMax-1}$ and $\ocCount\in\{2^{\powerIndex-1}, 2^{\powerIndex}, 3\cdot 2^{\powerIndex-1}\}$ if $\powerIndex\neq 0$ and $\ocCount = 1$ otherwise, we let $\upProba{\ocState}{\ocCount}{\ocStateB}{2^{\powerIndex+1}}{\powerIndex}^{\leq\indexLast} = \probaG{(\ocState, \ocCount)}{\strat}(\cyl{\histPart^{\leq\indexLast}})$ where $\histPart^{\leq\indexLast}$ is the subset of $\upHistSet{\ocState}{\ocCount}{\ocStateB}{2^{\powerIndex+1}}{\powerIndex}$ containing all histories with at most $\indexLast$ actions
  We define $\downProba{\ocState}{\ocCount}{\ocStateB}{0}{\powerIndex}^{\leq\indexLast}$ similarly.

  Let $\ocState$ and $\ocStateB\in\ocStateSpace$.
  We use nested induction arguments in the remainder of the proof: an outer induction on $\powerIndex$ and an inner induction on $\indexLast$.

  First, we deal with the case $\powerIndex = 0$.
  Let $\indexLast\in\IN$.
  For the base case $\indexLast=0$, we have $\upProba{\ocState}{1}{\ocStateB}{2}{0}^{\leq 0} = 0 \leq \upProbaVar{\ocState}{1}{\ocStateB}{2}{0}^\star$ because we consider a non-negative solution.
  We now assume that $\upProba{\ocState}{1}{\ocStateB}{2}{0}^{\leq \indexLast-1} \leq \upProbaVar{\ocState}{1}{\ocStateB}{2}{0}^\star$ by induction.
  We can apply the reasoning used when considering Equation~\eqref{equation:transitions:up:zero} in the first part of the proof (taking in account the length of histories) and then apply the induction hypothesis to obtain:
  \begin{align*}
    \upProba{\ocState}{1}{\ocStateB}{2}{0}^{\leq\indexLast}
    & =\ocTrans^\interval(\ocState)(\ocStateB, 1) +
    \sum_{\ocStateC\in\ocStateSpace} \ocTrans^\interval(\ocState)(\ocStateC, 0)\cdot
      \upProba{\ocStateC}{1}{\ocStateB}{2}{0}^{\leq\indexLast-1} \\
    & \leq \ocTrans^\interval(\ocState)(\ocStateB, 1) +
      \sum_{\ocStateC\in\ocStateSpace}
      \ocTrans^\interval(\ocState)(\ocStateC, 0)\cdot
      \upProbaVar{\ocStateC}{1}{\ocStateB}{2}{0}^\star \\
    & = \upProbaVar{\ocState}{1}{\ocStateB}{2}{0}^\star.
  \end{align*}
  This closes the proof for the case $\powerIndex=0$.

  Next, let $\powerIndex\geq 1$.
  We assume, by induction on $\powerIndex$, that we have shown that for all $\ocStateC, \ocStateD\in\ocStateSpace$, we have $\upProba{\ocStateC}{2^{\powerIndex-1}}{\ocStateD}{2^{\powerIndex}}{\powerIndex-1}\leq \upProbaVar{\ocStateC}{2^{\powerIndex-1}}{\ocStateD}{0}{\powerIndex-1}^\star$ and $\downProba{\ocStateC}{2^{\powerIndex-1}}{\ocStateD}{2^{\powerIndex}}{\powerIndex-1}\leq \downProbaVar{\ocStateC}{2^{\powerIndex-1}}{\ocStateD}{0}{\powerIndex-1}^\star$.
  The base case $\indexLast = 0$ of the inner induction is direct because for all $\ocCount\in\{2^{\powerIndex-1}, 2^{\powerIndex}, 3\cdot 2^{\powerIndex-1}\}$, we have $\upProba{\ocState}{\ocCount}{\ocStateB}{2^{\powerIndex+1}}{\powerIndex}^{\leq 0}=0$.

  We assume by induction that $\upProba{\ocStateC}{\ocCount}{\ocStateD}{2^{\powerIndex+1}}{\powerIndex}^{\leq\indexLast}\leq\upProbaVar{\ocStateC}{\ocCount}{\ocStateD}{2^{\powerIndex+1}}{\powerIndex}^\star$ for all $\ocStateC, \ocStateD\in\ocStateSpace$ and all $\ocCount\in \{2^{\powerIndex-1}, 2^{\powerIndex}, 3\cdot 2^{\powerIndex-1}\}$.
  All required inequalities are obtained by an adaptation of the argument used in the first part of the proof for Equations~\eqref{equation:transitions:up:first},~\eqref{equation:transitions:up:second} and~\eqref{equation:transitions:up:third}, i.e., partitioning the set of histories while taking in account the length of histories and invoking Equation~\eqref{equation:transitions:splitting}.
  For this reason, we omit some details.
  From configuration $(\ocState, 2^{\powerIndex-1})$, we obtain that
  \begin{align*}
    \upProba{\ocState}{2^{\powerIndex-1}}{\ocStateB}{2^{\powerIndex+1}}{\powerIndex}^{\leq\indexLast} & \leq\sum_{\ocStateC\in\ocStateSpace}
    \upProba{\ocState}{2^{\powerIndex-1}}{\ocStateC}{2^{\powerIndex}}{\powerIndex-1}
    \cdot
    \upProba{\ocStateC}{2^{\powerIndex}}{\ocStateB}{2^{\powerIndex+1}}{\powerIndex}^{\leq\indexLast-1}.
  \end{align*}
  By the induction hypotheses and the fact we are dealing with a solution of the system, we obtain $\upProba{\ocState}{2^{\powerIndex-1}}{\ocStateB}{2^{\powerIndex+1}}{\powerIndex}^{\leq\indexLast} \leq \upProbaVar{\ocState}{2^{\powerIndex-1}}{\ocStateB}{2^{\powerIndex+1}}{\powerIndex}^\star$.
  Next, for configuration $(\ocState, 2^\powerIndex)$, we obtain that 
  \begin{align*}
    \upProba{\ocState}{2^{\powerIndex}}{\ocStateB}{2^{\powerIndex+1}}{\powerIndex}^{\leq\indexLast} \leq
    &
      \sum_{\ocStateC\in\ocStateSpace}\bigg(
      \upProba{\ocState}{2^{\powerIndex-1}}{\ocStateC}{2^{\powerIndex}}{\powerIndex-1}
      \cdot 
      \upProba{\ocStateC}{3\cdot2^{\powerIndex-1}}{\ocStateB}{2^{\powerIndex+1}}{\powerIndex}^{\leq\indexLast-1} \\
    &
      +
      \downProba{\ocState}{2^{\powerIndex-1}}{\ocStateC}{0}{\powerIndex-1}
      \cdot 
      \upProba{\ocStateC}{2^{\powerIndex-1}}{\ocStateB}{2^{\powerIndex+1}}{\powerIndex}^{\leq\indexLast-1}
      \bigg).
  \end{align*}
  It follows from the induction hypotheses and the fact we deal with a solution that $\upProba{\ocState}{2^{\powerIndex}}{\ocStateB}{2^{\powerIndex+1}}{\powerIndex}^{\leq\indexLast} \leq \upProbaVar{\ocState}{2^{\powerIndex}}{\ocStateB}{2^{\powerIndex+1}}{\powerIndex}^\star$.
  Finally, for configuration $(\ocState, 3\cdot 2^{\powerIndex-1})$, we have
  \begin{align*}
    \upProba{\ocState}{3\cdot 2^{\powerIndex-1}}{\ocStateB}{2^{\powerIndex+1}}{\powerIndex}^{\leq\indexLast} \leq
    &
      \sum_{\ocStateC\in\ocStateSpace}\bigg(
      \downProba{\ocState}{2^{\powerIndex-1}}{\ocStateC}{0}{\powerIndex-1}
      \cdot
      \upProba{\ocStateC}{2^{\powerIndex}}{\ocStateB}{2^{\powerIndex+1}}{\powerIndex}^{\leq\indexLast-1}
      \bigg)
    \\
    &
      + \upProba{\ocState}{2^{\powerIndex-1}}{\ocStateB}{2^{\powerIndex}}{\powerIndex-1}.
  \end{align*}
  The induction hypotheses imply that $\upProba{\ocState}{3\cdot 2^{\powerIndex-1}}{\ocStateB}{2^{\powerIndex+1}}{\powerIndex}^{\leq\indexLast}\leq \upProbaVar{\ocState}{3\cdot 2^{\powerIndex-1}}{\ocStateB}{2^{\powerIndex+1}}{\powerIndex}^\star$.

  We have shown that the asserted solution is the least non-negative solution of the system.
\end{proof}

We now analyse the size of the equation system of Theorem~\ref{theorem:equations:transitions}.
There are as many equations as there are variables.
There are $2\cdot|\ocStateSpace|^2$ equations in the system where the variable of the left-hand side is indexed by $0$, and, for all $\powerIndex\in\integerInterval{1, \powerMax-1}$, there are $6\cdot|\ocStateSpace|^2$ equations in the system where the variable of the left-hand side is indexed by $\powerIndex$.
We can also show that these equations have length polynomial in $|\ocStateSpace|$ and $|\ocActionSpace|$.
We obtain the following result.
\begin{lemma}\label{lemma:equations:bounded:size}
  The equation system of Theorem~\ref{theorem:equations:transitions} has $2\cdot|\ocStateSpace|^2\cdot(3\powerMax-2)$ variables and equations.
  Its equations have length polynomial in $|\ocStateSpace|$ and $|\ocActionSpace|$.
\end{lemma}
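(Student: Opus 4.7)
The plan is to prove Lemma~\ref{lemma:equations:bounded:size} by a direct bookkeeping argument: I count the variables introduced in Theorem~\ref{theorem:equations:transitions} case-by-case, observe that each variable is matched by exactly one equation, and then bound the syntactic length of each equation.

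First I would handle the variable count by splitting on the index $\powerIndex$. For $\powerIndex = 0$, each ordered pair $(\ocState, \ocStateB) \in \ocStateSpace^2$ contributes exactly two variables, namely $\upProbaVar{\ocState}{1}{\ocStateB}{2}{0}$ and $\downProbaVar{\ocState}{1}{\ocStateB}{0}{0}$, for a subtotal of $2|\ocStateSpace|^2$. For each $\powerIndex \in \integerInterval{1, \powerMax - 1}$, each pair $(\ocState, \ocStateB) \in \ocStateSpace^2$ and each of the three counter values $\ocCount \in \{2^{\powerIndex-1}, 2^\powerIndex, 3 \cdot 2^{\powerIndex-1}\}$ yields two variables ($\upProbaVar{\ocState}{\ocCount}{\ocStateB}{2^{\powerIndex+1}}{\powerIndex}$ and $\downProbaVar{\ocState}{\ocCount}{\ocStateB}{0}{\powerIndex}$), giving $6|\ocStateSpace|^2$ variables per level. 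Summing these contributions yields
\[
2|\ocStateSpace|^2 + (\powerMax - 1) \cdot 6|\ocStateSpace|^2 \;=\; 2|\ocStateSpace|^2 \bigl(1 + 3(\powerMax - 1)\bigr) \;=\; 2|\ocStateSpace|^2 (3\powerMax - 2),
\]
which matches the claimed bound. Inspection of Theorem~\ref{theorem:equations:transitions} shows that exactly one defining equation is attached to each variable, so the same count applies to the number of equations.

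Next I would bound the length of each equation. For the two equations at level $\powerIndex = 0$, the right-hand side contains one constant term $\ocTrans^\interval(\ocState)(\ocStateB, \pm 1)$ and one sum over $\ocStateC \in \ocStateSpace$ whose terms are of the form $\ocTrans^\interval(\ocState)(\ocStateC, 0) \cdot \varTrans$. Unfolding the definition $\ocTrans^\interval(\ocState)(\ocStateC, \weightVal) = \sum_{\ocAction \in \ocActionSpace(\ocState),\, \weight(\ocState,\ocAction) = \weightVal} \strat(\ocState, 1)(\ocAction) \cdot \ocTrans(\ocState, \ocAction)(\ocStateC)$ contributes at most $|\ocActionSpace|$ additional terms per coefficient, so each such equation has length $O(|\ocStateSpace| \cdot |\ocActionSpace|)$. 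For $\powerIndex \geq 1$, each equation is a sum of at most two sums over $\ocStateSpace$ of products of two variables, possibly plus one extra variable term; no coefficient $\ocTrans^\interval$ appears and the total length is $O(|\ocStateSpace|)$.

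The main obstacle is purely notational: making sure the case split on $\powerIndex$ and on the three admissible counter values per level covers all variables in Theorem~\ref{theorem:equations:transitions} exactly once, without double counting either the ``$\nearrow$'' or ``$\searrow$'' family. Since all bounds obtained in each case are polynomial in $|\ocStateSpace|$ and $|\ocActionSpace|$, combining them yields the stated size bound and the polynomial equation-length claim, concluding the proof.
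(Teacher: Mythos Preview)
Your proposal is correct and follows essentially the same approach as the paper: both count variables by splitting on the level $\powerIndex$ (obtaining $2|\ocStateSpace|^2$ at level $0$ and $6|\ocStateSpace|^2$ at each higher level, then summing), and both bound equation lengths by unfolding $\ocTrans^\interval$ for the level-$0$ equations and observing that higher-level equations are short sums over $\ocStateSpace$. The paper's proof is slightly terser but matches your case split and bounds.
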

\begin{proof}
  The argument regarding the number of variables and equations is given above.
  We thus provide an analysis of the length of the equations.
  We analyse Equations~\eqref{equation:transitions:up:zero}--\eqref{equation:transitions:up:third} from Theorem~\ref{theorem:equations:transitions}. A similar analysis applies to the other equations.
  We need only comment on the right-hand side of each equation, as the left-hand side contains a single variable.

  We start with Equation~\eqref{equation:transitions:up:zero}.
  If we rewrite its right-hand side as a sum of products (we substitute references to $\ocTrans^\interval$ by the corresponding sum), we obtain a sum of no more than $|\ocStateSpace|\cdot|\ocActionSpace|$ products of at most three variables or constants.
  For Equations~\eqref{equation:transitions:up:first}--\eqref{equation:transitions:up:third}, we observe that their right-hand side are respectively sums of no more than $2|\ocStateSpace|$ products of two variables.
  This ends the proof.
\end{proof}

Theorem~\ref{theorem:equations:transitions} provides a system of equations that may not have a unique solution.
We describe how to alter this system to have a unique solution based on the supports of the distributions assigned by $\strat$.

We rely on the Markov chains described in Figure~\ref{figure:equations:bounded}.
By Theorem~\ref{theorem:equations:transitions}, the transition probabilities of $\compressChain$ are reachability probabilities in these Markov chains.
More precisely, the system of Theorem~\ref{theorem:equations:transitions} is a collection of systems for reachability probabilities in these Markov chains.
It follows that modifying the equation system of Theorem~\ref{theorem:equations:transitions} by setting all relevant probabilities to zero will ensure uniqueness of the solution.

It remains to determine how to identify the probabilities that are zero in the least solution of the system.
As explained in Section~\ref{section:preliminaries}, the probabilities that are zero only depend on the transitions (with non-zero probability) between configurations.
Therefore, we need not compute the transition probabilities of the Markov chains (which would have an important computational cost, see Example~\ref{example:compression:main}) and need only determine the transitions qualitatively.

The idea of the procedure is to proceed gradually increasing the counter step size.
First, we can study the Markov chain for counter values $\{0, 1, 2\}$, as illustrated in Figure~\ref{figure:equations:bounded:one}, and perform a graph-based analysis to determine which probabilities to set to zero for outgoing transitions from $\ocStateSpace\times\{1\}$ in $\compressChain$.
Then, for all $\powerIndex\in\integerInterval{\powerMax-1}$, assuming that the non-zero transition probabilities in $\compressChain$ have been determined for configurations in $\ocStateSpace\times\{2^{\powerIndex-1}\}$, we can perform another graph-based analysis on the Markov chain described in Figure~\ref{figure:equations:bounded:induction} to determine the non-zero transition probabilities from $\ocStateSpace\times\{2^\powerIndex\}$ in $\compressChain$.

In this way, we obtain a procedure that runs in time polynomial in $|\ocStateSpace|$ and $\powerMax$: we perform a reachability analysis on one graph of size $3\cdot|\ocStateSpace|$ for the base case and on $\powerMax-1$ graphs of size $5\cdot|\ocStateSpace|$ for the other cases.
This analysis does not require the precise probabilities given by $\strat$, and it is sufficient to only know which actions are chosen with positive probabilities in $\ocStateSpace\times\interval$.
When given the precise probabilities, the system resulting from this procedure can be solved in polynomial time in the BSS model; by construction, its unique solution can be computed by solving $\powerMax$ linear systems.
We summarise this result in the following theorem.

\begin{theorem}\label{theorem:equations:transitions:unique}
    There exists an algorithm modifying the system of Theorem~\ref{theorem:equations:transitions} such that (i) the least solution of the original system is the unique solution of the modified one and (ii) the algorithm runs in time polynomial in $\powerMax$ and the representation size of $\ocmdp$.
  This algorithms only relies on the support of the distributions in the image of $\strat$ and not the precise probabilities.
  The resulting system can be solved in polynomial time in the BSS model.
\end{theorem}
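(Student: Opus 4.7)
The plan is to make precise the strategy sketched in the paragraphs preceding the theorem, processing counter powers $\powerIndex = 0, 1, \ldots, \powerMax - 1$ in order. The key observation is that the equation system of Theorem~\ref{theorem:equations:transitions} partitions into $\powerMax$ \emph{levels} indexed by $\powerIndex$, and that at each level, once the level-$(\powerIndex-1)$ variables are regarded as constants, the equations are precisely the defining linear system for reachability probabilities in the finite Markov chain of Figure~\ref{figure:equations:bounded:one} (when $\powerIndex = 0$) or Figure~\ref{figure:equations:bounded:induction} (when $\powerIndex \geq 1$, with transitions read from the level-$(\powerIndex-1)$ transition probabilities of $\compressChain$). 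Consequently, the classical reachability uniqueness argument recalled in Section~\ref{section:preliminaries} can be applied level by level.

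The algorithm would run as follows. At level $\powerIndex = 0$, construct from the supports of $\{\strat(\ocState, 1) \mid \ocState \in \ocStateSpace\}$ and $\ocTrans$ the directed graph of the Markov chain of Figure~\ref{figure:equations:bounded:one}, and, by a standard backward-reachability sweep, determine for each pair $(\ocState, \ocStateB) \in \ocStateSpace^2$ whether $(\ocStateB, 0)$ and $(\ocStateB, 2)$ are reachable from $(\ocState, 1)$. Whenever reachability fails, replace the corresponding defining equation for $\upProbaVar{\ocState}{1}{\ocStateB}{2}{0}$ or $\downProbaVar{\ocState}{1}{\ocStateB}{0}{0}$ in the system by the trivial equation setting the variable to $0$. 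For each subsequent $\powerIndex \geq 1$, the topology of the Markov chain of Figure~\ref{figure:equations:bounded:induction} is entirely determined by the nonzero transitions of $\compressChain$ at level $\powerIndex - 1$, which are the equations \emph{not} forced to $0$ at the previous iteration; run the same reachability analysis on this graph and zero out the corresponding level-$\powerIndex$ equations.

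For correctness, recall from Section~\ref{section:preliminaries} that the zero-set $\{\mdpState \mid \probaMC{\mdpState}(\reach{\target}) = 0\}$ depends only on the topology of the chain, and that fixing it explicitly makes the reachability equation system uniquely solvable, with solution equal to its least non-negative one. Applied inductively, this shows that the modified system admits a unique non-negative solution coinciding with the least solution of the original, as asserted in~(i). For~(ii), each of the $\powerMax$ levels requires a single reachability sweep on a graph with $\bigo(|\ocStateSpace|)$ vertices and $\bigo(|\ocStateSpace|^2 \cdot |\ocActionSpace|)$ edges, yielding an overall running time polynomial in $\powerMax$ and the representation size of $\ocmdp$, and using only supports of $\strat$'s distributions. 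For the BSS claim, once numerical data are supplied, at level $\powerIndex$ the lower-level variables are already pinned, so the modified level-$\powerIndex$ equations form a linear system in the $\bigo(|\ocStateSpace|^2)$ level-$\powerIndex$ variables; solving $\powerMax$ such linear systems takes polynomial time in the BSS model.

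The main delicate point is the inductive propagation of the support information between levels: one must verify that at level $\powerIndex$ the topology of the Markov chain of Figure~\ref{figure:equations:bounded:induction} is correctly reconstructed from the zero-set computed at level $\powerIndex - 1$. This is ensured because the level-$(\powerIndex-1)$ variables of the modified system have already been pinned to exactly the right zeros by the induction hypothesis, so positivity of $\compressChainTrans((\ocState, 2^{\powerIndex-1}))((\ocStateB, 2^\powerIndex))$ and of its down-counterpart is captured precisely by the equations not forced to zero at the previous iteration.
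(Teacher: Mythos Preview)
Your proposal is correct and follows essentially the same approach as the paper: process the levels $\powerIndex = 0, \ldots, \powerMax-1$ in order, at each level build the graph of the corresponding small Markov chain (using supports only), identify the zero reachability probabilities, pin the corresponding variables to $0$, and use the standard uniqueness property of reachability systems once the full zero-set is fixed; the inductive propagation of support information and the BSS argument via solving $\powerMax$ linear systems are also the same. The only cosmetic difference is that the paper \emph{adds} the constraints $x=0$ to the system whereas you \emph{replace} the defining equations, but both yield a system whose unique solution is the least solution of the original.
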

\begin{proof}
  We first formalise the Markov chains of Figure~\ref{figure:equations:bounded}.
  The remainder of our argument is based on these $\powerMax$ Markov chains.
  We let $\mchain_0 = (\{\bot\}\cup(\ocStateSpace\times\{0, 1, 2\}), \mdpTrans_0^\interval)$ be the Markov chain such that all states in $\{\bot\}\cup(\ocStateSpace\times\{0, 2\})$ are absorbing and for all $\ocState,\ocStateB\in\ocStateSpace$ and $\weightVal\in\{-1, 0, 1\}$, $\mdpTrans_0^\interval((\ocState, 1))((\ocStateB, 1+\weightVal)) = \sum_{\ocAction\in\ocActionSpace(\ocState), \weight(\ocState, \ocAction)=\weightVal}\strat(\ocState, 1)(\ocAction)\cdot\ocTrans(\ocState, \ocAction)(\ocStateB)$ (unattributed probability goes to $\bot$).
  For all $\powerIndex\in\integerInterval{1, \powerMax-1}$, we let $\mchain_\powerIndex = (\{\bot\}\cup(\ocStateSpace\times\{0, 2^{\powerIndex-1}, 2^\powerIndex, 3\cdot 2^{\powerIndex-1}, 2^{\powerIndex+1}\}), \mdpTrans_\powerIndex^\interval)$ where the states in $\{\bot\}\cup(\ocStateSpace\times\{0, 2^{\powerIndex+1}\})$ are absorbing and, for all $\ocState, \ocStateB\in\ocStateSpace$ and $\ocCount\in\{2^{\powerIndex-1}, 2^\powerIndex, 3\cdot 2^{\powerIndex-1}\}$, we let $\mdpTrans_\powerIndex((\ocState, \ocCount))((\ocStateB, \ocCount - 2^{\powerIndex-1})) = \compressChainTrans((\ocState, 2^{\powerIndex-1}))((\ocStateB, 0))$ and $\mdpTrans_\powerIndex((\ocState, \ocCount))((\ocStateB, \ocCount + 2^{\powerIndex-1})) = \compressChainTrans((\ocState, 2^{\powerIndex-1}))((\ocStateB, 2^{\powerIndex}))$.

  We observe that for all $\ocStateB\in\ocStateSpace$ and all $\powerIndex\in\integerInterval{\powerMax-1}$, the subset of equations from Theorem~\ref{theorem:equations:transitions} with the variables of the form $\upProbaVar{\ocState}{\ocCount}{\ocStateB}{2^{\powerIndex+1}}{\powerIndex}$ (resp.~$\downProbaVar{\ocState}{\ocCount}{\ocStateB}{0}{\powerIndex}$) in the left-hand side coincides with a system for reachability probabilities in $\mchain_\powerIndex$ for target $\{(\ocStateB, 2^{\powerIndex+1})\}$ (resp.~$\{(\ocStateB, 0)\}$) when substituting variables indexed by $\powerIndex-1$ by their assignment in the least solution of the system.
  We devise an algorithm that individually modifies every such system so it has a unique solution.
  Because we are dealing with systems for reachability probabilities, we obtain a system with a unique solution by setting the variables whose assignment in the least solution of the system is zero to zero (see Section~\ref{section:preliminaries}).
  This set of variables can be determined using a qualitative reachability analysis of the Markov chains $\mchain_\powerIndex$.

  We analyse the Markov chains in order, i.e., we start with $\mchain_0$, then continue with $\mchain_1$ and so on.
  This is necessary: the transitions with non-zero probabilities in a Markov chain $\mchain_\powerIndex$ with $\powerIndex\geq 1$ is not known beforehand, but can be inferred from the analysis of $\mchain_{\powerIndex-1}$.
  We prove the following invariant of our procedure: after processing $\mchain_\powerIndex$, the least non-negative solution of the modified system is the least non-negative solution of the original system and all variables indexed by $\powerIndex'\leq\powerIndex$ have a unique valid assignment in any solution of the new system.
  We modify the system by adding constraints that are satisfied by the least non-negative solution of the original system.
  Thus, the first part of the invariant follows directly and we do not comment on it.
  
  The transition structure of the Markov chain $\mchain_0$ can be constructed directly as follows: there exists a transition from a state $(\ocState, 1)$ to a state $(\ocStateB, 1+\weightVal)$ if and only if there exists an action $\ocAction\in\supp{\strat(\ocState, 1)}$ such that $\weight(\ocState, \ocAction)=\weightVal$ and $\ocStateB\in\supp{\ocTrans(\ocState, \ocAction)}$ (in particular, the numerical probabilities do not matter).
  This yields a directed graph $\graph_0$ over $\ocStateSpace\times\{0, 1, 2\}$.
  For all $\ocState, \ocStateB\in\ocStateSpace$, we have $\upProba{\ocState}{1}{\ocStateB}{2}{0}=0$ (resp.~$\downProba{\ocState}{1}{\ocStateB}{0}{0}=0$) if and only if $(\ocStateB, 2)$ (resp.~$(\ocStateB, 0)$) cannot be reached from $(\ocState, 1)$ in $\graph_0$, and in this case, we add $\upProbaVar{\ocState}{1}{\ocStateB}{2}{0}=0$ (resp.~$\downProbaVar{\ocState}{1}{\ocStateB}{0}{0}=0$) to the system.
  After analysing $\mchain_0$, the invariant is satisfied.
  Indeed, following the addition of the new equations, there is only one possible assignment of the variables indexed by $0$ in any solution: all of these variables are involved in a Markov chain reachability probability system with a unique solution.

  We now let $\powerIndex\in\integerInterval{1, \powerMax-1}$ and assume that $\mchain_{\powerIndex-1}$ has been processed.
  We assume that the invariant holds by induction.
  Through the analysis of $\mchain_{\powerIndex-1}$, we know which transitions of $\mchain_{\powerIndex}$ have non-zero probability because these probabilities are reachability probabilities in $\mchain_{\powerIndex-1}$ by Theorem~\ref{theorem:equations:transitions}.
  We construct a directed graph $\graph_\powerIndex$ over the state space of $\mchain_\powerIndex$ similarly to above.
  Let $\ocConfig=(\ocState, \ocCount)\in\ocStateSpace\times\{2^{\powerIndex-1}, 2^\powerIndex, 3\cdot2^{\powerIndex-1}\}$ and $\ocStateB\in\ocStateSpace$.
  In $\graph_\powerIndex$, there is an edge from $\ocConfig$ to $(\ocStateB, \ocCount+2^{\powerIndex-1})$ if $\upProba{\ocState}{2^{\powerIndex-1}}{\ocStateB}{2^\powerIndex}{\powerIndex-1}>0$ and there is an edge from $\ocConfig$ to $(\ocStateB, \ocCount-2^{\powerIndex-1})$ if $\downProba{\ocState}{2^{\powerIndex-1}}{\ocStateB}{0}{\powerIndex-1}>0$.
Whether these probabilities are positive is known from the analysis of $\mchain_{\powerIndex-1}$.
  As above, we have $\upProba{\ocState}{\ocCount}{\ocStateB}{2^{\powerIndex+1}}{\powerIndex}=0$ (resp.~$\downProba{\ocState}{\ocCount}{\ocStateB}{0}{\powerIndex}=0$) if and only if $(\ocStateB, 2^{\powerIndex+1})$ (resp.~$(\ocStateB, 0)$) cannot be reached from $(\ocState, \ocCount)$ in $\graph_\powerIndex$, and in this case, we add $\upProbaVar{\ocState}{\ocCount}{\ocStateB}{2^{\powerIndex+1}}{\powerIndex}=0$ (resp.~$\downProbaVar{\ocState}{\ocCount}{\ocStateB}{0}{\powerIndex}=0$) to the system.

  We prove that the invariant is preserved after this iteration.
  By induction, all variables indexed by $\powerIndex-1$ have only one possible valid assignment.
  Given a solution of the system obtained after processing $\mchain_{\powerIndex}$, the variables indexed by $\powerIndex$ must satisfy an equation system with a unique solution, the coefficients of which are given by the unique valid assignment of the variables indexed by $\powerIndex-1$.
  It follows that there can only be one valuation for the variables indexed by $\powerIndex$ in any solution of this system.
  This, in addition to the inductive hypothesis, guarantees that the invariant holds after the analysis of $\mchain_\powerIndex$.
  In the end, after analysing $\mchain_{\powerMax-1}$, the invariant guarantees that the resulting system has a unique solution.

  To end the proof, it remains to show that the above algorithm respects the asserted complexity bounds.
  For all $\powerIndex\in\integerInterval{\powerMax-1}$, constructing the graph $\graph_\powerIndex$ takes time polynomial in the representation of $\ocmdp$.
  Indeed, for $\graph_0$, to find all successors of a configuration $(\ocState, 1)$, it suffices to iterate over all actions $\ocAction\in\supp{\strat(\ocState, 1)}$ and then build on the set of successors $\supp{\ocTrans(\ocState, \ocAction)}$.
  For the other graphs $\graph_\powerIndex$, their structure is inferred from the analysis of $\graph_{\powerIndex-1}$.
  Each graph $\graph_\powerIndex$ can be analysed in polynomial time by performing a backward reachability analysis from each configuration on the right (i.e., after the arrow) of a variable indexed by $\powerIndex$, and there are $2|\ocStateSpace|$ such configurations per graph.
  As we analyse $\powerMax$ graphs, the overall time required to implement the procedure above respects the announced complexity bounds.

  It remains to prove that the unique solution of the system provided by the procedure above can be computed in polynomial time in the BSS model.
  It suffices to solve linear systems for reachability probabilities in each of the Markov chains $\mchain_{\powerMax}$ for $\powerIndex\in\integerInterval{\powerMax}$.
  This can be done in polynomial time with unit-cost arithmetic: these Markov chains have no more than $5\cdot|\ocStateSpace|$ states each.
\end{proof}

\subsection{Finite representations of compressed Markov chains}\label{section:abstraction:finiteness}

We have not imposed any conditions on the memoryless strategy $\strat$: the compressed Markov chain $\compressChain$ can be defined without assuming that $\strat$ is an OEIS or a CIS.
However, for algorithmic purposes, we require that $\compressChain$ has a finite representation that is amenable to verification algorithms.
In this section, we focus on the representation of the state space of $\compressChain$, as the results of Section~\ref{section:abstraction:transitions} provide a finite representation of transition probabilities for each interval.

By construction, $\compressChain$ is finite if and only if $\intPart$ is finite.
Thus $\compressChain$ can only be finite when $\strat$ is an OEIS.
In the remainder of this section, our goal is to show that $\compressChain$ has a finite representation when $\strat$ is a CIS and $\intPart$ is periodic.
We assume that $\counterUB=\infty$ and that $\strat$ is a CIS that for the remainder of the section.
We let $\period$ denote a common period of $\strat$ and $\intPart$.
We let $\intPartB$ be the partition of $\integerInterval{1, \period}$ induced by $\intPart$.

We claim that $\compressChain$ is induced by a one-counter Markov chain $\cisChain=(\cisChainStateSpace, \cisChainTrans)$  where $\cisChainStateSpace = \compressChainStateSpace\cap(\{\bot\}\cup(\ocStateSpace\times\integerInterval{1, \period}))$  and $\cisChainTrans$ is described below.
We first explain the interpretation of configurations before giving intuition on $\cisChainTrans$.
Let $((\ocState, \ocCount), \ocCountB)$ be a configuration of $\ocChainFin{\cisChain}{\infty}$ such that $\ocCountB\geq 1$ or $\ocCount=\period$ (configurations that do not satisfy these conditions will be unreachable and we ignore them).
This configuration corresponds to the configuration $(\ocState, \period\cdot(\ocCount'-1) +\ocCount)\in\compressChainStateSpace$.
The counter value $\ocCount$ keeps track of where in the period we are and the counter value $\ocCount'$ indicates how many multiples of $\period$ the counter has exceeded.
This correspondence guarantees that the configuration $((\ocState, \period), 0)$ of $\ocChainFin{\cisChain}{\infty}$ represents the configuration $(\ocState, 0)\in\compressChainStateSpace$.

Transitions are defined so that successors in $\ocChainFin{\cisChain}{\infty}$ correspond to successors in $\compressChain$.
We formalise $\cisChainTrans\colon\cisChainStateSpace\to\dist{\cisChainStateSpace\times\{-1, 0, 1\}}$ as follows.
Like before, $\bot$ is absorbing and we give a weight of zero to its self-loop to ensure that we cannot terminate in $\bot$.
In other words, we set $\cisChainTrans(\bot)(\bot, 0)=1$.
Let $\ocConfig=(\ocState, \ocCount)\in\cisChainStateSpace$.
Each transition from $\ocConfig$ in $\compressChain$ to a state in $\cisChainStateSpace$ yields a transition with weight zero in $\cisChain$, i.e., for all $\ocConfig'\in\cisChainStateSpace$, we let $\cisChainTrans(\ocConfig)(\ocConfig', 0) = \compressChainTrans(\ocConfig)(\ocConfig')$.
In particular, all incoming transitions of $\bot$ have weight zero.
Any transition from $\ocConfig$ to a configuration $(\ocStateB, 0)$ in $\compressChain$ induces a transition from $\ocConfig$ to $(\ocStateB, \period)$ in $\cisChain$ with a weight of $-1$, i.e., we let $\cisChainTrans(\ocConfig)((\ocStateB, \period), -1) = \compressChainTrans(\ocConfig)((\ocStateB, 0))$.
Intuitively, in this case, we go back to the previous period.
Finally, any transition from $\ocConfig$ to the configuration $(\ocStateB, \period+1)$ in $\compressChain$ yields a transition with a weight of $1$ in $\cisChain$ from $\ocConfig$ to $(\ocStateB, 1)\in\cisChainStateSpace$ (this configuration is guaranteed to be in $\compressChainStateSpace$ because $1$ is the minimum of the first interval and $\intPart$ has period $\period$), i.e., we let $\cisChainTrans(\ocConfig)((\ocStateB, 1), 1) = \compressChainTrans(\ocConfig)((\ocStateB, \period+1))$.
Intuitively, in this case, we have passed a multiple of $\period$.
We obtain a well-defined transition function with the above: for all counter values $\ocCount$ of configurations in $\cisChainStateSpace$, the successor counter values of $\ocCount$ are a counter value of a configuration in $\cisChainStateSpace$, $0$ or $\period+1$, i.e., the upper and lower bound respectively of the intervals adjacent to $\integerInterval{1, \period}$ in $\intPart\cup\{\integerInterval{0}\}$.

We now show that the termination probabilities in $\compressChain$ and in $\ocChainFin{\cisChain}{\infty}$ match from all initial configurations with the correspondence outlined previously.

\begin{theorem}\label{theorem:cis:compression:ocmc}
  For all $(\ocState, \ocCount)\in\cisChainStateSpace\setminus\{\bot\}$ and $\ocCountB\in\IN$ such that $\ocCountB\geq1$ or $\ocCount=\period$ and all $\ocStateB\in\ocStateSpace$, we have $\probaMCverb{\compressChain}{(\ocState, \period\cdot(\ocCount'-1)+\ocCount)}(\reach{(\ocStateB, 0)}) = \probaMCverb{\ocChainFin{\cisChain}{\infty}}{((\ocState, \ocCount), \ocCount')}(\selectiveTermination{(\ocStateB, \period)})$.
\end{theorem}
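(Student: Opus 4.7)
The plan is to build a measure-preserving bijection between the histories of $\ocChainFin{\cisChain}{\infty}$ and those of $\compressChain$ realising the correspondence
\[
\Phi\colon ((\ocState, \ocCount), \ocCount') \mapsto (\ocState, \period(\ocCount'-1)+\ocCount),\qquad \Phi(\bot, \cdot)=\bot,
\]
on configurations, and to conclude by writing each reachability objective as a countable disjoint union of cylinders of first-visit histories. The degenerate case $\ocCount'=0$, $\ocCount=\period$ is immediate since $\Phi$ maps the absorbing configuration $((\ocStateB,\period),0)$ to the absorbing state $(\ocStateB,0)$, both of which the respective reachability events satisfy with probability~$1$.

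The key ingredient will be a \emph{periodicity lemma} for $\compressChain$: for all $m\in\IN$ and $(\ocState,\ocCount),(\ocStateB,\ocCountB)\in\compressChainStateSpace\setminus\{\bot\}$ with $\ocCount\geq 1$ such that $(\ocState,\ocCount+m\period)$ and $(\ocStateB,\ocCountB+m\period)$ both lie in $\compressChainStateSpace$,
\[
\compressChainTrans((\ocState,\ocCount))((\ocStateB,\ocCountB))=\compressChainTrans((\ocState,\ocCount+m\period))((\ocStateB,\ocCountB+m\period)).
\]
This holds because (i) $\intPart$ has period $\period$, so the successor-counter relation on which $\compressChainStateSpace$ and its transitions are built is invariant under shifts by multiples of $\period$; (ii) $\strat$ has period $\period$, so translating any history of $\ocmdpFin{\ocmdp}{\infty}$ by $m\period$ on counter values preserves its consistency with $\strat$ and its cylinder probability; and (iii) the sets $\succHist{(\ocState,\ocCount)}{(\ocStateB,\ocCountB)}$ and $\succHist{(\ocState,\ocCount+m\period)}{(\ocStateB,\ocCountB+m\period)}$ are thus in probability-preserving bijection. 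The matching statement for the $\bot$-transition follows by complementation.

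Using the lemma and the definition of $\cisChain$, I would verify that $\Phi$ transports transition probabilities. A weight-$0$ edge $\cisChainTrans((\ocState,\ocCount))(((\ocStateB,\ocCountB),0))$ equals $\compressChainTrans((\ocState,\ocCount))((\ocStateB,\ocCountB))$ by construction, and the lemma promotes this to the matching transition probability between $\Phi((\ocState,\ocCount),\ocCount')$ and $\Phi((\ocStateB,\ocCountB),\ocCount')$ in $\compressChain$. A weight-$(+1)$ edge $\cisChainTrans((\ocState,\ocCount))(((\ocStateB,1),1))$ is set to $\compressChainTrans((\ocState,\ocCount))((\ocStateB,\period+1))$, which by periodicity equals $\compressChainTrans(\Phi((\ocState,\ocCount),\ocCount'))(\Phi((\ocStateB,1),\ocCount'+1))$; weight-$(-1)$ edges are handled symmetrically. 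In particular, a weight-$(-1)$ edge from $((\ocState,\ocCount),1)$ to $((\ocStateB,\period),0)$ corresponds exactly to a transition to $(\ocStateB,0)$ in $\compressChain$, aligning the targets of both reachability events; and $\bot$ maps to itself on both sides.

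Extending $\Phi$ componentwise to histories yields a measure-preserving bijection between the prefix-free set of histories of $\ocChainFin{\cisChain}{\infty}$ from $((\ocState,\ocCount),\ocCount')$ that first hit $((\ocStateB,\period),0)$ and the prefix-free set of histories of $\compressChain$ from $(\ocState,\period(\ocCount'-1)+\ocCount)$ that first visit $(\ocStateB,0)$. Summing cylinder probabilities delivers the desired equality. The main obstacle is the careful bookkeeping around period boundaries: the weight $\pm 1$ edges of $\cisChain$ must precisely capture every crossing between counter values $\period m$ and $\period m+1$ in $\compressChain$, which relies on $1$ and $\period$ always being bounds of intervals in $\intPartB$ and hence retained as configurations of $\cisChainStateSpace$.
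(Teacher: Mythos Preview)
Your proposal is correct and follows essentially the same approach as the paper: both establish a bijective correspondence between the relevant configurations of $\compressChain$ and $\ocChainFin{\cisChain}{\infty}$, verify that this correspondence preserves transition probabilities, and then lift it to histories to equate the two reachability probabilities. The only cosmetic differences are that the paper works with the inverse map $f\colon\compressChainStateSpace\setminus\{\bot\}\to\cisChainStateSpace\times\IN$ (defined piecewise by whether the counter is a multiple of $\period$) and verifies transition preservation by a direct case analysis on the period boundaries, whereas you factor the same verification through an explicit periodicity lemma for $\compressChain$; the content is the same.
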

\begin{proof}
  We define an injective mapping $f\colon\compressChainStateSpace\setminus\{\bot\}\to\cisChainStateSpace\times\IN$ such that, for any $\ocConfig=(\ocState, \ocCount)\in\compressChainStateSpace$, if $\ocCount$ is divisible by $\period$, we let $f(\ocConfig) = ((\ocState, \period), \frac{\ocCount}{\period})$, and otherwise, we let $f(\ocConfig) = ((\ocState, \ocCount\bmod\period), \lfloor\frac{\ocCount}{\period}\rfloor + 1)$.
  We observe that the image of $f$ is the set of configurations $((\ocState, \ocCount), \ocCountB)$ of $\ocChainFin{\cisChain}{\infty}$ such that $\ocCountB\geq1$ or $\ocCount=\period$.
  The configurations of $\ocChainFin{\cisChain}{\infty}$ with a state other than $\bot$ that are not in the image of $f$ have no incoming transitions in $\ocChainFin{\cisChain}{\infty}$ and are absorbing by construction.
  The statement of the theorem is equivalent to showing that for all $\ocConfig\in\compressChainStateSpace$ and all $\ocStateB\in\ocStateSpace$, we have $\probaMCverb{\compressChain}{\ocConfig}(\reach{(\ocStateB, 0)}) = \probaMCverb{\ocChainFin{\cisChain}{\infty}}{f(\ocConfig)}(\selectiveTermination{(\ocStateB, \period)})$.

  Let $[\cisChainTrans]^{\leq\infty}$ denote the transition function of $\ocChainFin{\cisChain}{\infty}$.
  The crux of the proof is to establish that for all $\ocConfig$, $\ocConfig'\in\compressChainStateSpace$, we have $\compressChainTrans(\ocConfig)(\ocConfig')=[\cisChainTrans]^{\leq\infty}(f(\ocConfig))(f(\ocConfig'))$.
  To refer to this property, we say that $f$ preserves transitions.
  Let $\ocConfig = (\ocState, \ocCount)\in\compressChainStateSpace$.
  If $\ocCount=0$, we have $\compressChainTrans(\ocConfig)(\ocConfig) = 1 = [\cisChainTrans]^{\leq\infty}(f(\ocConfig))(f(\ocConfig)) = [\cisChainTrans]^{\leq\infty}(((\ocState, \period), 0))((\ocState, \period), 0))$ since configurations with counter value zero are absorbing.
  We thus assume that $\ocCount>0$.

  We distinguish two cases below.
  First, we assume that $\period=1$, i.e., the strategy $\strat$ is counter-oblivious.
  It follows that for all $(\ocStateB, \ocCountB)\in\ocStateSpace\times\IN$, we have $f((\ocStateB, \ocCountB)) = ((\ocStateB, 1), \ocCountB)$.
  The successor counter values of $\ocCount$ are $\ocCount-1$ and $\ocCount+1$ because $\period=1$.
  Let $\ocStateB\in\ocStateSpace$, $\weightVal\in\{-1, 1\}$ and $\ocConfig'=(\ocStateB, \ocCount+\weightVal)$.
  By definition of $\cisChain$ and of $\compressChain$ (in particular, its periodic structure), we have
  \[
    [\cisChainTrans]^{\leq\infty}(f(\ocConfig))(f(\ocConfig'))
    = \cisChainTrans((\ocState, 1))((\ocStateB, 1), \weightVal)
    = \compressChainTrans((\ocState, 1))((\ocStateB, 1+\weightVal))
    = \compressChainTrans(\ocConfig)(\ocConfig').
  \]
  This ends the proof that $f$ preserves transitions when $\period=1$.
  
  Now, we assume that $\period > 1$.
  First, we assume that $\ocCount$ is divisible by $\period$, i.e., that $f(\ocConfig) = ((\ocState, \period), \frac{\ocCount}{\period})$.
  Successor counter values of $\ocCount$ are $\ocCount+1$ and $\ocCount-1$, since multiples of $\period$ are the maximum of their interval in $\intPart$.
  Let $\ocStateB\in\ocStateSpace$.
  First, we consider $\ocConfig'=(\ocStateB, \ocCount+1)$.
  In this case, we have $f(\ocConfig') = ((\ocStateB, 1), \frac{\ocCount}{\period}+1)$.
  By definition of $\cisChain$ and of $\compressChain$, we have
  \[
    [\cisChainTrans]^{\leq\infty}(f(\ocConfig))(f(\ocConfig'))
    = \cisChainTrans((\ocState, \period))((\ocStateB, 1), 1)
    = \compressChainTrans((\ocState, \period))((\ocStateB, \period+1))
    = \compressChainTrans(\ocConfig)(\ocConfig').
  \]
  Now, we consider $\ocConfig'=(\ocStateB, \ocCount-1)$.
  We obtain $f(\ocConfig') = ((\ocStateB, \period-1), \frac{\ocCount}{\period})$ and
  \[
    [\cisChainTrans]^{\leq\infty}(f(\ocConfig))(f(\ocConfig'))
    = \cisChainTrans((\ocState, \period))((\ocStateB, \period-1), 0)
    = \compressChainTrans((\ocState, \period))((\ocStateB, \period-1))
    = \compressChainTrans(\ocConfig)(\ocConfig').
  \]
  We have shown that $f$ preserves the transitions from $\ocConfig$ whenever $\ocCount$ is a multiple of $\period$.

  We now assume that $\ocCount$ is not a multiple of $\period$, and thus that $f(\ocConfig) = ((\ocState, \ocCount\bmod\period), \lfloor\frac{\ocCount}{\period}\rfloor+1)$.
  Let $\ocStateB\in\ocStateSpace$, $\ocCountB$ be a successor counter value of $\ocCount$ and $\ocConfig'=(\ocStateB, \ocCountB)$.
  Let $\interval = \integerInterval{\intLB, \intUB}\in\intPart$ such that $\ocCount\in\interval$.
  It follows that $\ocCountB\in\integerInterval{\intLB-1,\intUB+1}$.
  Since multiples of $\period$ are upper bounds of intervals, this implies that $\ocCountB\in\integerInterval{\lfloor\frac{\ocCount}{\period}\rfloor\cdot\period, (\lfloor\frac{\ocCount}{\period}\rfloor +1)\cdot\period+1}$.
  In light of this, we distinguish four cases: $\ocCountB= \lfloor\frac{\ocCount}{\period}\rfloor\cdot\period$, $\ocCountB=(\lfloor\frac{\ocCount}{\period}\rfloor+1)\cdot\period$, $\ocCountB=(\lfloor\frac{\ocCount}{\period}\rfloor+1)\cdot\period+1$ and finally $\ocCountB$ is in none of the previous cases.

  First, we assume that $\ocCountB=\lfloor\frac{\ocCount}{\period}\rfloor\cdot\period$, which implies that $f(\ocConfig') = ((\ocStateB, \period), \lfloor\frac{\ocCount}{\period}\rfloor)$.
  We have
  \[
    [\cisChainTrans]^{\leq\infty}(f(\ocConfig))(f(\ocConfig'))
    = \cisChainTrans((\ocState, \ocCount\bmod\period))((\ocStateB, \period), -1)
    = \compressChainTrans((\ocState, \ocCount\bmod\period))((\ocStateB, 0))
    = \compressChainTrans(\ocConfig)(\ocConfig').
  \]
  Second, we assume that $\ocCountB=(\lfloor\frac{\ocCount}{\period}\rfloor+1)\cdot\period$.
  This implies that $f(\ocConfig') = ((\ocStateB, \period), \lfloor\frac{\ocCount}{\period}\rfloor+1)$.
  It holds that
  \[
    [\cisChainTrans]^{\leq\infty}(f(\ocConfig))(f(\ocConfig'))
    = \cisChainTrans((\ocState, \ocCount\bmod\period))((\ocStateB, \period), 0)
    = \compressChainTrans((\ocState, \ocCount\bmod\period))((\ocStateB, \period))
    = \compressChainTrans(\ocConfig)(\ocConfig').
  \]
  Third, we assume that $\ocCountB=(\lfloor\frac{\ocCount}{\period}\rfloor+1)\cdot\period+1$.
  This implies that $f(\ocConfig') = ((\ocStateB, 1), \lfloor\frac{\ocCount}{\period}\rfloor+2)$.
  It follows that
  \[
    [\cisChainTrans]^{\leq\infty}(f(\ocConfig))(f(\ocConfig'))
    = \cisChainTrans((\ocState, \ocCount\bmod\period))((\ocStateB, 1), 1)
    = \compressChainTrans((\ocState, \ocCount\bmod\period))((\ocStateB, 1))
    = \compressChainTrans(\ocConfig)(\ocConfig').
  \]
  Finally, we assume that none of the previous cases holds.
  We conclude that $f(\ocConfig') = ((\ocStateB, \ocCountB\bmod\period), \lfloor\frac{\ocCount}{\period}\rfloor+1)$.
  We obtain
  \begin{align*}
    [\cisChainTrans]^{\leq\infty}(f(\ocConfig))(f(\ocConfig'))
    & = \cisChainTrans((\ocState, \ocCount\bmod\period))((\ocStateB, \ocCountB\bmod\period), 0) \\
    & = \compressChainTrans((\ocState, \ocCount\bmod\period))((\ocStateB, \ocCountB\bmod\period)) \\
    & = \compressChainTrans(\ocConfig)(\ocConfig').
  \end{align*}
  This ends the proof that $f$ preserves transitions.

  We lift $f$ to histories by letting, for all $\hist=\ocConfig_1\ldots\ocConfig_\indexLast\in\histSet{\compressChain}$ in which $\bot$ does not occur, $f(\hist) = f(\ocConfig_1)\ldots f(\ocConfig_\indexLast)$ and we obtain, since $f$ preserves transitions, $\probaMCverb{\compressChain}{\first{\hist}}(\cyl{\hist})=\probaMCverb{\ocChainFin{\cisChain}{\infty}}{f(\first{\hist})}(\cyl{f(\hist)})$.
  The claim of the theorem follows by writing the objectives as disjoint unions of cylinders and using the fact that $f$ is injective.
\end{proof}

\section{Verification}\label{section:verification}

We present algorithms for the interval strategy verification problem based on the compressed Markov chains of Section~\ref{section:abstraction}.
In Section~\ref{section:verification:algorithms:bounded}, we present a polynomial time algorithm in the BSS model of computation for the verification of OEISs in bounded OC-MDPs.
Sections~\ref{section:verification:algorithms:oeis} and~\ref{section:verification:algorithms:cis} present a reduction from the verification problem for OEISs and CISs respectively to checking the validity of a universal formula in the theory of the reals.
Throughout this section, we invoke the $\mathsf{Refine}$ and $\mathsf{Isolate}$ operators over interval partitions from Section~\ref{section:compression:refinement}.

We consider the following inputs: an OC-MDP $\ocmdp=\ocTuple$, a counter upper bound $\counterUB\in\INposBar$, an OEIS or CIS $\strat$ of $\ocmdpFin{\ocmdp}{\counterUB}$, an initial configuration $\ocConfig_\init=(\ocState_\init, \ocCount_\init)\in\ocStateSpace\times\integerInterval{\counterUB}$, a set of targets $\target\subseteq\ocStateSpace$ and a threshold $\thresProba\in\ccInt{0}{1}\cap\IQ$.

We describe the algorithms in a unified fashion for both the selective termination objective $\selectiveTermination{\target}$ and the state-reachability objective $\reach{\target}$.
We let $\objective\in\{\selectiveTermination{\target}, \reach{\target}\}$ denote the objective.
The major difference between the algorithms for selective termination and state-reachability is with respect to the studied OC-MDP: analysing the state-reachability probabilities requires a (polynomial-time) modification of $\ocmdp$ beforehand (see Theorem~\ref{theorem:ocmdp:probability:reach}).
We assume that this modification has been applied if $\objective=\reach{\target}$.
To further unify notation, we let $\target_\objective = \target\times\{0\}$ if $\objective=\selectiveTermination{\target}$ or $\counterUB=\infty$ and $\target_\objective=\target\times\{0, \counterUB\}$ otherwise.
This choice is motivated by the fact that, for all partitions $\intPart$ of $\integerInterval{1, \counterUB-1}$ for which $\compressChain=(\compressChainStateSpace, \compressChainTrans)$ is well-defined and $\ocConfig_\init\in\compressChainStateSpace$, Theorems~\ref{theorem:ocmdp:probability matching} and~\ref{theorem:ocmdp:probability:reach} ensure that $\probaGverb{\ocmdpFin{\ocmdp}{\counterUB}}{\ocConfig_\init}{\strat}(\objective)=\probaMCverb{\compressChain}{\ocConfig_\init}(\reach{\target_\objective})$.

\subsection{Verification in bounded one-counter Markov decision processes}\label{section:verification:algorithms:bounded}
We assume that $\counterUB\in\INpos$.
We provide a $\ptime^{\posSLP}$ upper bound on the complexity of the OEIS verification problem in bounded OC-MDPs.
We assume that $\counterUB\in\INpos$.
Let $\intPart'$ be the partition of $\integerInterval{1, \counterUB-1}$ given by the description of $\strat$.
We let $\intPart = \mathsf{Refine}(\mathsf{Isolate}(\intPart', \ocCount_\init))$.
It follows that $\strat$ is based on $\intPart$ and that $\ocConfig_\init\in\compressChainStateSpace$ (because $\ocCount_\init$ is a bound of an interval in $\intPart$).

To obtain a $\ptime^{\posSLP}$ complexity upper bound, we need only show that we can decide whether $\probaGverb{\ocmdpFin{\ocmdp}{\counterUB}}{\ocConfig_\init}{\strat}(\objective)\geq\thresProba$ in polynomial time in the BSS model~\cite{DBLP:journals/siamcomp/AllenderBKM09}.
In this model of computation, we can explicitly compute the transition probabilities of $\compressChain$ in polynomial time (by Theorem~\ref{theorem:equations:transitions:unique}) and use them to compute the probability of reaching $\target_\objective$ from $\ocConfig_\init$ in $\compressChain$.
This reachability probability is exactly $\probaGverb{\ocmdpFin{\ocmdp}{\counterUB}}{\ocConfig_\init}{\strat}(\objective)$ by Theorems~\ref{theorem:ocmdp:probability matching} and~\ref{theorem:ocmdp:probability:reach}.
We conclude by comparing it to $\thresProba$.
We obtain the following result.

\begin{theorem}\label{theorem:verification:bounded}
  The OEIS verification problem for state-reachability and selective termination in bounded OC-MDPs is in $\ptime^{\posSLP}$.
\end{theorem}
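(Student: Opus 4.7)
The plan is to follow the unified compressed-Markov-chain template outlined just before the theorem and show that, once we are in the BSS model of computation, every step runs in polynomial time; the Turing-model complexity $\ptime^{\posSLP}$ then follows from the equivalence of~\cite{DBLP:journals/siamcomp/AllenderBKM09}. The main setup is to build an interval partition on which both $\strat$ is based and the compression construction of Section~\ref{section:abstraction} applies, while also guaranteeing that $\ocConfig_\init$ is preserved as a state of $\compressChain$. Starting from the partition $\intPart'$ given by the encoding of $\strat$, I would take $\intPart = \mathsf{Refine}(\mathsf{Isolate}(\intPart', \ocCount_\init))$. By construction of $\mathsf{Isolate}$, the value $\ocCount_\init$ becomes an interval bound, so $\ocConfig_\init\in\compressChainStateSpace$; by Lemma~\ref{lemma:ocmpd:interval size}, $\intPart$ has size polynomial in the binary encoding of $\counterUB$ and satisfies Assumption~\ref{assumption:interval size}, so $\compressChain$ is well-defined and of polynomial size.

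Next I would assemble $\compressChain$ explicitly in the BSS model. For each interval $\interval\in\intPart$, I instantiate the quadratic system of Theorem~\ref{theorem:equations:transitions} and apply Theorem~\ref{theorem:equations:transitions:unique} to produce an equivalent system whose unique solution gives the required transition probabilities, in time polynomial in $\log_2|\interval|$ and the size of $\ocmdp$ by the cited theorem, and then solve the resulting (uniquely solvable) quadratic system in the BSS model in polynomial time as asserted there. Aggregating over all intervals gives a complete description of $\compressChain$ in polynomial time in the BSS model. If $\objective=\reach{\target}$, I first apply the polynomial-time modification of $\ocmdp$ justified by Theorem~\ref{theorem:ocmdp:probability:reach} so that reaching $\target$ in $\ocmdpFin{\ocmdp}{\counterUB}$ is captured by hitting $\target\times\{0,\counterUB\}$ in the compressed chain of the modified OC-MDP; for $\objective=\selectiveTermination{\target}$ no such modification is needed and we use Theorem~\ref{theorem:ocmdp:probability matching} directly.

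Having $\compressChain$ in hand, I compute $\probaMCverb{\compressChain}{\ocConfig_\init}(\reach{\target_\objective})$ by the standard linear-equation characterisation of reachability probabilities in finite Markov chains recalled in Section~\ref{section:preliminaries}. The set of states with probability $0$ is obtained by a graph reachability analysis (polynomial in the size of $\compressChain$, independent of the numerical values), and the remaining linear system has size polynomial in that of $\compressChain$ and can be solved with unit-cost arithmetic in polynomial time. By Theorems~\ref{theorem:ocmdp:probability matching} and~\ref{theorem:ocmdp:probability:reach}, the computed value is exactly $\probaGverb{\ocmdpFin{\ocmdp}{\counterUB}}{\ocConfig_\init}{\strat}(\objective)$, so a single comparison to $\thresProba$ decides the instance.

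The only mild obstacle I foresee is bookkeeping: we must check that the combination of $\mathsf{Isolate}$ with $\mathsf{Refine}$ still yields a polynomial-size partition on which $\strat$ remains based (since $\strat$ is constant on each input interval, this is automatic once we refine), and that transition probabilities produced for different intervals can be spliced into one finite Markov chain coherently with the successor-counter scheme of Section~\ref{section:abstraction:definition}. Once these invariants are verified, the whole procedure is a polynomial-time BSS algorithm, which translates to $\ptime^{\posSLP}$ in the Turing model by~\cite{DBLP:journals/siamcomp/AllenderBKM09}, yielding the claimed upper bound.
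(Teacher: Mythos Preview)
Your proposal is correct and follows essentially the same approach as the paper: refine the input partition via $\mathsf{Refine}(\mathsf{Isolate}(\intPart',\ocCount_\init))$, compute the transition probabilities of the finite compressed Markov chain in polynomial BSS time using Theorem~\ref{theorem:equations:transitions:unique}, solve the resulting linear reachability system, and compare to $\thresProba$, invoking~\cite{DBLP:journals/siamcomp/AllenderBKM09} for the $\ptime^{\posSLP}$ translation. The paper's proof is slightly terser but structurally identical, including the same partition construction and the same chain of references to Theorems~\ref{theorem:ocmdp:probability matching},~\ref{theorem:ocmdp:probability:reach} and~\ref{theorem:equations:transitions:unique}.
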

\begin{proof}
    In this proof, we reason in the BSS model of computation.
  Our goal is to clarify the algorithm outlined above and prove that it runs in polynomial time.
  We let $\intPart=\mathsf{Refine}(\mathsf{Isolate}(\intPart', \ocCount_\init))$ where $\intPart'$ is the interval partition of $\integerInterval{1, \counterUB-1}$ in the representation of $\strat$.
  Lemma~\ref{lemma:ocmpd:interval size} guarantees that $\intPart$ can be computed in polynomial time and has a polynomial-size representation with respect to $\intPart'$, and that $\compressChain$ is well-defined (cf.~Assumption~\ref{assumption:interval size}).
  It follows from Theorem~\ref{theorem:equations:transitions:unique} that the transition probabilities of $\compressChain$ can be computed in polynomial time.

  We have $\probaGverb{\ocmdpFin{\ocmdp}{\counterUB}}{\ocConfig_\init}{\strat}(\objective)=\probaMCverb{\compressChain}{\ocConfig_\init}(\reach{\target_\objective})$ by Theorems~\ref{theorem:ocmdp:probability matching} and~\ref{theorem:ocmdp:probability:reach}.
  It follows that $\probaGverb{\ocmdpFin{\ocmdp}{\counterUB}}{\ocConfig_\init}{\strat}(\objective)$ can be computed in polynomial time by solving a linear system for Markov chain reachability probabilities (with $|\compressChainStateSpace|\leq 2\cdot|\intPart|\cdot |\ocStateSpace|\cdot \log_2(\counterUB)$ variables) and then can be compared to $\thresProba$ in constant time.
  We conclude that the OEIS verification problem for $\objective$ can be solved in polynomial time in the BSS model and thus lies in $\ptime^{\posSLP}$~\cite{DBLP:journals/siamcomp/AllenderBKM09}.
\end{proof}

\subsection{Verifying open-ended interval strategies}\label{section:verification:algorithms:oeis}

We describe a $\coetr$ algorithm for the OEIS verification problem.
Recall that $\coetr$ is the class of decision problems that can be reduced (in polynomial time) to checking whether a universal sentence holds in the theory of the reals and that $\coetr$ is included in $\pspace$~\cite{DBLP:conf/stoc/Canny88}.
The algorithm of Section~\ref{section:verification:algorithms:bounded} provides a finer bound when dealing with bounded OC-MDPs.

We construct logic formulae in the signature of ordered fields to decide the verification problem.
These formulae are also relevant for the realisability problems considered in Section~\ref{section:realisability}.
For this reason, we provide a formula that (essentially) depends only on $\ocmdp$ and a finite partition $\intPart$ of $\integerInterval{1, \counterUB}$ into intervals with respect to which compressed Markov chains are well-defined.
This avoids depending only on $\strat$ which could prove problematic in a realisability setting.
We fix a finite interval partition $\intPart$ of $\integerInterval{1, \counterUB-1}$ satisfying Assumption~\ref{assumption:interval size}, i.e., such that, for all $\interval\in\intPart$, $\log_2(|\interval|+1)\in\IN$.
We build a formula with respect to $\ocmdp$ and $\intPart$ and show that we can answer the verification problem via this formula for all OEISs based on $\intPart$ from any initial configuration in $\compressChainStateSpace$.
We postpone the definition of a relevant partition for $\strat$ and $\ocConfig_\init$ to the end of the section.

Our formula uses three sets of variables.
First, for all $\ocState\in\ocStateSpace$, $\ocAction\in\ocActionSpace(\ocState)$ and $\interval\in\intPart$, we introduce a variable $\varStrat_{\ocState, \ocAction}^\interval$ to represent $\stratB(\ocState,\min\interval)(\ocAction)$ for any OEIS $\stratB$ based on $\intPart$.
For all $\interval\in\intPart$, we let $\varStratI = (\varStrat^\interval_{\ocState, \ocAction})_{\ocState\in\ocStateSpace, \ocAction\in\ocActionSpace(\ocState)}$ and let $\varStratTuple=(\varStratI)_{\interval\in\intPart}$.
We let $\stratB_{\varStratTuple}$ be the formal (parametric) OEIS based on $\intPart$ defined by $\stratB_{\varStratTuple}(\ocState, \min\interval)(\ocAction) = \varStrat^\interval_{\ocState, \ocAction}$ for all $\ocState\in\ocStateSpace$, $\ocAction\in\ocActionSpace(\ocState)$ and $\interval\in\intPart$.
The notation $\stratB_{\varStratTuple}$ allows us to refer to the compressed Markov chain $\compressChainSymbolicVerbose$ parameterised by $\varStratTuple$ in the following.
To lighten notation, we write $\compressChainSymbolic$ instead of $\compressChainSymbolicVerbose$ and $\compressChainTransSymbolic$ instead of $\compressChainTransSymbolicVerbose$.

The second set of variables comes from Theorems~\ref{theorem:equations:termination} and~\ref{theorem:equations:transitions} for each interval of $\intPart$ and are used to represent (and to characterise) the transition probabilities of $\compressChainSymbolic$ from configurations in $\compressChainStateSpace\setminus\compressChainStateSpaceStar$.
We let $\varTransTuple$ denote the vector of all of these variables.
For all configurations $\ocConfig=(\ocState, \ocCount)\in\compressChainStateSpace\setminus\compressChainStateSpaceStar$ and $\ocConfig'=(\ocStateB, \ocCountB)\in\compressChainStateSpace\setminus\{\bot\}$ such that $\ocCountB$ is a successor of $\ocCount$, we let $\varTrans_{\ocConfig, \ocConfig'}$ denote the variable corresponding to $\compressChainTrans(\ocConfig)(\ocConfig')$.
We note that some variables represent outgoing probabilities from two configurations (see Lemma~\ref{lemma:abstraction:transitions:symmetry}).

The last set of variables represents the probability of the counterpart of $\objective$ in $\compressChainSymbolic$ from each configuration.
For all $\ocConfig\in\compressChainStateSpace\setminus\compressChainStateSpaceStar$, we introduce a variable $\varObj_{\ocConfig}$ where $\varObj_{\ocConfig}$ represents $\probaMCverb{\compressChainSymbolic}{\ocConfig}(\reach{\target_\objective})$.
We let $\varObjTuple$ denote the vector of these variables.

We now construct, for all $\ocConfig\in\compressChainStateSpace\setminus\compressChainStateSpaceStar$, a quantifier-free formula such that when substituting $\varStratTuple$ by a vector $\solStratTuple$ that yields a well-defined strategy $\stratB_{\solStratTuple}$ and quantifying the other variables universally, the resulting sentence holds if and only if $\probaGverb{\ocmdpFin{\ocmdp}{\counterUB}}{\ocConfig}{\stratB_{\solStratTuple}}(\objective)\geq\thresProba$.
We rely on universal quantification because we do not have a unique characterisation of the transition probabilities of $\compressChainSymbolic$ when $\counterUB=\infty$.
We construct a quantifier-free conjunction (parameterised by the choices of the strategy) that only holds for (some) over-estimations of $\probaGverb{\ocmdpFin{\ocmdp}{\counterUB}}{\ocConfig}{\stratB_{\solStrat}}(\objective)$.
This allows us to check that $\probaGverb{\ocmdpFin{\ocmdp}{\counterUB}}{\ocConfig}{\stratB_{\solStrat}}(\objective)$ exceeds $\thresProba$ by checking that all of its over-estimations do.

Our formula has two major sub-formulae. First, we define a formula depending on $\varStratTuple$ such that the least vector satisfying it includes the transition probabilities of $\compressChainSymbolic$ from configurations to other configurations (i.e., not to $\bot$).
For each $\interval\in\intPart$, we define $\formulaTransI(\varTransTuple,\varStratI)$ as the conjunction of all the equations in the system characterising the transition probabilities from $\compressChainStateSpace\cap(\ocStateSpace\times\interval)$ in $\compressChainSymbolic$ given by Theorems~\ref{theorem:equations:termination} and~\ref{theorem:equations:transitions} (the invoked theorem depends on whether $\interval$ is finite or not).
We define
\begin{equation}\label{equation:verification:formula:transitions}
  \formulaTrans(\varTransTuple, \varStratTuple) =
  \bigwedge_{\varTrans\in\varTransTuple}\varTrans\geq 0\land
  \bigwedge_{\interval\in\intPart}
    \formulaTransI(\varTransTuple, \varStratI)\land
    \bigwedge_{\ocConfig\in\compressChainStateSpace\setminus\compressChainStateSpaceStar}
    \sum_{\varTrans_{\ocConfig, \ocConfig'}\in\varTrans_{\ocConfig, \cdot}}\varTrans_{\ocConfig, \ocConfig'}\leq 1,
  \end{equation}
  where for all $\ocConfig\in\compressChainStateSpace\setminus\compressChainStateSpaceStar$, $\varTrans_{\ocConfig, \cdot}$ denotes the set of well-defined  variables of the form $\varTrans_{\ocConfig, \ocConfig'}$ ($\ocConfig'\in\compressChainStateSpace$).
The first conjunction ensures that any vector satisfying $\formulaTrans$ is non-negative while the rightmost conjunction ensures that for all $\ocConfig\in\compressChainStateSpace\setminus\compressChainStateSpaceStar$, $\ocConfig'\mapsto\varTrans_{\ocConfig, \ocConfig'}$ is a sub-probability distribution.
It follows that, for all configurations $\ocConfig\in\compressChainStateSpace\setminus\compressChainStateSpaceStar$ and all vectors $\solTransTuple$ and $\solStratTuple$ such that $\formulaTrans(\solTransTuple, \solStratTuple)$ holds, we can define a distribution $\mdpTrans_{\bar{\solTrans}}(\ocConfig)\in\dist{\compressChainStateSpace}$ such that for $\ocConfig'\in\compressChainStateSpace\setminus\{\bot\}$, if $\varTrans_{\ocConfig, \ocConfig'}$ is a well-defined variable then  $\mdpTrans_{\bar{\solTrans}}(\ocConfig)(\ocConfig') = \solTrans_{\ocConfig, \ocConfig'}$ and, otherwise, $\mdpTrans_{\bar{\solTrans}}(\ocConfig)(\ocConfig') = 0$.
We use these distributions in our correctness proof: they allow us to reason on Markov chains over $\compressChainStateSpace$.

The second block of the formula describes the probability of reaching $\target_\objective$ in $\compressChainSymbolic$.
We consider the following formula, derived from the systems for Markov chain reachability probabilities,
\begin{equation}\label{equation:verification:formula:objective}
  \formulaObj(\varTransTuple, \varObjTuple) =
  \bigwedge_{\ocConfig\in\compressChainStateSpace\setminus\compressChainStateSpaceStar}
  \left(
    \varObj_{\ocConfig}\geq 0\land
    \varObj_{\ocConfig} =
    \sum_{\substack{
        \varTrans_{\ocConfig, \ocConfig'}\in\varTrans_{\ocConfig, \cdot}\\
        \ocConfig'\in\compressChainStateSpace\setminus\compressChainStateSpaceStar}
    }
    \varTrans_{\ocConfig, \ocConfig'}\varObj_{\ocConfig'} +
    \sum_{\substack{
        \varTrans_{\ocConfig, \ocConfig'}\in\varTrans_{\ocConfig, \cdot}\\
        \ocConfig'\in\target_\objective}}
    \varTrans_{\ocConfig, \ocConfig'}
  \right).
\end{equation}

We now state that for all well-defined instances $\stratB_{\solStratTuple}$ of $\stratB_{\varStratTuple}$, the conjunction $\formulaTrans(\varTransTuple,\solStratTuple)\land\formulaObj(\varTransTuple, \varObjTuple)$ only holds for over-estimations of the values represented by the variables.
This mainly follows from the construction of the formulae (in particular, by Theorems~\ref{theorem:equations:termination} and~\ref{theorem:equations:transitions}).
\begin{lemma}\label{lemma:verification:oeis:least assignment}
  Let $\solStratTuple$ be a vector such that $\stratB_{\solStratTuple}$ is a well-defined OEIS of $\ocmdpFin{\ocmdp}{\counterUB}$ based on $\intPart$.
  Let $\solTransTuple, \solObjTuple$ be vectors such that $\IR\models\formulaTrans(\solTransTuple,\solStratTuple)\land\formulaObj(\solTransTuple, \solObjTuple)$.
  Then, for all $\ocConfig\in\compressChainStateSpace\setminus\compressChainStateSpaceStar$, we have $\solObj_{\ocConfig}\geq\probaMCverb{\compressChainSymSol}{\ocConfig}(\reach{\target_\objective})$, and, for all $\ocConfig'\in\compressChainStateSpace\setminus\{\bot\}$ such that $\varTrans_{\ocConfig,\ocConfig'}$ is a well-defined variable, $\solTrans_{\ocConfig, \ocConfig'}\geq\compressChainTransSymSol(\ocConfig)(\ocConfig')$.
\end{lemma}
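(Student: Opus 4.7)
The plan is to establish the two inequalities separately, starting with the transition probabilities and then using them to handle the objective values.

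For the transition inequality, I would fix the interpretation of $\varStratTuple$ to $\solStratTuple$. Then $\formulaTrans(\varTransTuple, \solStratTuple)$ is the conjunction (over intervals $\interval \in \intPart$) of exactly the equation systems provided by Theorem~\ref{theorem:equations:termination} (for the unique, unbounded interval if $\counterUB=\infty$) and Theorem~\ref{theorem:equations:transitions} (for bounded intervals), together with the non-negativity constraints $\varTrans \geq 0$ and the sub-stochasticity constraints $\sum_{\ocConfig'} \varTrans_{\ocConfig,\ocConfig'} \leq 1$. Hence $\solTransTuple$ is a non-negative solution of each per-interval equation system. Since both theorems assert that the tuple $(\compressChainTransSymSol(\ocConfig)(\ocConfig'))$ is the \emph{least} non-negative solution of the corresponding system (and these systems decouple across intervals, each only involving variables indexed by a single interval), we obtain $\solTrans_{\ocConfig,\ocConfig'} \geq \compressChainTransSymSol(\ocConfig)(\ocConfig')$ for every well-defined variable.

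For the objective inequality, I would use $\solTransTuple$ (together with the sub-stochasticity guaranteed by $\formulaTrans$) to define an auxiliary Markov chain $\mchain_{\solTransTuple}$ on $\compressChainStateSpace$ whose transitions coincide with $\solTrans_{\ocConfig,\ocConfig'}$ towards $\ocConfig' \neq \bot$, and where all remaining mass is sent to the absorbing state $\bot$ (states in $\compressChainStateSpaceStar$ remain absorbing). By construction, $\formulaObj(\solTransTuple, \varObjTuple)$ is exactly the standard system of linear equations for the reachability probability of $\target_\objective$ in $\mchain_{\solTransTuple}$, together with non-negativity; so $\solObjTuple$ is a non-negative solution of that system and thus dominates its least non-negative solution, namely $\probaMCverb{\mchain_{\solTransTuple}}{\ocConfig}(\reach{\target_\objective})$.

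It remains to compare reachability probabilities in $\mchain_{\solTransTuple}$ and in $\compressChainSymSol$; this is the main conceptual step. The two chains share the same state space, the same absorbing class containing $\bot$ (which is disjoint from $\target_\objective$), and on every non-absorbing $\ocConfig$ we have $\mdpTrans_{\solTransTuple}(\ocConfig)(\ocConfig') \geq \compressChainTransSymSol(\ocConfig)(\ocConfig')$ for all $\ocConfig' \neq \bot$, while the missing mass in both chains is routed to $\bot$. A standard coupling/monotonicity argument for reachability probabilities then yields $\probaMCverb{\mchain_{\solTransTuple}}{\ocConfig}(\reach{\target_\objective}) \geq \probaMCverb{\compressChainSymSol}{\ocConfig}(\reach{\target_\objective})$: concretely, one can compare the value iteration sequences $v_n^{(1)}, v_n^{(2)}$ for the two chains (starting from the indicator of $\target_\objective$) by induction on $n$, using that $\varObj \mapsto \sum_{\ocConfig' \notin \target_\objective} p(\ocConfig') \varObj_{\ocConfig'} + \sum_{\ocConfig' \in \target_\objective} p(\ocConfig')$ is monotone in $p$ when applied to a non-negative vector bounded by $1$. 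Chaining the two inequalities gives the desired bound on $\solObj_{\ocConfig}$.

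The principal obstacle is the monotonicity step, because $\solTransTuple$ need not be a valid transition function of $\compressChainSymSol$ and the auxiliary chain $\mchain_{\solTransTuple}$ is introduced solely to mediate the comparison; care is needed to ensure the induction hypothesis $v_n^{(1)} \geq v_n^{(2)} \geq 0$ is preserved and that the common absorbing target $\target_\objective$ and sink $\bot$ are handled uniformly in both chains.
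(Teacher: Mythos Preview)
Your proposal is correct and follows essentially the same approach as the paper: deduce the transition inequality from the least-solution characterisations of Theorems~\ref{theorem:equations:termination} and~\ref{theorem:equations:transitions}, build the auxiliary Markov chain $\mchain_{\solTransTuple}$, and then compare reachability probabilities in $\mchain_{\solTransTuple}$ and $\compressChainSymSol$. The only cosmetic difference is in that last comparison: the paper argues directly on cylinders of histories ending in $\target_\objective$ (using that no such history visits $\bot$, so each factor satisfies $\mdpTrans_{\solTransTuple}(\ocConfig)(\ocConfig')\geq\compressChainTransSymSol(\ocConfig)(\ocConfig')$), whereas you run the equivalent value-iteration monotonicity argument.
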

\begin{proof}
  For all $\ocConfig\in\compressChainStateSpace\setminus\compressChainStateSpaceStar$ and all $\ocConfig'\in\compressChainStateSpace\setminus\{\bot\}$ such that $\varTrans_{\ocConfig, \ocConfig'}$ is defined, we have $\solTrans_{\ocConfig, \ocConfig'}\geq\compressChainTransSymSol(\ocConfig)(\ocConfig')$ by construction of $\formulaTrans$ through Theorems~\ref{theorem:equations:termination} and~\ref{theorem:equations:transitions}.
  It remains to show that $\solObj_{\ocConfig}\geq\probaMCverb{\compressChainSymSol}{\ocConfig}(\reach{\target_\objective})$ for all $\ocConfig\in\compressChainStateSpace\setminus\compressChainStateSpaceStar$.

  Our argument relies on the finite Markov chain $\mchain_{\solTransTuple} = (\compressChainStateSpace, \mdpTrans_{\solTransTuple})$ where for all $\ocConfig\in\compressChainStateSpaceStar$, we have $\mdpTrans_{\solTransTuple}(\ocConfig)(\ocConfig)=1$ and for all $\ocConfig\in\compressChainStateSpace\setminus\compressChainStateSpaceStar$ and all $\ocConfig'\in\compressChainStateSpace\setminus\{\bot\}$, we have $\mdpTrans_{\solTransTuple}(\ocConfig)(\ocConfig')=\solTrans_{\ocConfig, \ocConfig'}$ whenever $\varTrans_{\ocConfig, \ocConfig'}$ is defined and direct the probability mass that is not assigned to a successor of $\ocConfig$ in the previous way to $\bot$.

  The least vector satisfying $\formulaObj(\solTransTuple, \varObjTuple)$ is $(\probaMCverb{\mchain_{\solTransTuple}}{\ocConfig}(\reach{\target_\objective}))_{\ocConfig\in\compressChainStateSpace\setminus\compressChainStateSpaceStar}$.
  Therefore, it suffices to show that for all $\ocConfig\in\compressChainStateSpace\setminus\compressChainStateSpaceStar$, we have $\probaMCverb{\mchain_{\solTransTuple}}{\ocConfig}(\reach{\target_\objective})\geq\probaMCverb{\compressChainSymSol}{\ocConfig}(\reach{\target_\objective})$ to end the proof.
  It suffices to establish that for all histories $\mcHist\in\histSet{\compressChainSymSol}$ with $\last{\mcHist}\in\target_\objective$ and no prior configuration in $\target_\objective$, we have $\probaMCverb{\compressChainSymSol}{\first{\mcHist}}(\cyl{\hist})\leq\probaMCverb{\mchain_{\solTransTuple}}{\first{\mcHist}}(\cyl{\hist})$.
  Let $\mcHist\in\histSet{\compressChainSymSol}$ be such a history.
  We assume that $\first{\mcHist}\notin\target_\objective$, as otherwise the result is trivial.
  Since $\bot$ is absorbing (in both Markov chains), it follows that all states along $\mcHist$ are configurations in $\compressChainStateSpace$.
  The desired inequality follows from $\ocTrans_{\solTransTuple}(\ocConfig)(\ocConfig')\geq\compressChainTrans(\ocConfig)(\ocConfig')$ holding for all $\ocConfig, \ocConfig'\in\compressChainStateSpace\setminus\{\bot\}$. \end{proof}

The following theorem provides the formula we use to solve the OEIS verification problem based on the intuition given above.
Its correctness follows from Lemma~\ref{lemma:verification:oeis:least assignment}.
\begin{theorem}\label{theorem:verification:oeis:formula}
  Let $\solStratTuple$ be a vector such that $\stratB_{\solStratTuple}$ is a well-defined OEIS.
  For all $\ocConfig\in\compressChainStateSpace\setminus\compressChainStateSpaceStar$, we have $\probaGverb{\ocmdpFin{\ocmdp}{\counterUB}}{\ocConfig}{\stratB_{\solStratTuple}}(\objective) \geq\thresProba$ if and only if $\IR\models\forall\varTransTuple\,\forall\varObjTuple ((\formulaTrans(\varTransTuple,\solStratTuple)\land\formulaObj(\varTransTuple, \varObjTuple))\implies \varObj_\ocConfig\geq\thresProba)$.
\end{theorem}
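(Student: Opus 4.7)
The plan is to show both implications by working in the compressed Markov chain $\compressChainSymSol$ and using Theorems~\ref{theorem:ocmdp:probability matching} and~\ref{theorem:ocmdp:probability:reach} as a bridge. These theorems give the identity
\[
\probaGverb{\ocmdpFin{\ocmdp}{\counterUB}}{\ocConfig}{\stratB_{\solStratTuple}}(\objective) = \probaMCverb{\compressChainSymSol}{\ocConfig}(\reach{\target_\objective}),
\]
(recalling that the state-reachability case has been reduced to a termination-type objective via the modification of $\ocmdp$ described before Theorem~\ref{theorem:ocmdp:probability:reach}), so it suffices to equate the right-hand side being $\geq \thresProba$ with the validity of the universal sentence.

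For the forward direction, assume $\probaMCverb{\compressChainSymSol}{\ocConfig}(\reach{\target_\objective}) \geq \thresProba$. Fix any $\solTransTuple, \solObjTuple$ satisfying $\formulaTrans(\solTransTuple,\solStratTuple) \land \formulaObj(\solTransTuple,\solObjTuple)$. Lemma~\ref{lemma:verification:oeis:least assignment} yields $\solObj_\ocConfig \geq \probaMCverb{\compressChainSymSol}{\ocConfig}(\reach{\target_\objective}) \geq \thresProba$. Since this holds for arbitrary $\solTransTuple, \solObjTuple$, the universal implication is valid in $\IR$.

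For the backward direction, I would produce an explicit witness to instantiate the universal quantifiers. Define $\solTransTuple$ by setting $\solTrans_{\ocConfig, \ocConfig'} = \compressChainTransSymSol(\ocConfig)(\ocConfig')$ for every well-defined transition variable, and $\solObjTuple$ by $\solObj_{\ocConfig'} = \probaMCverb{\compressChainSymSol}{\ocConfig'}(\reach{\target_\objective})$ for each $\ocConfig' \in \compressChainStateSpace \setminus \compressChainStateSpaceStar$. Then $\formulaTrans(\solTransTuple, \solStratTuple)$ holds: non-negativity and the sub-distribution condition are immediate, and each conjunct $\formulaTransI(\solTransTuple, \varStratIstar)$ holds because $\solTransTuple$ is, by Theorems~\ref{theorem:equations:termination} and~\ref{theorem:equations:transitions}, the least non-negative solution of that subsystem—in particular, a solution. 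For $\formulaObj(\solTransTuple, \solObjTuple)$, one decomposes reachability from each non-absorbing $\ocConfig$ by a first-step analysis in the finite Markov chain $\compressChainSymSol$: splitting the successors into the absorbing targets in $\target_\objective$ (contributing $\solTrans_{\ocConfig,\ocConfig'}$), the non-absorbing configurations (contributing $\solTrans_{\ocConfig,\ocConfig'}\cdot\solObj_{\ocConfig'}$), and the remaining absorbing non-target states together with $\bot$ (contributing $0$ to reachability and hence omitted from the equation). Instantiating the universal sentence at $(\solTransTuple,\solObjTuple)$ now gives $\solObj_\ocConfig \geq \thresProba$, which is the desired inequality.

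The main obstacle is essentially administrative rather than conceptual: one must carefully check that the sums in $\formulaObj$ (which range only over $\compressChainStateSpace \setminus \compressChainStateSpaceStar$ for the $\varObj$-weighted part and over $\target_\objective$ for the constant part) faithfully encode the standard reachability equations once the absorbing non-target configurations, including $\bot$, are accounted for as contributing zero. Ensuring that $\compressChainStateSpace$ is finite here is important; this holds because $\stratB_{\solStratTuple}$ is an OEIS based on a finite partition $\intPart$, so only finitely many configurations are retained per interval (a single column for the open-ended interval and at most $2\log_2(|\interval|+1)$ columns per bounded interval), so the first-step decomposition is a genuine equality on a finite system.
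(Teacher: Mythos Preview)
Your proposal is correct and follows essentially the same approach as the paper: both directions use Theorems~\ref{theorem:ocmdp:probability matching}/\ref{theorem:ocmdp:probability:reach} to pass to $\compressChainSymSol$, invoke Lemma~\ref{lemma:verification:oeis:least assignment} for the forward implication, and for the converse instantiate the universally quantified variables with the least non-negative solutions of the transition and reachability systems (which coincide with the actual transition probabilities of $\compressChainSymSol$ and the vector $(\probaMCverb{\compressChainSymSol}{\ocConfig'}(\reach{\target_\objective}))_{\ocConfig'}$). The only cosmetic difference is that the paper phrases the backward witness abstractly as ``the least satisfying assignment'' while you spell out concretely what that assignment is.
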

\begin{proof}
  Let $\ocConfig\in\compressChainStateSpace\setminus\compressChainStateSpaceStar$.
  By Theorems~\ref{theorem:ocmdp:probability matching} and~\ref{theorem:ocmdp:probability:reach}, we have $\probaGverb{\ocmdpFin{\ocmdp}{\counterUB}}{\ocConfig}{\stratB_{\solStratTuple}}(\objective)= \probaMCverb{\compressChainSymSol}{\ocConfig}(\reach{\target_\objective})$.
  First, we assume that $\probaMCverb{\compressChainSymSol}{\ocConfig}(\reach{\target_\objective})\geq\thresProba$.
  Let $\solTransTuple$ and $\solObjTuple$ such that $\IR\models\formulaTrans(\solTransTuple,\solStratTuple)\land\formulaObj(\solTransTuple, \solObjTuple)$.
  By Lemma~\ref{lemma:verification:oeis:least assignment}, we obtain $\solObj_\ocConfig\geq \probaMCverb{\compressChainSymSol}{\ocConfig}(\reach{\target_\objective})\geq\thresProba$.
  This shows the first implication.

  Conversely, assume that $\IR\models\forall\varTransTuple\,\forall\varObjTuple ((\formulaTrans(\varTransTuple,\solStratTuple)\land\formulaObj(\varTransTuple, \varObjTuple))\implies \varObj_\ocConfig\geq\thresProba)$.
  Let $\solTransTuple$ be the least non-negative satisfying assignment of $\varTransTuple$ in $\formulaTrans(\varTransTuple,\solStratTuple)$.
  The existence of $\solTransTuple$ is guaranteed by Theorems~\ref{theorem:equations:termination} and~\ref{theorem:equations:transitions}, which also imply that for all variables $\varTrans_{\ocConfig', \ocConfig''}$, we have $\solTrans_{\ocConfig', \ocConfig''} = \compressChainTrans(\ocConfig')(\ocConfig'')$.
  We then let $\solObjTuple$ be the least satisfying assignment of $\varObjTuple$ in the formula with parameters $\formulaObj(\solTransTuple,\varObjTuple)$.
  By construction of $\formulaObj$, we conclude that $\solObjTuple$ exists and that $\probaMCverb{\compressChainSymSol}{\ocConfig}(\reach{\target_\objective})=\solObj_\ocConfig\geq\thresProba$.
\end{proof}

We now analyse the size of the formula of Theorem~\ref{theorem:verification:oeis:formula}.
We show that this formula is of size polynomial in the encoding of $\ocmdp$ and the natural representation of $\intPart$, i.e., as a finite set of intervals whose bounds are described in binary.
We use this to show that we can build a formula to solve the verification problem in polynomial time.
This analysis is also relevant to obtain complexity bounds for realisability.
\begin{lemma}\label{lemma:verification:oeis:formula:size}
  The formula $(\formulaTrans(\varTransTuple,\varStratTuple)\land\formulaObj(\varTransTuple, \varObjTuple))\implies \varObj_\ocConfig\geq\thresProba$ has a number of variables and atomic sub-formulae polynomial in $|\ocStateSpace|$, $|\ocActionSpace|$, $|\intPart|$ and the binary encoding of the largest integer bound in $\intPart$.
\end{lemma}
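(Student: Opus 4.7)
The plan is to bound each ingredient of the implication separately and then sum the contributions. Write $N = |\ocStateSpace|$, $K = |\ocActionSpace|$, $D = |\intPart|$, and let $B$ denote the binary encoding size of the largest integer bound appearing in $\intPart$; in particular, for every bounded $\interval \in \intPart$, the quantity $\powerMax_\interval = \log_2(|\interval|+1)$ is bounded by a polynomial in $B$ (thanks to Assumption~\ref{assumption:interval size} together with Lemma~\ref{lemma:ocmpd:interval size} ensuring that $|\interval|+1$ is a power of two with exponent at most $B+O(1)$).

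First I would count variables. There are at most $N K D$ strategy variables $\varStrat^\interval_{\ocState,\ocAction}$. The number of objective variables is $|\compressChainStateSpace\setminus\compressChainStateSpaceStar|$, which by definition of $\compressChainStateSpaceJ$ is bounded by $\sum_{\interval\in\intPart}2N\powerMax_\interval \leq 2 N D \cdot O(B)$, hence polynomial. The transition variables $\varTransTuple$ are produced, one interval at a time, by the systems of Theorems~\ref{theorem:equations:termination} and~\ref{theorem:equations:transitions}; Lemmas~\ref{lemma:equations:termination:size} and~\ref{lemma:equations:bounded:size} give respectively $N^2$ and $2N^2(3\powerMax_\interval -2)$ variables per interval, for a total of $O(N^2 D B)$.

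Next I would count atomic sub-formulae in $\formulaTrans$ and $\formulaObj$. The non-negativity conjunction in $\formulaTrans$ contributes one atom per transition variable, so $O(N^2 D B)$ atoms. Each per-interval formula $\formulaTransI$ is a conjunction of the equations of Theorem~\ref{theorem:equations:termination} or~\ref{theorem:equations:transitions}; by the same two lemmas there are $O(N^2 \powerMax_\interval)$ equations, each of length polynomial in $N$ and $K$ (references to $\ocTrans^\interval(\ocState)(\ocStateB,\weightVal)$ expand into sums of size $O(K)$ involving the variables $\varStrat^\interval_{\ocState,\ocAction}$). Summed over $\intPart$, this yields $\mathrm{poly}(N, K, D, B)$ atoms. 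The sub-distribution constraints contribute one inequality per non-absorbing configuration, i.e.\ $O(N D B)$ atoms. For $\formulaObj$, each configuration in $\compressChainStateSpace\setminus\compressChainStateSpaceStar$ contributes one non-negativity atom and one equation whose right-hand side has length bounded by the out-degree in the compressed chain, which is at most $2N$; this again gives $\mathrm{poly}(N, D, B)$ atoms in total.

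Finally, the outer implication adds the single atomic sub-formula $\varObj_\ocConfig \geq \thresProba$ and a constant number of logical connectives. Summing all contributions yields a total number of variables and atomic sub-formulae polynomial in $N$, $K$, $D$ and $B$, as required. The only delicate point is verifying that when the symbolic transition probabilities $\ocTrans^\interval(\ocState)(\ocStateB,\weightVal)$ appearing inside Theorems~\ref{theorem:equations:termination} and~\ref{theorem:equations:transitions} are expanded into terms involving the symbolic strategy probabilities $\varStrat^\interval_{\ocState,\ocAction}$, the polynomial size bound on individual equations (Lemmas~\ref{lemma:equations:termination:size} and~\ref{lemma:equations:bounded:size}) is preserved; this is immediate because each such substitution replaces a symbol with a sum of at most $K$ monomials that are products of a constant $\ocTrans(\ocState,\ocAction)(\ocStateB)$ with one variable $\varStrat^\interval_{\ocState,\ocAction}$.
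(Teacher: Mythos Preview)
Your proof is correct and follows essentially the same approach as the paper's own argument: both bound the variable sets $\varStratTuple$, $\varObjTuple$, $\varTransTuple$ separately, invoke Lemmas~\ref{lemma:equations:termination:size} and~\ref{lemma:equations:bounded:size} for the per-interval equation systems, and then count the remaining atoms (non-negativity, sub-distribution, reachability equations, threshold comparison). Your version is more explicit in the bookkeeping and you additionally spell out why substituting the symbolic $\ocTrans^\interval$ terms by sums over actions preserves polynomial size, a point the paper leaves implicit; the reference to Lemma~\ref{lemma:ocmpd:interval size} is superfluous (Assumption~\ref{assumption:interval size} alone gives $\powerMax_\interval \leq \log_2(\intUB+1) = O(B)$), but this does not affect correctness.
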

\begin{proof}
    Let $\powerMax=\max_{\interval\in\intPart, |\interval|<\infty}\log_2(|\interval|+1)$ if there is a bounded interval in $\intPart$ and, otherwise, let $\powerMax=1$.
  We note that $\powerMax\leq\log_2(\intUB + 1)$ where $\intUB$ is the largest interval bound of $\intPart$.
  
  First, we have, by definition of $\varStratTuple$ and $\varObjTuple$, $|\varStratTuple|\leq|\intPart|\cdot|\ocStateSpace|\cdot|\ocActionSpace|$ and $|\varObjTuple|\leq |\compressChainStateSpace\setminus\compressChainStateSpaceStar|$.
  By definition of $\compressChainStateSpace$, we have $|\compressChainStateSpace\setminus\compressChainStateSpaceStar| \leq 2 \cdot\powerMax\cdot|\intPart|\cdot|\ocStateSpace|$.
  Second, Lemmas~\ref{lemma:equations:termination:size} and~\ref{lemma:equations:bounded:size} imply that $|\varTransTuple|$ and the number and length of the atomic sub-formulae of $\formulaTrans(\varTransTuple,\varStratTuple)$ derived from Theorems~\ref{theorem:equations:termination} and~\ref{theorem:equations:transitions} are polynomial in $|\ocStateSpace|$, $|\ocActionSpace|$, $|\intPart|$ and $\powerMax$.
  It follows from the above that the number and length of the atomic sub-formulae of $(\formulaTrans(\varTransTuple,\varStratTuple)\land\formulaObj(\varTransTuple, \varObjTuple))\implies \varObj_\ocConfig\geq\thresProba$ is polynomial in $|\ocStateSpace|$, $|\ocActionSpace|$, $|\intPart|$ and $\powerMax$.
\end{proof}

We now assume that $\strat$ is an OEIS and define the interval partition used to construct a verification formula.
Let $\intPart'$ be the interval partition of $\integerInterval{1, \counterUB-1}$ given in the representation of $\strat$.
We let $\intPart = \mathsf{Refine}(\mathsf{Isolate}(\intPart', \ocCount_\init))$.
The partition $\intPart$ satisfies Assumption~\ref{assumption:interval size} and we have $\ocConfig_\init\in\compressChainStateSpace$.
Let $\vectStratTuple$ denote the valuation of $\varStratTuple$ defined by $\varStrat^\interval_{\ocState, \ocAction}=\strat(\ocState, \min\interval)(\ocAction)$ for all $\ocState\in\ocStateSpace$, $\ocAction\in\ocActionSpace(\ocState)$ and $\interval\in\intPart$.
To decide the verification problem, we check whether the formula of Theorem~\ref{theorem:verification:oeis:formula} for $\ocConfig_\init$ holds for this valuation $\vectStratTuple$ of $\varStratTuple$.
We obtain the following complexity result.
\begin{theorem}\label{theorem:verification:oeis:complexity}
  The OEIS verification problem for selective termination and state-reachability objectives is in $\coetr$.
\end{theorem}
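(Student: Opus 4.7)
The plan is to reduce the OEIS verification problem to checking the validity of a universal sentence in the theory of the reals in polynomial time, invoking the machinery already developed in Theorems~\ref{theorem:equations:termination}, \ref{theorem:equations:transitions} and~\ref{theorem:verification:oeis:formula}.

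Given the inputs $\ocmdp$, $\counterUB$, the OEIS $\strat$, the initial configuration $\ocConfig_\init=(\ocState_\init,\ocCount_\init)$, $\target$ and $\thresProba$, I would first compute a suitable interval partition. Let $\intPart'$ be the partition on which $\strat$ is based (given by its representation) and define $\intPart=\mathsf{Refine}(\mathsf{Isolate}(\intPart',\ocCount_\init))$. By Lemma~\ref{lemma:ocmpd:interval size} and the discussion in Section~\ref{section:compression:refinement}, this partition satisfies Assumption~\ref{assumption:interval size}, has size polynomial in the binary encoding of $\intPart'$, and is computable in polynomial time; isolating $\ocCount_\init$ further guarantees that $\ocConfig_\init$ is either in $\compressChainStateSpaceStar$ (a trivial case handled directly) or in $\compressChainStateSpace\setminus\compressChainStateSpaceStar$, so that the formula of Theorem~\ref{theorem:verification:oeis:formula} applies. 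If $\objective=\reach{\target}$, the OC-MDP $\ocmdp$ is modified as in Theorem~\ref{theorem:ocmdp:probability:reach} to make target states absorbing with weight $-1$ self-loops; this is a polynomial-time transformation that reduces the state-reachability case to the selective termination case.

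Next, I would define the valuation $\vectStratTuple$ of the strategy variables by setting $\vectStrat^\interval_{\ocState,\ocAction}=\strat(\ocState,\min\interval)(\ocAction)$ for each $\ocState\in\ocStateSpace$, $\ocAction\in\ocActionSpace(\ocState)$ and $\interval\in\intPart$, so that $\stratB_{\vectStratTuple}=\strat$. Then I would produce the universal sentence
\[
\forall\varTransTuple\,\forall\varObjTuple\;\bigl((\formulaTrans(\varTransTuple,\vectStratTuple)\land\formulaObj(\varTransTuple,\varObjTuple))\implies \varObj_{\ocConfig_\init}\geq\thresProba\bigr),
\]
which, by Theorem~\ref{theorem:verification:oeis:formula}, holds in $\IR$ if and only if $\probaGverb{\ocmdpFin{\ocmdp}{\counterUB}}{\ocConfig_\init}{\strat}(\objective)\geq\thresProba$. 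This directly yields a polynomial-time reduction to the universal theory of the reals, establishing $\coetr$-membership.

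The main thing to verify is that the reduction is genuinely polynomial-time. By Lemma~\ref{lemma:verification:oeis:formula:size}, the formula has a number of variables and atomic sub-formulae polynomial in $|\ocStateSpace|$, $|\ocActionSpace|$, $|\intPart|$ and the binary encoding of the largest interval bound in $\intPart$ (which is bounded by the encoding of $\counterUB$ and of $\intPart'$). Substituting $\vectStratTuple$ for $\varStratTuple$ amounts to replacing each variable by the corresponding rational probability from the description of $\strat$, which only adds polynomial overhead. The rational constants coming from $\ocTrans$, $\strat$ and $\thresProba$ preserve polynomial size, so the overall sentence has polynomial size and is constructible in polynomial time. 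I do not foresee a genuine obstacle: all the hard work—establishing that the characterising equations capture the correct transition and reachability probabilities and that the overestimation-based universal quantification is sound—is already packaged in the preceding results, so this proof is essentially an assembly step.
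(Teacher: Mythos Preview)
Your proposal is correct and follows essentially the same approach as the paper: compute $\intPart=\mathsf{Refine}(\mathsf{Isolate}(\intPart',\ocCount_\init))$, substitute the strategy's probabilities into the universal formula of Theorem~\ref{theorem:verification:oeis:formula}, and invoke Lemma~\ref{lemma:verification:oeis:formula:size} together with Lemma~\ref{lemma:ocmpd:interval size} to argue that the sentence is constructible in polynomial time. Your write-up is slightly more explicit about the state-reachability reduction and the trivial case $\ocConfig_\init\in\compressChainStateSpaceStar$, but the argument is the same.
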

\begin{proof}
  We prove that the formula used to answer the verification problem can be constructed in polynomial time.
  The structure of the formula is fixed.
  Therefore, it can be constructed in time polynomial in its size.
  It follows from Lemma~\ref{lemma:verification:oeis:formula:size} that we can construct our formula in polynomial time if $\intPart$ admits a representation of size polynomial in the number of inputs to the verification problem.

    By definition of the $\mathsf{Refine}$ and $\mathsf{Isolate}$ operators, all interval bounds of $\intPart$ are either dominated by a bound in the representation of $\strat$ or by $\ocCount_\init+1$.
  Therefore, all bounds admit a polynomial-size representation.
  Furthermore, Lemma~\ref{lemma:ocmpd:interval size} guarantees that when applying the refinement procedure to obtain $\intPart$, we obtain a partition of size polynomial in the size of the inputs to the verification problem.
\end{proof}

\subsection{Verifying cyclic interval strategies}\label{section:verification:algorithms:cis}

We provide a $\coetr$ algorithm for the CIS verification problem that follows the same ideas as in Section~\ref{section:verification:algorithms:oeis}.
We assume throughout this section that $\counterUB=\infty$.
To analyse CISs, we use the compression approach twice.
First, we use it to define the one-counter Markov chain that induces the infinite compressed Markov chain with respect to our CIS for a well-chosen periodic partition of $\INpos$ (cf.~Theorem~\ref{theorem:cis:compression:ocmc}).
Second, we use the compression approach to reason on the obtained one-counter Markov chain.

The formulae constructed in this section are also relevant for the realisability algorithms of Section~\ref{section:realisability}.
We design formulae that apply to all strategies based on a given periodic partition of $\INpos$.
We let $\period\in\INpos$ be a period, $\intPartB$ be an interval partition of $\integerInterval{1, \period}$ into intervals and let $\intPart$ be the periodic partition generated by $\intPartB$.
We fix a finite interval partition $\intPartC$ of $\INpos$ for the second compression.
For all intervals $\interval\in\intPartB\cup\intPartC$, we assume that $\log_2(|\interval|+1)\in\IN$ to guarantee that compressed Markov chains are well-defined with respect to these partitions.
We design formulae for all CISs based on $\intPart$ whose structure depends only on $\ocmdp$, $\intPartB$ and $\intPartC$.
We let $\bar{\target} = (\target\times\{\period\})\times\{0\}$ denote the target of interest in the compression of the one-counter Markov chain (see Theorems~\ref{theorem:ocmdp:probability matching} and~\ref{theorem:ocmdp:probability:reach}).

Our formula for the verification problem uses four sets of variables; we require a new set of variables comparatively to Section~\ref{section:verification:algorithms:oeis} for the additional compression.
First, we introduce variables for the choices of strategies.
For all $\ocState\in\ocStateSpace$, $\ocAction\in\ocActionSpace(\ocState)$ and $\interval\in\intPartB$, we introduce a variable $\varStrat_{\ocState, \ocAction}^\interval$ to represent $\strat(\ocState,\min\interval)(\ocAction)$.
For all $\interval\in\intPartB$, we let $\varStratI = (\varStrat^\interval_{\ocState, \ocAction})_{\ocState\in\ocStateSpace, \ocAction\in\ocActionSpace(\ocState)}$ and let $\varStratTuple=(\varStratI)_{\interval\in\intPartB}$.
We let $\stratB_{\varStratTuple}$ be the formal CIS of period $\period$ based on $\intPart$ defined by $\stratB_{\varStratTuple}(\ocState, \min\interval)(\ocAction) = \varStrat^\interval_{\ocState, \ocAction}$ for all $\ocState\in\ocStateSpace$, $\ocAction\in\ocActionSpace(\ocState)$ and $\interval\in\intPartB$.
To lighten notation, we write $\compressChainSymbolic$ for the compressed Markov chain $\compressChainSymbolicVerbose$ associated to $\stratB_{\varStratTuple}$ (parameterised by $\varStratTuple$) and $\cisChainSymbolic=\cisChainTupleSymbolic$ for the one-counter Markov chain $\cisChainSymbolicVerbose$ inducing $\compressChainSymbolic$ in the sense of Theorem~\ref{theorem:cis:compression:ocmc}.
We let $\cisChainStateSpaceStar = \cisChainStateSpace\setminus\{\bot\}$.
We let $\compressCisSymbolic = \compressCisTupleSymbolic$ denote the compression of $\ocChainFin{\cisChainSymbolic}{\infty}$ with respect to $\intPartC$.

We then introduce a new set of variables $\varCisTuple$ for the transitions probabilities of $\cisChainSymbolic$ between configurations in $\cisChainStateSpaceStar$; these variables come from the system of Theorem~\ref{theorem:equations:transitions}.
For any two $\ocConfig, \ocConfig'\in\cisChainStateSpaceStar$ and weight $\weightVal\in\{-1, 0, 1\}$, we let $\varCis_{\ocConfig, \ocConfig',\weightVal}$ denote the variable corresponding to $\cisChainTransSymbolic(\ocConfig)(\ocConfig', \weightVal)$ whenever this variable is well-defined.
Third, we consider a set of variables $\varTransTuple$ for the transitions probabilities of $\compressCisSymbolic$ taken from the systems of Theorems~\ref{theorem:equations:termination} and~\ref{theorem:equations:transitions}.
For all $\cisConfig, \cisConfig'\in\compressCisStateSpace$ such that $\cisConfig\in\cisChainStateSpaceStar\times\INpos$, we write $\varTrans_{\cisConfig, \cisConfig'}$ for the variable corresponding to $\compressCisTransSymbolic(\cisConfig)(\cisConfig')$ whenever this variable is defined.
Finally, we introduce a variable $\varObj_{\cisConfig}$ for all configurations $\cisConfig\in\compressCisStateSpace\cap(\cisChainStateSpaceStar\times\INpos)$ to represent the probability $\probaMCverb{\compressCisSymbolic}{\cisConfig}(\reach{\bar{\target}})$.
We let $\varObjTuple=(\varObj_{\cisConfig})_{\cisConfig\in\compressCisStateSpace\cap(\cisChainStateSpaceStar\times\INpos)}$.

We now formulate three sub-formulae of the formula used in our decision procedure.
For all $\interval\in\intPartB$, we let $\formulaCisI(\varCisTuple, \varStratI)$ be the conjunction of the equations obtained by Theorem~\ref{theorem:equations:transitions} for the outgoing transitions of $\cisChainStateSpace\cap(\ocStateSpace\times\interval)$ in $\cisChainStrat{\stratB_{\varStratTuple}}$.
Similarly to Equation~\eqref{equation:verification:formula:transitions}, we define a formula for the transitions of $\cisChainSymbolic$ by
\begin{equation}\label{equation:verification:formula:transitions:cis}
  \formulaCis(\varCisTuple, \varStratTuple) =
  \bigwedge_{\varCis\in\varCisTuple}\varCis\geq 0\land
  \bigwedge_{\interval\in\intPartB}
  \formulaCisI(\varCisTuple, \varStratI)\land
  \bigwedge_{\ocConfig\in\cisChainStateSpace\setminus\{\bot\}}
  \sum_{\varCis_{\ocConfig, \ocConfig', \weightVal}\in\varCis_{\ocConfig, \cdot, \cdot}}\varCis_{\ocConfig, \ocConfig', \weightVal}\leq 1.
\end{equation}

We then construct the counterpart $\formulaCisTrans(\varTransTuple, \varCisTuple)$ of the formula of Equation~\eqref{equation:verification:formula:transitions} for the compressed Markov chain $\compressCis$.
In this case, the sub-formulae derived from the systems of Theorem~\ref{theorem:equations:termination} and Theorem~\ref{theorem:equations:transitions} for each interval of $\intPartC$ depend on $\varCisTuple$ instead of $\varStratTuple$.
We also build a counterpart $\formulaCisObj(\varTransTuple, \varObjTuple)$ of the formula given in Equation~\eqref{equation:verification:formula:objective} for $\compressCis$ with respect to the target $\bar{\target}$.

To decide the verification problem, we rely on a formula similar to that of Theorem~\ref{theorem:verification:oeis:formula}: we check that over-estimations of the probability of interest exceed the threshold $\thresProba$.
To validate this approach, we establish a counterpart of Lemma~\ref{lemma:verification:oeis:least assignment} for the conjunction $\formulaCis(\varCisTuple, \solStratTuple)\land\formulaCisTrans(\varTransTuple, \varCisTuple)\land\formulaCisObj(\varTransTuple,\varObjTuple)$ given a vector $\solStratTuple$ such that $\stratB_{\solStratTuple}$ is a well-defined CIS based on $\intPart$.

\begin{lemma}\label{lemma:verification:cis:least assignment}
  Let $\solStratTuple$ be a vector such that $\stratB_{\solStratTuple}$ is a well-defined CIS of $\ocmdpFin{\ocmdp}{\infty}$ based on $\intPart$.
  Let $\solCisTuple$, $\solTransTuple$ and $\solObjTuple$ be vectors such that $\IR\models\formulaCis(\solCisTuple, \solStratTuple)\land\formulaCisTrans(\solTransTuple, \solCisTuple)\land\formulaCisObj(\solTransTuple,\solObjTuple)$.
  Then, it holds that
  \begin{enumerate}
  \item for all $\cisConfig\in\compressCisStateSpace\cap(\cisChainStateSpaceStar\times\INpos)$, we have $\solObj_{\cisConfig}\geq\probaMCverb{\compressCisSymbSol}{\cisConfig}(\reach{\bar{\target}})$;\label{item:verification:cis:la:1}
  \item for all $\cisConfig\in\compressCisStateSpace\cap(\cisChainStateSpaceStar\times\INpos)$ and all $\cisConfigB\in\compressCisStateSpace$ such that $\varTrans_{\cisConfig, \cisConfigB}$ is defined, we have $\solTrans_{\cisConfig, \cisConfigB}\geq\compressCisTransSymbSol(\cisConfig)(\cisConfigB)$;\label{item:verification:cis:la:2}
  \item for all $\ocConfig$, $\ocConfig'\in\cisChainStateSpace\setminus\{\bot\}$ and $\weightVal\in\{-1, 0, 1\}$ such that $\varCis_{\ocConfig, \ocConfig',\weightVal}$ is defined, we have $\solCis_{\ocConfig, \ocConfig',\weightVal}\geq \cisChainTransSymbSol(\ocConfig)(\ocConfig', \weightVal)$.\label{item:verification:cis:la:3}
  \end{enumerate}
\end{lemma}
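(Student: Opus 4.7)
The plan is to mirror the structure of Lemma~\ref{lemma:verification:oeis:least assignment}, but handle the two layers of compression separately, starting from the innermost (claim~\ref{item:verification:cis:la:3}), then claim~\ref{item:verification:cis:la:2}, and finally claim~\ref{item:verification:cis:la:1}. The conjunction in the hypothesis decomposes along these three layers, so I can peel them off one at a time.

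For claim~\ref{item:verification:cis:la:3}, I first observe that $\formulaCis(\varCisTuple,\solStratTuple)$ is, by construction, exactly the conjunction over $\interval\in\intPartB$ of the equation system of Theorem~\ref{theorem:equations:transitions} (and, for the infinite open tail, Theorem~\ref{theorem:equations:termination}) applied to the one-counter Markov chain $\cisChainSymbSol$, together with the non-negativity and sub-stochasticity constraints. Those theorems assert that the true transition probabilities $\cisChainTransSymbSol(\ocConfig)(\ocConfig',\weightVal)$ form the least non-negative solution of this system; since $\solCisTuple$ is \emph{a} non-negative solution, it dominates the least one coordinate-wise.

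For claim~\ref{item:verification:cis:la:2}, the formula $\formulaCisTrans(\varTransTuple,\varCisTuple)$ is, again by construction, the analogue of the systems of Theorems~\ref{theorem:equations:termination} and~\ref{theorem:equations:transitions} for the compression of $\ocChainFin{\cisChainSymbSol}{\infty}$ by $\intPartC$. Crucially, the coefficients appearing in those equations are precisely the transition probabilities of $\cisChainSymbSol$, which get replaced in the formula by the variables $\varCisTuple$. The right-hand sides of these equations are polynomials with non-negative coefficients in both $\varTransTuple$ and $\varCisTuple$, so the induced fixed-point operator is monotone in both arguments. Hence, for the coefficient vector $\solCisTuple$ (which dominates the true coefficients by claim~\ref{item:verification:cis:la:3}), the least non-negative solution dominates the least solution with the true coefficients. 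Since $\solTransTuple$ is a non-negative solution with coefficients $\solCisTuple$, it dominates the least such solution, and thus dominates $\compressCisTransSymbSol$ coordinate-wise. The main subtlety here is justifying that monotonicity of the least fixed point in the coefficients really does hold for the quadratic systems of Theorem~\ref{theorem:equations:transitions}; this follows from Kleene iteration of the operator starting at $0$, which preserves coordinate-wise inequalities between coefficient tuples.

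For claim~\ref{item:verification:cis:la:1}, I follow the Markov-chain construction used in the proof of Lemma~\ref{lemma:verification:oeis:least assignment}. Define a finite Markov chain $\mchain_{\solTransTuple}=(\compressCisStateSpace,\mdpTrans_{\solTransTuple})$ where absorbing configurations are kept absorbing, every defined variable $\varTrans_{\cisConfig,\cisConfigB}$ contributes $\mdpTrans_{\solTransTuple}(\cisConfig)(\cisConfigB)=\solTrans_{\cisConfig,\cisConfigB}$, and the residual probability mass at each state is directed to $\bot$. The formula $\formulaCisObj(\solTransTuple,\varObjTuple)$ is then the standard reachability system for $\mchain_{\solTransTuple}$ with target $\bar{\target}$, so $\solObjTuple$ dominates the least non-negative solution, which equals $\probaMCverb{\mchain_{\solTransTuple}}{\cdot}(\reach{\bar\target})$. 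Finally, a cylinder-by-cylinder comparison, identical to the one in Lemma~\ref{lemma:verification:oeis:least assignment}, combined with claim~\ref{item:verification:cis:la:2} (which gives $\mdpTrans_{\solTransTuple}(\cisConfig)(\cisConfigB)\geq\compressCisTransSymbSol(\cisConfig)(\cisConfigB)$ for $\cisConfigB\neq\bot$), shows that reachability of $\bar\target$ in $\mchain_{\solTransTuple}$ dominates reachability in $\compressCisSymbSol$, yielding the desired inequality. I expect the monotonicity justification in claim~\ref{item:verification:cis:la:2} to be the trickiest point, since the quadratic systems are not standard linear reachability systems and the comparison between least fixed points with different coefficient tuples must be spelled out via the Kleene iteration.
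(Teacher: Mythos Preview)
Your argument is correct, but it takes a somewhat different route from the paper for Item~\ref{item:verification:cis:la:2}. The paper proceeds \emph{semantically}: it builds an auxiliary one-counter Markov chain $\ocChain_{\solCisTuple}$ whose transition function is given by $\solCisTuple$ (with leftover mass sent to $\bot$), then invokes Lemma~\ref{lemma:verification:oeis:least assignment} on the pair $\formulaCisTrans(\varTransTuple,\solCisTuple)\land\formulaCisObj(\varTransTuple,\varObjTuple)$ to get that $\solTransTuple$ and $\solObjTuple$ dominate the compressed transitions and reachability probabilities of $\ocChain_{\solCisTuple}$. It then separately shows, by comparing cylinders of $\bot$-free histories, that the compression of $\ocChain_{\solCisTuple}$ dominates the compression of $\cisChainSymbSol$ transition-wise and hence reachability-wise. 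Chaining these gives Items~\ref{item:verification:cis:la:1} and~\ref{item:verification:cis:la:2} simultaneously. Your approach is instead \emph{syntactic}: you observe that the right-hand sides of the quadratic systems are polynomials with non-negative coefficients in both $\varTransTuple$ and $\varCisTuple$, so Kleene iteration from $0$ preserves coordinate-wise inequalities between coefficient tuples, whence the least fixed point is monotone in $\varCisTuple$. This avoids constructing $\ocChain_{\solCisTuple}$ and the cylinder comparison, at the cost of having to justify the fixed-point monotonicity explicitly. Both arguments are sound; the paper's reuses Lemma~\ref{lemma:verification:oeis:least assignment} as a black box, yours is more self-contained.

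One small inaccuracy: $\intPartB$ partitions $\integerInterval{1,\period}$, so all its intervals are bounded and $\formulaCis$ is built solely from Theorem~\ref{theorem:equations:transitions}; there is no ``infinite open tail'' invoking Theorem~\ref{theorem:equations:termination} at that layer (that only appears in $\formulaCisTrans$, via the unbounded interval of $\intPartC$). This does not affect your argument.
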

\begin{proof}
  Item~\ref{item:verification:cis:la:3} follows directly from the construction of $\formulaCis$ based on Theorem~\ref{theorem:equations:transitions}.
  To prove Items~\ref{item:verification:cis:la:1} and~\ref{item:verification:cis:la:2}, we consider the one-counter Markov chain $\ocChain_{\solCisTuple} = (\cisChainStateSpace, \ocTrans_{\solCisTuple})$ where for all $\ocConfig$, $\ocConfig'\in\cisChainStateSpace$ and all $\weightVal\in\{-1, 0, 1\}$, $\ocTrans_{\solCisTuple}(\ocConfig)(\ocConfig', \weightVal) = \solCis_{\ocConfig, \ocConfig', \weightVal}$ whenever the variable $\varCis_{\ocConfig, \ocConfig', \weightVal}$ is defined and any probability that is not assigned in this way is attributed to $\ocTrans_{\solCisTuple}(\ocConfig)(\bot, 0)$.
  We show that in the compression $\mchain_\intPartC(\ocChain_{\solCisTuple}) = (\compressCisStateSpace, \ocTrans_{\intPartC}[\ocChain_{\solCisTuple}])$ of $\ocChain_{\solCisTuple}$ with respect to $\intPartC$, we have the two following properties: \begin{enumerate}[a.]
  \item for all $\cisConfig, \cisConfigB\in\compressCisStateSpace\cap(\cisChainStateSpaceStar\times\INpos)$, $\ocTrans_{\intPartC}[\ocChain_{\solCisTuple}](\cisConfig)(\cisConfigB)\geq\compressCisTrans(\cisConfig)(\cisConfigB)$;\label{subitem:verification:cis:la:a}
  \item for all $\cisConfig\in\compressCisStateSpace\cap(\cisChainStateSpaceStar\times\INpos)$, we have $\probaMCverb{\ocChain_{\intPartC}(\ocChain_{\solCisTuple})}{\cisConfig}(\reach{\bar{\target}})\geq\probaMCverb{\compressCisSymbSol}{\cisConfig}(\reach{\bar{\target}})$.\label{subitem:verification:cis:la:b}
  \end{enumerate}
  These two properties along with Lemma~\ref{lemma:verification:oeis:least assignment} (with respect to the parameterised formula $\formulaCisTrans(\varTransTuple, \solCisTuple)\land\formulaCisObj(\varTransTuple,\varObjTuple)$) yield Items~\ref{item:verification:cis:la:1} and~\ref{item:verification:cis:la:2}.

  We first prove Item~\ref{subitem:verification:cis:la:a}.
  Let $\cisConfig, \cisConfigB\in\compressCisStateSpace\setminus\{\bot\}$.
  Let $\ocConfig$, $\ocConfig\in\cisChainStateSpace$ and $\ocCount, \ocCount'\in\IN$ such that $\cisConfig=(\ocConfig, \ocCount)$ and $\cisConfigB=(\ocConfig', \ocCount')$.
  If $\ocCount'$ is not a successor counter value of $\ocCount$ with respect to $\intPartC$, then we have $\ocTrans_{\intPartC}[\ocChain_{\solCisTuple}](\cisConfig)(\cisConfigB) = \compressCisTransSymbSol(\cisConfig)(\cisConfigB) = 0$.
  Otherwise, $\compressCisTransSymbSol(\cisConfig)(\cisConfigB)$ is the probability of reaching $\cisConfigB$ from $\cisConfig$ in $\ocChainFin{\cisChainSymbSol}{\infty}$ without visiting another configuration with a successor value of $\ocCount$ beforehand.
  Along the relevant plays for this probability, there are no $\bot$ configurations.
  Since $\ocTrans_{\intPartC}[\ocChain_{\solCisTuple}](\cisConfig)(\cisConfigB)$ is similarly defined, the desired inequality follows from Item~\ref{item:verification:cis:la:3}.

  Item~\ref{subitem:verification:cis:la:a} implies Item~\ref{subitem:verification:cis:la:b}, as there are no $\bot$ configuration that can occur on plays ending in $\bar{\target}$ (of which all states are absorbing).
\end{proof}

We obtain an adaptation of Theorem~\ref{theorem:verification:oeis:formula} for CISs via Lemma~\ref{lemma:verification:cis:least assignment}.
We use the correspondence between configurations of $\cisChainSymbolic$ and $\compressCisSymbolic$ established in Theorem~\ref{theorem:cis:compression:ocmc} in this result.

\begin{theorem}\label{theorem:verification:cis:formula}
  Let $\solStratTuple$ be a vector such that $\stratB_{\solStratTuple}$ is a well-defined CIS of $\ocmdpFin{\ocmdp}{\infty}$ based on $\intPart$.
  For all $\cisConfig = ((\ocState, \ocCount), \ocCountB)\in\compressCisStateSpace\cap(\cisChainStateSpaceStar\times\INpos)$, we have $\probaGverb{\ocmdpFin{\ocmdp}{\infty}}{(\ocState, (\ocCountB-1)\cdot\period+\ocCount)}{\stratB_{\solStratTuple}}(\reach{\bar{\target}})\geq\thresProba$ if and only if $\IR\models\forall\varTransTuple\,\forall\varObjTuple\forall\varCisTuple ((\formulaCis(\varCisTuple, \solStratTuple)\land\formulaCisTrans(\varTransTuple,\varCisTuple)\land\formulaCisObj(\varTransTuple, \varObjTuple))\implies \varObj_{\cisConfig}\geq\thresProba)$.
\end{theorem}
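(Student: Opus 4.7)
The plan is to mirror the structure of the proof of Theorem~\ref{theorem:verification:oeis:formula}, but with the additional layer coming from the fact that we apply the compression construction twice: first to go from $\ocmdpFin{\ocmdp}{\infty}$ under $\stratB_{\solStratTuple}$ to the one-counter Markov chain $\cisChainSymbSol$ (via Theorem~\ref{theorem:cis:compression:ocmc}), and second to compress the induced Markov chain $\ocChainFin{\cisChainSymbSol}{\infty}$ to the finite Markov chain $\compressCisSymbSol$ with respect to $\intPartC$ (cf.~Remark~\ref{remark:compressing ocmcs}). The crux in both directions is to identify the probability $\probaGverb{\ocmdpFin{\ocmdp}{\infty}}{(\ocState, (\ocCountB-1)\cdot\period+\ocCount)}{\stratB_{\solStratTuple}}(\reach{\bar\target})$ with a reachability probability in $\compressCisSymbSol$ and then relate the latter to the over-approximations captured by satisfying assignments of the formula.

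For the forward implication, I would first assume the probability inequality holds and pick arbitrary $\solCisTuple, \solTransTuple, \solObjTuple$ satisfying $\formulaCis(\solCisTuple,\solStratTuple)\land\formulaCisTrans(\solTransTuple,\solCisTuple)\land\formulaCisObj(\solTransTuple,\solObjTuple)$. Applying Theorems~\ref{theorem:ocmdp:probability matching} (or~\ref{theorem:ocmdp:probability:reach}, depending on $\objective$) to the first compression gives
\[
  \probaGverb{\ocmdpFin{\ocmdp}{\infty}}{(\ocState,(\ocCountB-1)\cdot\period+\ocCount)}{\stratB_{\solStratTuple}}(\objective) = \probaMCverb{\compressChainSymSol}{(\ocState,(\ocCountB-1)\cdot\period+\ocCount)}(\reach{\target_\objective}).
\]
Then Theorem~\ref{theorem:cis:compression:ocmc} rewrites the right-hand side as $\probaMCverb{\ocChainFin{\cisChainSymbSol}{\infty}}{\cisConfig}(\selectiveTermination{(\ocStateB,\period)})$ (summed over the relevant target states), and a second application of Theorem~\ref{theorem:ocmdp:probability matching} to the one-counter Markov chain $\cisChainSymbSol$ compressed along $\intPartC$ identifies this with $\probaMCverb{\compressCisSymbSol}{\cisConfig}(\reach{\bar\target})$. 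Lemma~\ref{lemma:verification:cis:least assignment}, Item~\ref{item:verification:cis:la:1}, then yields $\solObj_{\cisConfig}\geq\probaMCverb{\compressCisSymbSol}{\cisConfig}(\reach{\bar\target})\geq\thresProba$.

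For the reverse implication, I would exhibit a specific satisfying assignment whose $\varObj_{\cisConfig}$-component equals the actual reachability probability. Take $\solCisTuple$ to be the least non-negative solution of the subsystem associated with $\formulaCis(\cdot,\solStratTuple)$: by Theorem~\ref{theorem:equations:transitions} this assigns each $\solCis_{\ocConfig,\ocConfig',\weightVal}$ precisely the true transition probability $\cisChainTransSymbSol(\ocConfig)(\ocConfig',\weightVal)$. Substituting this into $\formulaCisTrans$ and again taking the least non-negative solution $\solTransTuple$ (using Theorems~\ref{theorem:equations:termination} and~\ref{theorem:equations:transitions} applied to the compression of $\cisChainSymbSol$ along $\intPartC$, as allowed by Remark~\ref{remark:compressing ocmcs}) recovers the exact transition probabilities $\compressCisTransSymbSol(\cisConfig)(\cisConfigB)$. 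Finally, taking $\solObjTuple$ to be the least non-negative solution of $\formulaCisObj(\solTransTuple,\cdot)$ recovers the true reachability probabilities $\probaMCverb{\compressCisSymbSol}{\cisConfig}(\reach{\bar\target})$ by the standard linear characterisation recalled in Section~\ref{section:preliminaries}. The universal hypothesis then forces $\solObj_{\cisConfig}\geq\thresProba$, and chaining the probability-preserving equalities from the forward direction backwards concludes.

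The main obstacle is bookkeeping rather than conceptual: I need to be careful that the two successive applications of the compression construction are each justified (the second one requires Remark~\ref{remark:compressing ocmcs} to apply the machinery of Section~\ref{section:abstraction} to the one-counter Markov chain $\cisChainSymbSol$, together with the hypothesis $\log_2(|\interval|+1)\in\IN$ for $\interval\in\intPartC$ to ensure $\compressCisSymbSol$ is well defined), and that the correspondence of configurations from Theorem~\ref{theorem:cis:compression:ocmc} is applied with the right indexing. Once these points are settled, both implications follow by essentially the same scheme as in Theorem~\ref{theorem:verification:oeis:formula}.
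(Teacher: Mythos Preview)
Your proposal is correct and follows essentially the same approach as the paper's proof: both directions proceed by first identifying $\probaGverb{\ocmdpFin{\ocmdp}{\infty}}{(\ocState,(\ocCountB-1)\period+\ocCount)}{\stratB_{\solStratTuple}}(\objective)$ with $\probaMCverb{\compressCisSymbSol}{\cisConfig}(\reach{\bar\target})$ via Theorems~\ref{theorem:ocmdp:probability matching} and~\ref{theorem:cis:compression:ocmc} (applied in two layers, as you describe), then using Lemma~\ref{lemma:verification:cis:least assignment} for the forward implication and choosing the least satisfying assignments of $\formulaCis$, $\formulaCisTrans$, $\formulaCisObj$ in turn (via Theorems~\ref{theorem:equations:termination} and~\ref{theorem:equations:transitions}) for the converse. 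The paper's proof is simply a terser rendering of the same argument.
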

\begin{proof}
  Let $\cisConfig = ((\ocState, \ocCount), \ocCountB)\in\compressCisStateSpace\cap(\cisChainStateSpaceStar\times\INpos)$.
  By Theorems~\ref{theorem:cis:compression:ocmc} and~\ref{theorem:ocmdp:probability matching}, we have $\probaGverb{\ocmdpFin{\ocmdp}{\infty}}{(\ocState, (\ocCountB-1)\period+\ocCount)}{\stratB_{\solStratTuple}}(\reach{\bar{\target}})=\probaMCverb{\ocChainFin{\cisChainSymbSol}{\infty}}{\cisConfig}(\reach{\bar{\target}})=\probaMCverb{\compressCisSymbSol}{\cisConfig}(\reach{\bar{\target}})$

  If $\probaMCverb{\ocChainFin{\cisChain}{\infty}}{\cisConfig}(\reach{\bar{\target}})\geq\thresProba$, then we have $\IR\models\forall\varTransTuple\,\forall\varObjTuple\forall\varCisTuple ((\formulaCis(\varCisTuple, \solStratTuple)\land\formulaCisTrans(\varTransTuple,\varCisTuple)\land\formulaCisObj(\varTransTuple, \varObjTuple))\implies \varObj_{\cisConfig}\geq\thresProba)$ by Lemma~\ref{lemma:verification:cis:least assignment}.

  Conversely, assume that $\IR\models\forall\varTransTuple\,\forall\varObjTuple\forall\varCisTuple((\formulaCis(\varCisTuple, \solStratTuple)\land\formulaCisTrans(\varTransTuple,\varCisTuple)\land\formulaCisObj(\varTransTuple, \varObjTuple))\implies \varObj_{\cisConfig}\geq\thresProba)$.
  Let $\solCisTuple$ be the least vector satisfying $\formulaCis(\varCisTuple, \solStratTuple)$, $\solTransTuple$ be the least vector satisfying $\formulaCisTrans(\varTransTuple, \solCisTuple)$ and $\solObjTuple$ be the least vector satisfying $\formulaCisObj(\solTransTuple, \varObjTuple)$.
  By construction of these three formulae (and Theorems~\ref{theorem:equations:termination} and~\ref{theorem:equations:transitions}), these vectors are well-defined and we obtain $\solObj_{\cisConfig} = \probaMCverb{\ocChainFin{\cisChain}{\infty}}{\cisConfig}(\reach{\bar{\target}})\geq\thresProba$.
\end{proof}

We now study the size of the formula of Theorem~\ref{theorem:verification:cis:formula} for our complexity analysis.
We obtain a conclusion similar to that provided by Lemma~\ref{lemma:verification:oeis:formula:size}.
\begin{lemma}\label{lemma:verification:cis:formula:size}
  The formula $(\formulaCis(\varCisTuple, \varStratTuple)\land\formulaCisTrans(\varTransTuple,\varCisTuple)\land\formulaCisObj(\varTransTuple, \varObjTuple))\implies \varObj_{\cisConfig}\geq\thresProba$ has a number of variables and atomic sub-formulae polynomial in $|\ocStateSpace|$, $|\ocActionSpace|$, $|\intPartB|$, $|\intPartC|$, the binary encoding of $\period$ and the binary encoding of the largest integer bound in $\intPartC$.
    The length of its atomic sub-formulae is polynomial in the same parameters.
\end{lemma}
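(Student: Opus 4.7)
The plan is to follow the template of Lemma~\ref{lemma:verification:oeis:formula:size} but iterated across the two levels of compression involved here (the outer one for $\cisChainSymbolic$ via $\intPartB$, and the inner one for $\compressCisSymbolic$ via $\intPartC$). I will count variables and atomic sub-formulae contributed by each of the four blocks $\varStratTuple$, $\varCisTuple$, $\varTransTuple$, $\varObjTuple$ and each conjunct $\formulaCis$, $\formulaCisTrans$, $\formulaCisObj$, then assemble.

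First I would fix shorthand $\powerMax_\intPartB = \max_{\interval\in\intPartB}\log_2(|\interval|+1)$ and $\powerMax_\intPartC = \max_{\interval\in\intPartC,\,|\interval|<\infty}\log_2(|\interval|+1)$ (or $1$ if $\intPartC$ has no bounded interval), and note $\powerMax_\intPartB\leq\log_2(\period+1)$ and $\powerMax_\intPartC\leq\log_2(\intUB+1)$ for $\intUB$ the largest bound in $\intPartC$. Then I would bound the variable sets in turn: $|\varStratTuple|\leq |\intPartB|\cdot|\ocStateSpace|\cdot|\ocActionSpace|$ by definition; $|\cisChainStateSpace|\leq 2|\ocStateSpace|\cdot|\intPartB|\cdot\powerMax_\intPartB+|\ocStateSpace|+1$ from the construction of compressed Markov chains (each bounded interval contributes $2\powerMax_\interval$ counter values, the $\bot$ state and absorbing $\ocStateSpace\times\{0\}$ configurations are additive); Lemmas~\ref{lemma:equations:termination:size} and~\ref{lemma:equations:bounded:size} applied per $\interval\in\intPartB$ make $|\varCisTuple|$ polynomial in $|\ocStateSpace|$, $|\intPartB|$ and $\powerMax_\intPartB$. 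For the second compression, the same construction yields $|\compressCisStateSpace|\leq 2|\cisChainStateSpace|\cdot|\intPartC|\cdot\powerMax_\intPartC+O(|\cisChainStateSpace|)$, and a second invocation of Lemmas~\ref{lemma:equations:termination:size} and~\ref{lemma:equations:bounded:size} makes $|\varTransTuple|$ and $|\varObjTuple|$ polynomial in $|\cisChainStateSpace|$, $|\intPartC|$ and $\powerMax_\intPartC$, hence in the target parameters after substituting the bound on $|\cisChainStateSpace|$.

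Next I would bound the atomic sub-formulae. The two Lemmas above directly give that $\formulaCis$ and $\formulaCisTrans$ consist of polynomially many equations of polynomial length in the relevant parameters; the sub-probability conjuncts in~(\ref{equation:verification:formula:transitions:cis}) and its $\intPartC$-analog add at most $|\cisChainStateSpace|+|\compressCisStateSpace|$ further short inequalities. The formula $\formulaCisObj$ mirrors~(\ref{equation:verification:formula:objective}) and contributes at most $|\compressCisStateSpace|$ equations, each being a sum of at most $|\compressCisStateSpace|$ quadratic terms over $\varTransTuple$ and $\varObjTuple$, so again polynomial in the parameters. The trailing comparison $\varObj_{\cisConfig}\geq\thresProba$ and the outer implication add only constant overhead.

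Putting the pieces together yields the stated polynomial bound on both the count and the length of atomic sub-formulae. The only mildly delicate point, and the one I expect to require care, is the careful bookkeeping of how parameters from the inner level ($\intPartC$, $\powerMax_\intPartC$) multiply with those from the outer level ($|\cisChainStateSpace|$, itself already polynomial in $|\intPartB|$, $|\ocStateSpace|$ and $\log_2\period$); once expressed uniformly in the six stated input quantities the polynomiality is immediate, but the intermediate expressions must be expanded in the right order to avoid spurious dependence on the numerical value of $\period$ or $\intUB$ rather than their binary encoding.
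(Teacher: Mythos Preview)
Your proposal is correct and follows essentially the same approach as the paper. The paper's proof is slightly more concise: rather than re-invoking Lemmas~\ref{lemma:equations:termination:size} and~\ref{lemma:equations:bounded:size} directly for the inner compression, it observes that Lemma~\ref{lemma:verification:oeis:formula:size} applies verbatim to $\formulaCisTrans\land\formulaCisObj$ once one treats $\cisChainSymbolic$ as the underlying (one-counter Markov) chain with state set $\cisChainStateSpaceStar$ and partition $\intPartC$, then substitutes the bound $|\cisChainStateSpaceStar|\leq 2\log_2(\period+1)\cdot|\intPartB|\cdot|\ocStateSpace|$; your unpacked version amounts to the same computation.
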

\begin{proof}
  Lemma~\ref{lemma:verification:oeis:formula:size} implies that the number of variables and atomic formulae of $\formulaCisTrans(\varTransTuple,\varCisTuple)\land\formulaCisObj(\varTransTuple, \varObjTuple)$, well as the length of these atomic formulae, is polynomial in $|\cisChainStateSpaceStar|$, $|\intPartC|$ and the binary encoding of the largest bound in $\intPartC$ (actions are not relevant; remark that we deal with a Markov chain).
  By construction of $\cisChainStateSpaceStar$, we have $|\cisChainStateSpaceStar|\leq2\cdot\log_2(\period+1)\cdot|\intPartB|\cdot|\ocStateSpace|$ (interval bounds of $\intPartB$ are at most $\period$).
  Regarding $\formulaCis(\varCisTuple, \varStratTuple)$, it suffices to adapt the analysis performed in the proof of Lemma~\ref{lemma:verification:oeis:formula:size} for the formula $\formulaTrans$ to obtain the desired bounds.
\end{proof}

We now assume that the input strategy $\strat$ is a CIS.
We close the section by explaining how to construct $\intPartB$ and $\intPartC$ in polynomial time from the representation of $\strat$ and $\ocConfig_\init$ to prove that the verification problem is in $\coetr$.
Let $\intPartB'$ denote the partition of $\integerInterval{1, \period}$ given in the representation of $\strat$.
We let $\intPartB=\mathsf{Refine}(\mathsf{Isolate}(\intPartB, \ocCount_\init\bmod\period))$.
For the partition for the second compression, we let $\intPartC=\mathsf{Refine}(\integerInterval{1, \lfloor\frac{\ocCount_\init}{\period}\rfloor})\cup\{\integerInterval{\lfloor{\frac{\ocCount_\init}{\period}\rfloor+1, +\infty}}\}$ of $\INpos$.
We observe that the counterpart of $\ocConfig_\init$ in $\compressCisStateSpace$ (in the sense of Theorem~\ref{theorem:cis:compression:ocmc}) is guaranteed to exist: if $\ocCount_\init\bmod\period=0$, then we have $((\ocState_\init, \period), \frac{\ocCount_\init}{\period})\in\compressCisStateSpace$, and, otherwise, we have $((\ocState_\init, \ocCount_\init\bmod\period), \lfloor\frac{\ocCount_\init}{\period}\rfloor+1)\in\compressCisStateSpace$.
We let $\vectStratTuple$ denote the substitution of $\varStratTuple$ such that $\varStrat^\interval_{\ocState, \ocAction}$ is set to $\strat(\ocState, \min\interval)(\ocAction)$ for all $\ocState\in\ocStateSpace$, $\ocAction\in\ocActionSpace(\ocState)$ and $\interval\in\intPartB$.
With the partitions $\intPartB$ and $\intPartC$ given above and the formula of Theorem~\ref{theorem:verification:cis:formula} with respect to the counterpart of $\ocConfig_\init$ in $\compressCisStateSpace$ and the parameter $\vectStratTuple$, we can decide our instance of the verification problem.
We thus obtain the following complexity result.
\begin{theorem}\label{theorem:verification:cis:complexity}
  The CIS verification problem for selective termination and reachability objectives is in $\coetr$.
\end{theorem}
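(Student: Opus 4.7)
The plan is to assemble the pieces already developed earlier in the section: given the input CIS $\strat$ with period $\period$ and its partition $\intPartB'$ of $\integerInterval{1,\period}$, together with the initial configuration $\ocConfig_\init=(\ocState_\init,\ocCount_\init)$, first fix the two partitions that drive the double-compression. Specifically, set $\intPartB = \mathsf{Refine}(\mathsf{Isolate}(\intPartB', \ocCount_\init\bmod\period))$ and set $\intPartC = \mathsf{Refine}(\integerInterval{1, \lfloor\ocCount_\init/\period\rfloor})\cup\{\integerInterval{\lfloor\ocCount_\init/\period\rfloor+1, +\infty}\}$. By the properties of $\mathsf{Refine}$ and $\mathsf{Isolate}$ (and Lemma~\ref{lemma:ocmpd:interval size}), both $\intPartB$ and $\intPartC$ satisfy the size condition of Assumption~\ref{assumption:interval size}, so compressed Markov chains are well-defined with respect to them; moreover $\strat$ is still based on the periodic partition $\intPart$ generated by $\intPartB$.

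Next, identify the configuration of $\compressCisStateSpace$ corresponding to $\ocConfig_\init$ via Theorem~\ref{theorem:cis:compression:ocmc}: set $\cisConfig_\init = ((\ocState_\init, \period), \ocCount_\init/\period)$ when $\period\mid\ocCount_\init$, and $\cisConfig_\init=((\ocState_\init, \ocCount_\init\bmod\period), \lfloor\ocCount_\init/\period\rfloor+1)$ otherwise. The $\mathsf{Isolate}$ step on $\intPartB'$ ensures $\ocCount_\init\bmod\period$ is an interval bound in $\intPartB$, and the explicit splitting of $\INpos$ in $\intPartC$ ensures $\lfloor\ocCount_\init/\period\rfloor+1$ (or $\ocCount_\init/\period$ in the divisible case) is an interval bound of $\intPartC$; hence $\cisConfig_\init\in\compressCisStateSpace\cap(\cisChainStateSpaceStar\times\INpos)$. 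Let $\vectStratTuple$ be the valuation of $\varStratTuple$ that plugs in the probabilities prescribed by $\strat$, i.e.\ $\varStrat^\interval_{\ocState,\ocAction}=\strat(\ocState,\min\interval)(\ocAction)$ for every $\interval\in\intPartB$. Then by Theorem~\ref{theorem:verification:cis:formula},
\[
\probaGverb{\ocmdpFin{\ocmdp}{\infty}}{\ocConfig_\init}{\strat}(\objective)\geq\thresProba
\quad\Longleftrightarrow\quad
\IR\models \forall\varTransTuple\,\forall\varObjTuple\,\forall\varCisTuple\;\bigl(\Phi\implies \varObj_{\cisConfig_\init}\geq\thresProba\bigr),
\]
where $\Phi=\formulaCis(\varCisTuple,\vectStratTuple)\land\formulaCisTrans(\varTransTuple,\varCisTuple)\land\formulaCisObj(\varTransTuple,\varObjTuple)$. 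For the state-reachability objective $\reach{\target}$, we pre-apply the standard OC-MDP modification of Theorem~\ref{theorem:ocmdp:probability:reach} (which is polynomial-time and preserves all other structural parameters) so that both objectives reduce uniformly to selective termination at $\bar{\target}=(\target\times\{\period\})\times\{0\}$.

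It remains to verify that the resulting sentence can be produced in polynomial time, which then places the problem in $\coetr$. All bounds in $\intPartB$ come from $\intPartB'$ or from $\ocCount_\init\bmod\period$ (polynomially encoded) and are preserved in size by $\mathsf{Refine}$ up to a logarithmic blow-up in the number of intervals by Lemma~\ref{lemma:ocmpd:interval size}; likewise, $\intPartC$ has $\mathcal{O}(\log(\ocCount_\init/\period))$ intervals with bounds polynomial in the input size. Thus $|\intPartB|$, $|\intPartC|$, $\log\period$, and the largest bound in $\intPartC$ are all polynomially bounded by the input size. Lemma~\ref{lemma:verification:cis:formula:size} then gives that the matrix of the universal sentence has polynomially many variables and polynomially many atomic subformulae, each of polynomial length. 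Since the structure of $\Phi$ is obtained by mechanically instantiating Equations~\eqref{equation:verification:formula:transitions:cis}, the CIS versions of $\formulaCisTrans$ and $\formulaCisObj$ (from Theorems~\ref{theorem:equations:termination} and~\ref{theorem:equations:transitions}), and the substitution $\varStratTuple\mapsto\vectStratTuple$, the construction is polynomial-time. The main subtlety in the argument is making sure that the two compression layers align, i.e.\ that $\cisConfig_\init$ lies in $\compressCisStateSpace$; this is exactly why $\mathsf{Isolate}$ is applied on $\intPartB'$ at $\ocCount_\init\bmod\period$ and why $\lfloor\ocCount_\init/\period\rfloor$ is singled out in $\intPartC$. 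Combining all these observations yields the $\coetr$ upper bound.
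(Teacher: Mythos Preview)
Your proposal is correct and follows essentially the same approach as the paper: you fix the same partitions $\intPartB$ and $\intPartC$, identify the counterpart $\cisConfig_\init$ of $\ocConfig_\init$ via Theorem~\ref{theorem:cis:compression:ocmc}, instantiate Theorem~\ref{theorem:verification:cis:formula} with the valuation $\vectStratTuple$, and invoke Lemmas~\ref{lemma:ocmpd:interval size} and~\ref{lemma:verification:cis:formula:size} to bound the formula size. Your write-up is in fact more explicit than the paper's brief proof, which simply refers back to the analogous argument for OEISs (Theorem~\ref{theorem:verification:oeis:complexity}).
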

\begin{proof}
  We observe, in the same way as in the proof of Theorem~\ref{theorem:verification:oeis:complexity}, that Lemmas~\ref{lemma:ocmpd:interval size} and~\ref{lemma:verification:cis:formula:size} imply that the formula of Theorem~\ref{theorem:verification:cis:formula} can be constructed in polynomial time for the partitions described above.
\end{proof}

\section{Fixed-interval and parameterised realisability}\label{section:realisability}
We provide complexity upper bounds for the fixed-interval and parameterised realisability problems for interval strategies.
Our algorithms are built on the verification techniques presented in Section~\ref{section:verification}.
We first provide a technical result for analysing the complexity of the parameterised realisability problem in Section~\ref{section:realisability:technical}.
In Section~\ref{section:realisability:bounded}, we study our realisability problems in bounded OC-MDPs.
We consider OEISs in general, i.e., we provide an approach applicable for both the bounded and unbounded setting, in Section~\ref{section:realisability:oeis}.
We close the section with CISs in Section~\ref{section:realisability:cis}.

We use similar approaches for all settings.
For fixed-interval realisability for pure strategies, we non-deterministically construct strategies and verify them.
This approach is not viable for fixed-interval realisability for randomised strategies; instead, we quantify existentially over strategy variables in the logical formulae used for verification.
For the parameterised realisability problem, we build on our algorithms for the  fixed-interval case.
The main idea is use non-determinism to find an interval partition compatible with the input parameters and then use fixed-interval algorithms with this partition to answer our problem.
All complexity bounds provided in this section are in $\pspace$.

We consider the following inputs for the whole section: an OC-MDP $\ocmdp=\ocTuple$, a counter upper bound $\counterUB\in\INposBar$, an initial configuration $\ocConfig_\init=(\ocState_\init, \ocCount_\init)\in\ocStateSpace\times\integerInterval{\counterUB}$, a set of targets $\target\subseteq\ocStateSpace$, an objective $\objective\in\{\reach{\target}, \selectiveTermination{\target}\}$  and a threshold $\thresProba\in\ccInt{0}{1}\cap\IQ$.
We specify the other inputs below.
As in Section~\ref{section:verification}, we assume that we work with the modified OC-MDP of Theorem~\ref{theorem:ocmdp:probability:reach} if $\objective=\reach{\target}$.

\subsection{Parameters and compatible interval partitions}\label{section:realisability:technical}
We present algorithms for parameterised interval strategy realisability problems that rely on non-determinism to test all interval partitions compatible with the input parameters to check that there is a well-performing strategy based on one of these interval partitions.
We state a result implying that all of these interval partition admit a representation that is polynomial in the encoding of the input parameters.

\begin{lemma}\label{lemma:partition representation}
  Let $\intNum\in\INpos$, $\intSize\in\INpos$ and $\ocCount\in\INbar$.
  Let $\intPart$ be an interval partition of $\integerInterval{1, \ocCount}$ such that $|\intPart|\leq\intNum$ and for all bounded $\interval\in\intPart$, $|\interval|\leq\intSize$.
  Then $\intPart$ can be explicitly represented in space $\bigo(\intNum\cdot(\log_2(\intNum) + \log_2(\intSize)))$. \end{lemma}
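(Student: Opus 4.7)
The plan is to show that any interval bound appearing in $\intPart$ is at most polynomial in $\intNum\cdot\intSize$, and then count the bits needed to describe all the intervals.

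First I would argue that every integer bound of an interval in $\intPart$ lies in $\integerInterval{1,\intNum\cdot\intSize+1}$. The bounded intervals of $\intPart$ are disjoint subsets of $\INpos$, each of cardinality at most $\intSize$, and there are at most $\intNum$ of them, so together they cover a subset of $\integerInterval{1,\intNum\cdot\intSize}$ when shifted appropriately. More precisely, if $\ocCount=\infty$, finiteness of $|\intPart|\leq\intNum$ forces at least one unbounded interval, and since the intervals of $\intPart$ tile $\INpos$ contiguously (they form a partition of $\INpos$ into intervals), the unique unbounded interval is of the form $\ocInt{\intBound}{\infty}$ for some $\intBound\leq\intNum\cdot\intSize$; every other finite interval has both bounds in $\integerInterval{1,\intNum\cdot\intSize}$. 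If $\ocCount<\infty$, then the whole partition covers $\integerInterval{1,\ocCount}$ with $\ocCount\leq\intNum\cdot\intSize$, so again every integer bound lies in $\integerInterval{1,\intNum\cdot\intSize+1}$.

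Next, I would use this to count encoding bits. Each integer interval bound can be written in binary using $\lceil\log_2(\intNum\cdot\intSize+2)\rceil = \bigo(\log_2(\intNum)+\log_2(\intSize))$ bits, and the single unbounded upper bound (if present) requires $\bigo(1)$ bits to encode the symbol $\infty$. Representing an interval requires its two bounds (and one or two bits to indicate whether the bounds are included), for a total of $\bigo(\log_2(\intNum)+\log_2(\intSize))$ bits per interval. Since $|\intPart|\leq\intNum$, listing all intervals requires $\bigo(\intNum\cdot(\log_2(\intNum)+\log_2(\intSize)))$ bits, as claimed.

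The argument is essentially a counting one and I do not foresee any real obstacle; the only point requiring mild care is handling the infinite case, where one must observe that $|\intPart|$ being finite and the intervals tiling $\INpos$ forces exactly one open-ended interval whose finite bound is controlled by $\intNum\cdot\intSize$.
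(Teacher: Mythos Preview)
Your proposal is correct and follows essentially the same approach as the paper: both argue by case analysis on whether $\ocCount$ is finite, show that every finite interval bound is at most $\intNum\cdot\intSize$ (the paper gets the marginally tighter $(\intNum-1)\cdot\intSize+1$ in the infinite case, but this makes no asymptotic difference), and then count $\bigo(\log_2(\intNum)+\log_2(\intSize))$ bits per bound over at most $\intNum$ intervals.
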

\begin{proof}
  We can represent each interval $\integerInterval{\intLB, \intUB}\in\intPart$ by the pair $(\intLB, \intUB)$ (where $\intUB$ can be $+\infty$).
  We prove that each finite interval bound in these pairs is at most $\intSize\cdot\intNum$.

  First, assume that $\ocCount\in\IN$.
  In this case, the interval $\integerInterval{1, \ocCount}$ is the union of at most $\intNum$ sets of at most $\intSize$ elements (by the assumption on $\intPart$).
  It follows that $\ocCount\leq\intNum\cdot\intSize$, and thus, that all finite interval bounds of $\intPart$ are no more than $\intNum\cdot\intSize$.

  Second, assume that $\ocCount=\infty$.
  Let $\interval_\infty$ denote the unbounded interval in $\intPart$.
  It holds (by the same reasoning as above) that the bounds of all intervals in $\intPart\setminus\{\interval_\infty\}$ are no more than $(\intNum-1)\cdot\intSize$.
  This implies that $\min\interval_\infty - 1\leq (\intNum-1)\cdot\intSize$.
  We obtain that in this second case, all finite interval bounds in $\intPart$ are no more than $(\intNum-1)\cdot\intSize + 1\leq\intNum\cdot\intSize$.
  
  We conclude that, in both cases, we can represent $\intPart$ using no more than $\intNum$ pairs of numbers whose binary encoding is in $\bigo(\log_2(\intNum)+\log_2(\intSize))$.
\end{proof}

Lemma~\ref{lemma:partition representation} implies that, under the assumption that the parameter $\intNum$ for the number of intervals is given in unary, the interval partitions that are compatible with the parameters admit a polynomial-size representation with respect to the inputs to the parameterised interval strategy realisability problems.

\subsection{Realisability in bounded one-counter Markov decision processes}\label{section:realisability:bounded}
Assume that $\counterUB\in\INpos$.
We provide algorithms for the fixed-interval and parameterised OEIS realisability problems in bounded OC-MDPs.
We first discuss the variants of these problems for pure strategies, and then discuss the variants for randomised strategies below.

First, we consider the fixed-interval pure OEIS realisability problem.
Fix an input interval partition $\intPart'$ of $\integerInterval{1, \counterUB-1}$.
We obtain a straightforward non-deterministic algorithm: we guess a pure interval strategy $\strat$ based on the partition $\intPart'$ and then use our verification algorithm for OEISs in bounded OC-MDPs as a $\ptime^{\posSLP}$ sub-procedure (Theorem~\ref{theorem:verification:bounded}) to check whether $\probaGverb{\ocmdpFin{\ocmdp}{\counterUB}}{\ocConfig_\init}{\strat}(\objective)\geq\thresProba$.
This realisability algorithm runs in non-deterministic polynomial time with a $\posSLP$ oracle: we non-deterministically choose $\intNum\cdot|\ocStateSpace|$ actions, i.e., one per state-interval pair, and then run a deterministic polynomial-time algorithm with a $\posSLP$ oracle.
This yields an $\np^{\posSLP}$ upper bound for this problem.

For the parameterised pure OEIS realisability problem in bounded OC-MDPs, we proceed similarly.
Let $\intNum\in\INpos$ and $\intSize\in\INpos$ respectively denote the input parameters for the number and size of intervals.
We non-deterministically guess an interval partition $\intPart'$ of $\integerInterval{1, \counterUB-1}$ that is compatible with $\intNum$ and $\intSize$ (these partitions can be represented in polynomial space by Lemma~\ref{lemma:partition representation}), guess a pure strategy based on $\intPart'$ and verify it.
In this case, by adapting the analysis made above, we also obtain an $\np^{\posSLP}$ upper complexity bound.
We summarise the above upper bounds in the following theorem.

\begin{theorem}\label{theorem:realisability:bounded:pure}
  The fixed-interval and parameterised pure OEIS realisability problems for selective termination and state-reachability objectives in bounded OC-MDPs are in $\np^{\posSLP}$.
\end{theorem}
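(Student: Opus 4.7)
The plan is to follow exactly the blueprint sketched in the paragraphs preceding the theorem: reduce both realisability problems to the OEIS verification problem of Theorem~\ref{theorem:verification:bounded} via a single non-deterministic guess, and then carefully check that the guessed objects admit representations polynomial in the input size so that the overall procedure runs in $\np^{\posSLP}$.

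For the \emph{fixed-interval} pure OEIS realisability problem, let the input partition be $\intPart'$. A pure OEIS based on $\intPart'$ is entirely determined by a choice, for every $(\ocState, \interval) \in \ocStateSpace \times \intPart'$, of a single action in $\ocActionSpace(\ocState)$. I would non-deterministically guess such a function $\strat\colon \ocStateSpace \times \intPart' \to \ocActionSpace$; this guess has size $\bigo(|\ocStateSpace| \cdot |\intPart'| \cdot \log_2 |\ocActionSpace|)$ and is thus polynomial in the input. The guessed object is a syntactically well-formed pure OEIS based on $\intPart'$, which I then verify by invoking the $\ptime^{\posSLP}$ algorithm of Theorem~\ref{theorem:verification:bounded} on the instance $(\ocmdp, \counterUB, \strat, \ocConfig_\init, \target, \objective, \thresProba)$. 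The answer to the realisability question is ``yes'' iff some guess is accepted by the verifier, yielding an $\np^{\posSLP}$ procedure.

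For the \emph{parameterised} pure OEIS realisability problem with parameters $\intNum$ (in unary) and $\intSize$ (in binary), I would extend the above with a preliminary non-deterministic guess of an interval partition $\intPart'$ of $\integerInterval{1, \counterUB - 1}$ that is compatible with $\intNum$ and $\intSize$, i.e.\ $|\intPart'| \leq \intNum$ and $|\interval| \leq \intSize$ for every bounded $\interval \in \intPart'$. By Lemma~\ref{lemma:partition representation}, any such partition admits an explicit representation of size $\bigo(\intNum \cdot (\log_2 \intNum + \log_2 \intSize))$, which is polynomial in the input because $\intNum$ is given in unary. After guessing $\intPart'$ (rejecting immediately if no compatible partition exists, see Remark~\ref{remark:oeis realisability:trivial parameters}), I would guess a pure OEIS $\strat$ based on $\intPart'$ exactly as in the fixed-interval case and invoke Theorem~\ref{theorem:verification:bounded}. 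Each of these two guesses is polynomial, and the deterministic phase is again $\ptime^{\posSLP}$, giving an $\np^{\posSLP}$ upper bound.

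There is no genuine obstacle here: correctness of both algorithms is immediate from the fact that any pure OEIS satisfying the realisability requirement is of the form we enumerate, and conversely any accepted guess witnesses realisability. The only point that requires a small amount of care is the complexity accounting in the parameterised case, where one must explicitly use that $\intNum$ is encoded in unary to keep the guessed partition of polynomial size; this is exactly what Lemma~\ref{lemma:partition representation} is designed to provide.
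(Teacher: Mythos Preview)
Your proposal is correct and follows essentially the same approach as the paper: guess a pure strategy (and, in the parameterised case, first guess a compatible partition using Lemma~\ref{lemma:partition representation}), then verify via the $\ptime^{\posSLP}$ procedure of Theorem~\ref{theorem:verification:bounded}. The paper's proof is in fact the two paragraphs immediately preceding the theorem statement, and your write-up expands on exactly those paragraphs with the same complexity accounting.
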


We now consider the fixed-interval and parameterised randomised OEIS realisability problem and describe an $\np^{\etr}$ algorithm.
We start with the fixed-interval realisability problem.
Let $\intPart' = (\interval_\intIndex)_{\intIndex\in\integerInterval{1, \intNum}}$ denote an input interval partition of $\integerInterval{1, \counterUB-1}$.
Prefacing the formula of Theorem~\ref{theorem:verification:oeis:formula} with existential quantifiers for the strategy probabilities yields a polynomial-space procedure (cf.~Section~\ref{section:realisability:oeis}).
We provide an alternative approach for bounded OC-MDPs which yields a more precise bound.

The key is to rely on a unique characterisation of the transition and reachability probabilities in the compressed Markov chain that we consider.
Theorem~\ref{theorem:equations:transitions:unique} provides the means to do this: it provides systems whose only solution contains the transition probabilities of a compressed Markov chain.
These systems also indicate which transitions have positive probability in the compressed Markov chain.
We can thus refine the reachability probability system (described in the formula in Equation~\eqref{equation:verification:formula:objective}) to have a unique solution.

To adequately refine systems using Theorem~\ref{theorem:equations:transitions:unique}, we must know the supports of the distributions assigned by the considered strategy.
For this, we use non-determinism; we guess the action supports for each state-interval pair.
We then construct an existential formula that is dependent on these supports.
This formula holds if and only if there exists a strategy witnessing a positive answer to the fixed-interval randomised OEIS realisability problem that uses these supports.

We let $\target_\objective=\target\times\{0\}$ if $\objective=\selectiveTermination{\target}$ and $\target_\objective=\target\times\{0, \counterUB\}$ if $\objective=\reach{\target}$.
Let $\intPart=\mathsf{Refine}(\mathsf{Isolate}(\intPart', \ocCount_\init))$.
We note that $\ocConfig_\init\in\compressChainStateSpace$.
For all $\intIndex\in\integerInterval{1, \intNum}$, we let $\intPart_\intIndex = \{\interval\in\intPart\mid\interval\subseteq\interval_\intIndex\}$.
We use the same variables as the formula of Theorem~\ref{theorem:verification:oeis:formula} presented in Section~\ref{section:verification:algorithms:oeis}.
More precisely, we have a variable vector $\varStratTuple$ for the probabilities assigned by strategies and a vector $\varStratI$ for all $\interval\in\intPart$ for the strategy probabilities specific to $\interval$.
We let $\stratB_{\varStratTuple}$ denote a formal strategy given by $\stratB_{\varStratTuple}(\ocState, \min\interval)(\ocAction) = \varStrat^\interval_{\ocState, \ocAction}$ for all $\ocState\in\ocStateSpace$, $\ocAction\in\ocActionSpace(\ocState)$ and $\interval\in\intPart$.
We then have $\varTransTuple$ for the transition probabilities of the compressed Markov chain $\compressChainSymbolic$ and $\varObjTuple$ for the probability of reaching $\target_\objective$ from each configuration in $\compressChainStateSpace\setminus\compressChainStateSpaceStar$.

We call functions $\suppBounded\colon\ocStateSpace\times\integerInterval{1, \intNum}\to\subsets{\ocActionSpace}\setminus\{\emptyset\}$ such that for all $\ocState\in\ocStateSpace$ and $\intIndex\in\integerInterval{1, \intNum}$, the inclusion $\suppBounded(\ocState, \intIndex)\subseteq\ocActionSpace(\ocState)$ holds \textit{support-assigning functions}.
An OEIS $\strat$ based on $(\interval_\intIndex)_{\intIndex\in\integerInterval{1, \intNum}}$ is \textit{compatible} with $\suppBounded$ if for all $\ocState\in\ocStateSpace$ and $\intIndex\in\integerInterval{1, \intNum}$,  we have $\supp{\strat(\ocState, \min\interval_\intIndex)} = \suppBounded(\ocState, \intIndex)$.

We now define the required sub-formulae used in our algorithm.
The first formula checks that the substitution of $\varStratTuple$ results in an interpretation of the symbolic strategy $\stratB_{\varStratTuple}$  that is based on $\intPart'$ and is compatible with $\suppBounded$.
For each $\intIndex\in\integerInterval{1, \intNum}$, we fix $\interval_{\intIndex}^\star\in\intPart_\intIndex$.
We define the formula $\formulaStratB(\varStratTuple)$ by 
\begin{equation}\label{equation:formula:strategy:assigned-support}
    \bigwedge_{\intIndex\in\integerInterval{1, \intNum}}
    \left(
    \bigwedge_{\ocState\in\ocStateSpace}\left(
      \bigwedge_{\ocAction\in\suppBounded(\ocState,\intIndex)}
      \varStrat_{\ocState,\ocAction}^{\interval^\star_\intIndex} > 0\land
      \bigwedge_{\ocAction\notin\suppBounded(\ocState,\intIndex)}
      \varStrat_{\ocState,\ocAction}^{\interval^\star_\intIndex} = 0\land
      \sum_{\ocAction\in\ocActionSpace(\ocState)}    \varStrat_{\ocState,\ocAction}^{\interval^\star_\intIndex} = 1
    \right)\land
    \bigwedge_{\interval\in\intPart_\intIndex}
    \varStratI = \varStratIstar
  \right).
\end{equation}

The other sub-formulae are built under the assumption that we consider an interpretation of $\stratB_{\varStratTuple}$ with the supports described by $\suppBounded$.
The second sub-formula is a parallel of the formula of Equation~\eqref{equation:verification:formula:transitions}.
For each $\interval\in\intPart$, we let $\formulaTransBI(\varTransTuple, \varStratI)$ be the conjunction of the equations in the system with a unique solution obtained from Theorem~\ref{theorem:equations:transitions:unique} for the transitions from $\compressChainStateSpace\cap(\ocStateSpace\times\interval)$ in $\compressChainSymbolic$.
We let $\formulaTransB(\varTransTuple, \varStratTuple) = \bigwedge_{\interval\in\intPart}\formulaTransBI(\varTransTuple, \varStratI)$.
From these equations, we can deduce the transition structure of $\compressChainSymbolic$ and construct a linear system with a unique solution describing the probability of reaching $\target_\objective$ in $\compressChainSymbolic$; we let $\formulaObjB(\varTransTuple, \varObjTuple)$ denote the conjunction of the equations of this system.
Using the fact that these last two formulae have unique satisfying assignments for a valuation of $\varStratTuple$ satisfying $\formulaStratB(\varStratTuple)$, we obtain the following theorem.
\begin{theorem}\label{theorem:synthesis:bounded:formula}
  Let $\suppBounded\colon\ocStateSpace\times\integerInterval{1, \intNum}\to\subsets{\ocActionSpace}\setminus\{\emptyset\}$ be a support-assigning function and $\ocConfig\in\compressChainStateSpace\setminus\compressChainStateSpaceStar$.
  There exists a strategy $\strat$ based on $\intPart'$ that is compatible with $\suppBounded$ such that $\probaGverb{\ocmdpFin{\ocmdp}{\counterUB}}{\ocConfig}{\strat}(\objective) \geq\thresProba$ if and only if $\IR\models\exists\varStratTuple\exists\varTransTuple\,\exists\varObjTuple (\formulaStratB(\varStratTuple)\land\formulaTransB(\varTransTuple,\varStratTuple)\land\formulaObjB(\varTransTuple, \varObjTuple)\land \varObj_\ocConfig\geq\thresProba)$.
\end{theorem}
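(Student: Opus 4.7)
The plan is to prove both implications by relying on the correspondence between strategies and satisfying assignments of the logical formulae, exploiting in particular that the three sub-formulae $\formulaStratB$, $\formulaTransB$, $\formulaObjB$ are designed to have \emph{unique} satisfying assignments once $\suppBounded$ is fixed, so that the valuation of $\varObjTuple$ is forced to equal the actual reachability probabilities in $\compressChainSymSol$.

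For the forward direction, suppose there is an OEIS $\strat$ based on $\intPart'$ and compatible with $\suppBounded$ such that $\probaGverb{\ocmdpFin{\ocmdp}{\counterUB}}{\ocConfig}{\strat}(\objective)\geq\thresProba$. I would instantiate $\varStratTuple$ with the vector $\vectStratTuple$ given by $\varStrat^\interval_{\ocState, \ocAction}=\strat(\ocState, \min\interval)(\ocAction)$ for every interval $\interval\in\intPart$, instantiate $\varTransTuple$ with the transition probabilities of $\compressChainStrat{\strat}$, and instantiate $\varObjTuple$ with the reachability probabilities $(\probaMCverb{\compressChainStrat{\strat}}{\ocConfig'}(\reach{\target_\objective}))_{\ocConfig'}$. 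The formula $\formulaStratB(\vectStratTuple)$ holds because (i) $\strat$ has supports exactly $\suppBounded$ at each representative $\interval^\star_\intIndex$, (ii) $\strat$ is based on $\intPart'$, so its probabilities are constant on each $\interval_\intIndex\supseteq\interval$ for $\interval\in\intPart_\intIndex$, matching the conjunct $\varStratI=\varStratIstar$. The formula $\formulaTransB$ holds by Theorem~\ref{theorem:equations:transitions:unique}. The formula $\formulaObjB$ holds because reachability probabilities in a finite Markov chain satisfy the corresponding linear system. Finally, $\varObj_\ocConfig\geq\thresProba$ follows from Theorems~\ref{theorem:ocmdp:probability matching} and~\ref{theorem:ocmdp:probability:reach}, which ensure that $\probaGverb{\ocmdpFin{\ocmdp}{\counterUB}}{\ocConfig}{\strat}(\objective)=\probaMCverb{\compressChainStrat{\strat}}{\ocConfig}(\reach{\target_\objective})$.

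For the backward direction, let $\solStratTuple,\solTransTuple,\solObjTuple$ be a witnessing assignment. The first conjunct of $\formulaStratB(\solStratTuple)$ guarantees that, for each $\intIndex$ and each $\ocState$, the values $(\solStrat^{\interval^\star_\intIndex}_{\ocState,\ocAction})_{\ocAction}$ form a probability distribution with support $\suppBounded(\ocState,\intIndex)$; the second conjunct propagates this to every $\interval\in\intPart_\intIndex$. Hence $\stratB_{\solStratTuple}$ is a well-defined OEIS which is constant on each $\interval_\intIndex\in\intPart'$ (so it is based on $\intPart'$) and compatible with $\suppBounded$. Denote it $\strat$. Because $\strat$ has exactly the supports prescribed by $\suppBounded$, Theorem~\ref{theorem:equations:transitions:unique} tells us that the system $\formulaTransB(\varTransTuple,\solStratTuple)$ has a \emph{unique} solution, which must therefore coincide with the true transition probabilities of $\compressChainStrat{\strat}$; hence $\solTransTuple$ is that vector. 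The transition structure obtained this way is exactly the transition structure of $\compressChainStrat{\strat}$, so by the construction of $\formulaObjB$ via the classical reachability system (with a unique solution by construction from the filtered transition structure), $\solObjTuple$ coincides with the reachability probabilities to $\target_\objective$ in $\compressChainStrat{\strat}$. Applying Theorems~\ref{theorem:ocmdp:probability matching} and~\ref{theorem:ocmdp:probability:reach} once more yields $\probaGverb{\ocmdpFin{\ocmdp}{\counterUB}}{\ocConfig}{\strat}(\objective)=\solObj_\ocConfig\geq\thresProba$.

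The only delicate point, which is really the crux, is the uniqueness argument in the backward direction: without the support information $\suppBounded$, the system from Theorem~\ref{theorem:equations:transitions} may admit multiple non-negative solutions (the least one being the true transition probabilities), and likewise the reachability system of $\formulaObjB$ only has a unique solution after the zero-probability reachability states have been identified, which in turn requires knowing which transitions of $\compressChainStrat{\strat}$ have positive probability. Both of these uniqueness properties are precisely what Theorem~\ref{theorem:equations:transitions:unique} buys us, provided we feed it the supports. This is why the entire statement is parameterised by $\suppBounded$ and why the non-determinism in the ultimate $\np^\etr$ algorithm will be used to guess it. The rest of the argument consists of routine bookkeeping verifying that every conjunct of every sub-formula is realised by the intended vectors.
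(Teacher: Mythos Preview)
Your proposal is correct and follows essentially the same approach as the paper: both directions instantiate the strategy variables from~$\strat$ (resp.\ extract~$\strat$ from~$\solStratTuple$), then invoke the uniqueness guaranteed by Theorem~\ref{theorem:equations:transitions:unique} to force~$\solTransTuple$ and~$\solObjTuple$ to equal the true transition and reachability probabilities of the compressed Markov chain, and conclude via Theorems~\ref{theorem:ocmdp:probability matching} and~\ref{theorem:ocmdp:probability:reach}. Your closing paragraph correctly identifies the crux (uniqueness via the support information~$\suppBounded$) and is in fact more explicit about this point than the paper's own proof.
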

\begin{proof}
  Assume that there exists a strategy $\strat$ based on $\intPart'$ compatible with $\suppBounded$ such that $\probaGverb{\ocmdpFin{\ocmdp}{\counterUB}}{\ocConfig}{\strat}(\objective) \geq\thresProba$.
  Let $\vectStratTuple$ denote the valuation of $\varStratTuple$ given by $\varStrat^\interval_{\ocState,\ocAction}=\strat(\ocState, \min\interval)(\ocAction)$ for all $\ocState\in\ocStateSpace$, $\ocAction\in\ocActionSpace(\ocState)$ and $\interval\in\intPart$.
  It is easy to see that $\formulaStratB(\vectStratTuple)$ holds by compatibility of $\strat$ with $\suppBounded$.
  By construction of $\formulaTransB(\varTransTuple, \varStratTuple)$ (via Theorem~\ref{theorem:equations:transitions:unique}), there is a unique vector $\solTransTuple$ such that $\formulaTransB(\solTransTuple, \vectStratTuple)$ holds which contains the transition probabilities of $\compressChain$.
  In turn, this implies that there is a unique vector $\solObjTuple$ such that $\formulaObjB(\solTransTuple, \solObjTuple)$ holds, and this vector is $(\probaMCverb{\compressChain}{\ocConfig'}(\reach{\target_\objective}))_{\ocConfig'\in\compressChainStateSpace\setminus\compressChainStateSpaceStar}$.
  By Theorems~\ref{theorem:ocmdp:probability matching} and~\ref{theorem:ocmdp:probability:reach}, we obtain that $\solObj_\ocConfig = \probaMCverb{\compressChain}{\ocConfig}(\reach{\target_\objective}) = \probaGverb{\ocmdpFin{\ocmdp}{\counterUB}}{\ocConfig}{\strat}(\objective) \geq\thresProba$.

  Conversely, let $\solStratTuple$, $\solTransTuple$ and $\solObjTuple$ witnessing that the existential formula above holds and define $\strat = \stratB_{\solStratTuple}$.
  The strategy $\strat$ is well-defined and compatible with $\suppBounded$ because $\formulaStratB(\solStratTuple)$ holds.
  Furthermore, by construction of the formulae $\formulaTransB$ and $\formulaObjB$, we deduce that $\solObj_\ocConfig = \probaMCverb{\compressChain}{\ocConfig}(\reach{\target_\objective})\geq\thresProba$.
  We conclude that $\probaGverb{\ocmdpFin{\ocmdp}{\counterUB}}{\ocConfig}{\strat}(\objective)=\probaMCverb{\compressChain}{\ocConfig}(\reach{\target_\objective}) \geq\thresProba$ by Theorems~\ref{theorem:ocmdp:probability matching} and~\ref{theorem:ocmdp:probability:reach}.
\end{proof}

To decide the fixed-interval randomised OEIS realisability problem, it suffices to check that there exists a support-assigning function $\suppBounded$ (using non-determinism) such that the formula of Theorem~\ref{theorem:synthesis:bounded:formula} for holds for $\ocConfig_\init$ (by construction of $\intPart$, $\ocConfig_\init\in\compressChainStateSpace$).
We thus obtain an $\np^\etr$ upper bound for this variant of the fixed-interval realisability problem.

For the parameterised realisability problem, we obtain an $\np^\etr$ upper bound by altering the fixed-interval algorithm slightly.
In this case, we use non-determinism to guess an interval partition $\intPart'$ that is compatible with the input parameters and a support-assigning function.
We then check the validity of the formula of Theorem~\ref{theorem:synthesis:bounded:formula} for the interval partition $\intPart=\mathsf{Refine}(\mathsf{Isolate}(\intPart', \ocCount_\init))$ and the initial configuration $\ocConfig_\init$.
We obtain the following result.

\begin{theorem}\label{theorem:realisability:bounded:randomised}
    The fixed-interval and parameterised randomised OEIS realisability problems for selective termination and state-reachability objectives in bounded OC-MDPs are in $\np^\etr$.
\end{theorem}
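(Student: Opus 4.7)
The approach is essentially outlined in the discussion preceding the theorem: for both variants, I combine a non-deterministic guess of combinatorial data with an oracle query to the existential theory of the reals (ETR), using Theorem~\ref{theorem:synthesis:bounded:formula} as the core tool. The non-determinism handles the choices that are not naturally expressible as existentially quantified real variables (namely, supports of the strategy's action distributions and, for the parameterised variant, the interval partition itself), while the ETR oracle handles the quantitative constraints.

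Concretely, for the fixed-interval case with input partition $\intPart'$, I proceed as follows. First, guess non-deterministically a support-assigning function $\suppBounded\colon\ocStateSpace\times\integerInterval{1,\intNum}\to\subsets{\ocActionSpace}\setminus\{\emptyset\}$; this object has representation size polynomial in the input. Next, build $\intPart=\mathsf{Refine}(\mathsf{Isolate}(\intPart',\ocCount_\init))$ in polynomial time (Lemma~\ref{lemma:ocmpd:interval size}); by construction, $\ocConfig_\init\in\compressChainStateSpace$. Then construct the conjunction $\formulaStratB(\varStratTuple)\land\formulaTransB(\varTransTuple,\varStratTuple)\land\formulaObjB(\varTransTuple,\varObjTuple)\land\varObj_{\ocConfig_\init}\geq\thresProba$, prefixed with existential quantifiers over $\varStratTuple,\varTransTuple,\varObjTuple$. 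Finally, query the ETR oracle on this sentence and accept iff it holds. Correctness is immediate from Theorem~\ref{theorem:synthesis:bounded:formula}: some strategy compatible with some support-assigning function achieves the threshold iff some guess of $\suppBounded$ leads the oracle to accept.

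For the parameterised variant, the only additional step is to also guess non-deterministically an interval partition $\intPart'$ of $\integerInterval{1,\counterUB-1}$ with $|\intPart'|\leq\intNum$ and all bounded intervals of size at most $\intSize$. By Lemma~\ref{lemma:partition representation}, such a partition admits a representation of size polynomial in the input (recall that $\intNum$ is given in unary). Instances in which no such partition exists (cf.~Remark~\ref{remark:oeis realisability:trivial parameters}) are rejected upfront. Having guessed $\intPart'$ and $\suppBounded$, I proceed exactly as in the fixed-interval case.

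The main technical point to verify is that the entire construction fits within $\np^\etr$, i.e., that the ETR sentence produced has polynomial size. The length of $\formulaStratB$ is polynomial in $|\ocStateSpace|$, $|\ocActionSpace|$ and $|\intPart|$ by inspection of Equation~\eqref{equation:formula:strategy:assigned-support}; the lengths of $\formulaTransB$ and $\formulaObjB$ are polynomial in the representation size of $\ocmdp$ and in $\log_2\counterUB$ by the size bounds in Lemmas~\ref{lemma:equations:termination:size} and~\ref{lemma:equations:bounded:size} (applied after the $\mathsf{Refine}$ step, exactly as in Lemma~\ref{lemma:verification:oeis:formula:size}), together with the fact that the refinement from Theorem~\ref{theorem:equations:transitions:unique} for making the systems uniquely solvable requires only a polynomial-time graph analysis whose output depends only on $\suppBounded$ (not on the real-valued probabilities). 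The potential obstacle I anticipate is that Theorem~\ref{theorem:equations:transitions:unique} is stated in terms of the support of a concrete strategy, whereas here the supports come from the guessed $\suppBounded$; however, since the refinement algorithm of that theorem is purely graph-theoretic and uses only the supports (as explicitly noted in its statement), this goes through cleanly, and the resulting ETR sentence is of polynomial size, establishing the $\np^\etr$ bound for both variants.
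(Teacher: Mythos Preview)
Your proposal is correct and follows essentially the same approach as the paper's proof: guess the support-assigning function (and, in the parameterised case, the interval partition) non-deterministically, build the refined partition $\intPart$, construct the existential formula of Theorem~\ref{theorem:synthesis:bounded:formula}, and query an $\etr$ oracle; the polynomial size of the formula is argued via the same ingredients (Lemmas~\ref{lemma:ocmpd:interval size}, \ref{lemma:partition representation}, \ref{lemma:verification:oeis:formula:size}, and Theorem~\ref{theorem:equations:transitions:unique}). One harmless over-citation: since $\counterUB\in\INpos$ here, all intervals of $\intPart$ are bounded, so Lemma~\ref{lemma:equations:termination:size} is not actually needed.
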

\begin{proof}
  We present a unified argument for both the fixed-interval and parameterised realisability problems.
  The only difference between our complexity analysis for these two problems is how the interval partition $\intPart'$ is obtained.
  In the fixed-interval case, the interval partition $\intPart'$ is a part of the input.
  In the parameterised case, $\intPart'$ is of size polynomial in the input parameters by Lemma~\ref{lemma:partition representation} (recall that the parameter bounding the number of intervals is assumed to be given in unary).
  
  We let $\intPart=\mathsf{Refine}(\mathsf{Isolate}(\intPart', \ocCount_\init))$ and $\suppBounded\colon\ocStateSpace\times\integerInterval{1, |\intPart'|}\to\subsets{\ocActionSpace}\setminus\{\emptyset\}$ be a support-assigning function.
  Lemma~\ref{lemma:ocmpd:interval size} guarantees that $\intPart$ has a representation of size polynomial in that of $\intPart'$ and $\ocCount_\init$.
  The support-assigning function $\suppBounded$ can be explicitly represented in space polynomial in $|\ocStateSpace|$, $|\ocActionSpace|$ and $|\intPart'|$.
  Therefore, to prove that the $\np^\etr$ upper bound holds, it remains to prove that the formula $\formulaStratB(\varStratTuple)\land\formulaTransB(\varTransTuple,\varStratTuple)\land\formulaObjB(\varTransTuple, \varObjTuple)\land \varObj_{\ocConfig_\init}\geq\thresProba$ can be constructed in deterministic polynomial time from $\intPart'$, $\intPart$, $\suppBounded$ and the other inputs to the considered realisability problem.
  
  Lemma~\ref{lemma:verification:oeis:formula:size} implies that the number of variables in this formula is polynomial in $|\ocStateSpace|$, $|\ocActionSpace|$ and $|\intPart|$, and that the formulae $\formulaTransB(\varTransTuple,\varStratTuple)$ and $\formulaObjB(\varTransTuple, \varObjTuple)$ have size polynomial in $|\ocStateSpace|$, $|\ocActionSpace|$ and $|\intPart|$.
  Indeed, for these two formulae, we observe that they can be derived from the original formulae $\formulaTrans$ and $\formulaObj$ for verification by taking their conjunction with atomic propositions requiring that some variables in $\varTransTuple$ and $\varObjTuple$ are equal to zero (these variables can be identified in polynomial time through the algorithm of Theorem~\ref{theorem:equations:transitions:unique} and with a reachability analysis of the compressed Markov chain).
  Finally, we observe that, in the formula $\formulaStratB(\varStratTuple)$, there are no more than $|\intPart'|\cdot (2\cdot|\ocStateSpace| + |\intPart|\cdot|\ocActionSpace|\cdot|\ocStateSpace|)$ atomic formulae of length in $\bigo(|\ocActionSpace|)$.
  We have thus shown that the formula $\formulaStratB(\varStratTuple)\land\formulaTransB(\varTransTuple,\varStratTuple)\land\formulaObjB(\varTransTuple, \varObjTuple)\land \varObj_{\ocConfig_\init}\geq\thresProba$ can be constructed in polynomial time.
\end{proof}
\subsection{Realisability for open-ended interval strategies}\label{section:realisability:oeis}

We now consider the fixed-interval and parameterised realisability problems for OEISs in general OC-MDPs.
For pure strategies, we can adapt the approach of Section~\ref{section:realisability:bounded}.
In the fixed-interval case, we guess a pure OEIS based on the input interval partition of the set of counter values and verify it.
In the parameterised case, we guess an interval partition compatible with the input parameters (its representation is of size polynomial in the representation of the parameters by Lemma~\ref{lemma:partition representation}) and a pure OEIS based on it, then verify it.
Theorem~\ref{theorem:verification:oeis:complexity} thus implies that the fixed-interval and parameterised realisability problems for pure OEISs can be solved by a non-deterministic polynomial-time algorithm that uses a $\coetr$ oracle (which is the same as using an $\etr$ oracle).
We obtain the following upper bound.
\begin{theorem}\label{theorem:realisability:oeis:pure}
  The fixed-interval and parameterised pure OEIS realisability problems for selective termination and state-reachability objectives are in $\np^\etr$.
\end{theorem}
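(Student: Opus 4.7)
The plan is to reduce both realisability variants to the verification problem of Theorem~\ref{theorem:verification:oeis:complexity} via a non-deterministic guess of a pure OEIS. Since verification lies in $\coetr$ and a $\coetr$ oracle is equivalent to an $\etr$ oracle (the oracle can be queried on the complement), any non-deterministic polynomial-time procedure that makes polynomially many calls to the verification procedure places the problem in $\np^{\etr}$.

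First, consider the fixed-interval variant with input partition $\intPart'$. A pure OEIS based on $\intPart'$ is specified by a function $\strat : \ocStateSpace \times \intPart' \to \ocActionSpace$ assigning, for each state and each interval, a single enabled action. Such a description has size polynomial in $|\ocStateSpace|$, $|\ocActionSpace|$ and $|\intPart'|$, hence in the input. The algorithm then guesses $\strat$ non-deterministically, and uses a single call to the verification oracle (Theorem~\ref{theorem:verification:oeis:complexity}) to check whether $\probaGverb{\ocmdpFin{\ocmdp}{\counterUB}}{\ocConfig_\init}{\strat}(\objective)\geq\thresProba$. A witness exists iff some guess yields a positive verification answer; this is the standard $\np^{\coetr} = \np^{\etr}$ pattern.

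For the parameterised variant with inputs $\intNum$ (in unary) and $\intSize$ (in binary), the algorithm first non-deterministically guesses an interval partition $\intPart'$ of $\integerInterval{1, \counterUB-1}$ compatible with the parameters, i.e.\ with $|\intPart'|\leq\intNum$ and $|\interval|\leq\intSize$ for every bounded $\interval\in\intPart'$. By Lemma~\ref{lemma:partition representation}, every such partition admits an explicit representation of size $\bigo(\intNum\cdot(\log_2\intNum + \log_2\intSize))$, which is polynomial in the input (this is precisely why the parameter $\intNum$ is assumed to be given in unary). After guessing $\intPart'$, the algorithm proceeds as in the fixed-interval case: guess a pure OEIS based on $\intPart'$ and invoke the $\coetr$ verification oracle on it. Both guesses together use polynomial space, and a single oracle call suffices, so the overall procedure runs in non-deterministic polynomial time with an $\etr$ oracle.

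There is no real obstacle beyond bookkeeping: the non-trivial analytic content is entirely absorbed into the verification algorithm and into Lemma~\ref{lemma:partition representation} that bounds the representation size of candidate partitions. The mildly delicate point is that, in the unbounded setting, instances where no partition compatible with $\intNum$ and $\intSize$ exists are trivially negative (cf.\ Remark~\ref{remark:oeis realisability:trivial parameters}), so the guessing phase never fails silently.
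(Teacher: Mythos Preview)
Your proposal is correct and follows essentially the same approach as the paper: non-deterministically guess a pure OEIS (and, in the parameterised case, first a compatible partition whose polynomial-size representation is guaranteed by Lemma~\ref{lemma:partition representation}), then invoke the $\coetr$ verification procedure of Theorem~\ref{theorem:verification:oeis:complexity} as an oracle, using $\np^{\coetr}=\np^{\etr}$. One minor slip: your closing remark invokes Remark~\ref{remark:oeis realisability:trivial parameters} for the \emph{unbounded} setting, but that remark concerns the \emph{bounded} case (in the unbounded setting a compatible partition always exists, e.g.\ the single infinite interval); this does not affect the argument.
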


We now consider the variant of our realisability problems for randomised OEISs.
First, we discuss the fixed-interval problem and let $\intPart' =(\interval_{\intIndex})_{\intIndex\in\integerInterval{1, \intNum}}$ be an input partition.
Let $\intPart=\mathsf{Refine}(\mathsf{Isolate}(\intPart', \ocCount_\init))$.
We let, for all $\intIndex\in\integerInterval{1, \intNum}$, $\intPart_\intIndex = \{\interval\in\intPart\mid\interval\subseteq\interval_\intIndex\}$.
Next, we consider the sets of variables $\varStratTuple$, $\varTransTuple$ and $\varObjTuple$ and the formulae $\formulaTrans(\varTransTuple, \varStratTuple)$ and $\formulaObj(\varTransTuple, \varObjTuple)$ from Section~\ref{section:verification:algorithms:oeis}.
We introduce a new formula to constrain the variables $\varStratTuple$ similarly to the formula of Equation~\eqref{equation:formula:strategy:assigned-support}: we let
\begin{equation}\label{equation:formula:strategy:free-support}
  \formulaStrat(\varStratTuple) =
  \bigwedge_{\intIndex\in\integerInterval{1, \intNum}}
  \bigwedge_{\interval\in\intPart_\intIndex}
  \left(
  \bigwedge_{\ocState\in\ocStateSpace}\left(
    \bigwedge_{\ocAction\in\ocActionSpace(\ocState)}
      \varStrat_{\ocState,\ocAction}^{\interval} \geq 0\land
      \sum_{\ocAction\in\ocActionSpace(\ocState)}\varStrat_{\ocState,\ocAction}^{\interval} = 1
    \right)\land
    \bigwedge_{\interval'\in\intPart_\intIndex}
    \varStratI = \varStratIprime
  \right).
\end{equation}
Any vector $\solStratTuple$ satisfying $\formulaStrat(\solStratTuple)$ defines a strategy $\stratB_{\solStratTuple}$ based on $\intPart'$ and any such strategy induces such a vector.
We obtain the following corollary of Theorem~\ref{theorem:verification:oeis:formula}.
\begin{theorem}\label{theorem:synthesis:oeis:formula}
  Let $\ocConfig\in\compressChainStateSpace\setminus\compressChainStateSpaceStar$.
  There exists an OEIS $\strat$ based on the partition $\intPart'$ such that $\probaGverb{\ocmdpFin{\ocmdp}{\counterUB}}{\ocConfig}{\strat}(\objective) \geq\thresProba$ if and only if $\IR\models\exists\varStratTuple\forall\varTransTuple\,\forall\varObjTuple (\formulaStrat(\varStratTuple)\land((\formulaTrans(\varTransTuple,\varStratTuple)\land\formulaObj(\varTransTuple, \varObjTuple))\implies \varObj_\ocConfig\geq\thresProba))$.
\end{theorem}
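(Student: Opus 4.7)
The plan is to derive this statement as a direct corollary of Theorem~\ref{theorem:verification:oeis:formula} by establishing a bijective correspondence between OEISs based on $\intPart'$ and vectors $\solStratTuple$ satisfying $\formulaStrat(\solStratTuple)$, then lifting the verification characterisation for each fixed strategy through an existential quantifier.

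First, I would check the correspondence between OEISs and strategy vectors. Given an OEIS $\strat$ based on $\intPart'$, define $\vectStratTuple$ by $\vectStrat^\interval_{\ocState, \ocAction} = \strat(\ocState, \min\interval)(\ocAction)$ for all $\ocState\in\ocStateSpace$, $\ocAction\in\ocActionSpace(\ocState)$ and $\interval\in\intPart$. The non-negativity and sum-to-one constraints in Equation~\eqref{equation:formula:strategy:free-support} are satisfied because $\strat$ assigns distributions, and the constraint $\varStratI = \varStratIprime$ for all $\interval, \interval'\in\intPart_\intIndex$ holds because $\strat$ is based on the coarser partition $\intPart'$ and $\intPart_\intIndex$ consists only of sub-intervals of $\interval_\intIndex$ obtained from $\mathsf{Refine}\circ\mathsf{Isolate}$. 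Conversely, any vector $\solStratTuple$ satisfying $\formulaStrat(\solStratTuple)$ gives rise to a well-defined OEIS $\stratB_{\solStratTuple}$ that is based on $\intPart'$, because the equality constraints ensure that the probabilities assigned within a block $\intPart_\intIndex$ are constant.

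Next, I would prove both directions of the biconditional. For the forward direction, assume there is an OEIS $\strat$ based on $\intPart'$ with $\probaGverb{\ocmdpFin{\ocmdp}{\counterUB}}{\ocConfig}{\strat}(\objective)\geq\thresProba$. Using the vector $\vectStratTuple$ constructed above, $\formulaStrat(\vectStratTuple)$ holds, and $\strat = \stratB_{\vectStratTuple}$. Theorem~\ref{theorem:verification:oeis:formula} applied to $\vectStratTuple$ yields $\IR\models\forall\varTransTuple\,\forall\varObjTuple((\formulaTrans(\varTransTuple,\vectStratTuple)\land\formulaObj(\varTransTuple, \varObjTuple))\implies \varObj_\ocConfig\geq\thresProba)$. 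Witnessing the outer existential quantifier with $\vectStratTuple$ concludes this direction. For the backward direction, assume the formula holds and let $\solStratTuple$ be a witness of the existential quantifier. By the correspondence above, $\stratB_{\solStratTuple}$ is a well-defined OEIS based on $\intPart'$. Since $\solStratTuple$ additionally satisfies the universal implication involving $\formulaTrans$ and $\formulaObj$, Theorem~\ref{theorem:verification:oeis:formula} yields $\probaGverb{\ocmdpFin{\ocmdp}{\counterUB}}{\ocConfig}{\stratB_{\solStratTuple}}(\objective)\geq\thresProba$, providing the sought OEIS.

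There is no serious obstacle: the statement is a straightforward lifting of Theorem~\ref{theorem:verification:oeis:formula}. The only point that requires care is verifying that the equality clauses $\varStratI = \varStratIprime$ in $\formulaStrat$ exactly capture the condition of being based on the coarser partition $\intPart'$ (as opposed to only the refined partition $\intPart$). This follows directly from the definition of $\intPart_\intIndex$.
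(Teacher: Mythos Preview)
Your proposal is correct and follows exactly the approach the paper takes: the paper states this theorem as a direct corollary of Theorem~\ref{theorem:verification:oeis:formula}, noting only that vectors $\solStratTuple$ satisfying $\formulaStrat(\solStratTuple)$ correspond precisely to OEISs based on $\intPart'$. Your elaboration of this correspondence and the two directions of the biconditional is a faithful expansion of what the paper leaves implicit.
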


We obtain that the fixed-interval realisability problem for randomised OEISs can be reduced to deciding a sentence in $\IR$ with two blocks of quantifiers.
This shows that the problem is decidable in polynomial space~\cite[Rmk.~13.10]{BPR2006}.
For the parameterised randomised OEIS realisability problem, we obtain an $\npspace = \pspace$~\cite{DBLP:journals/jcss/Savitch70} upper bound through the following algorithm: we use non-determinism to obtain a partition $\intPart'$ of $\integerInterval{1, \counterUB-1}$ compatible with the input parameters and then check the validity of the formula of Theorem~\ref{theorem:synthesis:oeis:formula} for this partition.
We obtain the following.
\begin{theorem}\label{theorem:realisability:oeis:randomised}
  The fixed-interval and parameterised randomised OEIS realisability problems for selective termination and state-reachability objectives are in $\pspace$.
\end{theorem}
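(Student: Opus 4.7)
The plan is to bootstrap directly from Theorem~\ref{theorem:synthesis:oeis:formula}, which already reduces the fixed-interval randomised realisability problem to checking the validity of a formula in the theory of the reals with the quantifier prefix $\exists\varStratTuple\,\forall\varTransTuple\,\forall\varObjTuple$. For the fixed-interval case, I would first observe that, given the input partition $\intPart'$ and initial counter value $\ocCount_\init$, the refined partition $\intPart=\mathsf{Refine}(\mathsf{Isolate}(\intPart', \ocCount_\init))$ has polynomial-size representation by Lemma~\ref{lemma:ocmpd:interval size}, and $\ocConfig_\init\in\compressChainStateSpace$ by construction. Next, I would bound the size of the full formula: Lemma~\ref{lemma:verification:oeis:formula:size} already gives polynomial bounds on $\formulaTrans$ and $\formulaObj$, and the formula $\formulaStrat$ from Equation~\eqref{equation:formula:strategy:free-support} is clearly polynomial in $|\intPart|$, $|\ocStateSpace|$ and $|\ocActionSpace|$. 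Consequently the whole sentence has polynomial size and a fixed number (two) of quantifier blocks.

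I would then invoke~\cite[Rmk.~13.10]{BPR2006}, recalled in the preliminaries: deciding a sentence in the theory of the reals with a bounded number of quantifier blocks can be done in polynomial space. Applying this to the formula of Theorem~\ref{theorem:synthesis:oeis:formula} with $\ocConfig=\ocConfig_\init$ yields the $\pspace$ upper bound for the fixed-interval randomised OEIS realisability problem.

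For the parameterised case, I would design a non-deterministic polynomial-space algorithm. Given parameters $\intNum$ (in unary) and $\intSize$, it first checks the trivial-failure condition discussed in Remark~\ref{remark:oeis realisability:trivial parameters} when $\counterUB\in\IN$. Otherwise it non-deterministically guesses an interval partition $\intPart'$ of $\integerInterval{1,\counterUB-1}$ with at most $\intNum$ intervals and bounded intervals of size at most $\intSize$; by Lemma~\ref{lemma:partition representation}, such a partition admits a representation of size polynomial in the input. The algorithm then invokes the fixed-interval procedure above on $\intPart'$. Since that sub-procedure runs in polynomial space and the guessed partition has polynomial size, the overall procedure is in $\npspace$. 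By Savitch's theorem, $\npspace=\pspace$, so we obtain the claimed upper bound.

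The only mildly delicate step will be verifying that all ingredients plug together in polynomial space without hidden blowups: in particular, that the compressed state space $\compressChainStateSpace$ associated with $\intPart$ has polynomially many configurations (guaranteed by Assumption~\ref{assumption:interval size} together with Lemma~\ref{lemma:ocmpd:interval size}), and that the symbolic formulae from Section~\ref{section:verification:algorithms:oeis} can be written down explicitly in polynomial time. None of these is a genuine obstacle, as each follows from an already-established lemma; the argument is essentially a careful assembly of Theorem~\ref{theorem:synthesis:oeis:formula}, Lemma~\ref{lemma:partition representation}, the quantifier-block complexity from~\cite{BPR2006}, and Savitch's theorem.
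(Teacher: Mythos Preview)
Your proposal is correct and follows essentially the same approach as the paper: invoke Theorem~\ref{theorem:synthesis:oeis:formula} to obtain a two-block sentence of polynomial size (via Lemma~\ref{lemma:ocmpd:interval size} and Lemma~\ref{lemma:verification:oeis:formula:size}, plus a direct count for $\formulaStrat$), decide it in $\pspace$ using~\cite[Rmk.~13.10]{BPR2006}, and for the parameterised variant guess a compatible partition of polynomial size (Lemma~\ref{lemma:partition representation}) and appeal to Savitch's theorem.
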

\begin{proof}
  Let $\intPart'$ denote the input interval partition of $\integerInterval{1, \counterUB-1}$ if we consider the fixed-interval realisability problem or the partition obtained using non-determinism if we consider the parameterised realisability problem.
  In the latter, the representation of $\intPart'$ is of size polynomial in that of the input parameters by Lemma~\ref{lemma:partition representation}.

  Whether a formula with two blocks of quantifiers holds in the theory of the reals can be decided in polynomial space~\cite[Rmk.~13.10]{BPR2006}.
  Therefore, to obtain the claim of the theorem, it suffices to show that the formula $\formulaStrat(\varStratTuple)\land((\formulaTrans(\varTransTuple,\varStratTuple)\land\formulaObj(\varTransTuple, \varObjTuple))\implies \varObj_\ocConfig\geq\thresProba)$ can be constructed in polynomial time with respect to the representation of $\ocmdp$ and $\intPart'$.

  Lemma~\ref{lemma:ocmpd:interval size} implies that $\intPart=\mathsf{Refine}(\mathsf{Isolate}(\intPart', \ocCount_\init))$ admits a representation of size polynomial in the representation of the inputs to the fixed-interval realisability problem.
  It follows from Lemma~\ref{lemma:verification:oeis:formula:size} that the sub-formula $(\formulaTrans(\varTransTuple,\varStratTuple)\land\formulaObj(\varTransTuple, \varObjTuple))\implies \varObj_\ocConfig\geq\thresProba$ can be constructed in polynomial time.
  For $\formulaStrat(\varStratTuple)$, we observe that it is a conjunction of no more than $|\intPart|\cdot(|\ocStateSpace|\cdot(|\ocActionSpace|+1) + |\intPart|\cdot|\ocStateSpace|\cdot|\ocActionSpace|)$ atomic formulae of length in $\bigo(|\ocActionSpace|)$.
\end{proof}

\subsection{Realisability for cyclic interval strategies}\label{section:realisability:cis}

We now consider the fixed-interval and parameterised realisability problems for CISs.
We assume that $\counterUB=\infty$ for the remainder of the section.
We adapt the techniques of Section~\ref{section:realisability:oeis}.

First, we provide non-deterministic algorithms for the variants of these problems for pure CISs.
In the fixed-interval case, it suffices to non-deterministically select an action for each state of the OC-MDP and interval from the input interval partition and then verify the resulting CIS.
We now consider the parameterised case.
Let $\intNum\in\INpos$ and $\intSize\in\INpos$ respectively denote the parameter bounding the number of intervals and the size of intervals for the desirable interval partitions.
To solve the parameterised realisability problem, we guess a period $\period\leq\intNum\cdot\intSize$, an interval partition $\intPartB'$ of $\integerInterval{1, \period}$ that is compatible with $\intNum$ and $\intSize$ and actions for all pairs in $\ocStateSpace\times\intPartB'$, then verify the obtained CIS.
Theorem~\ref{theorem:verification:cis:complexity}, along with Lemma~\ref{lemma:partition representation} in the parameterised case, imply that both of these problems can be solved in non-deterministic polynomial time with an $\etr$ oracle.
\begin{theorem}\label{theorem:realisability:cis:pure}
  The fixed-interval and parameterised pure CIS realisability problems for selective termination and state-reachability objectives are in $\np^\etr$.
\end{theorem}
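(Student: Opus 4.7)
The plan is to mirror the structure of the proof of Theorem~\ref{theorem:realisability:oeis:pure}, leveraging the $\coetr$ verification algorithm for CISs (Theorem~\ref{theorem:verification:cis:complexity}) as an oracle and using non-determinism to guess the sought pure CIS. Since a pure strategy is uniquely specified by a single action for each (state, interval) pair, the witness has a polynomial-size representation provided the underlying interval partition does.

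For the fixed-interval variant, the input already provides the period $\period$ (encoded in binary) and an interval partition $\intPartB$ of $\integerInterval{1, \period}$. I would guess, for each $\ocState\in\ocStateSpace$ and each $\interval\in\intPartB$, an action $\ocAction_{\ocState,\interval}\in\ocActionSpace(\ocState)$; this gives a polynomial-size description of a pure CIS $\strat$ based on the periodic partition generated by $\intPartB$. Then I would invoke the CIS verification algorithm of Theorem~\ref{theorem:verification:cis:complexity} to check whether $\probaGverb{\ocmdpFin{\ocmdp}{\infty}}{\ocConfig_\init}{\strat}(\objective)\geq\thresProba$. The verification is in $\coetr$, and calling a $\coetr$ oracle from a non-deterministic polynomial-time machine is equivalent (at the oracle level) to calling an $\etr$ oracle, yielding an $\np^\etr$ upper bound.

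For the parameterised variant, with parameters $\intNum\in\INpos$ (given in unary) and $\intSize\in\INpos$, I would additionally guess the period $\period$ together with an interval partition $\intPartB$ of $\integerInterval{1, \period}$ compatible with the parameters. By Lemma~\ref{lemma:partition representation}, any such partition satisfies $\period\leq\intNum\cdot\intSize$ and admits an explicit representation of size polynomial in the binary encoding of $\intNum$ and $\intSize$ (thus polynomial in the input). Guessing $\period$ and $\intPartB$ therefore fits in non-deterministic polynomial time. One then guesses actions as above and invokes verification.

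The main obstacle is nothing substantive: it amounts to ensuring that the guessed objects indeed have polynomial-size representations and that the call to the $\coetr$ verification oracle is well-formed. The former is handled by Lemma~\ref{lemma:partition representation} (for the partition) and by the fact that pure strategies need only log-many bits per choice; the latter follows because Theorem~\ref{theorem:verification:cis:complexity} applies to any CIS given with its periodic partition representation, which is precisely what the non-deterministic phase produces. Collecting these observations yields the claimed $\np^\etr$ upper bound for both variants.
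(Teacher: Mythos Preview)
Your proposal is correct and follows essentially the same approach as the paper: non-deterministically guess the pure CIS (and, in the parameterised case, also the period $\period\leq\intNum\cdot\intSize$ and a compatible partition via Lemma~\ref{lemma:partition representation}), then invoke the $\coetr$ verification of Theorem~\ref{theorem:verification:cis:complexity}, noting that a $\coetr$ oracle is equivalent to an $\etr$ oracle. The paper's argument is identical in structure and in the references it uses.
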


We now consider the problem variants for randomised strategies.
As before, we first focus on the fixed-interval case.
Let $\period\in\INpos$ denote the input period and $\intPartB' = (\interval_{\intIndex})_{\intIndex\in\integerInterval{1, \intNum}}$ denote the input interval partition of $\integerInterval{1, \period}$.
We  define $\intPartB=\mathsf{Refine}(\mathsf{Isolate}(\intPartB', \ocCount_\init\bmod\period))$ of $\integerInterval{1, \period}$.
We let, for all $\intIndex\in\integerInterval{1, \intNum}$, $\intPartB_\intIndex = \{\interval\in\intPartB\mid\interval\subseteq\interval_\intIndex\}$.
We also let $\intPartC = \mathsf{Refine}(\integerInterval{1, \lfloor\frac{\ocCount_\init}{\period}\rfloor})\cup\{\integerInterval{\lfloor\frac{\ocCount_\init}{\period}\rfloor+1, \infty}\}$.
These choices guarantee that the counterpart in the sense of Theorem~\ref{theorem:cis:compression:ocmc} of $\ocConfig_\init$ in $\compressCisStateSpace$ exists.

Next, we reintroduce the sets of variables $\varStratTuple$, $\varCisTuple$, $\varTransTuple$ and $\varObjTuple$ and the formulae $\formulaCis(\varCisTuple, \varStratTuple)$, $\formulaCisTrans(\varTransTuple, \varCisTuple)$ and $\formulaCisObj(\varTransTuple, \varObjTuple)$ from Section~\ref{section:verification:algorithms:cis}.
We formulate an adaptation of the formulae of Equations~\eqref{equation:formula:strategy:assigned-support} and~\eqref{equation:formula:strategy:free-support} with respect to $\intPartB$: we let
\begin{equation}
  \formulaStratCis(\varStratTuple) =
  \bigwedge_{\intIndex\in\integerInterval{1, \intNum}}
  \bigwedge_{\interval\in\intPartB_\intIndex}
  \left(
  \bigwedge_{\ocState\in\ocStateSpace}\left(
    \bigwedge_{\ocAction\in\ocActionSpace(\ocState)}
      \varStrat_{\ocState,\ocAction}^{\interval} \geq 0\land
      \sum_{\ocAction\in\ocActionSpace(\ocState)}\varStrat_{\ocState,\ocAction}^{\interval} = 1
    \right)\land
    \bigwedge_{\interval'\in\intPartB_\intIndex}
    \varStratI = \varStratIprime
  \right).
\end{equation}
Once again, we obtain a natural correspondence between vectors $\solStratTuple$ satisfying $\formulaStratCis(\solStratTuple)$ and the CISs considered for our realisability problem.
The following theorem is therefore implied by Theorem~\ref{theorem:verification:cis:formula}.
\begin{theorem}\label{theorem:synthesis:cis:formula}
  Let $\cisConfig\in\compressCisStateSpace\cap(\cisChainStateSpace\times\INpos)$.
  There exists a CIS $\strat$ based on the periodic partition generated by $\intPartB'$ such that $\probaMCverb{\ocChainFin{\cisChain}{\infty}}{\cisConfig}(\reach{\bar{\target}})\geq\thresProba$ if and only if $\IR\models\exists\varStratTuple\forall\varTransTuple\,\forall\varObjTuple\forall\varCisTuple (\formulaStratCis(\varStratTuple)\land ((\formulaCis(\varCisTuple, \varStratTuple)\land\formulaCisTrans(\varTransTuple,\varCisTuple)\land\formulaCisObj(\varTransTuple, \varObjTuple))\implies \varObj_{\cisConfig}\geq\thresProba))$.
\end{theorem}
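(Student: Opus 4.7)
\medskip
\noindent\textbf{Proof plan.} The plan is to derive this result as a corollary of Theorem~\ref{theorem:verification:cis:formula}. The verification theorem already ties the universal-over-$\varTransTuple,\varObjTuple,\varCisTuple$ implication to the event that a \emph{fixed} CIS $\stratB_{\solStratTuple}$ achieves reachability probability at least $\thresProba$ from $\cisConfig$. It therefore suffices to show that the additional conjunct $\formulaStratCis(\varStratTuple)$ picks out exactly the vectors $\solStratTuple$ that arise from well-defined CISs based on the periodic partition generated by $\intPartB'$, and that prefixing with an existential over $\varStratTuple$ then captures precisely the existence of such a CIS.

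\medskip
\noindent\textbf{Correspondence between vectors and strategies.} The first step is to verify the following bijection. On the one hand, any CIS $\strat$ based on the periodic partition generated by $\intPartB'$ induces a vector $\vectStratTuple$ by setting $\varStrat^{\interval}_{\ocState,\ocAction}:=\strat(\ocState,\min\interval)(\ocAction)$ for every $\ocState\in\ocStateSpace$, $\ocAction\in\ocActionSpace(\ocState)$ and $\interval\in\intPartB$; the three conjuncts of $\formulaStratCis$ express non-negativity, that each $\strat(\ocState,\min\interval)$ is a distribution, and that $\strat$ is constant on all refined sub-intervals $\interval\in\intPartB_\intIndex$ of the same input interval $\interval_\intIndex\in\intPartB'$, so $\formulaStratCis(\vectStratTuple)$ holds. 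Conversely, any $\solStratTuple$ with $\IR\models\formulaStratCis(\solStratTuple)$ defines, via $\stratB_{\solStratTuple}$, a well-defined memoryless strategy that is constant on intervals of $\intPartB'$ (extended periodically) and hence is a CIS based on the partition generated by $\intPartB'$.

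\medskip
\noindent\textbf{Assembling the two directions.} Using this bijection, the forward direction is immediate: given a CIS $\strat$ witnessing $\probaMCverb{\ocChainFin{\cisChain}{\infty}}{\cisConfig}(\reach{\bar{\target}})\geq\thresProba$, the associated vector $\vectStratTuple$ satisfies $\formulaStratCis$ and, by Theorem~\ref{theorem:verification:cis:formula} applied to $\stratB_{\vectStratTuple}=\strat$, makes the universal implication valid. Hence $\vectStratTuple$ is a witness for the existential. For the converse, let $\solStratTuple$ satisfy the existential formula. The conjunct $\formulaStratCis(\solStratTuple)$ ensures $\stratB_{\solStratTuple}$ is a well-defined CIS based on $\intPartB'$; the universal part is precisely the statement of Theorem~\ref{theorem:verification:cis:formula} for this strategy at $\cisConfig$, yielding $\probaMCverb{\ocChainFin{\cisChain}{\infty}}{\cisConfig}(\reach{\bar{\target}})\geq\thresProba$ (via the translation provided by Theorem~\ref{theorem:cis:compression:ocmc}).

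\medskip
\noindent\textbf{Main obstacle.} The proof contains no genuinely hard step: the heavy lifting has been done in Theorem~\ref{theorem:verification:cis:formula}, whose correctness rests on Lemma~\ref{lemma:verification:cis:least assignment} and the systems of Theorems~\ref{theorem:equations:termination} and~\ref{theorem:equations:transitions}. The only point that requires care is bookkeeping with the two partitions at play: the input partition $\intPartB'$ that parameterises the class of strategies and its refinement $\intPartB=\mathsf{Refine}(\mathsf{Isolate}(\intPartB',\ocCount_\init\bmod\period))$ on which the compressed Markov chain is actually built. The equalities $\varStratI=\varStratIprime$ for $\interval,\interval'\in\intPartB_\intIndex$ inside $\formulaStratCis$ are precisely what guarantee that a vector over the refined partition genuinely encodes a strategy constant on each $\interval_\intIndex\in\intPartB'$, so that the existential witnesses are in bijection with the intended strategy class rather than with the strictly larger class of CISs based on the refinement.
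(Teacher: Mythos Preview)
Your proposal is correct and follows exactly the paper's approach: the paper states that the natural correspondence between vectors $\solStratTuple$ satisfying $\formulaStratCis(\solStratTuple)$ and CISs based on the partition generated by $\intPartB'$ makes the theorem a direct consequence of Theorem~\ref{theorem:verification:cis:formula}. Your write-up simply fleshes out this correspondence and the two-direction argument in more detail than the paper does, but the structure is identical.
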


We thus obtain that the fixed-interval realisability problem for randomised CISs is reducible in polynomial time to deciding whether a sentence with two quantifier blocks holds in the theory of the reals.
We obtain a $\pspace$ upper bound~\cite[Rmk.~13.10]{BPR2006}.
For the parameterised case, we obtain an $\npspace = \pspace$~\cite{DBLP:journals/jcss/Savitch70} upper bound; we non-deterministically guess an interval partition $\intPartB'$ as we have done for the parameterised pure CIS realisability problem, then reduce to checking the validity of the formula of Theorem~\ref{theorem:synthesis:cis:formula} as in the fixed-interval case.
The following theorem summarises our complexity bounds.

\begin{theorem}\label{theorem:realisability:cis:randomised}
  The fixed-interval and parameterised randomised CIS realisability problems for selective termination and state-reachability objectives are in $\pspace$.
\end{theorem}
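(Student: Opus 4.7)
The plan is to mirror the structure used for the analogous OEIS result (Theorem~\ref{theorem:realisability:oeis:randomised}), leveraging Theorem~\ref{theorem:synthesis:cis:formula} as the engine. For the fixed-interval variant, I will build the interval partitions $\intPartB = \mathsf{Refine}(\mathsf{Isolate}(\intPartB', \ocCount_\init \bmod \period))$ of $\integerInterval{1,\period}$ and $\intPartC = \mathsf{Refine}(\integerInterval{1,\lfloor\ocCount_\init/\period\rfloor})\cup\{\integerInterval{\lfloor\ocCount_\init/\period\rfloor+1,\infty}\}$, exactly as in Section~\ref{section:verification:algorithms:cis}. These choices guarantee that the counterpart $\cisConfig_\init$ of $\ocConfig_\init$ in $\compressCisStateSpace$ is retained, and that Assumption~\ref{assumption:interval size} holds for both partitions so that all compressed-Markov-chain constructions are well-defined. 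Then the decision reduces to checking validity of the sentence
\[
  \exists\varStratTuple\,\forall\varTransTuple\,\forall\varObjTuple\,\forall\varCisTuple\, \bigl(\formulaStratCis(\varStratTuple)\land ((\formulaCis(\varCisTuple,\varStratTuple)\land\formulaCisTrans(\varTransTuple,\varCisTuple)\land\formulaCisObj(\varTransTuple,\varObjTuple))\implies \varObj_{\cisConfig_\init}\geq\thresProba)\bigr),
\]
which is a sentence with two alternating quantifier blocks in the theory of the reals.

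The main routine step is verifying that this sentence can be constructed in deterministic polynomial time in the size of the inputs. Lemma~\ref{lemma:ocmpd:interval size} gives polynomial bounds on $|\intPartB|$ and $|\intPartC|$ in terms of the input description, and Lemma~\ref{lemma:verification:cis:formula:size} then yields that $\formulaCis$, $\formulaCisTrans$, $\formulaCisObj$ each admit a representation of polynomial size. The additional formula $\formulaStratCis(\varStratTuple)$ is a direct adaptation of $\formulaStrat$ from Equation~\eqref{equation:formula:strategy:free-support} to the partition $\intPartB$, and consists of $\bigo(|\intPartB|^2\cdot|\ocStateSpace|\cdot|\ocActionSpace|)$ atomic formulae of length $\bigo(|\ocActionSpace|)$. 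Once constructed, the sentence has a fixed number of quantifier blocks, so by~\cite[Rmk.~13.10]{BPR2006} its validity is decidable in $\pspace$.

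For the parameterised variant, I proceed by a standard non-deterministic wrapper: guess a period $\period\leq\intNum\cdot\intSize$ and an interval partition $\intPartB'$ of $\integerInterval{1,\period}$ compatible with the parameters $\intNum$ and $\intSize$. By Lemma~\ref{lemma:partition representation}, such a partition has an explicit representation of size polynomial in the encoding of the inputs (recall that $\intNum$ is given in unary). The computation then reduces to the fixed-interval procedure above, yielding a non-deterministic polynomial-space algorithm, which by Savitch's theorem~\cite{DBLP:journals/jcss/Savitch70} collapses to $\pspace$.

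The most delicate point, and the one worth checking carefully rather than the quantifier-alternation bookkeeping, is the interplay between the two compressions: the formula of Theorem~\ref{theorem:synthesis:cis:formula} encodes the whole pipeline $\stratB_{\varStratTuple}\leadsto \cisChainSymbolic \leadsto \compressCisSymbolic$, and correctness hinges on the identification of $\ocConfig_\init$ with $\cisConfig_\init\in\compressCisStateSpace$ given by Theorem~\ref{theorem:cis:compression:ocmc} together with Theorems~\ref{theorem:ocmdp:probability matching} and~\ref{theorem:ocmdp:probability:reach}. Once the partitions $\intPartB$ and $\intPartC$ are chosen so that $\cisConfig_\init$ is actually a state of $\compressCisSymbolic$ --- which is exactly what the $\mathsf{Isolate}$ and $\mathsf{Refine}$ operators ensure --- the reduction is immediate and the complexity claims follow.
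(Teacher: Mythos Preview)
Your proposal is correct and follows essentially the same approach as the paper's proof, which simply states that the theorem follows by adapting the proof of Theorem~\ref{theorem:realisability:oeis:randomised}, replacing Lemma~\ref{lemma:verification:oeis:formula:size} with Lemma~\ref{lemma:verification:cis:formula:size} for the size bounds. Your write-up spells out this adaptation in more detail---including the explicit choice of $\intPartB$ and $\intPartC$, the bound on $\formulaStratCis$, and the non-deterministic guessing of the period for the parameterised case---but the underlying argument is the same.
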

\begin{proof}
  This theorem follows from an adaptation of the proof of Theorem~\ref{theorem:realisability:oeis:randomised}.
  The major difference, in this case, is that we refer to Lemma~\ref{lemma:verification:cis:formula:size} instead of Lemma~\ref{lemma:verification:oeis:formula:size} for bounds on the size of the sentence used in to solve the realisability problem.
\end{proof}

\section{Lower bounds}\label{section:hardness}
We present lower complexity bounds for the interval strategy verification problem, the fixed-interval realisability problem and the parameterised realisability problem.
In Section~\ref{section:hardness:sqs}, we prove the square-root-sum hardness of all variants of these problems.
We show that our interval strategy realisability problems are $\np$-hard when considering selective termination in Section~\ref{section:hardness:np}.

\subsection{Square-root-sum hardness}\label{section:hardness:sqs}
The square-root-sum problem consists in comparing a sum of square roots of natural numbers to some integer bound.
Formally, the inputs to the problems are natural numbers $\sqsx_1, \ldots, \sqsx_\sqsn$ and a natural number $\sqsy$.
We formalise the \textit{square-root-sum problem} as the problem of determining whether $\sum_{\sqsi=1}^\sqsn\sqrt{\sqsx_\sqsi}\geq\sqsy$.

\begin{remark}
  The square-root-sum problem is typically formulated as having to decide whether $\sum_{\sqsi=1}^\sqsn\sqrt{\sqsx_\sqsi}\leq\sqsy$, i.e., with the opposite inequality.
  For the sake of illustrating the hardness of a problem, both problems can be seen as equally suitable.

  We argue this by briefly showing that an efficient solution to either variant of the problem yields an efficient solution to the other one.
  We observe that these two variants are almost the complement of one another.
  The only case in which the two problems have the same solution for the same inputs is when $\sum_{\sqsi=1}^\sqsn\sqrt{\sqsx}_\sqsi= \sqsy$.
  Deciding whether $\sum_{\sqsi=1}^\sqsn\sqrt{\sqsx}_\sqsi= \sqsy$ can be done in polynomial time~\cite{DBLP:journals/jsc/BorodinFHT85}. Therefore, an efficient decision procedure for one variant of the square-root-sum problem would entail an efficient one for the other.
  \hfill$\lhd$
\end{remark}

The square-root-sum problem is not known to be solvable in polynomial time in the Turing model of computation.
It is known that the square-root-sum problem can be solved in polynomial time in the BSS model~\cite{DBLP:journals/jc/Tiwari92}.
In particular, the square-root-sum problem is in $\ptime^\posSLP$~\cite{DBLP:journals/siamcomp/AllenderBKM09} and thus in the counting hierarchy.

Our hardness result rely on an existing reduction from the square-root-sum problem to (a variant of) the verification for one-counter Markov chains from~\cite{DBLP:journals/pe/EtessamiWY10}.
We recall this reduction (adapted to our formalism) in Section~\ref{section:hardness:unbounded} and derive square-root-sum hardness for our problems in unbounded OC-MDPs from it.
In Section~\ref{section:hardness:bounded}, we adapt the same reduction to the bounded setting by proving that we can, in polynomial time, compute a large enough counter bound so the bounded one-counter Markov chain approximates the one used in the unbounded reduction well enough for the reduction to still be valid.

\subsubsection{Unbounded one-counter Markov decision processes}\label{section:hardness:unbounded}

We show that the reduction from~\cite{DBLP:journals/pe/EtessamiWY10} from the square-root-sum problem to a verification problem for selective termination in one-counter Markov chains can be used to obtain lower complexity bounds for our interval strategy problems in unbounded OC-MDPs.
We fix inputs $\sqsx_1, \ldots, \sqsx_\sqsn$ and $\sqsy$ to the square-root-sum problem, and let $\sqsm = \max_{1\leq\sqsi\leq\sqsn}\sqsx_\sqsi$ and $\sqsxVect = (\sqsx_1,\ldots, \sqsx_\sqsn)$.
In~\cite{DBLP:journals/pe/EtessamiWY10}, the authors provide a one-counter Markov chain based on $\sqsxVect$ such that the probability of terminating in a given state $\ocStateC$ is $\frac{1}{\sqsn\sqsm}\sum_{\sqsi=1}^\sqsn\sqrt{\sqsx_\sqsi}$.
We reformulate this one-counter Markov chain as an equivalent OC-MDP $\chainX$ with one action.

We depict the fragment of $\chainX$ that is associated with $\sqsx_\sqsi$ in Figure~\ref{figure:sqs:gadget}.
Formally, we define $\chainX=(\ocStateSpace_{\sqsxVect}, \{\ocAction\}, \ocTrans_{\sqsxVect}, \weight_{\sqsxVect})$ where $\ocStateSpace_{\sqsxVect} = \{\ocState_\init, \ocStateC\}\cup\bigcup_{1\leq\sqsx\leq\sqsn}\{\ocState_\sqsi,\ocState_\sqsi^+,\ocState_\sqsi^-\}$ and, for all $\sqsi\in\integerInterval{1, \sqsn}$, transitions and weights to and from the state $\ocState_\sqsi$, $\ocState_\sqsi^+$, $\ocState_\sqsi^-$ and $\ocStateC$  match those in the illustration.
For any $\counterUB\in\INposBar$, we let $\ocChainFin{\chainX}{\counterUB}$ denote the Markov chain induced by the sole strategy of $\ocmdpFin{\chainX}{\counterUB}$.
We have the following theorem (it can also be seen as a corollary of Theorem~\ref{theorem:equations:termination}).
  \begin{figure}
    \centering
    \begin{tikzpicture}
      \node[state, align=center] (q0) {$\ocState_\init$};
      \node[stochasticc, right = of q0] (q0mid) {};
\node[state, align=center, right = of q0mid] (q) {$\ocState_\sqsi$};
      \node[stochasticc, right = of q] (qt) {};
      \node[state, align=center, right = of qt] (qplus) {$\ocState_\sqsi^+$};
      \node[state, align=center, below = of qt] (qminus) {$\ocState_\sqsi^-$};
      \node[stochasticc, below = of q] (qtm) {};
      \node[state, left = of qtm] (t) {$\ocStateC$};

      \path[-] (q0) edge node[above] {$\ocAction\mid0$} (q0mid);
\path[->] (q0mid) edge node[above] {$\frac{1}{\sqsn}$} (q);
      \path[-] (q) edge node [below] {$\ocAction\mid 0$} (qt);
      \path[-] (qminus) edge node[below] {$\ocAction\mid -1$} (qtm);
      \path[->] (qt) edge node[below, align=center] {$\frac{1}{2}$} (qplus);
      \path[->] (qt) edge node[right, align=center] {$\frac{1}{2}$} (qminus);
      \path[->] (qplus) edge[bend right] node[above, align=center] {$\ocAction\mid 1$} (q);
      \path[->] (qtm) edge node[left, align=center] {$1 - \frac{\sqsx_\sqsi}{\sqsm^2}$} (q);
      \path[->] (qtm) edge node[below, align=center] {$\frac{\sqsx_\sqsi}{\sqsm^2}$} (t);
      \path[->] (t) edge[loop left] node[left, align=center] {$\ocAction\mid -1$} (t);
    \end{tikzpicture}
    \caption{A fragment of $\chainX$. Transition probabilities are well-defined: $\ocState_\init$ has $\sqsn$ successors and its outgoing transition share the same probabilities and, for the states $\ocState_\sqsi^-$, we have $\sqsx_\sqsi\leq\sqsm$.}\label{figure:sqs:gadget}
  \end{figure}

  \begin{theorem}[\cite{DBLP:journals/pe/EtessamiWY10}]\label{theorem:reduction:probability:chain}
    We have $\probaMCverb{\ocChainFin{\chainX}{\infty}}{(\ocState_\init, 1)}(\selectiveTermination{\ocStateC}) = \frac{1}{\sqsn\sqsm}\sum_{\sqsi=1}^\sqsn\sqrt{\sqsx_\sqsi}$ and, for all $1\leq\sqsi\leq\sqsn$, $\probaMCverb{\ocChainFin{\chainX}{\infty}}{(\ocState_\sqsi, 1)}(\selectiveTermination{\ocStateC}) = \frac{1}{\sqsm}\sqrt{\sqsx_\sqsi}$.
\end{theorem}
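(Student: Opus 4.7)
The plan is to derive the two equalities from the quadratic characterisation of termination probabilities (Theorem~\ref{theorem:equations:termination}) applied to the induced Markov chain $\ocChainFin{\chainX}{\infty}$. The crucial observation is that the gadgets for different indices $\sqsi$ are completely decoupled: from any state $\ocState_\sqsi$, $\ocState_\sqsi^+$, $\ocState_\sqsi^-$, the chain only ever reaches states in $\{\ocState_\sqsi, \ocState_\sqsi^+, \ocState_\sqsi^-, \ocStateC\}$, and $[\ocStateC\searrow\ocStateC]=1$ while $[\ocStateC\searrow\ocState_\sqsi]=0$. Therefore the global system decomposes into $\sqsn$ independent subsystems, one per gadget, in the unknowns $p_\sqsi=[\ocState_\sqsi\searrow\ocStateC]$ and $u_\sqsi=[\ocState_\sqsi\searrow\ocState_\sqsi]$, together with auxiliary variables for the intermediate states.

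First, I would fix $\sqsi$ and instantiate the equations of Theorem~\ref{theorem:equations:termination} for the gadget. Eliminating $[\ocState_\sqsi^+\searrow\ocStateC]=p_\sqsi(1+u_\sqsi)$ and $[\ocState_\sqsi^-\searrow\ocStateC]=\sqsx_\sqsi/\sqsm^2$ using the direct transition probabilities yields
\[
p_\sqsi = \tfrac{1}{2}p_\sqsi(1+u_\sqsi)+\tfrac{1}{2}\cdot\tfrac{\sqsx_\sqsi}{\sqsm^2},
\]
i.e., $p_\sqsi(1-u_\sqsi)=\sqsx_\sqsi/\sqsm^2$. Similarly, eliminating $[\ocState_\sqsi^+\searrow\ocState_\sqsi]=u_\sqsi^2$ and $[\ocState_\sqsi^-\searrow\ocState_\sqsi]=1-\sqsx_\sqsi/\sqsm^2$ gives the quadratic
\[
u_\sqsi^2-2u_\sqsi+\bigl(1-\tfrac{\sqsx_\sqsi}{\sqsm^2}\bigr)=0,
\qquad\text{i.e., }(u_\sqsi-1)^2=\tfrac{\sqsx_\sqsi}{\sqsm^2}.
\]
This has two roots, $1\pm\sqrt{\sqsx_\sqsi}/\sqsm$. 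Only the root $u_\sqsi=1-\sqrt{\sqsx_\sqsi}/\sqsm$ lies in $[0,1]$; since $u_\sqsi$ is a probability, this is the correct root. Plugging back gives $p_\sqsi=\sqrt{\sqsx_\sqsi}/\sqsm$ (with the edge case $\sqsx_\sqsi=0$ handled directly since then $p_\sqsi=0$).

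Finally, from the equation for $\ocState_\init$, which has only weight-$0$ transitions to the $\ocState_\sqsi$ with probability $1/\sqsn$ each,
\[
[\ocState_\init\searrow\ocStateC]=\sum_{\sqsi=1}^{\sqsn}\tfrac{1}{\sqsn}\,p_\sqsi=\tfrac{1}{\sqsn\sqsm}\sum_{\sqsi=1}^{\sqsn}\sqrt{\sqsx_\sqsi}.
\]
The main subtlety is the selection of the correct root of the quadratic for $u_\sqsi$. Theorem~\ref{theorem:equations:termination} asserts that the termination probabilities form the \emph{least} non-negative solution, so strictly one should argue this directly; alternatively, one may invoke that the termination probabilities are probabilities (hence in $[0,1]$), and the root $1+\sqrt{\sqsx_\sqsi}/\sqsm>1$ is infeasible whenever $\sqsx_\sqsi>0$. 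Once $u_\sqsi$ is pinned down, the rest of the derivation is a routine linear elimination, and the translation between $[\ocState\searrow\ocStateB]$ (used above) and $\probaMCverb{\ocChainFin{\chainX}{\infty}}{(\ocState,1)}(\selectiveTermination{\ocStateB})$ is immediate from the definitions.
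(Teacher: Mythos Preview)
Your derivation is correct and follows precisely the route the paper itself suggests: the statement is cited from~\cite{DBLP:journals/pe/EtessamiWY10} without an in-paper proof, but the paper notes that it ``can also be seen as a corollary of Theorem~\ref{theorem:equations:termination},'' which is exactly what you carry out. The elimination of the auxiliary variables, the resulting quadratic $(u_\sqsi-1)^2=\sqsx_\sqsi/\sqsm^2$, and the root selection via $u_\sqsi\in[0,1]$ are all sound.
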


Due to the structure of $\chainX$, reaching $\ocStateC$ and terminating in $\ocStateC$ are equivalent.
Thus, Theorem~\ref{theorem:reduction:probability:chain} implies that we can reduce the square-root sum instance fixed above to the verification problem for selective termination and state-reachability on $\chainX$ for the unique (counter-oblivious) strategy of $\ocmdpFin{\chainX}{\infty}$, which is both an OEIS and a CIS, and the threshold $\thresProba=\frac{\sqsy}{\sqsn\sqsm}$.
Furthermore, as there is only a single strategy, the answer to the verification problem is the same as the answer to the realisability problem for (pure or randomised) counter-oblivious strategies, which is a special case of the fixed-interval and parameterised interval strategy realisability problems for OEISs and CISs.

\begin{theorem}\label{verification:hardness:unbounded}
  The interval strategy verification, fixed-interval realisability and parameterised realisability problems for state-reachability and selective termination are square-root-sum-hard in unbounded OC-MDPs.
\end{theorem}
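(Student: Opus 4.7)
The plan is to present a direct polynomial-time many-one reduction from the square-root-sum problem to each of the six problems in the statement (three problem variants $\times$ two objectives), reusing the construction $\chainX$ already recalled above and Theorem~\ref{theorem:reduction:probability:chain}. Fix an instance $\sqsx_1, \ldots, \sqsx_\sqsn, \sqsy$ of the square-root-sum problem and set $\sqsm = \max_\sqsi \sqsx_\sqsi$. I would build $\chainX$ and set the threshold to $\thresProba = \sqsy/(\sqsn\sqsm)$, the initial configuration to $(\ocState_\init, 1)$, and the target to $\{\ocStateC\}$. By Theorem~\ref{theorem:reduction:probability:chain}, the probability of $\selectiveTermination{\ocStateC}$ from $(\ocState_\init, 1)$ in $\ocmdpFin{\chainX}{\infty}$ under the unique strategy equals $\frac{1}{\sqsn\sqsm}\sum_\sqsi\sqrt{\sqsx_\sqsi}$, so this probability meets $\thresProba$ if and only if $\sum_\sqsi\sqrt{\sqsx_\sqsi} \geq \sqsy$.

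For the verification problem, the instance is completed by specifying the unique (counter-oblivious) strategy of $\ocmdpFin{\chainX}{\infty}$, which is trivially both an OEIS (based on the single-interval partition $\{\integerInterval{1,\infty}\}$) and a CIS (based on the period-$1$ partition). The construction is clearly polynomial-time in the binary encoding of $\sqsxVect$ and $\sqsy$ (one only needs to encode the transition probabilities $\sqsx_\sqsi/\sqsm^2$ and $1/\sqsn$), so square-root-sum-hardness of the verification problem follows. For the fixed-interval and parameterised realisability variants (both pure and randomised), the same reduction works: since $\chainX$ has only one action, there exists exactly one strategy in $\ocmdpFin{\chainX}{\infty}$, so ``there exists an interval strategy (pure or randomised, based on any partition compatible with the given constraints) whose probability of $\selectiveTermination{\ocStateC}$ is at least $\thresProba$'' coincides with ``the unique strategy achieves probability at least $\thresProba$''. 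For the parameterised variants we simply pick trivial parameters (e.g., $\intNum = \intSize = 1$) compatible with the counter-oblivious strategy.

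To handle state-reachability in addition to selective termination, I would observe that in $\chainX$, the target state $\ocStateC$ is absorbing with a self-loop of weight $-1$, so every play that visits $\ocStateC$ eventually terminates in $\ocStateC$ with probability one. Hence $\probaMCverb{\ocChainFin{\chainX}{\infty}}{(\ocState_\init,1)}(\reach{\ocStateC}) = \probaMCverb{\ocChainFin{\chainX}{\infty}}{(\ocState_\init,1)}(\selectiveTermination{\ocStateC})$, and the same reduction transfers verbatim to $\reach{\ocStateC}$.

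There is really no significant obstacle: the heavy lifting is done by Theorem~\ref{theorem:reduction:probability:chain} (from~\cite{DBLP:journals/pe/EtessamiWY10}), and the only thing to verify is that the reduction is indeed polynomial-time and that the unique-strategy observation makes the realisability variants collapse to verification. The one point requiring a brief justification is the equivalence between $\reach{\ocStateC}$ and $\selectiveTermination{\ocStateC}$, which follows immediately from the absorbing, weight-decreasing structure at $\ocStateC$.
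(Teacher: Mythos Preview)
Your proposal is correct and follows essentially the same approach as the paper: reduce via the one-counter Markov chain $\chainX$ from~\cite{DBLP:journals/pe/EtessamiWY10} with threshold $\thresProba=\sqsy/(\sqsn\sqsm)$, observe that the unique (counter-oblivious) strategy is simultaneously an OEIS and a CIS so verification and all realisability variants collapse, and note that the absorbing weight-$(-1)$ self-loop at $\ocStateC$ makes $\reach{\ocStateC}$ and $\selectiveTermination{\ocStateC}$ coincide. The paper's proof is slightly terser but the content is the same.
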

\begin{proof}
  The OC-MDP $\chainX$ and the threshold $\thresProba = \frac{\sqsy}{\sqsn\sqsm}$ can be computed in polynomial time.
  Therefore, we need only comment on the correctness of the reduction.
  
  By Theorem~\ref{theorem:reduction:probability:chain} and due to the structure of $\chainX$, we have $\probaMCverb{\ocChainFin{\chainX}{\infty}}{(\ocConfig_\init, 1)}(\reach{\ocStateC}) = \probaMCverb{\ocChainFin{\chainX}{\infty}}{(\ocConfig_\init, 1)}(\selectiveTermination{\ocStateC}) = \frac{1}{\sqsn\sqsm}\sum_{\sqsi=1}^\sqsn\sqrt{\sqsx_\sqsi}$.
  Let $\objective\in\{\selectiveTermination{\ocStateC}, \reach{\ocStateC}\}$.
  We clearly have $\probaMCverb{\ocChainFin{\chainX}{\infty}}{(\ocConfig_\init, 1)}(\objective)\geq\thresProba$ if and only if $\sum_{\sqsi=1}^\sqsn\sqrt{\sqsx_\sqsi}\geq\sqsy$ in light of the above, and thus the reduction is correct.
\end{proof}

\subsubsection{Bounded one-counter Markov decision processes}\label{section:hardness:bounded}

We now establish the square-root sum hardness for the interval strategy verification, fixed-interval realisability and parameterised realisability problems  in bounded OC-MDPs.
Intuitively, in most cases, the reduction consists in adding a counter upper bound to the reduction of Section~\ref{section:hardness:unbounded}.
We fix inputs $\sqsx_1$, \ldots, $\sqsx_\sqsn$ and $\sqsy$ to the square-root-sum problem for the remainder of the section and let $\sqsm = \max_{1\leq\sqsi\leq\sqsn}\sqsx_\sqsi$ and $\sqsxVect = (\sqsx_1,\ldots, \sqsx_\sqsn)$.
Let $\thresProba = \frac{\sqsy}{\sqsn\sqsm}$ denote the threshold used for the reduction.
We assume that for all $\sqsi\in\integerInterval{1, \sqsn}$, $\sqsx_\sqsi\neq 0$, and thus that $\sqsm\geq1$.

We aim to determine a bound $\counterUB\in\INpos$ such that $\probaMCverb{\ocChainFin{\chainX}{\counterUB}}{(\ocState_\init, 1)}(\selectiveTermination{\ocStateC})\geq\thresProba$ if and only if $\sum_{\sqsi=1}^\sqsn\sqrt{\sqsx_\sqsi}\geq\sqsy$.
For all $\counterUB\in\INpos$, let $\eleError{\counterUB} = \probaMCverb{\ocChainFin{\chainX}{\infty}}{(\ocState_\init, 1)}(\reach{\ocStateSpace\times\{\counterUB\}})$.
Using Theorem~\ref{theorem:reduction:probability:chain}, we obtain that for all $\counterUB\in\INpos$, we have
\begin{equation*}
  \probaMCverb{\ocChainFin{\chainX}{\counterUB}}{(\ocState_\init, 1)}(\selectiveTermination{\ocStateC})
  = \probaMCverb{\ocChainFin{\chainX}{\infty}}{(\ocState_\init, 1)}(\selectiveTermination{\ocStateC})
  - \eleError{\counterUB}
  = \frac{1}{\sqsn\sqsm}\sum_{\sqsi=1}^\sqsn\sqrt{\sqsx_\sqsi}
  -\eleError{\counterUB}.
\end{equation*}
Therefore, we require a bound $\counterUB\in\INpos$ with a polynomial-size representation such that the positive error term $\eleError{\counterUB}$ above is small enough to ensure the correctness of the reduction.

There is a particular case for which it is clear that no suitable $\counterUB$ exists: whenever $\sum_{\sqsi=1}^\sqsn\sqrt{\sqsx_\sqsi} = \sqsy$ holds.
Since this equality can be decided in polynomial time~\cite{DBLP:journals/jsc/BorodinFHT85}, we consider a reduction that is conditioned on this equality.
First, we check if the equality holds in polynomial time.
If it does, we reduce to a fixed positive instance of the considered interval strategy problem (in bounded OC-MDPs).
Otherwise, we mirror the reduction of the unbounded case with a well-chosen counter upper bound $\counterUB$.

From here, we assume that $\sum_{\sqsi=1}^\sqsn\sqrt{\sqsx_\sqsi} \neq \sqsy$ and try to determine a suitable bound $\counterUB$.
We first formulate a lower bound on the distance $|\sum_{\sqsi=1}^\sqsn\sqrt{\sqsx_\sqsi} - \sqsy|$.
We adapt~\cite[Lemma~3]{DBLP:journals/jc/Tiwari92} to suit our formulation of the square-root-sum problem.
For the sake of completeness, we provide an adapted proof in Appendix~\ref{appendix:square root sum}.
\begin{restatable}{lemma}{lemmaSqrtSumError}\label{lemma:sqrt-sum:error}
  Let $\lambda$ be the sum of the bit-sizes of $\sqsx_1, \ldots, \sqsx_\sqsn$ and $\sqsy$.
  If $\sum_{\sqsi=1}^\sqsn\sqrt{\sqsx_\sqsi} \neq\sqsy$, then $|\sum_{\sqsi=1}^\sqsn\sqrt{\sqsx_\sqsi}  - \sqsy| > 2^{-2^{\sqsn}(\lambda+1)}$.
\end{restatable}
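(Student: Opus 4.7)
The plan is to prove the bound by a Galois-theoretic separation argument in the style of Tiwari~\cite{DBLP:journals/jc/Tiwari92}: I would construct a polynomial with integer coefficients that vanishes at precisely those sign-flipped conjugates of $\sum_\sqsi \sqrt{\sqsx_\sqsi}$ that differ from $\sqsy$, and extract a lower bound on $|\sqsy - \sum_\sqsi \sqrt{\sqsx_\sqsi}|$ from the fact that its value at $\sqsy$ is a nonzero integer.

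Let $\beta = \sum_{\sqsi=1}^\sqsn \sqrt{\sqsx_\sqsi}$ and let $S = \{\epsilon \in \{-1,+1\}^\sqsn \mid \sum_\sqsi \epsilon_\sqsi \sqrt{\sqsx_\sqsi} \neq \sqsy\}$. The hypothesis $\beta \neq \sqsy$ ensures that $(+1, \ldots, +1) \in S$. I would define
\[\hat q(X) \;=\; \prod_{\epsilon \in S} \Bigl(X - \sum_\sqsi \epsilon_\sqsi \sqrt{\sqsx_\sqsi}\Bigr).\]
The key step is to show $\hat q \in \IZ[X]$. The Galois group of $\IQ(\sqrt{\sqsx_1}, \ldots, \sqrt{\sqsx_\sqsn})/\IQ$ acts on sign vectors by flipping coordinates, and since $\sqsy \in \IQ$ is fixed by every automorphism, $S$ is Galois-stable. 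Hence $\hat q$ is fixed by every Galois automorphism, so $\hat q \in \IQ[X]$, and because its coefficients are symmetric functions of the algebraic integers $\sum_\sqsi \epsilon_\sqsi \sqrt{\sqsx_\sqsi}$, they are themselves algebraic integers; a rational algebraic integer is an integer, so $\hat q \in \IZ[X]$.

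By construction, $\hat q(\sqsy) \neq 0$, and since $\hat q(\sqsy) \in \IZ$, this gives $|\hat q(\sqsy)| \geq 1$. Each factor satisfies $|\sqsy - \sum_\sqsi \epsilon_\sqsi \sqrt{\sqsx_\sqsi}| \leq \sqsy + \sqsn\sqrt{\sqsm}$, and $|S| \leq 2^\sqsn$. Isolating the factor corresponding to $\epsilon=(+1,\ldots,+1)$ and bounding the remaining (at most) $2^\sqsn - 1$ factors yields
\[|\sqsy - \beta| \;\geq\; (\sqsy + \sqsn\sqrt{\sqsm})^{-(2^\sqsn - 1)}.\]

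It remains to bound $\sqsy + \sqsn \sqrt{\sqsm}$ by $2^{\lambda + 1}$. I would use $\sqsy < 2^\lambda$, $\sqrt{\sqsm} < 2^{\lambda/2}$, and $\sqsn \leq \lambda$, noting that $\lambda \leq 2^{\lambda/2}$ holds for $\lambda \geq 4$ so that $\sqsn \sqrt{\sqsm} \leq \lambda \cdot 2^{\lambda/2} \leq 2^\lambda$; the remaining cases $\lambda \leq 3$ admit a direct verification, and the degenerate cases $\sqsn = 0$ or $\sqsm = 0$ yield $|\sqsy - \beta| \geq 1$ outright. Raising to the power $2^\sqsn - 1$ then gives $|\sqsy - \beta| \geq 2^{-(\lambda+1)(2^\sqsn - 1)} > 2^{-2^\sqsn(\lambda+1)}$, as desired. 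I expect the main obstacle to be the integrality of $\hat q$: it crucially exploits that the non-obvious subset $S$ of sign vectors is Galois-stable precisely because $\sqsy$ is rational; everything else amounts to routine estimation.
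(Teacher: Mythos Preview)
Your argument is correct and reaches the same bound, but via a different mechanism than the paper. The paper works with the minimal polynomial $P_\beta$ of $\beta=\sum_\sqsi\sqrt{\sqsx_\sqsi}-\sqsy$: using essentially the same Galois picture (packaged as Lemma~\ref{lemma:sqrt-sum:roots}), it shows that all roots of $P_\beta$ lie among the sign-flipped conjugates and hence are bounded by $2^\lambda$, then bounds the integer coefficients of $P_\beta$ by $2^{d(\lambda+1)}$ with $d\leq 2^\sqsn$, and finally invokes a classical root-separation bound (cited to Householder and Tiwari) stating that nonzero roots of an integer polynomial with $k$-bit coefficients exceed $2^{-k}$. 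You instead evaluate your Galois-stable product $\hat q$ directly at $\sqsy$ and use only that a nonzero integer has absolute value at least $1$. This is more self-contained---no external root-bound lemma is needed---at the price of the extra (and clean) observation that excising from the product precisely those sign vectors $\epsilon$ with $\sum_\sqsi\epsilon_\sqsi\sqrt{\sqsx_\sqsi}=\sqsy$ preserves Galois invariance, which works because $\sqsy\in\IQ$ is fixed by every automorphism. Both proofs rest on the same structural fact that the conjugates are sign-flips; they differ only in how the quantitative bound is extracted.
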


Lemma~\ref{lemma:sqrt-sum:error} implies that our reduction is correct whenever we ensure that $\sqsn\cdot\sqsm\cdot\eleError{\counterUB}\leq 2^{-2^{\sqsn}(\lambda+1)}$ where $\lambda$ denotes the sum of bit-sizes of the inputs to the square-root-sum problem.
We claim that choosing $\counterUB = 2^\sqsn\sqsm\cdot(\lambda+1)+\sqsn\sqsm^2+1$ is sufficient.
This bound can be computed in polynomial time as $\sqsn$ is the number of inputs.
To show that this choice of $\counterUB$ is suitable, we first bound $\eleError{\counterUB}$ by an explicit function in $\counterUB$ and $\sqsm$.

\begin{lemma}\label{lemma:sqrt-sum:bound:domination}
  For all $\counterUB\in\INpos$, $\eleError{\counterUB}\leq (\frac{\sqsm}{\sqsm+1})^{\counterUB-1}$.
\end{lemma}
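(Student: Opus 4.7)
The plan is to compute $\eleError{\counterUB}$ via a level-by-level analysis of the counter in each gadget of $\chainX$. The starting point is that the transition from $\ocState_\init$ distributes probability $\frac{1}{\sqsn}$ to each $\ocState_\sqsi$ while keeping the counter at $1$, so
\[
    \eleError{\counterUB} = \frac{1}{\sqsn}\sum_{\sqsi=1}^{\sqsn}\probaMCverb{\ocChainFin{\chainX}{\infty}}{(\ocState_\sqsi,1)}(\reach{\ocStateSpace\times\{\counterUB\}}).
\]
For each $\sqsi$, I would then define $u_\sqsi$ as the probability of ever reaching $(\ocState_\sqsi,k+1)$ from $(\ocState_\sqsi,k)$ in $\ocChainFin{\chainX}{\infty}$; by translation invariance of the gadget, this quantity does not depend on $k\geq 1$. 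A short inspection of the gadget reveals that the counter is incremented only along the edge from $\ocState_\sqsi^+$ to $\ocState_\sqsi$, so that the first visit to any fresh counter value $k$ must occur exactly at the configuration $(\ocState_\sqsi,k)$. Iterating the strong Markov property over the successive first-hitting times of counter levels $2,3,\dots,\counterUB$, I would obtain
\[
    \probaMCverb{\ocChainFin{\chainX}{\infty}}{(\ocState_\sqsi,1)}(\reach{\ocStateSpace\times\{\counterUB\}}) = u_\sqsi^{\counterUB-1}.
\]

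Next, a one-step analysis at $(\ocState_\sqsi,k)$ gives a fixed-point equation for $u_\sqsi$. Writing $p_\sqsi=\sqsx_\sqsi/\sqsm^2$: with probability $\tfrac12$ we move up directly via $\ocState_\sqsi^+$; with probability $\tfrac12(1-p_\sqsi)$ we descend to $(\ocState_\sqsi,k-1)$ via $\ocState_\sqsi^-$ without absorption, and must then climb two levels (probability $u_\sqsi^2$); with probability $\tfrac{p_\sqsi}{2}$ we are absorbed in $\ocStateC$. Hence $u_\sqsi=\tfrac12+\tfrac{1-p_\sqsi}{2}u_\sqsi^2$, i.e.\ $(1-p_\sqsi)u_\sqsi^2 - 2u_\sqsi + 1 = 0$, whose roots are $\tfrac{1}{1\pm\sqrt{p_\sqsi}}$. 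Since the desired probability is the least non-negative fixed point (by the standard characterisation of reachability probabilities as least solutions of such systems), we get
\[
    u_\sqsi = \frac{1}{1+\sqrt{p_\sqsi}} = \frac{\sqsm}{\sqsm+\sqrt{\sqsx_\sqsi}}.
\]

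Finally, the standing assumption $\sqsx_\sqsi\geq 1$ recalled at the start of Section~\ref{section:hardness:bounded} yields $\sqrt{\sqsx_\sqsi}\geq 1$, hence $u_\sqsi\leq \sqsm/(\sqsm+1)$ for every $\sqsi$. Plugging into the decomposition:
\[
    \eleError{\counterUB} = \frac{1}{\sqsn}\sum_{\sqsi=1}^{\sqsn}u_\sqsi^{\counterUB-1} \leq \left(\frac{\sqsm}{\sqsm+1}\right)^{\counterUB-1}.
\]
The main delicate point, which I expect to require the most care, is the justification of the product formula $\probaMCverb{}{(\ocState_\sqsi,1)}(\reach{\ocStateSpace\times\{\counterUB\}}) = u_\sqsi^{\counterUB-1}$: it relies on checking that the auxiliary states $\ocState_\sqsi^+$ and $\ocState_\sqsi^-$ can only carry counter values already visited in $\ocState_\sqsi$, so that the first-passage to counter $\counterUB$ cleanly factors as $\counterUB - 1$ independent single-level climbs under the strong Markov property.
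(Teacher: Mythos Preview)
Your decomposition $\eleError{\counterUB} = \frac{1}{\sqsn}\sum_{\sqsi}\probaMCverb{\ocChainFin{\chainX}{\infty}}{(\ocState_\sqsi,1)}(\reach{\ocStateSpace\times\{\counterUB\}})$ and the identification of the limiting value $\frac{\sqsm}{\sqsm+\sqrt{\sqsx_\sqsi}}$ are both correct, but the translation-invariance claim is false and this breaks the argument as written. In $\ocChainFin{\chainX}{\infty}$ the configurations with counter value $0$ are absorbing, so the probability of climbing from $(\ocState_\sqsi,k)$ to $(\ocState_\sqsi,k+1)$ genuinely depends on $k$: from $(\ocState_\sqsi,1)$ any downward step leads to absorption, giving climbing probability exactly $\tfrac12$, whereas from higher levels a downward step to $(\ocState_\sqsi,k-1)$ leaves open the possibility of climbing back. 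Your fixed-point equation $u_\sqsi=\tfrac12+\tfrac{1-p_\sqsi}{2}u_\sqsi^2$ tacitly assumes $k\geq 2$ \emph{and} that the climbing probability at level $k-1$ equals that at level $k$; neither holds globally. Consequently the equality $\probaMCverb{}{(\ocState_\sqsi,1)}(\reach{\ocStateSpace\times\{\counterUB\}}) = u_\sqsi^{\counterUB-1}$ is false (for instance, at $\counterUB=2$ the left-hand side is $\tfrac12$ while $u_\sqsi>\tfrac12$ whenever $\sqsx_\sqsi<\sqsm^2$).

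The paper's proof addresses exactly this: it defines the level-dependent quantity $\eleD{\counterUB}=\probaMCverb{}{(\ocState_\sqsi,\counterUB-1)}(\reach{(\ocState_\sqsi,\counterUB)})$, proves that the sequence $(\eleD{\counterUB})_{\counterUB\geq 2}$ is increasing, derives the recursion $\eleD{\counterUB+1}=\tfrac12+\tfrac{1-p_\sqsi}{2}\eleD{\counterUB}\eleD{\counterUB+1}$, and passes to the limit to obtain $\lim_{\counterUB\to\infty}\eleD{\counterUB}=\frac{\sqsm}{\sqsm+\sqrt{\sqsx_\sqsi}}$. Monotonicity then gives $\eleD{\counterUB}\leq\frac{\sqsm}{\sqsm+1}$ for every $\counterUB$, and the product formula $\eleB{\counterUB}=\eleB{\counterUB-1}\eleD{\counterUB}$ yields the bound inductively. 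Your argument can be repaired along these lines: replace the false equality by the inequality $\eta_k\leq u_\sqsi$ (each level-dependent climbing probability is dominated by the limit), which still gives $\prod_{k=1}^{\counterUB-1}\eta_k\leq u_\sqsi^{\counterUB-1}$; but you must actually prove that domination, and that is the substantive content you are missing.
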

\begin{proof}
  We have $\eleError{1}=1 = (\frac{\sqsm}{\sqsm+1})^0$, and thus the inequality holds trivially for $\counterUB=1$.
  For the remainder of the proof, we prove properties that hold for all $\counterUB\geq 2$ to obtain the general result.
  
  For all $\counterUB\geq 2$ and $\sqsi\in\integerInterval{1, \sqsn}$, let $\eleB{\counterUB} = \probaMCverb{\ocChainFin{\chainX}{\infty}}{(\ocState_\sqsi, 1)}(\reach{(\ocState_\sqsi, \counterUB)})$ denote the probability of hitting counter value $\counterUB$ from $(\ocState_\sqsi, 1)$.
  For all $\counterUB\geq 2$, we have $\eleError{\counterUB} = \frac{1}{\sqsn}\sum_{\sqsi=1}^\sqsn\eleB{\counterUB}$ due to the topology of $\chainX$.
  To obtain the lemma, it suffices to show that for all $\sqsi\in\integerInterval{1, \sqsn}$, we have $\eleB{\counterUB}\leq(\frac{\sqsm}{\sqsm+1})^{\counterUB-1}$.
  We fix $\sqsi\in\integerInterval{1, \sqsn}$.

  For all $\counterUB\geq 2$, we let $\eleD{\counterUB} =  \probaMCverb{\ocChainFin{\chainX}{\infty}}{(\ocState_\sqsi, \counterUB-1)}(\reach{(\ocState_\sqsi, \counterUB)})$ denote the probability of reaching counter value $\counterUB$ from $(\ocState_\sqsi, \counterUB-1)$.
  To conclude this proof, we establish three statements.
  First, we show that for all $\counterUB\geq 2$, we have $\eleB{\counterUB+1} = \eleB{\counterUB}\cdot\eleD{\counterUB+1}$.
  Second, we show that for all $\counterUB\geq 2$, we have $\eleD{\counterUB}\leq\frac{\sqsm}{\sqsm+1}$.
  Finally, we combine these two properties to conclude using an inductive argument.

  For all $\counterUB\in\INpos$ and $\ocCount\in\integerInterval{1, \counterUB}$, we let $\histPart_{\ocCount\to\counterUB}$ denote the set of histories of $\ocmdpFin{\chainX}{\infty}$ starting in $(\ocState_\sqsi,\ocCount)$ and ending in $(\ocState_\sqsi, \counterUB)$ with only one occurrence of this last configuration.
  The sets $\histPart_{\ocCount\to\counterUB}$ are prefix-free; the cylinders of their elements are pairwise disjoint.

  We now show the first claim.
  Let $\counterUB\geq 2$.
  All histories of $\histPart_{1\to\counterUB}$ can be written as the concatenation of a history of $\histPart_{1\to\counterUB-1}$ and a history of $\histPart_{\counterUB-1\to\counterUB}$.
  We obtain
  \begin{align*}
    \eleB{\counterUB}
    & = \probaMCverb{\ocChainFin{\chainX}{\infty}}{(\ocState_\sqsi, 1)}(\cyl{\histPart_{1\to\counterUB}}) \\
    & = \sum_{\hist_1\in\histPart_{1\to\counterUB-1}}\sum_{\hist_2\in\histPart_{\counterUB-1\to\counterUB}}\probaMCverb{\ocChainFin{\chainX}{\infty}}{(\ocState_\sqsi, 1)}(\cyl{\histConcat{\hist_1}{\hist_2}}) \\
    & = \left(\sum_{\hist_1\in\histPart_{1\to\counterUB-1}}\probaMCverb{\ocChainFin{\chainX}{\infty}}{(\ocState_\sqsi, 1)}(\cyl{\hist_1})\right)\cdot\left(\sum_{\hist_2\in\histPart_{\counterUB-1\to\counterUB}}\probaMCverb{\ocChainFin{\chainX}{\infty}}{(\ocState_\sqsi, \counterUB-1)}(\cyl{\hist_2})\right) \\
    & = \eleB{\counterUB-1}\cdot\eleD{\counterUB}.
  \end{align*}
  This proves the first claim.

  For the second claim, we show that the sequence $\seqD$ is increasing and convergent, and that $\lim_{\counterUB\to\infty}\eleD{\counterUB}\leq\frac{\sqsm}{\sqsm+1}$.
  Let us prove that $\seqD$ is increasing.
  Let $\counterUB\geq 2$.
  We consider the mapping $f_{+1}\colon\histPart_{\counterUB-1\to\counterUB}\to\histPart_{\counterUB\to\counterUB+1}$ that increases all counter values along a history by $1$.
  This mapping is injective.
  Furthermore, for all $\hist\in\histPart_{\counterUB-1\to\counterUB}$, we have $\probaMCverb{\ocChainFin{\chainX}{\infty}}{(\ocState_\sqsi, \counterUB-1)}(\cyl{\hist}) = \probaMCverb{\ocChainFin{\chainX}{\infty}}{(\ocState_\sqsi, \counterUB)}(\cyl{f_{+1}(\hist)})$.
  It follows that
  \[\eleD{\counterUB} =
    \probaMCverb{\ocChainFin{\chainX}{\infty}}{(\ocState_\sqsi, \counterUB-1)}(\cyl{\histPart_{\counterUB-1\to\counterUB}})  \leq
    \probaMCverb{\ocChainFin{\chainX}{\infty}}{(\ocState_\sqsi, \counterUB)}(\cyl{\histPart_{\counterUB\to\counterUB+1}}) =
    \eleD{\counterUB+1}.\]
  This shows that $\seqD$ is increasing.

  The sequence $\seqD$ is bounded and increasing, thus it converges.
  We prove that $\lim_{\counterUB\to\infty}\eleD{\counterUB}=\frac{\sqsm}{\sqsm+\sqrt{\sqsx_\sqsi}}$.
  To this end, we establish an inductive relation on the elements of this sequence: we prove that for all $\counterUB\geq 2$, we have $\eleD{\counterUB+1} = \frac{1}{2} + (1-\frac{\sqsx_\sqsi}{\sqsm^2})\eleD{\counterUB}\cdot \eleD{\counterUB+1}$.
  Let $\counterUB\geq 2$.
  By separating histories for which a counter increment occurs first from those for which a counter decrement occurs first, we obtain that
  \[\eleD{\counterUB+1} =
    \frac{1}{2} +
    \frac{1}{2}\cdot \left(1-\frac{\sqsx_\sqsi}{\sqsm^2}\right)\cdot
    \probaMCverb{\ocChainFin{\chainX}{\infty}}{(\ocState_\sqsi, \counterUB-1)}(\cyl{\histPart_{\counterUB-1\to\counterUB+1}}).\]
  We obtain $\eleD{\counterUB+1} = \frac{1}{2} + (1-\frac{\sqsx_\sqsi}{\sqsm^2})\eleD{\counterUB-1}\cdot \eleD{\counterUB}$ by observing that elements of $\histPart_{\counterUB-1\to\counterUB+1}$ can be written as concatenations of elements of $\histPart_{\counterUB-1\to\counterUB}$ and $\histPart_{\counterUB\to\counterUB+1}$ and following the same reasoning as for $\eleB{\counterUB}$ above.

  By taking the limits on both sides of the above inductive relation, we obtain that $\lim_{\counterUB\to\infty}\eleD{\counterUB} = \frac{1}{2} + (1-\frac{\sqsx_\sqsi}{\sqsm^2})\cdot(\lim_{\counterUB\to\infty}\eleD{\counterUB})^2$.
  If $\sqsm=1$, then $\sqsx_\sqsi=1$ (inputs are positive) and we directly obtain $\lim_{\counterUB\to\infty}\eleD{\counterUB} = \frac{1}{2} = \frac{\sqsm}{\sqsm+\sqrt{\sqsx_\sqsi}}$.
  Otherwise, if $\sqsm\geq 2$, we have $\frac{\sqsx_\sqsi}{\sqsm^2}<1$ and we can solve a quadratic equation to deduce that $\lim_{\counterUB\to\infty}\eleD{\counterUB}\in\{\frac{\sqsm}{\sqsm +\sqrt{\sqsx_\sqsi}}, \frac{\sqsm}{\sqsm - \sqrt{\sqsx_\sqsi}}\}$.
  It follows from $\frac{\sqsm}{\sqsm - \sqrt{\sqsx_\sqsi}}> 1$ and $\seqD$ being a sequence of probabilities that $\lim_{\counterUB\to\infty}\eleD{\counterUB}=\frac{\sqsm}{\sqsm +\sqrt{\sqsx_\sqsi}}$.
  To end the proof of the second claim, we observe that since $\seqD$ is increasing and $\sqsx_\sqsi\geq 1$, we have, for all $\counterUB\geq 2$, $\eleD{\counterUB}\leq\frac{\sqsm}{\sqsm+\sqrt{\sqsx_\sqsi}}\leq\frac{\sqsm}{\sqsm+1}$.

  We now combine the two claims to provide an inductive proof that $\eleB{\counterUB}\leq(\frac{\sqsm}{\sqsm+1})^{\counterUB-1}$ for all $\counterUB\geq 2$.
  For $\counterUB=2$, we have $\eleB{2}=\frac{1}{2}\leq \frac{\sqsm}{\sqsm+1}$ (because $\sqsm\geq 1$).
  We now assume that $\eleB{\counterUB}\leq(\frac{\sqsm}{\sqsm+1})^{\counterUB-1}$ holds.
  Via the two claims, we conclude that $\eleB{\counterUB+1} = \eleB{\counterUB}\cdot\eleD{\counterUB+1}\leq(\frac{\sqsm}{\sqsm+1})^{\counterUB}$.
\end{proof}

The following result is a technical inequality required to show that the candidate for $\counterUB$ for the reduction is well-chosen.
We separate it from the main proof for the sake of clarity.
\begin{lemma}\label{lemma:sqrt-sum:bound:ub:inequality}
  It holds that $\frac{1}{\log_2(\frac{\sqsm+1}{\sqsm})}\leq\sqsm$.
\end{lemma}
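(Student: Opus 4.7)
The plan is to rewrite the inequality in a form that exposes a well-known elementary fact, and then verify that fact by a one-line binomial expansion.

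First I would multiply both sides by the positive quantity $\log_2(\frac{\sqsm+1}{\sqsm})$ (note $\sqsm\geq 1$, so $\frac{\sqsm+1}{\sqsm}>1$ and the logarithm is strictly positive), obtaining the equivalent inequality $1 \leq \sqsm\cdot\log_2\!\bigl(1+\tfrac{1}{\sqsm}\bigr)$. Pulling the factor $\sqsm$ inside as an exponent, this is in turn equivalent to $\log_2\!\bigl((1+\tfrac{1}{\sqsm})^\sqsm\bigr) \geq 1$, i.e., to
\[
\bigl(1+\tfrac{1}{\sqsm}\bigr)^{\sqsm} \geq 2.
\]

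The second step is to establish the last inequality. Since $\sqsm$ is a positive integer, the binomial theorem gives
\[
\bigl(1+\tfrac{1}{\sqsm}\bigr)^{\sqsm} = \sum_{k=0}^{\sqsm}\binom{\sqsm}{k}\frac{1}{\sqsm^{k}} = 1 + \sqsm\cdot\tfrac{1}{\sqsm} + \sum_{k=2}^{\sqsm}\binom{\sqsm}{k}\frac{1}{\sqsm^{k}} \geq 2,
\]
because every term in the remaining sum is non-negative. This completes the argument.

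There is essentially no obstacle here; the only thing to be careful about is ensuring that $\sqsm$ is indeed a positive integer (so that the binomial expansion has finitely many non-negative terms), which is guaranteed by $\sqsm = \max_{1\leq\sqsi\leq\sqsn}\sqsx_\sqsi \geq 1$ under the standing assumption that the square-root-sum inputs are positive. The alternative route via the classical monotonicity of $(1+\tfrac{1}{n})^n$ with base value $2$ at $n=1$ would also work, but the direct binomial bound keeps the proof self-contained.
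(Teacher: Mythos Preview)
Your proof is correct. The reformulation to $(1+\tfrac{1}{\sqsm})^{\sqsm}\geq 2$ is sound, and the binomial expansion cleanly dispatches it because $\sqsm$ is a positive integer here.

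The paper takes a different route: it rewrites the inequality as $1+\tfrac{1}{\sqsm}\geq 2^{1/\sqsm}$ and then proves this for all real $z\geq 1$ by studying the function $f(z)=1+\tfrac{1}{z}-2^{1/z}$ via calculus, showing it is increasing then decreasing with $f(1)=0$ and $\lim_{z\to\infty}f(z)=0$. Your argument is shorter and more elementary, but it leans on $\sqsm$ being a positive integer so that the binomial sum is finite with non-negative terms; the paper's calculus argument establishes the inequality uniformly over real $\sqsm\geq 1$. In the context of this lemma only the integer case is needed, so your approach is perfectly adequate and arguably preferable for its simplicity.
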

\begin{proof}
  The inequality above is equivalent to $1+\frac{1}{\sqsm}\geq 2^{\frac{1}{\sqsm}}$.
  To prove this equivalent formulation, we show that the function $f\colon\coInt{1}{+\infty}\to\IR\colon z\mapsto 1 + \frac{1}{z} - 2^{\frac{1}{z}}$ is non-negative.
  Let $z_0 = -\log_2(\ln(2))^{-1} > 1$.
  We show that $f$ is increasing on $\ccInt{1}{z_0}$ and decreasing on $\coInt{z_0}{+\infty}$.
  This property implies that $f$ is non-negative.
  Indeed, on the one hand, we have $f(1) = 0$, implying that $f$ is non-negative on $\ccInt{1}{z_0}$.
  On the other hand, because $\lim_{z\to+\infty}f(z) = 0$, $f$ is necessarily non-negative on the interval $\coInt{z_0}{+\infty}$.

  We study the sign of the derivative of $f$ to determine its intervals of monotonicity.
  We define $g\colon\coInt{1}{+\infty}\to\IR$ such that, for all $z\in\coInt{1}{+\infty}$, $g(z) = \ln(2)\cdot 2^{\frac{1}{z}}-1$.
  For all $z\in\coInt{1}{+\infty}$, we have $f'(z) = \frac{1}{z^2}g(z)$.
  We obtain that for all $z\in\coInt{1}{+\infty}$, the sign of $f'(z)$ depends only on the sign of $g(z)$.
  The function $g$ is a decreasing function, because $\ooInt{1}{+\infty}\to\IR\colon z\mapsto 2^{\frac{1}{z}}$ is a restriction of the composition of the decreasing function $\ooInt{0}{+\infty}\to\IR\colon z\mapsto\frac{1}{z}$ and the increasing function $\ooInt{0}{+\infty}\to\IR\colon z\mapsto 2^z$.
  Because $g(1) = 2\ln(2) - 1>0$ and $g(z_0) = 0$, it follows that $f'$ is positive on the interval $\ocInt{1}{z_0}$ and negative on the interval $\ooInt{z_0}{+\infty}$.
  This implies the desired property for $f$, ending the proof.
\end{proof}

We can now show that choosing $\counterUB=2^\sqsn\sqsm\cdot(\lambda+1)+\sqsn\sqsm^2+1$ (where $\lambda$ is the sum of bit-sizes of the inputs to our square-root sum instance) is sufficient to achieve the precision given by Lemma~\ref{lemma:sqrt-sum:error} that ensures the validity of the reduction.

\begin{lemma}\label{lemma:sqrt-sum:bound:ub:error}
  Let $\lambda\in\INpos$.
  For all $\counterUB \geq 2^\sqsn\sqsm\cdot(\lambda+1)+\sqsn\sqsm^2+1$, we have $\eleError{\counterUB}\leq \frac{1}{\sqsn\sqsm}2^{-2^\sqsn(\lambda+1)}$.
\end{lemma}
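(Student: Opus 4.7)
The plan is to combine the domination bound of Lemma~\ref{lemma:sqrt-sum:bound:domination} with the inequality of Lemma~\ref{lemma:sqrt-sum:bound:ub:inequality} and then verify the arithmetic. By Lemma~\ref{lemma:sqrt-sum:bound:domination}, we have $\eleError{\counterUB}\leq(\frac{\sqsm}{\sqsm+1})^{\counterUB-1}$, so it suffices to show that $(\frac{\sqsm}{\sqsm+1})^{\counterUB-1}\leq\frac{1}{\sqsn\sqsm}2^{-2^\sqsn(\lambda+1)}$ for every $\counterUB\geq 2^\sqsn\sqsm(\lambda+1)+\sqsn\sqsm^2+1$. Taking $\log_2$, this is equivalent to
\[
(\counterUB-1)\log_2\Bigl(\frac{\sqsm+1}{\sqsm}\Bigr)\geq\log_2(\sqsn\sqsm)+2^\sqsn(\lambda+1),
\]
i.e., it suffices to show $\counterUB-1\geq\frac{\log_2(\sqsn\sqsm)+2^\sqsn(\lambda+1)}{\log_2((\sqsm+1)/\sqsm)}$.

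Next I would invoke Lemma~\ref{lemma:sqrt-sum:bound:ub:inequality}, which gives $\frac{1}{\log_2((\sqsm+1)/\sqsm)}\leq\sqsm$, to reduce the task to proving
\[
\counterUB-1\geq\sqsm\log_2(\sqsn\sqsm)+2^\sqsn\sqsm(\lambda+1).
\]
Since by hypothesis $\counterUB-1\geq 2^\sqsn\sqsm(\lambda+1)+\sqsn\sqsm^2$, it is enough to verify that $\sqsn\sqsm^2\geq\sqsm\log_2(\sqsn\sqsm)$, that is $\sqsn\sqsm\geq\log_2(\sqsn\sqsm)$. This is immediate from the elementary inequality $z\geq\log_2 z$ for all $z\geq 1$, which applies here because $\sqsn,\sqsm\geq 1$ by assumption.

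The argument is mostly bookkeeping; no step is really an obstacle, since both nontrivial ingredients (the geometric-decay bound on $\eleError{\counterUB}$ and the calculus inequality on $1/\log_2((\sqsm+1)/\sqsm)$) have already been established. The only mild subtlety will be making sure the chain of inequalities is presented in the right direction, since we pass between bounds on $(\sqsm/(\sqsm+1))^{\counterUB-1}$ (where larger $\counterUB$ helps) and lower bounds on $\counterUB-1$; writing out the $\log_2$ rearrangement explicitly will keep the monotonicity transparent and make the final numerical check a one-line observation.
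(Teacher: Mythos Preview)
Your proposal is correct and follows essentially the same approach as the paper: invoke Lemma~\ref{lemma:sqrt-sum:bound:domination}, take $\log_2$, apply Lemma~\ref{lemma:sqrt-sum:bound:ub:inequality}, and finish with an elementary inequality. The only cosmetic difference is that the paper applies the inequality $2^{-\sqsn\sqsm}\leq\frac{1}{\sqsn\sqsm}$ at the start (before taking logs) while you apply the equivalent $\sqsn\sqsm\geq\log_2(\sqsn\sqsm)$ at the end; the two are the same step placed differently.
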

\begin{proof}
  We claim that it is sufficient to show that, for all $\counterUB \geq 2^\sqsn\sqsm\cdot(\lambda+1)+\sqsn\sqsm^2+1$,
  \begin{equation}\label{equation:sqrt-sum:bound:ub:error:alt}
    \left(\frac{\sqsm}{\sqsm+1}\right)^{\counterUB-1}\leq
    2^{-2^\sqsn(\lambda+1)-\sqsn\sqsm}.
  \end{equation}
  We observe that $\sqsm$, $\sqsn\in\INpos$ implies that $ 2^{-\sqsn\sqsm}\leq\frac{1}{\sqsn\sqsm}$.
  Combining this with Lemma~\ref{lemma:sqrt-sum:bound:domination} implies that, for all $\counterUB\in\INpos$ such that Equation~\eqref{equation:sqrt-sum:bound:ub:error:alt} holds,
  \[\eleError{\counterUB} \leq
    \left(\frac{\sqsm}{\sqsm+1}\right)^{\counterUB-1}\leq
    2^{-2^\sqsn(\lambda+1)-\sqsn\sqsm}\leq
    \frac{1}{\sqsn\sqsm}2^{-2^\sqsn(\lambda+1)}.\]
  This guarantees that establishing Equation~\eqref{equation:sqrt-sum:bound:ub:error:alt} for the relevant upper bounds $\counterUB$ is sufficient.

  For all $\counterUB\in\INpos$, Equation~\eqref{equation:sqrt-sum:bound:ub:error:alt} is equivalent  (by applying $\log_2$ on both sides then using algebraic manipulations) to
    \begin{equation}\label{equation:sqrt-sum:bound:ub:error:2}
      \counterUB-1\geq \frac{2^\sqsn(\lambda+1)+\sqsn\sqsm}{\log_2\left(\frac{\sqsm+1}{\sqsm}\right)}.
    \end{equation}
    By Lemma~\ref{lemma:sqrt-sum:bound:ub:inequality}, Equation~\eqref{equation:sqrt-sum:bound:ub:error:2} is guaranteed to hold whenever $\counterUB-1\geq\sqsm(2^\sqsn(\lambda+1)+\sqsn\sqsm)$, and thus the same applies to Equation~\eqref{equation:sqrt-sum:bound:ub:error:alt}.
    This ends the proof of this lemma.
\end{proof}

We can now formulate and prove our square-root-sum hardness result for our OEIS-related problems in bounded OC-MDPs.

\begin{theorem}\label{verification:hardness:bounded}
  The interval strategy verification, fixed-interval realisability and parameterised realisability problems for state-reachability and selective termination are square-root-sum-hard in bounded OC-MDPs.
\end{theorem}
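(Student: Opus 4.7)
The plan is to adapt the reduction from Theorem~\ref{verification:hardness:unbounded} by equipping the OC-MDP $\chainX$ with a carefully chosen counter upper bound $\counterUB$. Given inputs $\sqsx_1, \ldots, \sqsx_\sqsn$ and $\sqsy$ to the square-root-sum problem, we first use the polynomial-time procedure of~\cite{DBLP:journals/jsc/BorodinFHT85} to decide whether $\sum_\sqsi \sqrt{\sqsx_\sqsi} = \sqsy$. If equality holds, we output a fixed trivially positive instance. Otherwise, we output $\chainX$ equipped with the bound $\counterUB = 2^\sqsn\sqsm(\lambda+1) + \sqsn\sqsm^2 + 1$, threshold $\thresProba = \sqsy/(\sqsn\sqsm)$, initial configuration $(\ocState_\init, 1)$, and target $\ocStateC$. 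This bound admits a polynomial-size binary representation since $\sqsn$ is the number of inputs (and so $2^\sqsn$ has bit-length $\sqsn$).

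Because $\ocmdpFin{\chainX}{\counterUB}$ has a single available action in every state, it admits a unique memoryless strategy, which is trivially an OEIS for any interval partition. Hence the verification, fixed-interval realisability, and parameterised realisability problems all coincide on this instance (for the parameterised variant, we supply any parameters $\intNum, \intSize$ for which at least one compatible partition of $\integerInterval{1,\counterUB-1}$ exists, e.g.~$\intNum = \lceil(\counterUB-1)/\intSize\rceil$ and $\intSize = \counterUB-1$, so that a witness strategy exists iff the unique strategy satisfies the threshold). Furthermore, since $\ocStateC$ is absorbing with a self-loop of weight $-1$, reaching $\ocStateC$ and terminating in $\ocStateC$ occur with the same probability. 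Thus it suffices to establish the equivalence $\probaMCverb{\ocChainFin{\chainX}{\counterUB}}{(\ocState_\init, 1)}(\selectiveTermination{\ocStateC}) \geq \thresProba \iff \sum_\sqsi\sqrt{\sqsx_\sqsi}\geq\sqsy$.

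By Theorem~\ref{theorem:reduction:probability:chain} and the definition of $\eleError{\counterUB}$, the bounded termination probability equals $\frac{1}{\sqsn\sqsm}\sum_\sqsi\sqrt{\sqsx_\sqsi} - \eleError{\counterUB}$. Lemma~\ref{lemma:sqrt-sum:bound:ub:error} (applied with our choice of $\counterUB$) guarantees $\eleError{\counterUB} \leq \frac{1}{\sqsn\sqsm} 2^{-2^\sqsn(\lambda+1)}$, while Lemma~\ref{lemma:sqrt-sum:error} guarantees $|\sum_\sqsi\sqrt{\sqsx_\sqsi} - \sqsy| > 2^{-2^\sqsn(\lambda+1)}$ under our case assumption. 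If $\sum_\sqsi\sqrt{\sqsx_\sqsi} > \sqsy$, combining these two bounds yields
\[
\frac{1}{\sqsn\sqsm}\sum_\sqsi\sqrt{\sqsx_\sqsi} - \eleError{\counterUB} > \frac{1}{\sqsn\sqsm}\sqsy + \frac{1}{\sqsn\sqsm}2^{-2^\sqsn(\lambda+1)} - \eleError{\counterUB} \geq \thresProba,
\]
and if $\sum_\sqsi\sqrt{\sqsx_\sqsi} < \sqsy$, then the termination probability is strictly below $\frac{1}{\sqsn\sqsm}\sqsy = \thresProba$ since $\eleError{\counterUB}\geq 0$.

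The main technical obstacle, namely the non-trivial quantitative gap estimate, has already been absorbed into Lemmas~\ref{lemma:sqrt-sum:error} and~\ref{lemma:sqrt-sum:bound:ub:error}; the remaining work is purely the case analysis above together with verifying that all constructed objects have polynomial encoding. The only subtle point is the choice of $\counterUB$: it must be large enough for Lemma~\ref{lemma:sqrt-sum:bound:ub:error} to apply yet have a polynomial-size representation, which is precisely why $\sqsn$ (rather than its binary encoding) appears in the exponent of the bound used to beat the doubly-exponentially small gap from Lemma~\ref{lemma:sqrt-sum:error}.
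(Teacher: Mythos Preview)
Your proposal is correct and follows essentially the same approach as the paper: test equality first via~\cite{DBLP:journals/jsc/BorodinFHT85}, otherwise equip $\chainX$ with the bound $\counterUB = 2^\sqsn\sqsm(\lambda+1)+\sqsn\sqsm^2+1$ and threshold $\sqsy/(\sqsn\sqsm)$, and invoke Lemmas~\ref{lemma:sqrt-sum:error} and~\ref{lemma:sqrt-sum:bound:ub:error} for correctness. The paper's only additional specificity is naming the partition $\{\integerInterval{1,\counterUB-1}\}$ for the fixed-interval case and the parameters $\intNum=1$, $\intSize=\counterUB-1$ for the parameterised case, which coincides with your choice.
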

\begin{proof}
  The reduction only differ slightly between the three considered problem; we discuss the verification problem and comment on the additional steps for the other two problems below.
  The reduction is the same for state reachability and selective termination, and thus we only mention the target in the following without specifying the objective.
  We consider inputs $\sqsx_1$, \ldots, $\sqsx_\sqsn$ and $\sqsy$ to the square-root-sum problem.
  We assume that for all $\sqsi\in\integerInterval{1, \sqsn}$, $\sqsx_\sqsi\neq 0$.
  Let $\sqsm = \max_{1\leq\sqsi\leq\sqsn}\sqsx_\sqsi$ and $\sqsxVect = (\sqsx_1,\ldots, \sqsx_\sqsn)$.
  We describe the reduction and prove its correctness.
  
  First, we check in polynomial time whether $\sum_{\sqsi=1}^\sqsn\sqrt{\sqsx_\sqsi}=\sqsy$.
  If this equality holds, we construct an OC-MDP $\ocmdp$ with a single state $\ocState$ and a single action $\ocAction$ where the self-loop of $\ocState$ labelled by $\ocAction$ has weight $-1$.
  We reduce our instance of the square-root-sum problem to the verification problem on $\ocmdp$ with counter upper bound $\counterUB=2$, initial configuration $(\ocState, 1)$, target $\{\ocState\}$ and threshold $\thresProba=1$.
  The reduction is trivially correct in this case and is in polynomial time.

  If $\sum_{\sqsi=1}^\sqsn\sqrt{\sqsx_\sqsi}\neq\sqsy$, we construct the OC-MDP $\chainX$.
  Let $\counterUB = 2^\sqsn\sqsm\cdot(\lambda+1)+\sqsn\sqsm^2+1$ where $\lambda$ is the sum of bit-sizes of the inputs of the square-root-sum instance.
  We reduce our instance of the square-root-sum problem to the verification problem on $\chainX$ with counter upper bound $\counterUB$, initial configuration $(\ocState_\init, 1)$, target $\{\ocStateC\}$ and threshold $\thresProba = \frac{\sqsy}{\sqsn\sqsm}$.
  This reduction is in polynomial time, and thus it remains to prove its correctness.

  Let $\eleError{\counterUB} = \probaMCverb{\ocChainFin{\chainX}{\infty}}{(\ocState_\init, 1)}(\reach{\ocStateSpace\times\{\counterUB\}})$.
  It follows from Theorem~\ref{theorem:reduction:probability:chain} that we must show that $\frac{1}{\sqsn\sqsm}\sum_{\sqsi=1}^\sqsn\sqrt{\sqsx_\sqsi} - \eleError{\counterUB}\geq\thresProba$ if and only if $\sum_{\sqsi=1}^\sqsn\sqrt{\sqsx_\sqsi}\geq\sqsy$.
  It is direct that $\frac{1}{\sqsn\sqsm}\sum_{\sqsi=1}^\sqsn\sqrt{\sqsx_\sqsi} - \eleError{\counterUB}\geq\thresProba$ implies $\sum_{\sqsi=1}^\sqsn\sqrt{\sqsx_\sqsi}\geq\sqsy$.
  We prove the converse implication.
  Assume that $\sum_{\sqsi=1}^\sqsn\sqrt{\sqsx_\sqsi}\geq\sqsy$.
  By Lemmas~\ref{lemma:sqrt-sum:error} and~\ref{lemma:sqrt-sum:bound:ub:error}, it holds that $\eleError{\counterUB}\leq\frac{1}{\sqsn\sqsm}|\sum_{\sqsi=1}^\sqsn\sqrt{\sqsx_\sqsi}-\sqsy| = \frac{1}{\sqsn\sqsm}(\sum_{\sqsi=1}^\sqsn\sqrt{\sqsx_\sqsi}-\sqsy$).
  We obtain that $\frac{1}{\sqsn\sqsm}\sum_{\sqsi=1}^\sqsn\sqrt{\sqsx_\sqsi} - \eleError{\counterUB}\geq \frac{1}{\sqsn\sqsm}\cdot\sqsy = \thresProba$.
  This shows that the reduction is correct.

  It remains to comment on how to adapt the above reduction to the fixed-interval and parameterised realisability problems.
  Instead of specifying the strategy as an input, we specify the interval partition $\intPart = \integerInterval{1, \counterUB-1}$ for the fixed-interval case, and the parameters $\intNum=1$ for the number of intervals and $\intSize=\counterUB-1$ for the size of intervals in the parameterised case.
  These inputs are such that we check the existence of a well-performing counter-oblivious strategy (with respect to the threshold $\thresProba$ specified above).
\end{proof}

\subsection{\texorpdfstring{$\np$}{NP}-hardness of fixed-interval and parameterised realisability}\label{section:hardness:np}

The goal of this section is to prove that the realisability problem for counter-oblivious strategies is $\np$-hard for the selective termination objective.
This implies the $\np$-hardness of the fixed-interval and parameterised realisability problems: counter-oblivious strategies are single-interval OEISs and are also CISs with a period of one.
We prove this hardness result by a reduction from the problem of deciding if a directed graph has a Hamiltonian cycle.

Let $\graph = (\vertexSet, \edgeSet)$ denote a directed graph where $\vertexSet$ is a finite set of vertices and $\edgeSet\subseteq\vertexSet^2$ is a set of edges.
A \textit{Hamiltonian cycle} of $\graph$ is a simple cycle $\vertex_0\vertex_1\ldots\vertex_\indexLast$ such that $\indexLast=|\vertexSet|$, i.e., a cycle that passes through all vertices exactly once, except the first vertex which is visited twice.
Deciding whether $\graph$ has a Hamiltonian cycle is $\np{}$-complete (e.g.,~\cite{DBLP:books/fm/GareyJ79}).

We sketch a reduction from the problem of deciding if a graph has a Hamiltonian cycle to the counter-oblivious strategy realisability problem for targeted termination.
Let $\graph = (\vertexSet, \edgeSet)$ be a directed graph.
We fix an initial vertex $\vertex_\init$.
We derive an OC-MDP $\ocmdp$ with deterministic transitions from $\graph$ by adding vertices and redirecting transitions.
We add a copy $\vertex_\init'\notin\vertexSet$ of the initial vertex and a fresh absorbing state $\ocState\notin\vertexSet$.
All incoming transitions of $\vertex_\init$ are redirected to $\vertex_\init'$ and the only successor of $\vertex_\init'$ is set to be $\ocState$.
All transitions are given a weight of $-1$.
We assume a counter upper bound $\counterUB\in\{|\vertexSet|+1, \infty\}$ that exceeds the initial counter value chosen below.

We claim that there is a Hamiltonian cycle in $\graph$ if and only if there is a strategy guaranteeing (almost-)sure termination in $\vertex_\init'$ from the initial configuration $(\vertex_\init, [\vertexSet|)$.
Intuitively, all cycles of $\graph$ from $\vertex_\init$ with $\ocCount$ edges (i.e., $\ocCount+1$ vertices) are equivalent to a history from $(\vertex_\init, |\vertexSet|)$ to $(\vertex_\init', |\vertexSet|-\ocCount)$ in $\ocmdpFin{\ocmdp}{\counterUB}$.
If there is a Hamiltonian cycle in $\graph$, because it is simple, we obtain a history of length $|\vertexSet|$ in $\ocmdpFin{\ocmdp}{\counterUB}$ that can be obtained via a pure counter-oblivious strategy, and this strategy provides a positive answer to the realisability problem.
For the converse, we observe that any other simple cycle from $\vertex_\init$ of $\graph$ yields a history terminating in the additional state $\ocState$.
Therefore, if there is no Hamiltonian cycle in $\graph$, all (randomised) counter-oblivious strategies have a history consistent with them that either terminates in $\ocState$ if $\vertex_\init'$ is reached in under $|\vertexSet|$ steps or terminates in $\vertexSet$.

\begin{theorem}\label{theorem:realisability:np-hardness}
  The problem of deciding whether there exists a counter-oblivious (pure or randomised) strategy ensuring almost-sure selective termination is $\np$-hard.
  In particular, the fixed-interval and parameterised realisability problems for selective termination are $\np$-hard.
\end{theorem}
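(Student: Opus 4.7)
The plan is to reduce from the directed Hamiltonian cycle problem. Given a directed graph $\graph = (\vertexSet, \edgeSet)$ and an initial vertex $\vertex_\init \in \vertexSet$, I construct in polynomial time an OC-MDP $\ocmdp = \ocTuple$ with state space $\ocStateSpace = \vertexSet \cup \{\vertex_\init', \ocState\}$, where $\vertex_\init'$ is a fresh copy of $\vertex_\init$ and $\ocState$ is a fresh sink. To each edge $(\vertex, \vertex') \in \edgeSet$ I associate a dedicated action from $\vertex$ that deterministically targets $\vertex'$ if $\vertex' \neq \vertex_\init$ and $\vertex_\init'$ otherwise. A single action from $\vertex_\init'$ deterministically leads to $\ocState$, and $\ocState$ carries a self-loop; any vertex with no outgoing edge in $\edgeSet$ is equipped with a self-loop to avoid deadlocks (such vertices cannot lie on a Hamiltonian cycle). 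All state-action pairs have weight $-1$. The reduction outputs the realisability instance on $\ocmdp$ with initial configuration $(\vertex_\init, |\vertexSet|)$, counter bound $\counterUB \in \{|\vertexSet|+1, \infty\}$, target $\{\vertex_\init'\}$, objective $\selectiveTermination{\vertex_\init'}$, and threshold $\thresProba = 1$.

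The correctness claim is: $\graph$ has a Hamiltonian cycle through $\vertex_\init$ if and only if there exists a counter-oblivious strategy $\strat$ with $\probaGverb{\ocmdpFin{\ocmdp}{\counterUB}}{(\vertex_\init, |\vertexSet|)}{\strat}(\selectiveTermination{\vertex_\init'}) = 1$. For the easy direction, a Hamiltonian cycle $\vertex_\init = \vertex_0, \vertex_1, \ldots, \vertex_{|\vertexSet|-1}, \vertex_0$ yields the pure counter-oblivious strategy that, at each $\vertex_i$, picks the action of the edge $(\vertex_i, \vertex_{(i+1) \bmod |\vertexSet|})$. The induced play from $(\vertex_\init, |\vertexSet|)$ is deterministic, visits each vertex exactly once, and at step $|\vertexSet|$ is redirected to $\vertex_\init'$ just as the counter reaches $0$, so $\selectiveTermination{\vertex_\init'}$ occurs with probability $1$.

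For the converse, assume $\strat$ witnesses realisability. Since every weight is $-1$, the counter reaches $0$ at exactly step $|\vertexSet|$ along every play, so the almost-sure property forces every positive-probability history $\vertex_\init = u_0, u_1, \ldots, u_{|\vertexSet|}$ to end with $u_{|\vertexSet|} = \vertex_\init'$. Because $\vertex_\init$ has no incoming edges in $\ocmdp$, $\vertex_\init'$ deterministically leads to $\ocState$, and $\ocState$ is absorbing, the intermediate states $u_1, \ldots, u_{|\vertexSet|-1}$ must all lie in $\vertexSet \setminus \{\vertex_\init\}$.

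The main obstacle is arguing that $u_0, u_1, \ldots, u_{|\vertexSet|-1}$ are pairwise distinct, so they exhaust $\vertexSet$ and the final redirected transition closes a Hamiltonian cycle. I would argue by contradiction: if $u_i = u_j$ for some $1 \leq i < j \leq |\vertexSet| - 1$, then memorylessness and counter-obliviousness of $\strat$ ensure that the action used at step $j$ to move to $u_{j+1}$ also lies in $\supp{\strat(u_i)}$, so the shortcut history $u_0, \ldots, u_i, u_{j+1}, \ldots, u_{|\vertexSet|}$ has positive probability under $\strat$. This shortcut reaches $\vertex_\init'$ at step $|\vertexSet| - (j-i)$ with positive remaining counter $j - i$, is then deterministically forced through $\ocState$, and at step $|\vertexSet|$ terminates at $\ocState$ rather than at $\vertex_\init'$, contradicting the almost-sure property. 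The $\np$-hardness of the fixed-interval and parameterised realisability variants then follows at once: counter-oblivious strategies are the single-interval OEISs on $\{\integerInterval{1, \counterUB-1}\}$ and the period-one CISs.
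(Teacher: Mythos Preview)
Your proposal is correct and follows essentially the same reduction as the paper: the same OC-MDP construction from a directed graph, the same initial configuration $(\vertex_\init, |\vertexSet|)$, target $\vertex_\init'$, and threshold $1$. The only cosmetic difference is in the converse direction: when a positive-probability terminating history has a repeated vertex $u_i = u_j$, the paper \emph{extends} the history by looping the sub-cycle $u_i \to \cdots \to u_j$ additional times so that termination occurs inside $\vertexSet$, whereas you \emph{shortcut} the sub-cycle so that $\vertex_\init'$ is reached with positive counter and termination then occurs in $\ocState$; both deviations exploit counter-obliviousness in the same way and yield the same contradiction.
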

\begin{proof}
  We provide a reduction from the $\np$-complete problem of deciding whether a directed graph contains a Hamiltonian cycle.
  We fix a directed graph $\graph=(\vertexSet, \edgeSet)$ and an initial vertex $\vertex_\init\in\vertexSet$ for the remainder of the proof.

  We consider $\ocmdp = \ocTuple$ such that $\ocStateSpace = \vertexSet\cup\{\vertex_\init', \ocState\}$ (where $\vertex'_\init$, $\ocState\notin\vertexSet$), $\mdpActionSpace = \vertexSet$.
  The transition function is deterministic: we view it as a function $\ocTrans\colon\ocStateSpace\times\ocActionSpace\to\ocStateSpace$.
  We formalise $\ocTrans$ as follows.
  First, for all $(\vertex, \vertex')\in\edgeSet$ such that $\vertex'\neq\vertex_\init$, we let $\ocTrans(\vertex, \vertex') = \vertex'$.
  Second, for all $\vertex\in\vertexSet$ such that $(\vertex, \vertex_\init)\in\edgeSet$, we let $\ocTrans(\vertex, \vertex_\init) = \vertex_\init'$.
  Finally, for all $\vertex\in\vertexSet$, we let $\ocTrans(\vertex_\init', \vertex) = \ocTrans(\ocState, \vertex) = \ocState$.
  All weights are $-1$.
  Recall that counter-oblivious strategies can be seen as functions $\strat\colon\ocStateSpace\to\ocActionSpace$.

  We show that the three following assertions are equivalent:
  \begin{enumerate}
  \item there exists a Hamiltonian cycle of $\graph$;\label{item:fixed-interval:np-hardness:1}
  \item there exists a pure counter-oblivious strategy $\strat$ of $\ocmdp$ such that $\probaG{(\vertex_\init, |\vertexSet|)}{\strat}(\selectiveTermination{\vertex_\init'})=1$;\label{item:fixed-interval:np-hardness:2}
  \item there exists a counter-oblivious strategy $\strat$ of $\ocmdp$ such that $\probaG{(\vertex_\init, |\vertexSet|)}{\strat}(\selectiveTermination{\vertex_\init'})=1$;\label{item:fixed-interval:np-hardness:3}
  \end{enumerate}
  We prove that Item~\ref{item:fixed-interval:np-hardness:1} implies Item~\ref{item:fixed-interval:np-hardness:2} and that Item~\ref{item:fixed-interval:np-hardness:3} implies Item~\ref{item:fixed-interval:np-hardness:1}.
  The implication from Item~\ref{item:fixed-interval:np-hardness:2} to Item~\ref{item:fixed-interval:np-hardness:3} is direct.

  We assume that there exists a Hamiltonian cycle $\vertex_0\vertex_1\ldots\vertex_{|V|}$ of $\graph$.
  Assume without loss of generality that $\vertex_0 = \vertex_\init$.
  It is easy to see that the pure counter-oblivious strategy $\strat$ such that $\strat(\vertex_\indexPosition) = \vertex_{\indexPosition+1}$ for all $\indexPosition\in\integerInterval{|\vertexSet|-1}$ ensures that $\probaG{(\vertex_\init, |\vertexSet|)}{\strat}(\selectiveTermination{\vertex_\init'})=1$.
  The strategy $\strat$ is well-defined because $\vertex_0\vertex_1\ldots\vertex_{|V|-1}$ is a simple path.
  This shows that Item~\ref{item:fixed-interval:np-hardness:1} implies Item~\ref{item:fixed-interval:np-hardness:2}.

  We now prove the contrapositive of the implication from Item~\ref{item:fixed-interval:np-hardness:3} to Item~\ref{item:fixed-interval:np-hardness:1}.
  Assume that there is no Hamiltonian cycle in $\graph$.
  Let $\strat$ be a counter-oblivious strategy.
  We show that termination occurs in a state other than $\vertex_\init'$ with positive probability.
  If $\probaG{(\vertex_\init, |\vertexSet|)}{\strat}(\selectiveTermination{\vertex_\init'})=0$, then the claim is direct.
  We assume that $\probaG{(\vertex_\init, |\vertexSet|)}{\strat}(\selectiveTermination{\vertex_\init'})>0$.
  Thus, there exists a history $\hist = (\vertex_0, |\vertexSet|)\vertex_1(\vertex_1, |\vertexSet|-1)\ldots (\vertex_{|\vertexSet|-1}, 1)\vertex_\init(\vertex_\init', 0)$ consistent with $\strat$ such that $\vertex_0 = \vertex_\init$.
  Since there are no Hamiltonian cycles in $\graph$ and $\hist$ induces a cycle of $\graph$, there must be a state other than $\vertex_\init$ (due to the structure of $\ocmdp$) that is repeated in this induced cycle.
  Let $0 < \indexPosition < \indexPosition' < |\vertexSet|$ such that $\vertex_\indexPosition = \vertex_{\indexPosition'}$.
  It is easy to see that the history starting in $(\vertex_0, |\vertexSet|)$ that follows $\hist$ up to index $\indexPosition'$ and then loops in the cycle between $\vertex_\indexPosition$ and $\vertex_{\indexPosition'}$ until termination is consistent with $\strat$ and therefore $\probaG{(\vertex_\init, |\vertexSet|)}{\strat}(\selectiveTermination{\vertex_\init'})<1$.
  This ends the proof that Item~\ref{item:fixed-interval:np-hardness:3} implies Item~\ref{item:fixed-interval:np-hardness:1}.

  To conclude, we note that $\ocmdp$ can be constructed in polynomial time.
  This ends our $\np$-hardness proof.
\end{proof}

\bibliography{master_references}

\newpage

\appendix

\section{Bounds on sufficient approximations for square-root sum}\label{appendix:square root sum}
The goal of this section is to prove Lemma~\ref{lemma:sqrt-sum:error}.
Let $\sqsx_1, \ldots, \sqsx_\sqsn\in\IN$ and $\sqsy\in\IN$ be fixed for the remainder of this section.
This section has three parts.
First, we recall some field-theoretic notions that are required for the proof.
We refer the reader to~\cite{lang2012algebra} for a reference on field theory.
Second, we show that all roots of the minimal polynomial of $\sum_{\sqsi=1}^\sqsn\sqrt{\sqsx_\sqsi}-\sqsy$ are of a certain form.
We end with a proof of Lemma~\ref{lemma:sqrt-sum:error}.

\subparagraph*{Field-theoretic background.}
A complex number is \textit{algebraic} (over $\IQ$) if it is the root of a polynomial with rational coefficients.
An \textit{algebraic extension} of $\IQ$ is a field $K$ such that $\IQ\subseteq K\subseteq\IC$ such that all elements of $K$ are algebraic.
Let $K\subseteq L\subseteq\IC$ be algebraic extensions of $\IQ$.
We write $L/K$ as shorthand to mean that $L$ is an extension of $K$.
The \textit{minimum polynomial} of $\alpha\in L$ over $K$ is the unique monic polynomial with coefficients in $K$ of minimum degree that has $\alpha$ as a root.
An algebraic number is an \textit{algebraic integer} if its minimal polynomial over $\IQ$ has integer coefficients.
Algebraic integers form a sub-ring of the algebraic closure of $\IQ$.

Given algebraic numbers $\alpha_1, \ldots, \alpha_\indexPosition$, we let $K(\alpha_1, \ldots, \alpha_\indexPosition)$ be the smallest algebraic extension of $K$ containing $\alpha_1$, \ldots, $\alpha_\indexPosition$.
The \textit{degree} of the extension $L/K$, denoted by $[L:K]$, is the dimension of $L$ as a vector space over $K$, and if $L = K(\alpha)$, $[L:K]$ is the degree of the minimal polynomial of $\alpha$ over $K$.
Degrees of successive extensions multiply, in the sense that, given $F/L$, it holds that $[F:K] = [F:L]\cdot [L:K]$.

An extension $L/K$ is \textit{Galois} if and only if any embedding of $K$ in the algebraic closure of $\IQ$ induces an automorphism of $K$.
Given a polynomial $P\in K[X]$, the \textit{splitting field} of $P$ is the smallest algebraic extension of $K$ that contains all of the complex roots of $P$.
If $L/K$ is of finite degree, then $L/K$ is Galois if and only if it is the splitting field of a polynomial in $K[X]$.
If $L/K$ is Galois, the Galois group $\mathsf{Gal}(L/K)$ of $L/K$ is the group formed by the (field) automorphisms of $L$ whose restriction to $K$ is the identity function over $K$.
The order of the Galois group of a Galois extension is the degree of the extension.

\subparagraph*{Minimal polynomials.}
The following lemma provides a set that is guaranteed to contain all roots of the minimal polynomial of $\sum_{\sqsi=1}^\sqsn\sqrt{\sqsx_\sqsi}-\sqsy$.
Through this lemma, we can bound the coefficients of the minimal polynomial of $\sum_{\sqsi=1}^\sqsn\sqrt{\sqsx_\sqsi}-\sqsy$, which is the crux of the proof of Lemma~\ref{lemma:sqrt-sum:error}.

\begin{lemma}\label{lemma:sqrt-sum:roots}
  Let $\beta = \sum_{\sqsi=1}^\sqsn\sqrt{\sqsx_\sqsi} - \sqsy$.
  The minimal polynomial of $\beta$ over $\IQ$ has at most $2^\sqsn$ roots and all are included in the set $\{\sum_{\sqsi=1}^{\sqsn}(-1)^{b_\sqsi}\sqrt{\sqsx_\sqsi} - \sqsy\mid(b_1, \ldots, b_\sqsn)\in\{0, 1\}^\sqsn\}$.
\end{lemma}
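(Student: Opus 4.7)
The plan is to prove the statement by applying standard Galois theory to the splitting field
\[K = \IQ(\sqrt{\sqsx_1}, \ldots, \sqrt{\sqsx_\sqsn}).\]
First, I would observe that $K/\IQ$ is a finite Galois extension, since $K$ is the splitting field over $\IQ$ of the separable polynomial $\prod_{\sqsi=1}^{\sqsn}(X^2 - \sqsx_\sqsi)$ (separability is automatic in characteristic zero). In particular, $\beta = \sum_{\sqsi=1}^{\sqsn}\sqrt{\sqsx_\sqsi} - \sqsy$ lies in $K$.

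Second, I would bound $[K:\IQ]$ by iterated application of the tower law: the degree of $\IQ(\sqrt{\sqsx_1}, \ldots, \sqrt{\sqsx_\sqsi})$ over $\IQ(\sqrt{\sqsx_1}, \ldots, \sqrt{\sqsx_{\sqsi-1}})$ is at most $2$, because the minimal polynomial of $\sqrt{\sqsx_\sqsi}$ over the previous field divides $X^2 - \sqsx_\sqsi$. Multiplying these bounds yields $[K:\IQ] \leq 2^\sqsn$, hence $|\mathsf{Gal}(K/\IQ)| \leq 2^\sqsn$.

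Third, I would describe the automorphisms explicitly. Any $\sigma \in \mathsf{Gal}(K/\IQ)$ fixes $\IQ$ pointwise and must map each $\sqrt{\sqsx_\sqsi}$ to another root of $X^2 - \sqsx_\sqsi$, i.e., to $(-1)^{b_\sqsi}\sqrt{\sqsx_\sqsi}$ for some $b_\sqsi \in \{0, 1\}$. Hence every $\sigma$ is determined by a sign vector $(b_1, \ldots, b_\sqsn) \in \{0, 1\}^\sqsn$, and
\[\sigma(\beta) = \sum_{\sqsi=1}^{\sqsn}(-1)^{b_\sqsi}\sqrt{\sqsx_\sqsi} - \sqsy,\]
using that $\sigma$ fixes $\sqsy \in \IQ$.

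Finally, I would invoke the standard characterisation: since $K/\IQ$ is Galois and $\beta \in K$, the roots of the minimal polynomial of $\beta$ over $\IQ$ are exactly the Galois conjugates $\{\sigma(\beta) \mid \sigma \in \mathsf{Gal}(K/\IQ)\}$. By the previous step, this set is contained in $\{\sum_{\sqsi=1}^{\sqsn}(-1)^{b_\sqsi}\sqrt{\sqsx_\sqsi} - \sqsy \mid (b_1, \ldots, b_\sqsn) \in \{0,1\}^\sqsn\}$, which has cardinality at most $2^\sqsn$. This yields both conclusions of the lemma. There is no substantive obstacle here: the argument is a textbook Galois computation, and the only detail needing care is the justification that every sign-choice map actually extends to an automorphism is not required — we only need the converse containment, which follows for free from the fact that automorphisms must permute roots of $X^2 - \sqsx_\sqsi$.
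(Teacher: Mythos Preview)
Your proposal is correct and follows essentially the same approach as the paper: both work in the Galois extension $K = \IQ(\sqrt{\sqsx_1},\ldots,\sqrt{\sqsx_\sqsn})$, bound $[K:\IQ]\le 2^\sqsn$ via the tower law, and identify the roots of $P_\beta$ with the Galois conjugates $\sigma(\beta)$, which are forced into the sign-set because each $\sigma$ must send $\sqrt{\sqsx_\sqsi}$ to $\pm\sqrt{\sqsx_\sqsi}$. Your version is slightly more economical than the paper's, which additionally constructs explicit generators of $\mathsf{Gal}(K/\IQ)$ and passes through the splitting field $L$ of $P_\beta$; as you correctly observe, only the containment direction is needed, so that extra work is not required.
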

\begin{proof}
  Let $P_\beta$ denote the minimum polynomial of $\beta$ and let $K = \IQ(\sqrt{\sqsx_1},\ldots, \sqrt{\sqsx_\sqsn})$.
  To bound the number of roots of $P_\beta$, it suffices to bound its degree, i.e., $[\IQ(\beta):\IQ]$.
  We have $\IQ(\beta)\subseteq K$ because $\beta\in K$ by definition of $K$.
  It follows that $[\IQ(\beta):\IQ]$ is a divisor of $[K:\IQ]$.
  We have that $[K:\IQ]$ is at most $2^\sqsn$ because
  \[[K:\IQ] = \prod_{0\leq\sqsi\leq\sqsn-1}[\IQ(\sqrt{\sqsx_1},\ldots, \sqrt{\sqsx_{\sqsi+1}}):\IQ(\sqrt{\sqsx_1},\ldots, \sqrt{\sqsx_\sqsi})]\]
  and the degrees in the product are one or two; for all $1\leq\sqsi\leq\sqsn$, $\sqrt{\sqsx_\sqsi}$ is a root of $X^2 - \sqsx_\sqsi$.

  We now show that all roots of $P_\beta$ are in $\{\sum_{\sqsi=1}^{\sqsn}(-1)^{b_\sqsi}\sqrt{\sqsx_\sqsi} - \sqsy\mid(b_1, \ldots, b_\sqsn)\in\{0, 1\}^\sqsn\}$.
  First, we note that $K$ is the splitting field of the polynomial $\prod_{\sqsi=1}^\sqsn (X^2 - \sqsx_\sqsi)$.
  Therefore, $K/\IQ$ is Galois (as its degree is finite).
  
  We determine the Galois group of $K/\IQ$.
  Let $R\subseteq\{\sqrt{\sqsx_1}, \ldots, \sqrt{\sqsx_\sqsn}\}$ be a minimal set such that $K = \IQ(R)$.
  Assume that $R = \{\sqrt{\sqsx_1}, \ldots, \sqrt{\sqsx_{\sqsn'}}\}$.
  For all $1\leq\sqsi\leq\sqsn'$, $[K:\IQ(R\setminus\{\sqrt{\sqsx_\sqsi}\})] = 2$ and $K/\IQ(R\setminus\{\sqrt{\sqsx_\sqsi}\})$ is Galois.
  It follows that the group $\mathsf{Gal}(K/\IQ)$ contains, for all $1\leq\sqsi\leq\sqsn'$, the automorphism of $K$ in $\mathsf{Gal}(K/\IQ(R\setminus\{\sqrt{\sqsx_\sqsi}\}))$ that is such that $\sqrt{\sqsx_\sqsi}\mapsto -\sqrt{\sqsx_\sqsi}$ that leaves other elements of $R$ unchanged.
    These different automorphisms commute.
    It follows that these automorphisms generate the Galois group $\mathsf{Gal}(K/\IQ)$ whose order is $2^{\sqsn'}$.

    Let $L$ denote the splitting field of $P_\beta$ (thus $L/\IQ$ is Galois).
    It holds that $L\subseteq K$, because $K/\IQ$ is Galois and $P_\beta$ has a root in $K$.
    On the one hand, elements of $\mathsf{Gal}(L/\IQ)$ are the restrictions of elements of $\mathsf{Gal}(K/\IQ)$.
    On the other hand, the action of $\mathsf{Gal}(L/\IQ)$ on the set of roots of $P_\beta$ is transitive (because $P_\beta$ is irreducible in $\IQ[X]$).
    It follows that the roots are all of the claimed form.
\end{proof}

\subparagraph*{Proof of Lemma~\ref{lemma:sqrt-sum:error}.}
We now provide a proof of Lemma~\ref{lemma:sqrt-sum:error}.

\lemmaSqrtSumError*
\begin{proof}
  Assume that $\sum_{\sqsi=1}^\sqsn\sqrt{\sqsx_\sqsi} \neq\sqsy$.
  Let $P_\beta$ denote the minimal polynomial of $\beta = \sum_{\sqsi=1}^\sqsn\sqrt{\sqsx_\sqsi}-\sqsy$.
  It has integer coefficients, because $\beta$ is an algebraic integer.
  Indeed, square roots of integers are algebraic integers and algebraic integers are a sub-ring of the algebraic closure of $\IQ$.
  We bound the coefficients of $P_\beta$ to conclude via the following result: the non-zero roots of a non-zero polynomial with $k$-bit integer coefficients are greater than $2^{-k}$ in absolute value \cite{householder1970numerical,DBLP:journals/jc/Tiwari92}.

  Let $d$ denote the degree of $P_\beta$.
  We have $d\leq 2^{\sqsn}$ by Lemma~\ref{lemma:sqrt-sum:roots}.
  By the same result, it follows that the roots of $P_\beta$ are of absolute value at most $\sum_{\sqsi=1}^\sqsn\sqsx_\sqsi + \sqsy < 2^\lambda$.
  From the decomposition of $P_\beta$ in linear factors, we obtain that its coefficients are sums of at most $2^d$ products of roots of $P_\beta$, and thus are strictly less than $2^d\cdot (2^\lambda)^d =2^{d(\lambda+1)}$ in absolute value, i.e., their bit-size is at most $d(\lambda+1)$.
  We obtain that $|\beta|> 2^{-d(\lambda+1)}\geq 2^{-2^{\sqsn}(\lambda+1)}$.
\end{proof}

\end{document}